\theoremstyle{acmplain}
\newtheorem{theorem}{Theorem}[section]
\newtheorem{lemma}[theorem]{Lemma}
\newtheorem{proposition}[theorem]{Proposition}
\newtheorem{corollary}[theorem]{Corollary}
\newtheorem{claim}[theorem]{Claim}
\newtheorem{conjecture}[theorem]{Conjecture}
\newtheorem{thiconjecture}{Conjecture}
\newtheorem{question}[theorem]{Question}
\theoremstyle{acmdefinition}
\newtheorem{definition}[theorem]{Definition}
\newtheorem{remark}[theorem]{Remark}
\newtheorem{example}[theorem]{Example}
\numberwithin{equation}{section}
\DeclareMathOperator{\conv}{conv}
\DeclareMathOperator{\St}{St}
\DeclareMathOperator{\Link}{Link}
\DeclareMathOperator{\last}{last}
\DeclareMathOperator{\dimension}{dim}
\DeclareMathOperator{\degree}{deg}
\DeclareMathOperator{\MSO}{MSO}
\DeclareMathOperator{\inc}{inc}
\DeclareMathOperator{\Aut}{Aut}
\newcommand{\ovr}[1]{\ensuremath{\vv{#1}}\xspace}
\newcommand{\oG}{\ensuremath{\ovr{G}}\xspace}
\newcommand{\oLambda}{\ensuremath{\ovr{\Lambda}}\xspace}
\newcommand{\NN}{\ensuremath{\mathbb{N}\xspace}}
\newcommand{\JJ}{\ensuremath{\mathbb{J}\xspace}}
\newcommand{\be}{\ensuremath{\mathbf{e}\xspace}}
\newcommand{\bvv}{\ensuremath{\mathbf{v}\xspace}}
\newcommand{\bG}{\ensuremath{\mathbf{G}\xspace}}
\newcommand{\bS}{\ensuremath{\mathbf{S}\xspace}}
\newcommand{\bX}{\ensuremath{\mathbf{X}\xspace}}
\newcommand{\bY}{\ensuremath{\mathbf{Y}\xspace}}
\newcommand{\bZ}{\ensuremath{\mathbf{Z}\xspace}}
\newcommand{\cA}{\ensuremath{\mathcal{A}\xspace}}
\newcommand{\ccC}{\ensuremath{\mathcal{C}\xspace}}
\newcommand{\cD}{\ensuremath{\mathcal{D}\xspace}}
\newcommand{\cE}{\ensuremath{\mathcal{E}\xspace}}
\newcommand{\cF}{\ensuremath{\mathcal{F}\xspace}}
\newcommand{\cH}{\ensuremath{\mathcal{H}\xspace}}
\newcommand{\cI}{\ensuremath{\mathcal I}\xspace}
\newcommand{\cL}{\ensuremath{\mathcal{L}\xspace}}
\newcommand{\cP}{\ensuremath{\mathcal{P}\xspace}}
\newcommand{\cR}{\ensuremath{\mathcal{R}\xspace}}
\newcommand{\cS}{\ensuremath{\mathcal{S}\xspace}}
\newcommand{\FT}{\ensuremath{\mathcal{FT}\xspace}}
\newcommand{\PFT}{\ensuremath{\mathcal{PFT}\xspace}}
\newcommand{\GT}{\ensuremath{\mathcal{GT}\xspace}}
\newcommand{\PGT}{\ensuremath{\mathcal{PGT}\xspace}}
\newcommand{\tc}{\ensuremath{\widetilde{c}}\xspace}
\newcommand{\te}{\ensuremath{\widetilde{e}}\xspace}
\newcommand{\tm}{\ensuremath{\widetilde{m}}\xspace}
\newcommand{\tp}{\ensuremath{\widetilde{p}}\xspace}
\newcommand{\tq}{\ensuremath{\widetilde{q}}\xspace}
\newcommand{\tu}{\ensuremath{\widetilde{u}}\xspace}
\newcommand{\tv}{\ensuremath{\widetilde{v}}\xspace}
\newcommand{\tw}{\ensuremath{\widetilde{w}}\xspace}
\newcommand{\tH}{\ensuremath{\widetilde{H}}\xspace}
\newcommand{\tQ}{\ensuremath{\widetilde{Q}}\xspace}
\newcommand{\tX}{\ensuremath{\widetilde{X}}\xspace}
\newcommand{\tY}{\ensuremath{\widetilde{Y}}\xspace}
\newcommand{\tZ}{\ensuremath{\widetilde{Z}}\xspace}
\newcommand{\tpi}{\ensuremath{\widetilde{\pi}}\xspace}
\newcommand{\tlambda}{\ensuremath{\widetilde{\lambda}}\xspace}
\newcommand{\tnu}{\ensuremath{\widetilde{\nu}}\xspace}
\newcommand{\tildo}{\ensuremath{\widetilde{o}}\xspace}
\newcommand{\tbX}{\ensuremath{\widetilde{\mathbf{X}}\xspace}}
\newcommand{\tbZ}{\ensuremath{\widetilde{\mathbf{Z}}\xspace}}
\newcommand{\tbY}{\ensuremath{\widetilde{\mathbf{Y}}\xspace}}
\newcommand{\dotc}{\ensuremath{\dot{c}\xspace}}
\newcommand{\dotf}{\ensuremath{\dot{f}\xspace}}
\newcommand{\dotm}{\ensuremath{\dot{m}\xspace}}
\newcommand{\dotD}{\ensuremath{\dot{D}\xspace}}
\newcommand{\dotE}{\ensuremath{\dot{E}\xspace}}
\newcommand{\dotF}{\ensuremath{\dot{F}\xspace}}
\newcommand{\dotG}{\ensuremath{\dot{G}\xspace}}
\newcommand{\dotI}{\ensuremath{\dot{I}\xspace}}
\newcommand{\dotM}{\ensuremath{\dot{M}\xspace}}
\newcommand{\dotN}{\ensuremath{\dot{N}\xspace}}
\newcommand{\dotS}{\ensuremath{\dot{S}\xspace}}
\newcommand{\dotX}{\ensuremath{\dot{X}\xspace}}
\newcommand{\dotlambda}{\ensuremath{\dot{\lambda}\xspace}}
\newcommand{\dotSigma}{\ensuremath{\dot{\Sigma}\xspace}}
\newcommand{\dotES}{\ensuremath{\dot{ES}\xspace}}
\newcommand{\dotdiese}{\ensuremath{\dot{\#}}\xspace}
\newcommand{\dotleq}{\mathbin{\dot{\leq}}}
\newcommand{\dotprec}{\mathbin{\dot{\prec}}}
\newcommand{\dotcD}{\ensuremath{\dot{\mathcal{D}}\xspace}}
\newcommand{\dotcE}{\ensuremath{\dot{\mathcal{E}}\xspace}}
\newcommand{\dotcF}{\ensuremath{\dot{\mathcal{F}}\xspace}}
\newcommand{\dottX}{\ensuremath{\dot{\widetilde{X}}\xspace}}
\newcommand{\tdotX}{\ensuremath{\widetilde{\dot{X}}\xspace}}
\newcommand{\tdotm}{\ensuremath{\widetilde{\dot{m}}\xspace}}
\newcommand{\dotoG}{\ensuremath{\dot{\ovr{G}}}\xspace}
\newcommand{\bua}{\ensuremath{{^{\bullet}a}}\xspace}
\newcommand{\abu}{\ensuremath{a^\bullet}\xspace}
\newcommand{\bub}{\ensuremath{{^{\bullet}b}}\xspace}
\newcommand{\bbu}{\ensuremath{b^\bullet}\xspace}
\newcommand{\buh}{\ensuremath{{^{\bullet}h}}\xspace}
\newcommand{\hbu}{\ensuremath{h^\bullet}\xspace}
\newcommand{\bup}{\ensuremath{{^{\bullet}p}}\xspace}
\newcommand{\pbu}{\ensuremath{p^\bullet}\xspace}
\newcommand{\buv}{\ensuremath{^{\bullet}v}\xspace}
\newcommand{\vbu}{\ensuremath{v^\bullet}\xspace}
\newcommand{\trace}[1]{\ensuremath{{\langle {#1} \rangle}\xspace}}
\newcommand{\trsigma}{\ensuremath{\trace{\sigma}\xspace}}
\title[1-Safe Petri Nets and Special Cube Complexes]{1-Safe Petri Nets
  and Special Cube Complexes: Equivalence and Applications}
\author[J.\ Chalopin]{J\' er\'emie Chalopin}
\affiliation{%
  \department{LIS}
  \institution{CNRS, Aix-Marseille Universit\'e, and Universit\'e
    de Toulon}
  \streetaddress{163 Avenue de Luminy -- case 901 -- BP 5}
  \postcode{13288}
  \city{Marseille}
  \country{France}
}
\email{jeremie.chalopin@lis-lab.fr}
\author[V.\ Chepoi]{Victor Chepoi}
\affiliation{%
  \department{LIS}
  \institution{Aix-Marseille Universit\'e, CNRS, and
    Universit\'e de Toulon}
  \streetaddress{163 Avenue de Luminy -- case 901 -- BP 5}
  \postcode{13288}
  \city{Marseille}
  \country{France}
}
\email{victor.chepoi@lis-lab.fr}
\keywords{1-safe Petri nets, trace-regular event structures, special
  cube complexes, median graphs and CAT(0) cube complexes, unfoldings
  and universal covers, MSO logic, context-free graphs}
\begin{abstract}  
  Nielsen, Plotkin, and Winskel (1981) proved that every 1-safe Petri
  net $N$ unfolds into an event structure $\cE_N$.  By a result of
  Thiagarajan (1996), these unfoldings are exactly the trace-regular
  event structures. Thiagarajan (1996) conjectured that regular event
  structures correspond exactly to trace-regular event structures.  In
  a recent paper (Chalopin and Chepoi, 2017), we disproved this
  conjecture, based on the striking bijection between domains of event
  structures, median graphs, and CAT(0) cube complexes.  On the other
  hand, we proved that Thiagarajan's conjecture is true for regular
  event structures whose domains are principal filters of universal
  covers of finite special cube complexes.

  In the current paper, we prove the converse: to any finite 1-safe
  Petri net $N$ one can associate a finite special cube complex
  ${\bX}_N$ such that the domain of the event structure $\cE_N$
  (obtained as the unfolding of $N$) is a principal filter of the
  universal cover $\tbX_N$ of ${\bX_N}$. This establishes a bijection
  between 1-safe Petri nets and finite special cube complexes and
  provides a combinatorial characterization of trace-regular event
  structures.

  Using this bijection and techniques from graph theory and geometry
  (MSO theory of graphs, bounded treewidth, and bounded hyperbolicity)
  we disprove yet another conjecture by Thiagarajan (from the paper
  with S. Yang from 2014) that the monadic second order logic of a
  1-safe Petri net (i.e., of its event structure unfolding) is
  decidable if and only if its unfolding is grid-free. It was proven
  by Thiagarajan and Yang, 2014 that the MSO logic is undecidable if
  the unfolding is not grid-free. Our counterexample is the
  trace-regular event structure which arises from a virtually special
  square complex $\bZ$. The domain of this event structure $\dotcE_Z$
  is the principal filter of the universal cover $\tbZ$ of $\bZ$ in
  which to each vertex we added a pendant edge.  The graph of the
  domain of $\dotcE_Z$ has bounded hyperbolicity (and thus the event
  structure $\dotcE_Z$ is grid-free) but has infinite treewidth.
  Using results of Seese, Courcelle and M\"uller and Schupp, we show
  that this implies that the MSO theory of the event structure
  $\dotcE_Z$ is undecidable.
\end{abstract}
\begin{document}

\maketitle

\clearpage
\section{Introduction}

Finite 1-safe Petri nets, also called net systems, are natural models
of asynchronous concurrency. Nielsen, Plotkin, and
Winskel~\cite{NiPlWi} proved that every net system
$N=(S,\Sigma,F,m_{0})$ unfolds into an event structure
${\cE}_N=(E,\le, \#,\lambda)$ describing all possible executions of
$N$: the events of $\cE_N$ are all prime Mazurkiewicz traces on the
set of transitions of $N$, equipped with the causal dependency and
conflict relations. Later results of Nielsen, Rozenberg, and
Thiagarajan~\cite{NiRoThi} show in fact that 1-safe Petri nets and
event structures represent each other in a strong sense. An event
structure~\cite{NiPlWi,Winskel,WiNi} is a partially ordered set of the
occurrences of actions, called events, together with a conflict
relation.  The partial order captures the causal dependency of
events. The conflict relation models incompatibility of events so that
two events that are in conflict cannot simultaneously occur in any
state of the computation. Consequently, two events that are neither
ordered nor in conflict may occur concurrently. The domain of an event
structure consists of all computation states, called configurations.
Each computation state is a subset of events subject to the
constraints that no two conflicting events can occur together in the
same computation and if an event occurred in a computation then all
events on which it causally depends have occurred too. Therefore, the
domain of an event structure is the set of all finite configurations
ordered by inclusion. The future (or the principal filter) of a
configuration is the set of all finite configurations containing it.

In a series of papers~\cite{RoTh,Thi_regular,Thi_conjecture,ThiaYa},
Thiagarajan formulated (alone or with co-authors) three important
conjectures (1) about the local-to-global behavior of event structures
(the \emph{nice labeling conjecture}): \emph{any event structure of
  finite degree admits a finite nice labeling}, (2) on the
relationship between event structures and net systems: \emph{regular
  event structures are exactly the unfoldings of net systems} and (3)
about the decidability of the Monadic Second Order theory (MSO theory)
of net systems: \emph{grid-free net systems are exactly the net
  systems with decidable MSO theory}. The last two conjectures were
motivated by the fact that in each case, one of the two implications
holds and by evidences and important particular cases for which the
converse implication also holds. For example, it was proven
in~\cite{Thi_regular,Thi_conjecture} that unfoldings of net systems
are exactly the trace-regular event structures, and thus the second
conjecture asks whether a regular event structure is trace-regular.

In~\cite{CC-ICALP17,CC-thiag,Ch_nice}, we provided counterexamples to
the first two conjectures. In the current paper, we provide a
counterexample to the third conjecture about the decidability of the
MSO theory of grid-free net systems. The three counterexamples are
based on different ideas and techniques, however, they all use the
bijections between domains of event structures, median graphs, and
CAT(0) cube complexes. Median graphs is the most important class of
graphs in metric graph theory and CAT(0) cube complexes play an
essential role in geometric group theory and the topology of
3-manifolds.  Even if the three conjectures turned out to be false,
the work on them raised many important open questions and the current
paper establishes a surprising bijection between 1-safe Petri nets
(trace-regular event structures) and finite special cube complexes.
Notice that special cube complexes, introduced by Haglund and
Wise~\cite{HaWi1,HaWi2}, played an essential role in the recent
solution of the famous virtual Haken conjecture for hyperbolic
3-manifolds by Agol~\cite{Agol,Agol_ICM}.

\section{On Thiagarajan's Conjectures}

We continue with an informal description of Thiagarajan's conjectures,
of some related work on them, and of the results of this paper.

\subsection{The Nice Labeling Conjecture}

The \emph{nice labeling conjecture} was formulated by Rozoy and
Thiagarajan in~\cite{RoTh} and asserts that

\begin{thiconjecture}\label{conj-nice}
  Every event structure with finite degree admits a finite nice
  labeling.
\end{thiconjecture}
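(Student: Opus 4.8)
The plan is to recast a nice labeling as a proper vertex coloring and then to bound the number of colors it needs. Recall that a labeling $\lambda\colon E\to\Sigma$ of an event structure $\cE=(E,\le,\#)$ is \emph{nice} exactly when any two distinct events $e_1,e_2$ that can be \emph{simultaneously enabled}---i.e.\ there is a finite configuration $c$ with $c\cup\{e_1\}$ and $c\cup\{e_2\}$ both configurations---receive different labels; these are precisely the pairs that are either in immediate conflict or concurrent with a common causal past. So I would define the \emph{sharing graph} $\Gamma$ on vertex set $E$, joining two events iff they are simultaneously enabled at some configuration, and observe that a finite nice labeling is the same thing as a proper coloring of $\Gamma$ with finitely many colors. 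The finite-degree hypothesis translates into a uniform bound $d$ on the number of pairwise simultaneously enabled events, i.e.\ on the clique number of $\Gamma$, so the goal becomes: show that bounded clique number forces bounded chromatic number.

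First I would try to exploit the causal order to color greedily. Fix a linear extension of $\le$ (possible since each event has finite causes), process the events in this order, and assign to each event the least color not yet used by an already-processed neighbor in $\Gamma$. The hope is that when an event $e$ is reached, only few of its $\Gamma$-neighbors lie in its past: such a neighbor $e'$ is either a concurrent event sharing an enabling configuration with $e$, or an event in immediate conflict with $e$, and local finiteness should keep the number of ``active'' predecessors close to $d$. If one could show that at the moment $e$ is colored at most $d-1$ previously colored neighbors are still \emph{visible}---in the sense that their color cannot be safely reused---then the greedy procedure would terminate with at most $d$ colors and we would be done.

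The hard part, and the step where I expect the argument to break, is exactly this visibility bound. The relation ``simultaneously enabled'' is not local in the causal order: because conflict is inherited upward ($e_1\#e_2$ and $e_2\le e_3$ give $e_1\#e_3$), an event can be forced to differ from events arbitrarily deep in its past, and concurrency lets incomparable events at very different causal depths share a common enabling configuration. Consequently the online greedy coloring of $\Gamma$ need not be controlled by its clique number $d$, in the same way that interval graphs and trees of bounded clique number have unbounded online chromatic number. To rescue the plan I would try to replace the greedy argument by a global one, using the bijection between domains of event structures, median graphs, and CAT(0) cube complexes recalled in the introduction: translate $\Gamma$ into the hyperplane structure of the associated median graph and attempt to color hyperplanes rather than events. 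Whether the geometry of these domains actually bounds the chromatic number is the true crux; it is precisely here that a careful analysis could either succeed or---as the later counterexamples in the paper suggest---reveal grid- or tree-like families of hyperplanes on which no finite palette can suffice.
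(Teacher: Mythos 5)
There is a genuine and unavoidable gap: the statement you are trying to prove is false, and the paper does not prove it --- this is Thiagarajan's nice labeling conjecture (Conjecture~\ref{conj-nice}), which the paper, following the counterexample constructed in~\cite{Ch_nice}, explicitly disproves. Your reduction is sound as far as it goes: a finite nice labeling is precisely a proper coloring of the graph whose vertices are events and whose edges join co-initial pairs (concurrent or in minimal conflict), and the finite-degree hypothesis bounds the clique number of this sharing graph by $\degree(\cE)$. But the step you yourself flag as the ``true crux'' --- that bounded clique number of this graph forces bounded chromatic number --- is exactly where the conjecture dies. The counterexample of~\cite{Ch_nice} is based on Burling's construction~\cite{Bu} of 3-dimensional box hypergraphs with clique number $2$ and arbitrarily large chromatic number; via the bijection between domains of event structures, pointed median graphs, and CAT(0) cube complexes, these graphs are realized as sharing graphs inside the domain of an event structure of finite degree. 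Hence there exist finite-degree event structures whose co-initial pairs form a triangle-free graph of unbounded chromatic number, so no finite palette suffices and no coloring argument (greedy, online, or global) can close the gap.

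Your instincts along the way are correct on both counts: the online greedy coloring is indeed uncontrolled by the clique number (your analogy with online coloring of trees and interval graphs is the right one), and your closing remark --- that the geometric analysis might reveal families of hyperplanes on which no finite palette can suffice --- is precisely what happens. But a proposal that ends by identifying the possibility that the statement is false is not a proof; the correct resolution of this statement is a counterexample, not a labeling, and that is how the literature this paper builds on settles it.
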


A (finite) nice labeling is a labeling of events with the letters from
some finite alphabet such that any two co-initial events (i.e., any
two events which are concurrent or in minimal conflict) have different
labels. The nice labelings of event structures arise when studying the
equivalence of three different models of distributed computation:
labeled event structures, net systems, and distributed monoids. The
nice labeling conjecture can be viewed as a question about a
local-to-global finite behavior of such models.

A counterexample to nice labeling conjecture was constructed
in~\cite{Ch_nice}. It is based on the bijection between domains of
event structures, pointed median graphs, and CAT(0) cube complexes and
on the Burling's construction~\cite{Bu} of 3-dimensional box
hypergraphs with clique number 2 and arbitrarily large chromatic
numbers.  Assous, Bouchitt\'e, Charretton, and Rozoy~\cite{AsBouChaRo}
proved that the event structures of degree 2 admit nice labelings with
2 labels and noticed that Dilworth's theorem implies the conjecture
for conflict-free event structures of bounded degree.
Santocanale~\cite{Santo} proved that all event structures of degree 3
with tree-like partial orders have nice labelings with 3
labels. Chepoi and Hagen~\cite{ChHa} proved that the nice labeling
conjecture holds for event structures with 2-dimensional domains,
i.e., for event structures not containing three pairwise concurrent
events.

For CAT(0) cube complexes, a related question was independently
formulated by F. Haglund, G. Niblo, M. Sageev, and the second author
of this paper: \emph{can the 1-skeleton of any CAT(0) cube complex of
  finite degree be isometrically embedded into the Cartesian product
  of a finite number of trees?}  A negative answer to this question
was obtained in~\cite{ChHa}, based on a modification of the
counterexample from~\cite{Ch_nice}. However, in~\cite{ChHa} it was
shown that the answer is positive for 2-dimensional CAT(0) cube
complexes. Haglund~\cite{Hag-HdR} proved that this embedding question
has a positive answer for hyperbolic CAT(0) cube complexes. Modifying
the argument of~\cite{Hag-HdR}, we can also show that the nice
labeling conjecture is true for event structures with hyperbolic
domains.

\subsection{The Conjecture on Regular Event Structures}

To deal with net systems,
Thiagarajan~\cite{Thi_regular,Thi_conjecture} introduced the notions
of regular event structure and trace-regular event structure. The main
difference is that the regularity of event structures is defined for
unlabeled event structures while trace-regularity is defined under the
much stronger assumption of a given trace-regular labeling.  These
definitions were motivated by the fact that the event structures
${\cE}_N$ arising from \emph{finite} 1-safe Petri nets $N$ are
regular: Thiagarajan~\cite{Thi_regular} in fact proved that event
structures of \emph{finite} 1-safe Petri nets correspond to
trace-regular event structures.  This lead Thiagarajan to conjecture
(see also Conjecture~\ref{conjecture_regular} below)
in~\cite{Thi_regular,Thi_conjecture} that

\begin{thiconjecture}\label{conj2}
  Regular event structures and trace-regular event structures are the
  same.
\end{thiconjecture}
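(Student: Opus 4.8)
The statement is a biconditional, and one direction is already in hand: by Thiagarajan's theorem cited above, every trace-regular event structure is the unfolding $\cE_N$ of some finite 1-safe Petri net $N$, and such unfoldings are readily seen to be regular, since a net with finitely many reachable markings produces only finitely many isomorphism types of futures. So the entire content lies in the converse: \emph{every regular event structure is trace-regular}, i.e. admits a trace-regular labeling and is realizable as the unfolding of a finite net system.

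My plan is to pass through the bijection between the domain $\cD$ of an event structure, its median graph $G$, and the associated CAT(0) cube complex $X$. Regularity asserts that, up to isomorphism, only finitely many principal filters (futures of configurations) occur in $\cD$. First I would make this finiteness geometric: the futures correspond to the pointed subcomplexes of $X$ rooted at its cubes, so regularity becomes the statement that $X$ has finitely many isomorphism types of pointed stars and links. Next I would form the quotient map $X \to \bX := X/\!\sim$, where $\sim$ identifies two cubes exactly when their futures are isomorphic. The hope is that $\bX$ is a finite nonpositively curved cube complex and that $X$ is recovered as a principal filter of its universal cover $\tbX$. One would then read off a candidate net system, or equivalently a trace-regular labeling, directly from $\bX$: the parallelism classes of edges (equivalently, the hyperplanes) of $\bX$ should serve as the transitions/letters, and the independence relation between letters should be witnessed by the squares of $\bX$.

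The bulk of the verification would be to check that this labeling is genuinely trace-regular: that concurrency is globally consistent, that the induced map from configurations into the free partially commutative (trace) monoid is injective, and that the finitely many future-types translate into finitely many states of the resulting net. If $\bX$ is well behaved, these should follow from the universal-cover relationship between $X$ and $\tbX$ together with the local determinism of the labeling.

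The step I expect to be the real obstacle is precisely the well-definedness and \emph{specialness} of the quotient $\bX$. Finiteness of future-types guarantees that the local combinatorics repeat, but it does \emph{not} by itself guarantee that the hyperplanes of $\bX$ avoid the pathologies --- self-intersection, self-osculation, and inter-osculation --- that Haglund and Wise isolate in the definition of a special cube complex; and it is exactly these pathologies that obstruct a consistent finite labeling of edges by dependent/independent letters, hence the existence of a net realization. I therefore expect the argument to go through only under the additional hypothesis that $\bX$ can be chosen special, and I would be suspicious that bare regularity is too weak to force this. This is the crux: if one can construct a finite regular median quotient whose hyperplanes unavoidably self-osculate, the resulting event structure would be regular yet admit no trace-regular labeling, which would \emph{refute} rather than confirm the conjecture.
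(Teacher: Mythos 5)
The statement you were asked to address is one of Thiagarajan's conjectures, and the paper does not prove it --- it \emph{disproves} it. The authors' treatment (in their earlier work, recalled in this paper) exhibits a \emph{strongly regular} event structure, built from a particular NPC square complex of Wise, whose domain is a principal filter of the universal cover of that finite complex but which admits no finite regular nice labeling; consequently it is regular yet not trace-regular. So any plan whose goal is to establish the biconditional must fail, and the failure point in your plan is exactly where you located it: after forming the finite quotient complex $\bX$, nothing in the regularity hypothesis forces $\bX$ (or any finite complex with the same universal cover) to be special, and without specialness the hyperplane pathologies of Haglund--Wise (self-intersection, direct self-osculation, inter-osculation) obstruct the existence of a trace labeling. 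Your proposed verification step ``check that this labeling is genuinely trace-regular'' is therefore not a technical chore but an impossibility in general.

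That said, your closing paragraph is essentially the correct mathematics: you predicted that ``bare regularity is too weak'' and that a complex with unavoidable hyperplane pathologies would refute the conjecture --- which is precisely what happens. The paper sharpens this into a characterization: trace-regular event structures are exactly those whose domains arise as principal filters of universal covers of \emph{finite special} cube complexes (Theorem~\ref{Petri-to-special}, Theorem~\ref{Petri-to-special-isomorphism}, and Corollary~\ref{cor-tracereg-implies-stronglyreg}, together with Theorem~\ref{virtuallyspecial}), so specialness is exactly the ingredient that must be added to strong regularity for the conjecture's conclusion to hold. In other words, had you pursued your own suspicion to its end --- constructing a finite NPC complex whose hyperplanes necessarily self-osculate and showing the resulting strongly regular event structure has no regular nice labeling --- you would have arrived at the paper's actual argument, a refutation rather than a proof.
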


Equivalently, this can be reformulated in the following way: \emph{An
  event structure $\cE$ is isomorphic to the event structure unfolding
  of a net system if and only if $\cE$ is regular.}

Nielsen and Thiagarajan~\cite{NiThi} established this conjecture for
regular conflict-free event structures and Badouel, Darondeau, and
Raoult~\cite{BaDaRa} proved it for context-free event domains, i.e.,
for domains whose underlying graph is a context-free graph sensu
M\"uller and Schupp~\cite{MullerSchupp}. Morin~\cite{Morin} showed
that any event structure admitting a regular nice labeling is
trace-regular.

In~\cite{CC-thiag}, we presented a counterexample to
Conjecture~\ref{conj2} based on a geometric and combinatorial view on
event structures.  To deal with regular event structures, we showed
in~\cite{CC-thiag} how to construct regular event domains from CAT(0)
cube complexes. By a result of Gromov~\cite{Gromov}, CAT(0) cube
complexes are exactly the universal covers of nonpositively curved
cube (NPC) complexes.  Of particular importance for us are the CAT(0)
cube complexes arising as universal covers of \emph{finite} NPC
complexes. We adapted the universal cover construction to directed NPC
complexes $(Y,o)$ and showed that every principal filter of the
directed universal cover $(\tY,\tildo)$ is the domain of an event
structure. Furthermore, if the NPC complex $Y$ is finite, then this
event structure is regular. Motivated by this result, we called an
event structure \emph{strongly regular} if its domain is the principal
filter of the directed universal cover ${\tbY}=(\tY,\tildo)$ of a
finite directed NPC complex ${\bY}=(Y,o)$. Our counterexample to
Conjecture~\ref{conj2} is a strongly regular event structure not
admitting a finite regular nice labeling. It is derived from a
particular NPC square complex $\bX$ defined by
Wise~\cite{Wi_thesis,Wi_csc}.

In view of this counterexample, one can ask the following two
important questions:

\begin{question}\label{question1}
  Are the event structure unfoldings of finite 1-safe Petri nets
  strongly regular?
\end{question}

\begin{question}\label{question2}
  Under which conditions a regular event structure is trace-regular?
\end{question}

Haglund and Wise~\cite{HaWi1,HaWi2} introduced four types of
pathologies which may occur in NPC complexes.  They called the NPC
complexes without such pathologies \emph{special}.  The main
motivation for studying special cube complexes was the profound idea
of Wise that the famous virtual Haken conjecture for hyperbolic
3-manifolds can be reduced to solving problems about special cube
complexes. In a breakthrough result, Agol~\cite{Agol,Agol_ICM}
completed this program and solved the virtual Haken conjecture using
the deep theory of special cube complexes of~\cite{HaWi1,HaWi2}. The main
ingredient in this proof is Agol's theorem that finite NPC complexes
whose universal covers are hyperbolic are virtually special (i.e.,
they admit finite covers which are special cube complexes).

In~\cite{CC-thiag} we proved that Thiagarajan's conjecture is true for
event structures whose domains arise as principal filters of universal
covers of finite special cube complexes. Using the result of Agol, we
specified this result and showed that Thiagarajan's conjecture is true
for strongly regular event structures whose domains occur as principal
filters of hyperbolic CAT(0) cube complexes that are universal covers
of finite directed NPC complexes. Since context-free domains are
hyperbolic, this result can be viewed in some sense as a partial
generalization of the result of Badouel et al.~\cite{BaDaRa}.

In the current paper, we establish the converse to the previous result
of~\cite{CC-thiag}: we prove that to any 1-safe Petri net $N$ one can
associate a finite directed labeled special cube complex ${\bX}_N$
such that the domain of the event structure unfolding ${\cE}_N$ is a
principal filter of the universal cover $\tbX_N$ of ${\bX_N}$.
This shows that all event structures arising as unfoldings of finite
1-safe Petri nets are strongly regular, answering in the positive
Question~\ref{question1}. This also shows that specialness must be
added to strong regularity to ensure a positive answer to
Thiagarajan's Conjecture~\ref{conj2}.  Therefore, the trace-regular
event structures can be characterized as the event structures whose
domains arise from \emph{finite} \emph{special} cube complexes.  This
establishes a surprising bijection between 1-safe Petri nets
(fundamental objects in concurrency) and special cube complexes
(fundamental objects in geometric group theory).

\subsection{The Conjecture on the Decidability of the MSO Logic of
  Trace-regular Event Structures}

Thiagarajan and Yang~\cite{ThiaYa} defined the monadic second order
(MSO) theory ${\MSO}({\cE}_N)$ of an event structure unfolding
$\cE_N=(E,\le, \#,\lambda)$ of a net system $N=(S,\Sigma,F,m_{0})$ as
the MSO theory of the relational structure
$(E,{(R_a)}_{a\in \Sigma},\le)$ (see Subsection~\ref{MSO-def} for
definitions).  This immediately leads to the following fundamental
question:

\begin{question}\label{question3}
  When ${\MSO}(\cE_N)$ is decidable?
\end{question}

It turns out that the MSO theory of trace event structures is not
always decidable:~\cite{ThiaYa} presented such an example suggested by
I. Walukiewicz. To circumvent this example, Thiagarajan and Yang
formulated the notion of a grid event structure and they showed that
the MSO theory of event structures containing grids is
undecidable. This leads Thiagarajan to conjecture in~\cite{ThiaYa}
that (see also Conjecture~\ref{MSO} below):

\begin{thiconjecture}
  The $\MSO$ theory of a trace-regular event structure $\cE_N$ is
  decidable if and only if $\cE_N$ is grid-free.
\end{thiconjecture}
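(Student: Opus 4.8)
The plan is to refute this conjecture by producing a \emph{grid-free} trace-regular event structure whose $\MSO$ theory is nonetheless undecidable. The logical lever is a pair of classical facts. On the positive side, the context-free graphs of M\"uller and Schupp form a class of bounded-degree, bounded-tree-width graphs with decidable monadic second-order theory; on the negative side, Seese's theorem asserts that a bounded-degree graph of \emph{infinite} tree-width has undecidable $\MSO_2$ theory. The point is that grid-freeness is the \emph{metric} condition of containing no large isometric grid, while large tree-width is a purely \emph{combinatorial}, minor-based condition, so the two are not in conflict: a hyperbolic graph can be grid-free yet riddled with large grid \emph{minors}. The target is therefore a domain graph that is $\delta$-hyperbolic for a fixed $\delta$ (hence grid-free) but of infinite tree-width.

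First I would fix the source object: a finite virtually special NPC square complex $\bZ$ whose fundamental group $G$ is word-hyperbolic and \emph{one-ended} (a hyperbolic surface group or another one-ended cubulated hyperbolic group provides a concrete candidate). Then the universal cover $\tbZ$ is a CAT(0) square complex quasi-isometric to $G$, so it is $\delta$-hyperbolic; and being one-ended, $G$ is not virtually free, so by the M\"uller--Schupp characterization its Cayley graph is not context-free, and the quasi-isometric $\tbZ$ has infinite tree-width. Using the bijection established in this paper, I would then realize a principal filter of the directed universal cover $(\tbZ,\tildo)$, with a pendant edge attached at every vertex, as the domain of a trace-regular event structure $\dotcE_Z$; the pendant edges supply the extra events that make the principal filter a genuine event-structure domain and pin down the labeling, while leaving both tree-width and hyperbolicity unchanged.

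Second, I would verify the two defining properties of the domain graph of $\dotcE_Z$. Grid-freeness follows from $\delta$-hyperbolicity: an $n\times n$ isometric grid has hyperbolicity constant growing with $n$, so for large $n$ it cannot embed isometrically, and by the Thiagarajan--Yang criterion $\dotcE_Z$ is grid-free. Infinite tree-width follows from one-endedness of $G$, which I would confirm directly by exhibiting, for each $k$, a $k\times k$ grid minor inside a large ball of $\tbZ$; since pendant edges do not affect tree-width, the whole domain graph has infinite tree-width while retaining bounded degree. Combining these via Seese's theorem yields undecidability of the $\MSO_2$ theory of the domain graph, and a Courcelle-style interpretation, legitimate because the graph has bounded degree, transfers this to the relational structure $(E,(R_a)_{a},\le)$, so that $\MSO(\dotcE_Z)$ is undecidable.

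The hard part will be to make a single finite complex $\bZ$ satisfy all constraints at once and then to certify the interpretation step. Concretely, I must choose $\bZ$ so that $G$ is simultaneously hyperbolic, one-ended, and virtually special (so that the machinery of this paper applies, in particular so that $\dotcE_Z$ is genuinely trace-regular), and I must check that the pendant-augmented principal filter is really the domain of an event structure rather than merely a hyperbolic median graph. The most delicate point is the mutual $\MSO$-interpretability between $(E,(R_a)_{a},\le)$ and its domain graph: I would use bounded degree to encode the hyperplanes (events) and their causal order inside the graph, and the pendant edges to distinguish occurrences, so that undecidability for the graph pulls back exactly to undecidability for $\dotcE_Z$.
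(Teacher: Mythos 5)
Your plan founders on one step, and it is the crucial one. Your undecidability half rests on the claim that the \emph{domain graph} of $\dotcE_Z$ --- i.e.\ the principal filter $\cF_{\tildo}(\tv_0,\tZ^{(1)})$ (plus hairs) --- has infinite treewidth, but what you actually argue is that the ambient universal cover $\tbZ$ has infinite treewidth (one-endedness, M\"uller--Schupp, Kuske--Lohrey). These are different graphs: the domain is only a convex piece of $\tbZ$, carved out by the chosen admissible orientation, and grid minors of $\tbZ$ need not lie in, nor contract into, any principal filter. The paper itself shows how drastically a filter can differ from the cover: in its counterexample the cover $\tZ$ contains infinite flat grids (so it is \emph{not} hyperbolic), yet the orientation prevents every such grid from entering any principal filter, so all filters are hyperbolic (Remark~\ref{rem-tore-bazar}). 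Worse for your proposal, the configuration you are aiming for --- a finite directed special complex whose universal cover is hyperbolic while some principal filter is not context-free (equivalently, has infinite treewidth, by Theorem~\ref{mso-graph}) --- is precisely the \emph{open question} posed in Remark~\ref{rem-tore-bazar}. So choosing $G$ word-hyperbolic, one-ended and cubulated does nothing, by itself, to put a single large grid minor inside the filter; as written, your construction silently assumes a positive answer to a question the authors leave open.

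The paper avoids this trap by inverting which object is hyperbolic: its complex $Z$ has a \emph{non-hyperbolic} universal cover, but the directed structure excludes all large directed grids from filters, so the domain is hyperbolic (Lemma~\ref{GZ-hyperbolic}), hence $\dotcE_Z$ is grid-free via Proposition~\ref{hyperbolic-gridfree}; the infinite treewidth of the domain is then established by hand, through an explicit level-by-level construction of half-grid minors \emph{inside the filter} (Claim~\ref{claim-loc-pavage} and Lemma~\ref{GZ-treewidth}). That construction is the technical heart of the counterexample and has no soft geometric-group-theoretic substitute. The rest of your outline is sound and coincides with the paper's machinery: hyperbolicity of the convex filter gives grid-freeness for bounded-degree event structures (Ramsey/thin-bicliques argument behind Proposition~\ref{hyperbolic-gridfree}); trace-regularity of the hairing is Proposition~\ref{barycentric-subdivision}; and the transfer of undecidability from the graph to $\MSO(\dotcE_Z)$ via hair events is exactly Proposition~\ref{prop-MSOb-MSOG} and Theorem~\ref{barycentric-subdivision-MSO}. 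One small correction: the pendant edges are not what makes the filter a genuine event-structure domain --- Lemma~\ref{directed-median} already gives that --- they are needed solely for the interpretation step, to simulate quantification over vertices/configurations by quantification over (hair) events.
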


Notice also that preceding~\cite{ThiaYa}, Madhusudan~\cite{Mad} proved
that the MSO theory of a trace event structure is decidable provided
quantifications over sets are restricted to conflict-free subsets of
events. This shows that the MSO theory of conflict-free trace-regular
event structures is decidable.

With the event structure $\cE_N$ one can associate two other MSO
logics: the MSO logic $\MSO(\oG(\cE_N))$ of the directed graph
$\oG(\cE_N)$ of the domain $\cD(\cE_N)$ of $\cE_N$ and the MSO logic
$\MSO(G(\cE_N))$ of the undirected graph $G(\cE_N)$ of the
domain. This leads to the next question:

\begin{question}\label{question4}
  When $\MSO(\oG(\cE_N))$ (respectively, $\MSO(G(\cE_N))$) is
  decidable?
\end{question}

We prove in this paper that the decidability of $\MSO(G(\cE_N))$ and
of $\MSO(\oG(\cE_N))$ is equivalent to the fact that $G(\cE_N)$ has
finite treewidth and to the fact that $\oG(\cE_N)$ is a context-free
graph. This completely answers Question~\ref{question4}.  We also
prove that if $\MSO(\oG(\cE_N))$ is decidable, then $\MSO(\cE_N)$ is
decidable (the converse is not true). We introduce the notion of
hairing of an event structure $\cE_N$, which is an event structure
$\dotcE_N$ obtained from $\cE_N$ by adding an event $e_c$ for each
configuration $c$ of the domain in a such a way that $e_c$ is in
conflict with all events except those from $c$ (those events precede
$e_c$).  We prove that $\MSO(\dotcE_N)$ is decidable if and only if
$\MSO(\oG(\cE_N))$ is decidable, i.e., if and only if $G(\cE_N)$ has
finite treewidth, providing provide a partial answer to
Question~\ref{question3}.

Using these results, we construct a counterexample to Thiagarajan's
Conjecture~\ref{MSO}.  Namely, we construct an NPC square complex
${\bZ}$.  We show that ${\bZ}$ is virtually special and thus any
principal filter of the universal cover of ${\bZ}$ is the domain of a
trace-regular event structure $\cE_Z$. The hairing $\dotcE_Z$ of
$\cE_Z$ is still trace-regular.  We show that the graphs $G(\cE_Z)$
and $G(\dotcE_Z)$ have infinite treewidth and bounded
hyperbolicity. The first result implies that $\MSO(\dotcE_Z)$ is
undecidable while the second result shows that $\dotcE_Z$ is
grid-free.

\section{Event Structures and Net Systems}

\subsection{Event Structures} 

An \emph{event structure} is a triple ${\cE}=(E,\le, \#)$, where
\begin{itemize}
\item $E$ is a set of \emph{events},
\item $\le\subseteq E\times E$ is a partial order of \emph{causal
    dependency},
\item $\#\subseteq E\times E$ is a binary, irreflexive, symmetric
  relation of \emph{conflict},
\item $\downarrow \!e:=\{ e'\in E: e'\le e\}$ is finite for any
  $e\in E$,
\item $e\# e'$ and $e'\le e''$ imply $e\# e''$.
\end{itemize}

Two events $e',e''$ are \emph{concurrent} (notation $e'\| e''$) if
they are order-incomparable and they are not in conflict.  The
conflict $e'\# e''$ between two elements $e'$ and $e''$ is
\emph{minimal} (notation $e'\#_{\mu} e''$) if there is no event
$e\ne e',e''$ such that either $e\le e'$ and $e\# e''$ or $e\le e''$
and $e\# e'$. We say that $e$ is an \emph{immediate predecessor} of
$e'$ (notation $e\lessdot e'$) if and only if $e\le e', e\ne e'$, and
for every $e''$ if $e\le e''\le e'$, then $e''=e$ or $e''=e'$.

Given two event structures ${\cE}_1=(E_1,\le_1, \#_1)$ and
${\cE}_2=(E_2,\le_2, \#_2)$, a bijective map $f:E_1\rightarrow E_2$ is
an isomorphism if $e\le_1 e'$ iff $f(e)\le_2 f(e')$ and $e\#_1 e'$ iff
$f(e)\#_2 f(e')$ for every $e,e'\in E_1$. Then ${\cE}_1$ and ${\cE}_2$
are said to be isomorphic; notation ${\cE}_1\equiv {\cE}_2$.

A \emph{configuration} of an event structure ${\cE}=(E,\le,\#)$ is any
finite subset $c\subset E$ of events which is \emph{conflict-free}
($e,e'\in c$ implies that $e,e'$ are not in conflict) and
\emph{downward-closed} ($e\in c$ and $e'\le e$ implies that
$e'\in c$)~\cite{WiNi}. Notice that $\varnothing$ is always a
configuration and that $\downarrow \!e$ and
$\downarrow \!e\setminus \{ e\}$ are configurations for any $e\in E$.
The \emph{domain} of $\cE$ is the set $\cD:=\cD({\cE})$ of all
configurations of ${\cE}$ ordered by inclusion; $(c',c)$ is a
(directed) edge of the Hasse diagram of the poset
$({\cD}({\cE}),\subseteq)$ if and only if $c=c'\cup \{ e\}$ for an
event $e\in E\setminus c$.  An event $e$ is \emph{enabled} by a
configuration $c$ if $e\notin c$ and $c\cup \{ e\}$ is a
configuration.  Denote by $en(c)$ the set of all events enabled at the
configuration $c$.  Two events are \emph{co-initial} if they are both
enabled at some configuration $c$. Note that if $e$ and $e'$ are
co-initial, then either $e \#_{\mu} e'$ or $e \| e'$. Note that two
events $e$ and $e'$ are in minimal conflict $e\#_{\mu}e'$ if and only
if $e \# e'$ and $e$ and $e'$ are co-initial.  The \emph{degree}
$\degree(\cE)$ of $\cE$ is the least positive integer $d$ such that
$|en(c)|\le d$ for any configuration $c$ of $\cE$; $\cE$ has
\emph{finite degree} if $\degree(\cE)$ is finite. The \emph{future}
(or the \emph{principal filter}) $\cF(c)$ of a configuration $c$ is
the set of all configurations $c'$ containing $c$:
$\cF(c) =~\uparrow\! c:=\{c'\in {\cD}({\cE}): c\subseteq c'\}$,
i.e., $\cF(c)$ is the principal filter of $c$ in the ordered set
$({\cD}({\cE}),\subseteq)$.

A \emph{labeled event structure} ${\cE}^{\lambda}=({\cE},\lambda)$ is
defined by an \emph{underlying event structure} ${\cE}=(E,\le, \#)$
and a \emph{labeling} $\lambda$ that is a map from $E$ to some
alphabet $\Sigma$. Two labeled event structures
${\cE}_1^{\lambda_1}=({\cE}_1,\lambda_1)$ and
${\cE}_2^{\lambda_1}=({\cE}_2,\lambda_2)$ are isomorphic (notation
${\cE}^{\lambda_1}_1\equiv {\cE}^{\lambda_2}_2$) if there exists an
isomorphism $f$ between the underlying event structures ${\cE}_1$ and
${\cE}_2$ such that $\lambda_2(f(e_1))=\lambda_1(e_1)$ for every
$e_1\in E_1$.  A labeling $\lambda: E\rightarrow \Sigma$ of an event
structure $\cE$ defines naturally a labeling of the directed edges of
the Hasse diagram of its domain $\cD(\cE)$ that we also denote by
$\lambda$. A labeling $\lambda: E\rightarrow \Sigma$ of $\cE$ is
called a \emph{nice labeling} if any two co-initial events have
different labels~\cite{RoTh}. A nice labeling of ${\cE}$ can be
reformulated as a labeling of the directed edges of the Hasse diagram
of its domain $\cD(\cE)$ subject to the following local conditions:

\begin{enumerate}[]
\item[]{\emph{{Determinism:}}} the edges outgoing from
the same vertex of ${\cD}({\cE})$ have different labels;
\item[]{\emph{{Concurrency:}}} the opposite edges of each square
of ${\cD}({\cE})$ are labeled with the same labels.
\end{enumerate}

In the following, we use interchangeably the labeling of an event
structure and the labeling of the edges of its domain.

\subsection{Mazurkiewicz Traces}

A \emph{(Mazurkiewicz) trace alphabet} is a pair $M=(\Sigma,I)$, where
$\Sigma$ is a finite non-empty alphabet set and
$I\subset \Sigma\times \Sigma$ is an irreflexive and symmetric
relation called the \emph{independence relation}.  The relation
$D := (\Sigma\times \Sigma)\setminus I$ is called the \emph{dependence
  relation}. As usual, $\Sigma^*$ is the set of finite words with
letters in $\Sigma$. For $\sigma\in \Sigma^*$, $\last(\sigma)$ denotes
the last letter of $\sigma$.  The independence relation $I$ induces
the equivalence relation $\sim_I$, which is the reflexive and
transitive closure of the binary relation $\leftrightarrow_I$: \emph{if
  $\sigma,\sigma'\in \Sigma^*$ and $(a,b)\in I$, then
  $\sigma ab\sigma'\leftrightarrow_I\sigma ba\sigma'$.} The
$\sim_I$-equivalence class containing $\sigma\in \Sigma^*$ is called a
\emph{(Mazurkiewicz) trace} and will be denoted by $\trsigma$. The
trace $\trsigma$ is \emph{prime} if $\sigma$ is non-null and for every
$\sigma'\in \trsigma$, $\last(\sigma)=\last(\sigma')$. The partial
ordering relation $\sqsubseteq$ between traces is defined by
$\trsigma\sqsubseteq \trace{\tau}$ (and $\trsigma$ is said to be a
\emph{prefix} of $\trace{\tau}$) if there exist $\sigma'\in \trsigma$
and $\tau'\in \trace{\tau}$ such that $\sigma'$ is a prefix of
$\tau'$.

\subsection{Trace-regular Event Structures}\label{regular}

In this subsection, we recall the definitions of regular event
structures, trace-regular event structures, and regular nice labelings
of event structures. We closely follow the definitions and notations
of~\cite{Thi_regular,Thi_conjecture,NiThi}.  Let ${\cE}=(E,\le, \#)$
be an event structure. Let $c$ be a configuration of $\cE$.  Set
$\#(c)=\{ e': \exists e\in c, e\# e'\}$. The \emph{event structure
  rooted at} $c$ is defined to be the triple
$\cE\setminus c=(E',\le',\#')$, where $E'=E\setminus (c\cup \# (c))$,
$\le'$ is $\le$ restricted to $E'\times E'$, and $\#'$ is $\#$
restricted to $E'\times E'$. It can be easily seen that the domain
${\cD}({\cE}\setminus c)$ of the event structure $\cE\setminus c$ is
isomorphic to the principal filter $\cF(c)$ of $c$ in ${\cD}({\cE})$
such that any configuration $c'$ of ${\cD}({\cE})$ corresponds to the
configuration $c'\setminus c$ of ${\cD}({\cE}\setminus c)$.

For an event structure ${\cE}=(E,\le, \#)$, define the equivalence
relation $R_{\cE}$ on its configurations as follows: for two
configurations $c$ and $c'$, set $cR_\cE c'$ if
$\cE\setminus c\equiv \cE\setminus c'$.  The \emph{index} of $\cE$ is
the number of equivalence classes of $R_\cE$, i.e., the number of
isomorphism types of futures of configurations of $\cE$. The event
structure $\cE$ is
\emph{regular}~\cite{Thi_regular,Thi_conjecture,NiThi} if $\cE$ has
finite index and finite degree.

Given a labeled event structure $\cE^{\lambda}=(\cE,\lambda)$, for any
configuration $c$ of $\cE$, if we restrict $\lambda$ to
$\cE\setminus c$, then we obtain a labeled event structure
$(\cE\setminus c,\lambda)$ denoted by $\cE^{\lambda}\setminus
c$. Analogously, define the equivalence relation $R_{\cE^{\lambda}}$
on its configurations by setting $c R_{\cE^{\lambda}} c'$ if
$\cE^{\lambda}\setminus c\equiv \cE^{\lambda}\setminus c'$. The
index of $\cE^{\lambda}$ is the number of equivalence classes of
$R_{\cE^{\lambda}}$.  We say that an event structure $\cE$ admits a
\emph{regular nice labeling} if there exists a nice labeling $\lambda$
of $\cE$ with a finite alphabet $\Sigma$ such that $\cE^{\lambda}$ has
finite index.

We continue with the definition of trace-regular event
structures~\cite{Thi_regular,Thi_conjecture}.  For a trace alphabet
$M=(\Sigma,I)$, an $M$-\emph{labeled event structure} is a labeled
event structure $\cE^{\phi}=(\cE, \lambda)$, where $\cE=(E,\le, \#)$
is an event structure and $\lambda: E\rightarrow \Sigma$ is a labeling
function which satisfies the following conditions:
\begin{enumerate}[{(LES}1)]
\item $e\#_{\mu} e'$ implies $\lambda(e)\ne \lambda (e')$,
\item if $e\lessdot e'$ or $e\#_{\mu} e'$, then
  $(\lambda(e),\lambda(e'))\in D$,
\item if $(\lambda(e),\lambda(e'))\in D$, then $e\le e'$ or $e'\le e$
  or $e\# e'$.
\end{enumerate}
We call $\lambda$ a \emph{trace labeling} of $\cE$ with the
trace alphabet $M=(\Sigma,I)$. The conditions (LES2) and (LES3) on the
labeling function ensures that the concurrency relation $\|$ of
$\cE$ respects the independence relation $I$ of $M$.  In
particular, since $I$ is irreflexive, from (LES3) it follows that any
two concurrent events are labeled differently. Since by (LES1) two
events in minimal conflict are also labeled differently, this implies
that $\lambda$ is a finite nice labeling of $\cE$.

An $M$-labeled event structure $\cE^{\lambda}=(\cE, \lambda)$ is
\emph{regular} if ${\cE^{\lambda}}$ has finite index. Finally, an
event structure $\cE$ is called a \emph{trace-regular event
  structure}~\cite{Thi_regular,Thi_conjecture} if there exists a trace
alphabet $M=(\Sigma,I)$ and a regular $M$-labeled event structure
$\cE^{\lambda}$ such that $\cE$ is isomorphic to the underlying event
structure of $\cE^{\lambda}$.  From the definition immediately follows
that every trace-regular event structure is also a regular event
structure.

\subsection{Net Systems and their Event Structure
  Unfoldings}\label{unfold}

In the following presentation of finite 1-safe Petri nets and their
unfoldings to event structures, we closely follow the paper by
Thiagarajan and Yang~\cite{ThiaYa}.  A \emph{net system} (or,
equivalently, a \emph{finite 1-safe Petri net}) is a quadruplet
$N=(S,\Sigma,F,m_{0})$ where $S$ and $\Sigma$ are disjoint finite sets
of \emph{places} and \emph{transitions} (called also \emph{actions} or
\emph{events}), $F\subseteq (S\times \Sigma)\cup (\Sigma\times S)$ is
the \emph{flow relation}, and $m_{0}\subseteq S$ is the \emph{initial
  marking}. For $v\in S\cup \Sigma$, set $\buv=\{ u: (u,v)\in F\}$ and
$\vbu=\{ u: (v,u)\in F\}$.  A \emph{marking} of $N$ is a subset of
$S$. The \emph{transition relation} (or the \emph{firing rule})
$\longrightarrow\subseteq 2^S\times \Sigma\times 2^S$ is defined by
$m\xlongrightarrow{a} m'$ if $\bua\subseteq m$,
$(\abu-\bua)\cap m=\varnothing$, and $m'=(m-\bua)\cup \abu$.  The
transition relation $\longrightarrow$ is extended to sequences of
transitions as follows (this new relation is also denoted by
$\longrightarrow$): (1) $m\xlongrightarrow{\varepsilon} m$ for any
marking $m$ and (2) if $m\xlongrightarrow{\sigma} m'$ for
$\sigma\in \Sigma^*$ and $m'\xlongrightarrow{a}m''$ for $a\in \Sigma$,
then $m\xlongrightarrow{\sigma a}m''$. $\sigma\in \Sigma^*$ is called
a \emph{firing sequence} at $m$ if there exists a marking $m'$ such
that $m\xlongrightarrow{\sigma} m'$. Denote by $FS$ the set of firing
sequences at $m_{0}$. A marking $m$ is \emph{reachable} if there exists
a firing sequence $\sigma$ such that
$m_{0}\xlongrightarrow{\sigma}m$. We denote by $\cR(N)$ the
set of reachable markings of $N$.

Given a net system $N=(S,\Sigma,F,m_{0})$, there is a canonical way to
associate a $\Sigma$-labeled event structure ${\cE}_N$ with $N$. The
trace alphabet associated with $N$ is the pair $(\Sigma,I)$, where
$(a,b)\in I$ iff $(\abu\cup \bua)\cap (\bbu \cup
\bub)=\varnothing$. Observe that the trace alphabet $(\Sigma,I)$ is
independent of the initial marking of $N$.  A \emph{firing trace } of
$N$ is a trace $\trsigma$ where $\sigma\in FS$.  Denote by $\FT(N)$
the set of all firing traces of $N$ and by $\PFT(N)$ the subset of
$\FT(N)$ consisting of prime firing traces.

\begin{figure}
  \centering
  \includegraphics[scale=0.5]{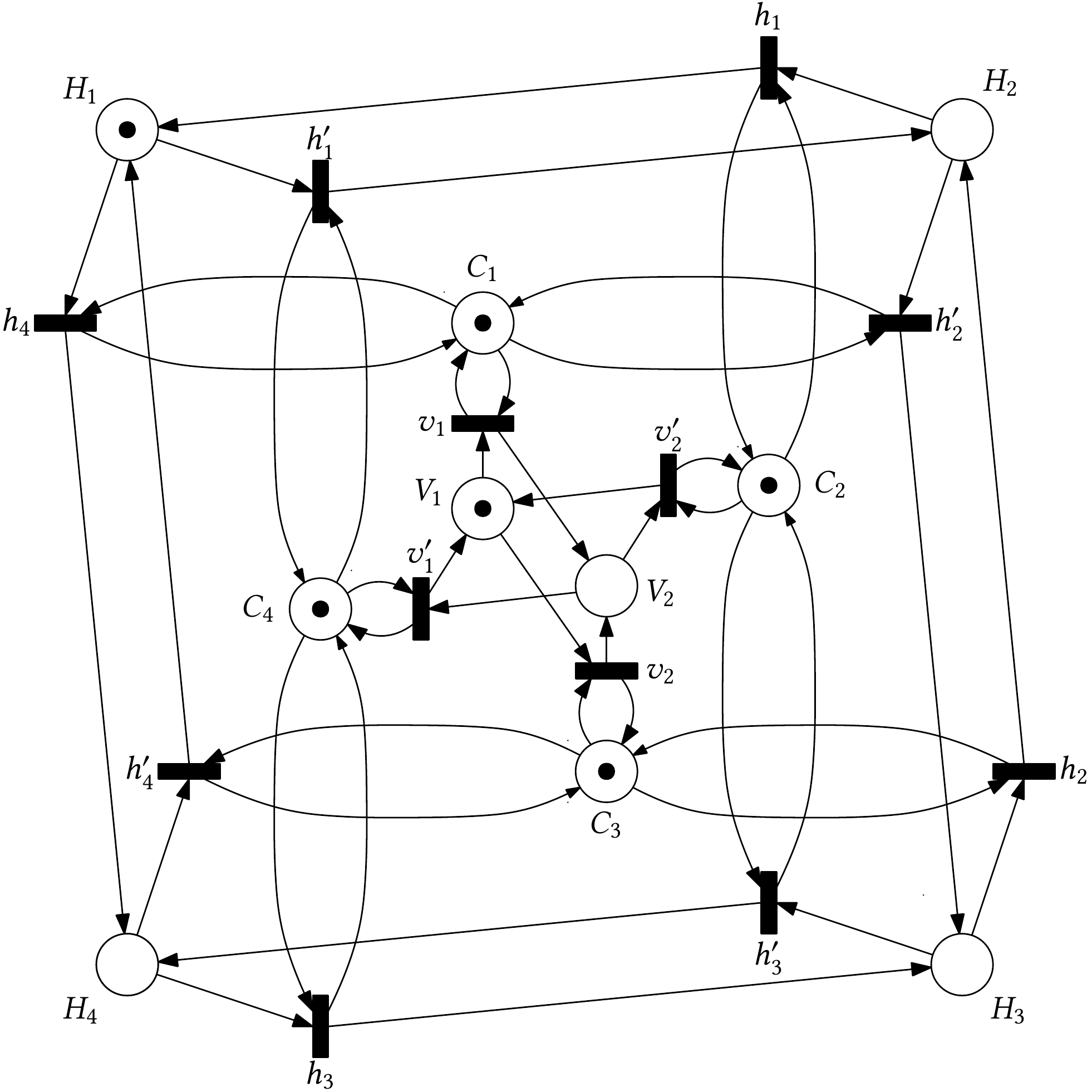}%
  \caption{The net system $N^*$ has 12 transitions (represented by
    rectangles)  and 10 places (represented by circles).}%
  \label{fig-petrinet}
\end{figure}

\begin{example}\label{ex-petrinet}
  In Fig.~\ref{fig-petrinet}, we present a net system $N^*$ with 12
  transitions $h_1$, $h_1'$, $h_2$, $h_2'$, $h_3$, $h_3'$, $h_4$,
  $h_4'$, $v_1$, $v_1'$, $v_2$, $v_2'$ and 10 places $H_1$, $H_2$,
  $H_3$, $H_4$, $V_1$, $V_2$, $C_1$, $C_2$, $C_3$, $C_4$. The initial
  marking is given by the places containing tokens in the figure.

  The trace alphabet $(\Sigma,I)$ associated with the net system $N^*$
  has 12 letters $h_1$, $h_1'$, $h_2$, $h_2'$, $h_3$, $h_3'$, $h_4$,
  $h_4'$, $v_1$, $v_1'$, $v_2$, $v_2'$. The letter $v_1$ is dependent
  from the letters $v_1'$, $v_2$, $v'_2$ (because of the place $V_1$
  and/or $V_2$), $h'_2$, and $h_4$ (because of $C_1$). The letter
  $h_1$ is dependent from the letters $h_1'$, $h_4$, $h'_4$ (because
  of the place $H_1$), $h'_2$, $h_2$ (because of $H_2$), $h'_3$, and
  $v_2'$ (because of $C_2$). For the remaining letters, the dependency
  relation is defined in a similar way.
  Observe that the letters $h_1$ and $h_3$ are independent, but there
  is no firing trace containing $h_1$ and $h_3$ as consecutive
  letters.
\end{example}

The \emph{event structure unfolding}~\cite{NiPlWi} of $N$ is the event
structure ${\cE}_N=(E,\le, \#,\lambda)$, where
\begin{itemize}
\item $E$ is the set of prime firing traces
  $\PFT(N)$,
\item $\le$ is $\sqsubseteq$, restricted to $E\times E$,
\item $e, e' \in E$ are in conflict iff there is no firing trace
  $\trsigma$ such that $e\sqsubseteq \trsigma$ and
  $e'\sqsubseteq \trsigma$,
\item $\lambda: E\rightarrow \Sigma$ is given by
  $\lambda(\trsigma)=\last(\sigma)$.
\end{itemize}

There exists an equivalence between unfoldings and trace-regular event
structures:

\begin{theorem}[{\cite[Theorem 1]{Thi_conjecture}}]\label{th:regular_trace}
  An event structure $\cE$ is a trace-regular event structure if and
  only if there exists a net system $N$ such that $\cE$ and ${\cE}_N$
  are isomorphic.
\end{theorem}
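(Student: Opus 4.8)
The plan is to prove the two implications separately. For the easier direction, suppose $\cE$ is isomorphic to the unfolding $\cE_N$ of a net system $N=(S,\Sigma,F,m_0)$; I would show that the canonical labeling $\lambda(\trsigma)=\last(\sigma)$ makes $\cE_N$ a regular $M$-labeled event structure for the trace alphabet $M=(\Sigma,I)$ associated with $N$. First I would check that $\lambda$ is a trace labeling by verifying (LES1)--(LES3). Condition (LES1) holds because two co-initial prime traces sharing the same last letter $a$ would both be the extension by $a$ of the same linearization of the common enabling configuration, hence equal; so distinct co-initial events, and in particular events in minimal conflict, carry different labels. For (LES2) and (LES3) I would exploit the definition $(a,b)\in I$ iff $(\abu\cup\bua)\cap(\bbu\cup\bub)=\varnothing$: transitions sharing a place cannot fire concurrently, which forces any two concurrent events to have independent labels (this is (LES3)), while a causal cover $e\lessdot e'$ or a minimal conflict $e\#_{\mu}e'$ forces $\lambda(e)$ and $\lambda(e')$ to share a place and hence be dependent (this is (LES2)).

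Next I would establish finite index. The key observation is that the isomorphism type of the future $\cF(c)$ of a configuration $c$ depends only on the reachable marking $m$ obtained by firing any linearization of $c$ from $m_0$: the firing rule is memoryless, so $\cE_N\setminus c$ is determined by $m$ alone. Since $|\cR(N)|\le 2^{|S|}$ is finite, $\cE_N$ has finite index; and since at most $|\Sigma|$ transitions are enabled at any marking, $\degree(\cE_N)$ is finite. Together with the fact that $\lambda$ is a trace labeling, this shows that $\cE_N$, and hence $\cE$, is trace-regular.

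For the converse, suppose $\cE$ is trace-regular with a regular $M$-labeling $\lambda\colon E\to\Sigma$, $M=(\Sigma,I)$. I would first pass from the domain $\cD(\cE)$, whose Hasse-diagram edges are labeled by $\Sigma$, to a finite deterministic transition system $\cA$ by quotienting configurations according to the finite-index relation $R_{\cE^{\lambda}}$; the states of $\cA$ are the isomorphism types of futures and its $\Sigma$-labeled transitions are inherited from the enabled events. The language of linearizations read along $\cA$ is a trace-closed regular language over $M$, and the Determinism and Concurrency conditions guarantee that $\cA$ is a deterministic transition system in which independent letters form commuting diamonds. I would then synthesize a net system $N$ from $\cA$ using the theory of regions: the transitions of $N$ are the letters of $\Sigma$, the places are suitable regions of $\cA$, the flow relation is read off from how each region is crossed by the transitions, and the initial marking records which regions contain the initial state.

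Finally I would verify that $N$ is $1$-safe and that $\cE_N\equiv\cE$, by exhibiting a label- and order-preserving bijection between the prime firing traces of $N$ and the events of $\cE$. The main obstacle is precisely this region-synthesis step in the converse direction: one must produce enough regions (places) to satisfy the state-separation and event/state-separation axioms, so that the reachable-marking system of $N$ is isomorphic to $\cA$, while \emph{simultaneously} arranging that transitions with $I$-independent labels share no place, so that the independence relation induced by $N$ is exactly $I$ and the net stays $1$-safe. This is where the trace-labeling conditions (LES2)--(LES3) are essential, since they tie the concurrency relation of $\cE$ to the combinatorics of $I$ and thereby constrain which regions must be chosen.
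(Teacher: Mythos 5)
There is no proof of this statement in the paper to compare against: Theorem~\ref{th:regular_trace} is imported verbatim from Thiagarajan's paper \cite{Thi_conjecture} (where it is Theorem~1), and the present paper only \emph{uses} it (e.g., in the proof of Proposition~\ref{Petri-to-special-isomoprhism2}). So your proposal can only be measured against the literature. Your first direction (unfolding $\Rightarrow$ trace-regular) is essentially the standard argument and is sound as a sketch: determinism of the firing rule gives (LES1), the definition of $I$ via shared places gives (LES2)--(LES3), and finite index follows because the labeled future of a configuration is determined by the reachable marking alone, of which there are at most $2^{|S|}$; finite degree is immediate.

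The converse direction, however, contains a genuine gap --- one you flag yourself as ``the main obstacle.'' Invoking ``the theory of regions'' on the finite quotient $\cA$ is not a proof: synthesis of a 1-safe net from a finite transition system requires the state-separation and event/state-separation axioms, which do \emph{not} follow from determinism plus commuting diamonds, and you give no argument that they hold for quotients of trace-regular domains. Moreover, your plan demands that the reachable marking graph of the synthesized $N$ be isomorphic to $\cA$ itself; this is stronger than what the theorem requires (only the \emph{unfolding} of $N$ must be $\cE$) and is in general too strong a target --- the known proof goes through regular trace-closed languages and Zielonka's theorem on asynchronous automata, whose construction produces nets with state spaces far larger than $\cA$, and this blow-up is precisely why the direction is deep rather than routine. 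Finally, even granting a net $N$ whose marking graph is $\cA$, you still owe a lifting argument that unfolding $N$ recovers $\cE$ itself: the quotient map $\cD(\cE)\to\cA$ has nontrivial loops in its image, so recovering $\cD(\cE)$ from $\cA$ is a universal-cover-type statement (of the kind the present paper formalizes with directed NPC complexes), not the mere ``label- and order-preserving bijection'' you assert. In short, the easy half of your proposal is fine, but the hard half defers exactly the mathematical content of the theorem.
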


This lead Thiagarajan to conjecture
in~\cite{Thi_regular,Thi_conjecture} that

\begin{conjecture}\label{conjecture_regular}
  An event structure $\cE$ is isomorphic to the event structure
  ${\cE}_N$ arising from a finite 1-safe Petri net $N$ if and only if
  $\cE$ is regular.
\end{conjecture}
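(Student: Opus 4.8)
By Theorem~\ref{th:regular_trace}, an event structure $\cE$ is isomorphic to some $\cE_N$ precisely when $\cE$ is trace-regular, and it was observed after the definition of trace-regularity that every trace-regular event structure is regular. Hence the forward implication of the biconditional---if $\cE\equiv\cE_N$ for some net system $N$, then $\cE$ is regular---follows immediately by composing these two facts, with no further work. The entire substance of the statement therefore lies in the converse, and the plan is to concentrate all effort there.

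To attack the converse I would again route through Theorem~\ref{th:regular_trace}: since $\cE\equiv\cE_N$ for some $N$ holds exactly when $\cE$ is trace-regular, it suffices to prove that \emph{every regular event structure is trace-regular}. So suppose $\cE=(E,\le,\#)$ has finite index and finite degree, with $\degree(\cE)=d$. Finite index yields only finitely many isomorphism types among the rooted event structures $\cE\setminus c$, and finite degree bounds $|en(c)|\le d$ uniformly. The plan is to exploit this finite-state structure to manufacture a trace alphabet $M=(\Sigma,I)$ together with an $M$-labeling $\lambda$ of finite index. Concretely, I would first try to produce a finite nice labeling by fixing, once and for all, a labeling of the events enabled at a representative of each isomorphism type of future, and then propagate these assignments consistently across all configurations whose futures are isomorphic---this is the route by which a regular nice labeling, once in hand, can be upgraded to a trace labeling in the style of Morin. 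One would then read off the independence relation $I$ from the observed concurrency patterns, verify the three conditions (LES1)--(LES3), and check that the finite index of $\cE$ is inherited by $\cE^{\lambda}$.

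The main obstacle, and the step I expect to be the crux, is exactly this passage from a finite nice labeling to a \emph{trace} labeling governed by a single, global independence relation $I$. Conditions (LES2) and (LES3) demand that concurrency be detectable from labels alone: two events must be concurrent if and only if their labels are independent in $M$, uniformly across all of $\cE$. In an abstract regular event structure, however, two events carrying the same letter of a nice labeling may be concurrent in one context yet ordered or in conflict in another, so there is no a priori reason why a fixed relation $I$ on a finite alphabet should capture $\|$ consistently while simultaneously keeping the index of $\cE^{\lambda}$ finite. I would therefore expect the argument to close only under additional geometric or combinatorial hypotheses on the domain---such as context-freeness, or hyperbolicity and specialness of the associated cube complex---and, in view of the bijections and counterexamples foreshadowed in the introduction, I would be prepared for the converse to be \emph{false} in full generality, with the failure located precisely at this requirement of a globally consistent independence relation on a finite alphabet.
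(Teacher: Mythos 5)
The statement you were asked to prove is Conjecture~\ref{conjecture_regular}, i.e.\ Thiagarajan's conjecture itself: the paper contains \emph{no proof} of it, and indeed the paper explicitly asserts (citing the authors' earlier work~\cite{CC-thiag}) that it is \emph{false}. The counterexample there is a strongly regular event structure, built from a principal filter of the universal cover of a particular NPC square complex of Wise, which admits no finite regular nice labeling and hence is not trace-regular. So no proof attempt of the biconditional can succeed, and the correct resolution is a refutation, not a proof.

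Measured against that, your attempt splits cleanly. The forward implication you give is correct and is exactly the paper's own observation: by Theorem~\ref{th:regular_trace} unfoldings of net systems are precisely the trace-regular event structures, and every trace-regular event structure is regular (Subsection~\ref{regular}). Your converse, however, cannot be closed, and you in fact located the obstruction precisely where it lives. Your reduction ``regular $\Rightarrow$ trace-regular'' is the entire content of the conjecture, and the step you flagged as the crux---upgrading a finite nice labeling to a labeling governed by a single global independence relation $I$ on a finite alphabet, while keeping the index of $\cE^{\lambda}$ finite---is exactly what fails in general. Since Morin's result shows that any event structure admitting a \emph{regular} nice labeling is trace-regular, any counterexample must be a regular event structure with no regular nice labeling, and this is what~\cite{CC-thiag} constructs. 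Your closing caveat, that the converse should hold only under additional geometric hypotheses, matches the paper's actual positive results: Theorem~\ref{virtuallyspecial} and Corollary~\ref{cor-tracereg-implies-stronglyreg} show that the conjecture is true exactly when the domain arises from a finite (virtually) \emph{special} cube complex, so specialness is the hypothesis that must be added to strong regularity. In short: your proof proposal does not (and could not) establish the statement, but your diagnosis of where and why it breaks is essentially the one the paper gives.
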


\subsection{The MSO Theory of Trace Event Structures}\label{MSO-def}

We start with the definition of monadic second-order logic
(MSO-logic). Let $A$ be a universe and $\cA=(A,{(R_i)}_{i\in I})$, where
$R_i\subseteq A^{n_i}$ for $i\in I$ be a \emph{relational structure}.
The MSO logic of $\cA$ has two types of variables: individual (or
first-order) variables and set (or second-order) variables. The
individual variables range over the elements of $A$ and are denoted by
$x,y,z$, etc. The set variables range over subsets of $A$ and are
denoted $X,Y,Z$, etc. \emph{MSO-formulas} over the signature of $\cA$
are constructed from the atomic formulas $R_i(x_1,\ldots,x_{n_i})$,
$x=y$, and $x\in X$ (where $i\in I$, $x_1,\ldots,x_{n_i},x,y$ are
individual variables and $X$ is a set variable) using the Boolean
connectives $\neg,\vee,\wedge$, and quantifications over first order
and second order variables.  The notions of free variables and bound
variables are defined as usual. A formula without free occurrences of
variables is called an \emph{MSO-sentence}.  If
$\varphi(x_1,\ldots,x_n,X_1,\ldots,X_m)$ is an MSO-formula where the
individual variables $x_1,\ldots,x_n$ and the set variables
$X_1,\ldots,X_m$ occur freely in $\varphi$, and $a_1,\ldots,a_n\in A$
and $A_1,\ldots,A_m\subseteq A$, then
${\cA}\models \varphi(a_1,\ldots,a_n,A_1,\ldots,A_m)$ means that
$\varphi$ evaluates to true in $\cA$ when $x_i$ evaluates to $a_i$ and
$X_j$ evaluates to $A_j$. The \emph{MSO theory} of $\cA$, denoted by
${\MSO}({\cA})$, is the set of all MSO-sentences $\varphi$ such that
${\cA}\models \varphi$.  The MSO theory of $\cA$ is \emph{decidable}
if there exists an algorithm deciding for each MSO-sentence $\varphi$
in $\MSO(\cA)$, whether ${\cA}\models \varphi$.

Let ${\cE}_N=(E,\le, \#,\lambda)$ be a trace-regular event structure,
which is the event structure unfolding of a net system
$N=(S,\Sigma,F,m_{0})$ (by Theorem~\ref{th:regular_trace}, any
trace-regular event structure admits such a
representation). Thiagarajan and Yang~\cite{ThiaYa} defined the MSO
theory ${\MSO}({\cE}_N)$ of ${\cE}_N$ as the MSO theory of the
relational structure $(E,{(R_a)}_{a\in \Sigma},\le)$, where $E$ is the
set of events, $R_a\subset E$ is the set of $a$-labeled events for
$a\in \Sigma$, and $\le\subseteq E\times E$ is the precedence
relation. The MSO theory of a net system $N$ is the MSO theory of its
event structure unfolding~\cite{ThiaYa}.

As shown in~\cite{ThiaYa}, the conflict relation $\#$, the concurrency
relation $\parallel$, and the notion of a configuration of $\cE$, as
well as other connectives of propositional logic such as
$\wedge,\Rightarrow$ (implies) and $\equiv$ (if and only if),
universal quantification over individual and set variables
($\forall x(\varphi), \forall X(\varphi)$), the set inclusion relation
$\subseteq $ ($X\subseteq Y$), can be defined as well. The conflict
and concurrency relations $\#$ and $\parallel$ of $\cE$ are defined
in~\cite{ThiaYa} in the following way:
\begin{itemize}
\item
  $x\widehat{\#} y:=\neg (x\le y)\wedge \neg (y\le x)\wedge
  \bigvee_{(a,b)\in D} (R_a(x)\wedge R_b(y))$.
\item
  $x\# y:= \exists x'\exists y' (x'\le x \wedge y'\le y \wedge
  x'\widehat{\#} y')$.
\item
  $x\parallel y:= \neg (x\le y)\wedge \neg (y\le x)\wedge \neg(x\#
  y)$.
\end{itemize}

An interpretation ${\cI}$ assigns to every individual variable an
event in $E$ and every set variable, a subset of $E$.  Then $\cE_N$
satisfies a formula $\varphi$ under an interpretation $\cI$ (i.e.,
$\cE_N$ is a model of the sentence $\varphi$), denoted by
${\cE_N}\models_{\cI} \varphi$, if the following holds~\cite{ThiaYa}:
\begin{itemize}
\item ${\cE_N}\models_{\cI} R_a(x)$ iff $\lambda({\cI}(x))=a$.
\item ${\cE_N}\models_{\cI} x\le y$ iff $\cI(x)\le \cI(y)$.
\item ${\cE_N}\models_{\cI} x\in X$ iff $\cI(x)\in \cI(X)$.
\item ${\cE_N}\models_{\cI} \exists x(\varphi)$ iff there exists
  $e\in E$ and an interpretation $\cI'$ such that
  $\cE\models_{\cI} \varphi$ where $\cI'$ satisfies the conditions:
  $\cI'(x)=e$, $\cI'(y)=\cI(y)$ for every individual variable $y$
  other than $x$, and $\cI'(X)=\cI(X)$ for every set variable $X$.
\item ${\cE_N}\models_{\cI} \exists X(\varphi)$ iff there exists
  $E'\subseteq E$ and an interpretation $\cI'$ such that
  $\cE\models_{\cI'} \varphi$ where $\cI'$ satisfies: $\cI'(X)=E'$,
  $\cI'(x)=\cI(x)$ for every individual variable $x$, and
  $\cI'(Y)=\cI(Y)$ for every set variable $Y$ other than $X$.
\item ${\cE_N}\models_{\cI} \neg \varphi$ and
  $\cE\models_{\cI} \varphi_1 \vee \varphi_2$ are defined in the
  standard way.
\end{itemize}

It turns out that the MSO theory of trace event structures is not
always decidable: Fig. 1 of~\cite{ThiaYa} presented an example of such
an event structure suggested by Igor Walukiewicz. To circumvent this
example, Thiagarajan and Yang formulated the following notion.

The event structure $\cE=(E,\le, \#)$ is \emph{grid-free}~\cite{ThiaYa}
if there does not exist three pairwise disjoint sets $X,Y,Z$ of $E$
satisfying the following conditions:
\begin{itemize}
\item $X=\{x_0,x_1,x_2,\ldots\}$ is an infinite set of events with
  $x_0<x_1<x_2<\cdots$.
\item $Y=\{y_0,y_1,y_2,\ldots\}$ is an infinite set of events with
  $y_0<y_1<y_2<\cdots$.
\item $X\times Y\subseteq \parallel$.
\item There exists an injective mapping $g: X\times Y\rightarrow Z$
  satisfying: if $g(x_i,y_j)=z$ then $x_i<z$ and $y_j<z$. Furthermore,
  if $i'>i$ then $x_{i'}\nless z$ and of $j'>j$ then $y_{j'}\nless z$.
\end{itemize}

The $\Sigma$-labelled event structure $(E,\le, \#, \lambda)$ is
\emph{grid-free} if the event structure $(E,\le, \#)$ is
grid-free. The net system $N$ is \emph{grid-free} if the event
structure $\cE_N$ is grid-free. As noticed in~\cite{ThiaYa},
Walukiewicz's net system is not grid-free. Thiagarajan and
Yang~\cite{ThiaYa} proved that if a net system $N$ is not grid-free,
then the MSO theory $\MSO(\cE_N)$ is not decidable. Thiagarajan
conjectured that the converse holds:

\begin{conjecture}\label{MSO}
  The $\MSO$ theory of a net system $N$ is decidable iff $N$ is
  grid-free.
\end{conjecture}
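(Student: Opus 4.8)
The plan is to split the biconditional and observe that one direction is already available. The ``only if'' direction---if $\MSO(\cE_N)$ is decidable then $N$ is grid-free---is exactly the contrapositive of the theorem of Thiagarajan and Yang~\cite{ThiaYa} recalled above: when $N$ is not grid-free, its unfolding carries an infinite grid, and the ordered, pairwise-concurrent families $X=\{x_i\}$ and $Y=\{y_j\}$ together with the injection $g\colon X\times Y\to Z$ let one MSO-interpret the quarter-grid $\NN\times\NN$ inside $\cE_N$; since the MSO theory of $\NN\times\NN$ is undecidable, so is $\MSO(\cE_N)$. Thus the entire force of the conjecture rests on the converse, and I would devote the proof to showing that grid-freeness of $\cE_N$ implies decidability of $\MSO(\cE_N)$.

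For the converse I would route decidability through the domain graph rather than the event structure itself. First I would show that $\MSO(\cE_N)$ is interpretable in the MSO theory of the directed domain graph $\oG(\cE_N)$: the causal order $\le$ is the reachability order of the Hasse diagram of $\cD(\cE_N)$, the unary predicates $R_a$ are read off the edge labels of the domain, and the derived relations $\#$ and $\parallel$ are already exhibited as MSO-formulas over $(\le,(R_a))$ in the excerpt, so that decidability of $\MSO(\oG(\cE_N))$ would yield decidability of $\MSO(\cE_N)$. I would then invoke the classical transfer theorems---Courcelle's and Seese's theorems for graphs of bounded treewidth, and the Müller--Schupp theorem~\cite{MullerSchupp} for context-free graphs---to conclude that $\MSO(\oG(\cE_N))$ is decidable as soon as the domain graph has \emph{finite treewidth} (equivalently, is context-free). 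This reduces everything to the single structural implication
\[
  \cE_N \text{ grid-free} \;\Longrightarrow\; G(\cE_N) \text{ has finite treewidth.}
\]

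To prove this last implication I would lean on the excluded-grid theorem of Robertson and Seymour, by which a family of graphs has bounded treewidth if and only if it excludes some fixed grid as a minor. The natural strategy is to ``straighten'' any large grid minor of the $1$-skeleton $G(\cE_N)$ into the ordered pattern of the definition: extract two long $\le$-chains $x_0<x_1<\cdots$ and $y_0<y_1<\cdots$ whose events are pairwise concurrent, so that the corresponding squares of $\cD(\cE_N)$ tile a flat grid, and recover the upper corners $g(x_i,y_j)$ as the minimal joins of the two chains. If every large grid minor refined to such a concurrent grid, grid-freeness would bound the size of grid minors, hence (by Robertson--Seymour) bound the treewidth, and the conjecture would follow along the chain above.

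The hard part---and the step I expect to be the genuine obstacle---is precisely this refinement of a graph-theoretic grid \emph{minor} into an event-structure grid. Grid-freeness in the sense of the definition forbids only large \emph{flat} grids of pairwise-concurrent events, i.e.\ it is essentially a no-large-flat (hyperbolicity-type) condition on the CAT(0) cube complex domain; but a graph can carry arbitrarily large grid minors, and thus unbounded treewidth, with no flat concurrent subgrid at all, because treewidth also explodes through branching governed by the conflict relation. Hyperbolic domains are the extremal case: they are grid-free, having no flats, yet may branch richly enough to defeat every bounded tree-decomposition. Consequently the implication ``grid-free $\Rightarrow$ finite treewidth'' is exactly the link I cannot expect to secure, and locating this failure is where the attempted proof of Conjecture~\ref{MSO} breaks down.
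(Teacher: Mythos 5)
There is no proof to recover here: the statement you were asked about is precisely the conjecture that this paper \emph{refutes}, and your proposal --- to your credit --- ends by locating exactly the step that cannot be secured. The ``only if'' direction is indeed the Thiagarajan--Yang theorem, and your reduction chain for the converse (interpret $\MSO(\cE_N)$ in $\MSO(\oG(\cE_N))$, then use Seese/Courcelle/M\"uller--Schupp to tie decidability of the graph theory to finite treewidth) matches the paper's Proposition~\ref{prop-MSOG-MSO} and Theorem~\ref{mso-graph}. But the remaining link, ``grid-free $\Rightarrow$ $G(\cE_N)$ has finite treewidth,'' is false, and the paper's Theorem~\ref{th-counterexample} makes your diagnosed obstruction concrete: it builds a virtually special square complex $Z$ (one vertex, four edges, three squares) whose associated trace-regular event structure $\cE_Z$ has a \emph{hyperbolic} domain graph (Lemma~\ref{GZ-hyperbolic}), hence is grid-free by Proposition~\ref{hyperbolic-gridfree}, yet $G(\cE_Z)$ contains arbitrarily large half-grid minors and so has infinite treewidth (Lemma~\ref{GZ-treewidth}). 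This is exactly the ``hyperbolic but richly branching'' phenomenon you predicted: grid-freeness is a metric no-flats condition on the CAT(0) cube complex, while treewidth explodes through conflict-driven branching, and the two are independent.

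One refinement you should note, since your interpretation argument runs only in the direction graph $\Rightarrow$ event structure: infinite treewidth of $G(\cE_Z)$ does not by itself yield undecidability of $\MSO(\cE_Z)$ (the paper explicitly leaves open whether $\MSO(\cE_Z)$ is decidable). To convert graph-level undecidability into event-structure-level undecidability, the paper passes to the \emph{hairing} $\dotcE_Z$, adding a hair event per configuration so that vertex variables of $\oG(\cE_Z)$ become encodable as (pairwise conflicting) hair-event variables; Theorem~\ref{barycentric-subdivision-MSO} then gives that $\MSO(\dotcE_Z)$ is decidable iff $G(\cE_Z)$ has finite treewidth. Since the hairing is again trace-regular (Proposition~\ref{barycentric-subdivision}, realized by a net system via Proposition~\ref{prop-hairing-N}) and remains grid-free, $\dotcE_Z$ is the actual counterexample. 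So your proposal fails only because the statement itself fails, and the failure mode is the one you named; what you were missing is the hairing device needed to push the treewidth obstruction from the domain graph into the MSO theory of the event structure itself.
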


\section{Domains, Median Graphs, and CAT(0) Cube
  Complexes}\label{sec-median-CAT(0)}

In this section, we recall the bijections between domains of event
structures and median graphs/CAT(0) cube complexes established
in~\cite{ArOwSu} and~\cite{BaCo}, and between median graphs and
1-skeleta of CAT(0) cube complexes established in~\cite{Ch_CAT}
and~\cite{Ro}.

\subsection{Median Graphs}

Let $G=(V,E)$ be a simple, connected, not necessarily finite
graph. The \emph{distance} $d_G(u,v)$ between two vertices $u$ and $v$
is the length of a shortest $(u,v)$-path, and the \emph{interval}
$I(u,v)$ consists of all vertices (metrically) \emph{between} $u$ and
$v$: $I(u,v):=\{ x\in V: d_G(u,x)+d_G(x,v)=d_G(u,v)\}$. An induced
subgraph $H$ of $G$ (or the corresponding vertex set) is called
\emph{convex} if it includes the interval of $G$ between any of its
vertices. An induced subgraph $H$ of $G$ is called \emph{gated} if for
any vertex $v \in V(G) \setminus V(H)$ there exists a unique vertex
$v' \in V(H)$ such that $v' \in I(v,u)$ for any $u \in V(H)$ (the
vertex $v'$ is called the \emph{gate} of $v$ in $H$). Any gated
subgraph is convex.  A graph $G=(V,E)$ is \emph{isometrically
  embeddable} into a graph $H=(W,F)$ if there exists a mapping
$\varphi : V\rightarrow W$ such that
$d_H(\varphi (u),\varphi (v))=d_G(u,v)$ for all vertices $u,v\in V$.

A graph $G$ is called \emph{median} if $I(x,y)\cap I(y,z)\cap I(z,x)$
contains a unique vertex $m$ for each triplet $x,y,z$ of vertices ($m$
is the \emph{median} of $x,y,z$). Median graphs are bipartite.  Basic
examples of median graphs are trees, hypercubes, rectangular grids,
and Hasse diagrams of distributive lattices and of median
semilattices~\cite{BaCh_survey}.  With any vertex $v$ of a median
graph $G=(V,E)$ is associated a \emph{canonical partial order} $\le_v$
defined by setting $x\le_v y$ if and only if $x\in I(v,y)$; $v$ is
called the \emph{basepoint} of $\le_v$. Since $G$ is bipartite, the
Hasse diagram $G_v$ of the partial order $(V,\le_v)$ is the graph $G$
in which any edge $xy$ is directed from $x$ to $y$ if
$d_G(x,v)<d_G(y,v)$. We call $G_v$ a \emph{pointed median graph}.

Median graphs can be obtained from hypercubes by amalgams and median
graphs are themselves isometric subgraphs of
hypercubes~\cite{BaVdV,Mu}. The canonical isometric embedding of a
median graph $G$ into a (smallest) hypercube can be determined by the
so called \emph{Djokovi\'c-Winkler (``parallelism'')} relation
$\Theta$ on the edges of $G$~\cite{Dj,Wink}. For median graphs, the
equivalence relation $\Theta$ can be defined as follows. First say
that two edges $uv$ and $xy$ are in relation $\Theta'$ if they are
opposite edges of a $4$-cycle in $G$. Then let $\Theta$ be the
reflexive and transitive closure of $\Theta'$. Any equivalence class
of $\Theta$ constitutes a cutset of the median graph $G$, which
determines one factor of the canonical hypercube~\cite{Mu}. The cutset
(equivalence class) $\Theta(xy)$ containing an edge $xy$ defines a
convex split $\{ W(x,y),W(y,x)\}$ of $G$~\cite{Mu}, where
$W(x,y)=\{ z\in V: d_G(z,x)<d_G(z,y)\}$ and $W(y,x)=V\setminus W(x,y)$
(we call the complementary convex sets $W(x,y)$ and $W(y,x)$
\emph{halfspaces}). Conversely, for every convex split of a median
graph $G$ there exists at least one edge $xy$ such that
$\{ W(x,y),W(y,x)\}$ is the given split. We denote by
$\{ \Theta_i: i\in I\}$ the equivalence classes of the relation
$\Theta$ (in~\cite{BaCo}, they were called parallelism classes). For
an equivalence class $\Theta_i, i\in I$, we denote by $\{ A_i,B_i\}$
the associated convex split. We say that $\Theta_i$ \emph{separates}
the vertices $x$ and $y$ if $x\in A_i,y\in B_i$ or
$x\in B_i,y\in A_i$. The isometric embedding $\varphi$ of $G$ into a
hypercube is obtained by taking a basepoint $v$, setting
$\varphi(v)=\varnothing$ and for any other vertex $u$, letting
$\varphi(u)$ be all parallelism classes of $\Theta$ which separate $u$
from $v$.

From the definition it follows that any median graph $G$ satisfies the
following \emph{quadrangle condition}: for any four vertices $u,v,w,z$
with $d(v,z)=d(w,z)=1$ and $2=d(v,w)\leq d(u,v)=d(u,w)=d(u,z)-1$,
there exists a common neighbor $x$ of $v$ and $w$ such that
$d(u,x)=d(u,v)-1$. In fact, $x$ is the median of the triplet $u,v,w$
and is thus unique.

In median graphs, convex subgraphs are gated (this is not true for
general graphs). We now give a simple but useful local
characterization of convex sets of median graphs:

\begin{lemma}[\cite{Ch_metric}]\label{convex}
  A connected subgraph $S$ of a median graph $G$ is convex if and only
  if $S$ is locally-convex, i.e., $I(x,y)\subseteq S$ for any two
  vertices $x,y$ of $S$ having a common neighbor in $S$.
\end{lemma}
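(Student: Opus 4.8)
The plan is to prove both directions of the local characterization of convexity. The forward direction is immediate: if $S$ is convex, then by definition $I(x,y) \subseteq S$ for all $x,y \in S$, and in particular this holds when $x,y$ have a common neighbor in $S$, so $S$ is locally-convex. The entire content of the lemma is therefore the converse, and I would devote the proof to showing that local convexity implies convexity.

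For the converse, I would argue by contradiction. Suppose $S$ is connected and locally-convex but not convex. Then there exist vertices $x, y \in S$ and a vertex $u \in I(x,y) \setminus S$. Among all such "violating'' pairs, I would choose $x,y$ so that the distance $d_G(x,y)$ is minimized; call this minimal distance $k$. Local convexity handles the cases $k \le 2$ directly: for $k=1$ the interval is $\{x,y\} \subseteq S$ (no violation), and for $k=2$ any common neighbor lying in $I(x,y)$ must be inside $S$ because $x,y$ have a common neighbor in $S$ (this requires knowing the common neighbor used to witness local convexity is indeed in $S$, which I address below). So I may assume $k \ge 3$. The strategy is then to produce, from the minimal violating pair $(x,y)$, a strictly shorter violating pair, contradicting minimality.

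To carry out the descent, I would use the median structure together with the quadrangle condition stated just before the lemma. Since $G$ is median and bipartite, pick a vertex $u \in I(x,y) \setminus S$ and look at the geodesic structure near $x$. Let $x'$ be the neighbor of $x$ on a shortest $(x,y)$-path through $u$; then $d_G(x',y) = k-1$ and $x' \in I(x,y)$. The key step is to show that $x' \in S$: using connectedness of $S$ and the minimality of $k$, I would locate a common neighbor of $x$ and some $S$-vertex at distance two, apply local convexity to force intermediate vertices into $S$, and thereby slide $x$ one step toward $y$ while staying in $S$. This produces a pair $(x', y)$ with $d_G(x',y) = k-1 < k$ and with $u \in I(x',y) \setminus S$ still witnessing a violation, contradicting the minimality of $k$. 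The precise mechanism is to take the median $m$ of a suitable triple and invoke the quadrangle condition to find the required common neighbor inside $S$.

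The main obstacle, and the step I expect to require the most care, is exactly this descent: ensuring that the replacement vertex $x'$ genuinely lies in $S$ rather than merely in $I(x,y)$. Local convexity only gives us control over intervals between $S$-vertices sharing a common neighbor \emph{in} $S$, so the crux is to manufacture such a configuration from the minimal counterexample. I would exploit the fact that in a median graph the median $m = m(x', y, u)$ of three vertices is unique and that the quadrangle condition produces a common neighbor $x''$ of two vertices at distance two that is strictly closer to a third reference vertex; threading these two facts together so that the newly found vertices are forced into $S$ by repeated applications of local convexity is the delicate part. Once the descent is established, the contradiction with minimality is immediate and the proof concludes.
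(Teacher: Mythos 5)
The paper does not prove this lemma at all: it is quoted as a known result (with a citation) and used as a black box, so your proposal can only be judged on its own terms. On those terms it has a genuine gap, and the gap sits exactly where you yourself locate "the crux": showing that the replacement vertex $x'$ lies in $S$. The difficulty is that you minimize the wrong quantity. Minimizing $d_G(x,y)$ over violating pairs only tells you that intervals between vertices of $S$ at $G$-distance less than $k$ lie in $S$; it says nothing that would force any neighbor of $x$ inside $I(x,y)$ to belong to $S$. The vertices of $S$ adjacent to $x$ (the first steps of a path in $S$ joining $x$ to $y$) may all point away from $y$, i.e., lie outside $I(x,y)$, and a path in $S$ from $x$ to $y$ may be far longer than $d_G(x,y)$. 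Consequently the auxiliary pairs your descent would have to use (for instance $x$ together with the penultimate vertex of an $S$-path to $y$) are "smaller" only with respect to the distance measured \emph{inside} $S$, not with respect to $d_G$, and your minimality hypothesis is silent about them. The same defect already breaks your base case: two vertices of $S$ at $G$-distance $2$ need not have any common neighbor \emph{in} $S$, so "local convexity handles $k\le 2$" is false as stated; you flag this and promise to address it below, but the descent you then sketch never does.

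The standard repair is to change the induction parameter: induct on $d_S(x,y)$, the distance in the induced subgraph $S$ (this is precisely where connectedness enters), proving simultaneously that $d_G(x,y)=d_S(x,y)$ and $I(x,y)\subseteq S$. Concretely, take a shortest $S$-path $x=x_0,x_1,\ldots,x_m=y$. The induction hypothesis gives $I(x,x_{m-1})\subseteq S$ and $d_G(x,x_{m-1})=m-1$; bipartiteness plus the hypothesis applied to $(x_1,y)$ rules out $d_G(x,y)=m-2$, so $d_G(x,y)=m$. Now for $u\in I(x,y)$ let $w$ be the neighbor of $y$ with $u\in I(x,w)$. If $w=x_{m-1}$ you are done by induction. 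Otherwise the quadrangle condition (applied to $x$, $x_{m-1}$, $w$, $y$) produces a common neighbor $z$ of $x_{m-1}$ and $w$ with $d_G(x,z)=m-2$; then $z\in I(x,x_{m-1})\subseteq S$, local convexity applied to $z,y$ (which have the common neighbor $x_{m-1}\in S$) puts $w$ into $S$, and since $d_S(x,w)\le m-1$ the induction hypothesis applied to $(x,w)$ yields $u\in S$. So your instinct to use bipartiteness, the quadrangle condition and local convexity is correct; what is missing, and what makes the proof work, is running the descent on the distance inside $S$ rather than on $d_G$.
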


\subsection{Nonpositively Curved Cube Complexes}

A 0-cube is a single point. A 1-\emph{cube} is an isometric copy of
the segment $[-1,1]$ and has a cell structure consisting of 0-cells
$\{\pm 1\}$ and a single 1-cell. An $n$-\emph{cube} is an isometric
copy of ${[-1,1]}^n$, and has the product structure, so that each closed
cell of ${[-1,1]}^n$ is obtained by restricting some of the coordinates
to $+1$ and some to $-1$.  A \emph{cube complex} is obtained from a
collection of cubes of various dimensions by isometrically identifying
certain subcubes.  The \emph{dimension} $\dimension(X)$ of a cube
complex $X$ is the largest $d$ for which $X$ contains a $d$-cube. A
\emph{square complex} is a cube complex of dimension 2.  The 0-cubes
and the 1-cubes of a cube complex $X$ are called \emph{vertices} and
\emph{edges} of $X$ and define the graph $X^{(1)}$, the
\emph{$1$-skeleton} of $X$. We denote the vertices of $X^{(1)}$ by
$V(X)$ and the edges of $X^{(1)}$ by $E(X)$.  For $i\in\NN$, we denote
by $X^{(i)}$ the $i$-\emph{skeleton} of $X$, i.e., the cube complex
consisting of all $j$-dimensional cubes of $X$, where $j\le i$. A
\emph{square complex} $X$ is a combinatorial 2-complex whose 2-cells
are attached by closed combinatorial paths of length 4. Thus, one can
consider each 2-cell as a square attached to the 1-skeleton $X^{(1)}$
of $X$. All cube complexes occurring in this paper are
\emph{simple}~\cite{HaWi1} in the sense that two distinct squares
cannot meet along two consecutive edges.

The \emph{star} $\St(v,X)$ of a vertex $v$ of $X$ is the set of all
cubes containing $v$.  The \emph{link} of a vertex $x\in X$ is the
simplicial complex $\Link(x,X)$ with a $(d-1)$-simplex for each
$d$-cube containing $x$, with simplices attached according to the
attachments of the corresponding cubes.  More generally, the
\emph{link} of a $k$-dimensional cube $Q$ of $X$ is the simplicial
complex $\Link(Q,X)$ with a $(d-k-1)$-simplex for each $d$-cube
containing $Q$, with simplices attached according to the attachments
of the corresponding cubes.  Note that in the definition of the link,
the simplices are added with multiplicity: if $x$ (or $Q$) belongs to
a cube $Q'$ in multiple ways, then $Q'$ contributes to the link with
multiple (disjoint) simplices. For example, if $X$ is a 1-dimensional
complex with only one 0-cube $x$ and only one 1-cube $e$ (a loop
around $x$), then $\Link(x,X)$ consists of two disjoint $0$-simplices.

The link $\Link(x,X)$ is a \emph{flag (simplicial) complex} if each
$(d+1)$-clique in $\Link(x,X)$ spans an $d$-simplex. A cube complex
$X$ is \emph{flag} if $\Link(x,X)$ is a flag simplicial complex for
every vertex $x \in X$. This flagness condition of $\Link(x,X)$ can be
restated as follows: whenever three $(k + 2)$-cubes of ${X}$ share a
common $k$-cube containing $x$ and pairwise share common
$(k+1)$-cubes, then they are contained in a $(k+3)$--cube of $X$. A
cube complex $X$ is called \emph{simply connected} if it is connected
and if every continuous mapping of the 1-dimensional sphere $S^1$ into
$X$ can be extended to a continuous mapping of the disk $D^2$ with
boundary $S^1$ into $X$.  Note that $X$ is connected iff
$G(X)=X^{(1)}$ is connected, and $X$ is simply connected iff $X^{(2)}$
is simply connected. Equivalently, a cube complex $X$ is {simply
  connected} if $X$ is connected and every cycle $C$ of its
$1$-skeleton is null-homotopic, i.e., it can be contracted to a single
point by elementary homotopies.

Given two cube complexes $X$ and $Y$, a \emph{covering (map)} is a
surjection $\varphi\colon Y \to X$ mapping cubes to cubes, preserving
the inclusion of cubes, and such that $\varphi$ induces an isomorphism
between $\Link(v,Y)$ and $\Link(\varphi(v),X)$.  The condition on the
links is equivalent to the following condition on the stars:
$\varphi_{|\St(v,Y)}\colon \St(v,Y)\to \St(\varphi(v),X)$ is a
bijection for every vertex $v$ in $Y$.  The space $Y$ is then called a
\emph{covering space} of $X$. For any vertex $v$ of $X$, any vertex
$\tv$ of $Y$ such that $\varphi(\tv)=v$ is called a \emph{lift} of
$v$.  It is well-known that if $X$ and $Y$ are flag cube complexes,
$Y$ is a covering space of $X$ if and only if the $2$-skeleton
$Y^{(2)}$ of $Y$ is a covering space of $X^{(2)}$.  A \emph{universal
  cover} of $X$ is a simply connected covering space; it always exists
and it is unique up to isomorphism~\cite[Sections 1.3 and
4.1]{Hat}. The universal cover of a complex $X$ will be denoted by
$\tX$. In particular, if $X$ is simply connected, then its universal
cover $\tX$ is $X$ itself.

An important class of cube complexes studied in geometric group theory
and combinatorics is the class of CAT(0) cube complexes.  CAT(0)
spaces can be characterized in several different natural ways and have
many strong properties, see for
example~\cite{BrHa}. Gromov~\cite{Gromov} gave a beautiful
combinatorial characterization of CAT(0) cube complexes, which can be
also taken as their definition:

\begin{theorem}[\cite{Gromov}]\label{Gromov}
  A cube complex $X$ is CAT(0) if and only if $X$ is simply connected
  and the links of all vertices of $X$ are flag complexes.  If $Y$ is
  a cube complex in which the links of all vertices are flag
  complexes, then the universal cover $\tY$ of $Y$ is a CAT(0) cube
  complex.
\end{theorem}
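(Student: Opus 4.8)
The plan is to equip $X$ with its natural piecewise Euclidean metric, in which each $n$-cube is isometric to the unit cube ${[0,1]}^n\subseteq\RR^n$ and distances are measured by the induced length metric, and then to deduce the statement from two classical ingredients: \emph{Gromov's link condition}, which characterizes \emph{local} nonpositive curvature of a piecewise Euclidean cube complex combinatorially, and the \emph{Cartan--Hadamard theorem} for metric spaces, which upgrades local to global nonpositive curvature under simple connectivity. Since a cube complex has only finitely many isometry types of cells (the regular Euclidean cubes), a theorem of Bridson guarantees that $X$ with this metric is a complete geodesic space, so both tools apply.

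The geometric heart is a reformulation of the flag condition. The link $\Link(x,X)$ of a vertex carries a natural spherical metric making it an \emph{all-right spherical complex}: each simplex is a spherical simplex all of whose edges have length $\pi/2$, reflecting the orthogonality of the faces of a cube. The step I would isolate as a lemma is Gromov's criterion that such an all-right spherical complex is CAT$(1)$ if and only if it is flag. Granting this, the standard characterization of nonpositive curvature via links (the cube complex $X$ is locally CAT$(0)$ iff every $\Link(x,X)$ is CAT$(1)$) yields the clean statement that \emph{$X$ is locally CAT$(0)$ if and only if the link of every vertex is a flag complex.}

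For the forward implication, suppose $X$ is globally CAT$(0)$. Then $X$ is uniquely geodesic, and geodesic contraction towards any basepoint shows $X$ is contractible, hence in particular simply connected; moreover global CAT$(0)$ trivially implies local CAT$(0)$, so by the link characterization every vertex link is flag. For the converse, assume $X$ is simply connected and all links are flag. The link characterization makes $X$ locally CAT$(0)$, and since $X$ is complete, simply connected, and geodesic, the Cartan--Hadamard theorem promotes this to global CAT$(0)$. Finally, for the last assertion, if $Y$ has flag links then $Y$ is locally CAT$(0)$; because a covering map $\varphi$ induces isomorphisms $\Link(\tv,\tY)\cong\Link(\varphi(\tv),Y)$, the universal cover $\tY$ again has flag links and hence is locally CAT$(0)$, and being simply connected by construction it is CAT$(0)$ by the converse direction already proved.

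The main obstacle is the isolated lemma that an all-right spherical complex is CAT$(1)$ exactly when it is flag. The nontrivial direction requires controlling geodesics inside the spherical complex and showing that flagness is precisely what rules out closed local geodesics of length strictly less than $2\pi$: a non-flag "empty simplex" (a clique failing to span a simplex) produces a short geodesic loop that violates the CAT$(1)$ comparison inequality, while flagness forces every such loop to have length at least $2\pi$. Combining this with the (standard but technical) verification that the angular characterization of nonpositive curvature is valid for piecewise Euclidean complexes built from finitely many isometry types of cells is where essentially all the work resides; the two implications of the theorem and the universal-cover statement are then short formal consequences.
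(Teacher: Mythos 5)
The paper does not give a proof of this statement at all: it is quoted as a classical theorem of Gromov (with the modern treatment in Bridson--Haefliger), and the standard proof of that theorem is precisely the argument you outline --- the piecewise-Euclidean metric, Gromov's link condition reducing local CAT(0) to the CAT(1) property of the all-right spherical links, the lemma that an all-right spherical complex is CAT(1) iff it is flag, Cartan--Hadamard to pass from local to global nonpositive curvature under simple connectivity, and invariance of links under covering maps for the universal-cover assertion. Your outline is therefore correct and matches the canonical proof; the one caveat is that your claim that a cube complex has ``only finitely many isometry types of cells'' holds only when the dimension is bounded, which is the hypothesis under which Bridson's completeness/geodesicity theorem (and hence the Cartan--Hadamard step) applies directly, so in the infinite-dimensional case one must either restrict to finite-dimensional complexes or invoke the more delicate extensions of these metric results.
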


In view of Theorem~\ref{Gromov}, the cube complexes in which the links
of vertices are flag complexes are called \emph{nonpositively curved
  cube complexes} or shortly \emph{NPC complexes}.  As a corollary of
Gromov's result, \emph{for any NPC complex $X$, its universal cover
  $\tX$ is CAT(0).}

A square complex $X$ is a \emph{VH-complex} (\emph{vertical-horizontal
  complex}) if the 1-cells (edges) of $X$ are partitioned into two
sets $V$ and $H$ called \emph{vertical} and \emph{horizontal} edges
respectively, and the edges in each square alternate between edges in
$V$ and $H$. Notice that if $X$ is a VH-complex, then $X$ satisfies
the Gromov's nonpositive curvature condition since no three squares
may pairwise intersect on three edges with a common vertex, i.e.,
VH-complexes are particular NPC square complexes.


We continue with the bijection between CAT(0) cube complexes and
median graphs:

\begin{theorem}[\cite{Ch_CAT,Ro}]\label{CAT(0)-median}
  Median graphs are exactly the 1-skeleta of CAT(0) cube complexes.
\end{theorem}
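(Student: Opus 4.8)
The plan is to prove the two inclusions separately, using Gromov's combinatorial criterion (Theorem~\ref{Gromov}) as the bridge between the combinatorial notion of median graph and the geometric notion of CAT(0) cube complex.

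\emph{From median graphs to CAT(0) cube complexes.} Given a median graph $G$, I would build a cube complex $X(G)$ by gluing in a $k$-cube for every induced subgraph of $G$ isomorphic to the $1$-skeleton of a $k$-dimensional cube. The first task is to check that this is well-defined and that $X(G)$ is simple: using the quadrangle condition together with Lemma~\ref{convex}, one shows that the vertex set of any such cube subgraph is convex, hence gated, in $G$, so that geodesics inside it agree with those of $G$ and no two squares meet along two consecutive edges. Next I would verify that $X(G)$ is NPC, i.e.\ that all vertex links are flag. By the restatement of flagness given in the text, it suffices to show that whenever three $(k+2)$-cubes share a common $k$-cube through a vertex and pairwise share $(k+1)$-cubes, they are contained in a common $(k+3)$-cube; this is exactly what the median property guarantees, the base case $k=0$ being a direct application of the quadrangle condition, which produces the median vertex completing three squares into a $3$-cube. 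Finally I would establish that $X(G)$ is simply connected, so that its universal cover is $X(G)$ itself; the cleanest route is Mulder's convex-expansion theorem~\cite{Mu}, which builds $G$ from a single vertex by a sequence of convex expansions, each preserving contractibility, and hence simple connectivity, of the associated complex. Applying Theorem~\ref{Gromov} then shows that $X(G)$ is CAT(0), and by construction its $1$-skeleton is $G$.

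\emph{From CAT(0) cube complexes to median graphs.} Let $X$ be a CAT(0) cube complex and $G=X^{(1)}$. The heart of the argument is the hyperplane structure. I would show that each $\Theta$-class of edges of $G$ determines a hyperplane whose complement consists of exactly two convex halfspaces $W(x,y)$ and $W(y,x)$, and that $d_G(u,v)$ equals the number of hyperplanes separating $u$ and $v$. Convexity of the halfspaces is where the flag and simply-connected hypotheses genuinely enter, through the standard analysis of how hyperplanes cross cubes; once this is in place, $G$ embeds isometrically into a hypercube exactly as described in the text. The median of a triple $x,y,z$ is then produced by majority vote: for each hyperplane one selects the halfspace containing at least two of the three vertices. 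Each of the three resulting families of halfspaces has nonempty pairwise intersection, so by the Helly property for gated convex subsets of $G$ their total intersection is a single vertex $m$, and checking $m\in I(x,y)\cap I(y,z)\cap I(z,x)$ together with its uniqueness shows that $G$ is median.

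\emph{Main obstacle.} The principal difficulty in both directions is the coupling between the combinatorial separation data (the $\Theta$-classes and their halfspaces) and the topological hypotheses of Gromov's theorem. In the forward direction this is the verification of simple connectivity of $X(G)$, which must reduce an arbitrary cycle to the squares already filled in. In the converse direction it is proving that hyperplanes of a CAT(0) cube complex are two-sided, do not self-intersect, and bound convex halfspaces: this is precisely the step that uses the CAT(0) assumption (flag links plus simple connectivity) rather than mere local structure, and it is the part on which the majority-vote construction of the median ultimately rests.
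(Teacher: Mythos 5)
Your proposal does not follow the paper, because the paper does not actually prove this theorem: it is imported from \cite{Ch_CAT,Ro}, and the only proof indication given is the remark, immediately after the statement, that \cite{Ch_CAT} deduces it from the local-to-global characterization of median graphs (a graph is median if and only if its cube complex is simply connected and satisfies the 3-cube condition). Your forward direction coincides with that strategy: fill in all cube subgraphs, check convexity of cubes via Lemma~\ref{convex}, verify flagness via the quadrangle/3-cube condition, obtain simple connectivity from Mulder's convex-expansion theorem \cite{Mu}, and invoke Theorem~\ref{Gromov}. Your converse direction, however, is a genuinely different route. The characterization-based proof is short: a CAT(0) cube complex is by definition simply connected with flag links, so (after identifying $X$ with the cube complex of its $1$-skeleton, which flagness permits) its $1$-skeleton satisfies the two hypotheses of the characterization and is therefore median, with no hyperplane theory needed. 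You instead re-derive the hyperplane machinery --- two-sidedness, convexity of halfspaces, the distance formula --- and run the Roller/Sageev majority-vote argument. Your route is longer, but it buys exactly the halfspace structure that the paper otherwise has to quote separately (Theorems~\ref{Sageev} and~\ref{gated-carriers}); the characterization route gets medianness for free but hides the geometric work inside the cited local-to-global theorem.

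Two steps of your sketch need patching, though neither is fatal. First, Mulder's expansion theorem is a statement about \emph{finite} median graphs, while the median graphs relevant in this paper (domains of event structures, universal covers) are infinite; simple connectivity of your complex $X(G)$ must be reduced to the finite case, e.g., by observing that every cycle lies in the convex hull of a finite vertex set and that such hulls in median graphs are finite median graphs. Second, the Helly property for gated sets yields nonempty intersections only for \emph{finite} families; in the majority-vote step you must first discard the cofinitely many hyperplanes that do not separate the triple $x,y,z$, and then argue that a consistent choice of halfspaces differing from the vertex orientation of $x$ in only finitely many coordinates is realized by an actual vertex. This last point is where the real content of your converse direction sits, and as written it is glossed over by the phrase ``their total intersection is a single vertex''; uniqueness is indeed immediate from the distance formula, but existence is not.
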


The proof of Theorem~\ref{CAT(0)-median} presented in~\cite{Ch_CAT} is
based on the following local-to-global characterization of median
graphs. A graph $G$ is median if and only if its cube complex is
simply connected and $G$ satisfies the 3-cube condition: if three
squares of $G$ pairwise intersect in an edge and all three intersect
in a vertex, then they belong to a 3-cube.

A \emph{midcube} of the $d$-cube $c$, with $d\geq 1$, is the isometric
subspace obtained by restricting exactly one of the coordinates of $d$
to 0.  Note that a midcube is a $(d-1)$-cube.  The midcubes $a$ and
$b$ of $X$ are \emph{adjacent} if they have a common face, and a
\emph{hyperplane} $H$ of $X$ is a subspace that is a maximal connected
union of midcubes such that, if $a,b\subset H$ are midcubes, either
$a$ and $b$ are disjoint or they are adjacent.  Equivalently, a
hyperplane $H$ is a maximal connected union of midcubes such that, for
each cube $c$, either $H\cap c=\varnothing$ or $H\cap c$ is a single
midcube of $c$. The \emph{carrier} $N(X)$ of a hyperplane $H$ of $X$ is
the union of all cubes intersecting $H$.

\begin{theorem}[\cite{Sa}]\label{Sageev}
  Each hyperplane $H$ of a CAT(0) cube complex $X$ is a CAT(0) cube
  complex of dimension $\le \dimension(X)-1$ and ${X}\setminus H$ has
  exactly two components, called \emph{halfspaces}.
\end{theorem}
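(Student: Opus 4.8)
The plan is to prove the statement by passing to the median-graph picture supplied by Theorem~\ref{CAT(0)-median}. Write $G=X^{(1)}$ for the $1$-skeleton of $X$, which is a median graph since $X$ is CAT(0). Under this correspondence a hyperplane $H$ of $X$ is dual to exactly one parallelism class $\Theta_i$ of the Djokovi\'c--Winkler relation $\Theta$: the edges of $X$ crossed by $H$ (those whose midpoint lies on $H$) are precisely the edges of $\Theta_i$, and two such edges bound a common square of $X$ if and only if the corresponding midcubes are adjacent in $H$. Thus the combinatorics of $H$ is entirely encoded by $\Theta_i$ together with its convex split $\{A_i,B_i\}$.

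The two-component assertion is then almost immediate. By the structure theory of median graphs recalled above, $\Theta_i$ is a cutset of $G$ whose removal yields the convex split $\{A_i,B_i\}=\{W(x,y),W(y,x)\}$ for any dual edge $xy$. Convexity forces each of $A_i$ and $B_i$ to be connected, while every edge of $G$ outside $\Theta_i$ has both endpoints on the same side of the split. Hence deleting $H$ from $X$ separates it into exactly the two pieces carried by $A_i$ and $B_i$: no cube other than those crossed by $H$ is affected, and the two pieces cannot be rejoined since any rejoining edge would belong to $\Theta_i$. Therefore $X\setminus H$ has exactly two components, the halfspaces.

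For the remaining assertion that $H$ is CAT(0) I would realize $H$ itself as a median graph and invoke Theorem~\ref{CAT(0)-median} in the reverse direction. The natural candidate is the boundary $\partial A_i$ of the halfspace, i.e.\ the set of vertices of $A_i$ incident to a dual edge; the parallelism map sending a dual edge $xy$ with $x\in A_i$ to its endpoint $x$ is a bijection from the vertex set of $H$ onto $\partial A_i$, carrying each square of $H$ to a square of $G$ inside $\partial A_i$ and, more generally, each $k$-cube of $H$ to the corresponding $k$-face of a crossing $(k+1)$-cube. Granting that $\partial A_i$ is convex in $G$, it is itself a median graph, hence the $1$-skeleton of a CAT(0) cube complex by Theorem~\ref{CAT(0)-median}, and $H$ is isomorphic to this complex. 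The dimension bound is structural: each cube meeting $H$ meets it in a single midcube, and a midcube of a $d$-cube is a $(d-1)$-cube, so $\dimension(H)\le\dimension(X)-1$.

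The main obstacle is exactly the convexity of $\partial A_i$, which I would establish through the local criterion of Lemma~\ref{convex}. Take $u,w\in\partial A_i$ sharing a common neighbor $z\in\partial A_i$; since $A_i$ is convex, $I(u,w)\subseteq A_i$, and it remains to show that every vertex of $I(u,w)$ is incident to a dual edge. Using the quadrangle condition together with the parallelism bijection between $\partial A_i$ and $\partial B_i$, one transports the interval $I(u,w)$ across the split to a parallel copy in $B_i$, matching each of its vertices with a partner in $B_i$ along a $\Theta_i$-edge; this forces $I(u,w)\subseteq\partial A_i$. As an alternative to the median-graph route, one may instead verify Gromov's criterion (Theorem~\ref{Gromov}) directly for the complex of midcubes: its links are flag because the link in $H$ of the midcube dual to an edge $e$ is isomorphic to the link of $e$ in $X$, which is flag, and the same interval-transport argument shows that every cycle in $H$ bounds, so $H$ is simply connected. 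Either way the delicate point is the transport-and-matching step across the split, where the median (quadrangle) structure of $X$ is indispensable.
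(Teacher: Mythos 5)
The first thing to say is that the paper contains no proof of this statement: Theorem~\ref{Sageev} is quoted from Sageev~\cite{Sa} and used as a black box, so there is nothing internal to compare your argument against. Judged on its own terms, your plan follows the standard median-graph route to this classical result and fits the toolkit the paper assembles (Theorem~\ref{CAT(0)-median}, Lemma~\ref{convex}, Theorem~\ref{gated-carriers}); the dimension bound and the combinatorial core of the two-component claim are correct.

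Two substantive points are left open, one of which is easier than you think and one of which is a genuine gap. First, the convexity of $\partial A_i$ that you single out as the main obstacle is essentially what the paper records as Theorem~\ref{gated-carriers}: the vertex halfspaces $A_i,B_i$ (the convex split of the $\Theta$-class, whose convexity is median-graph theory and does not depend on the present theorem) and the carrier $N(H)$ induce convex subgraphs, so $\partial A_i=N(H)\cap A_i$ is convex as an intersection of convex sets; if you prefer your route via Lemma~\ref{convex}, note that the lemma requires the subgraph to be \emph{connected}, which holds because a hyperplane is by definition a connected union of midcubes. Second, the assertion ``$H$ is isomorphic to this complex'' needs the converse direction of your cube correspondence, which your text never establishes: every cube of $G$ with all vertices in $\partial A_i$ must be shown to be the $A_i$-face of a cube crossed by $H$. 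Without this, $H$ is only identified with a subcomplex of the CAT(0) complex on $\partial A_i$, and subcomplexes of CAT(0) cube complexes need not be CAT(0) (the boundary $4$-cycle of a square is not simply connected), so CAT(0)-ness does not transfer. The missing step is your ``transport-and-matching'' made explicit: for adjacent $u,v\in\partial A_i$ with dual edges $uu',vv'$, convexity of $B_i$ and the median property force $u'v'$ to be an edge, so $(u,v,v',u')$ is a square; for a square with vertices in $\partial A_i$, the three squares so produced pairwise share edges and share a vertex, so the $3$-cube condition (flagness) yields the crossing $3$-cube, and induction handles higher cubes. The same convexity argument supplies the injectivity you assert without proof (no vertex of $A_i$ is incident to two dual edges). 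Finally, your two-component argument lives in the $1$-skeleton; to conclude for the topological space $X\setminus H$ one should add that points of uncrossed cubes connect within their cube to vertices of a single side, that a crossed cube minus its midcube has exactly two halves, and that the resulting side function is locally constant -- routine, but it is the step that converts the combinatorial split into the topological statement.
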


A 1-cube $e$ (an edge) is \emph{dual} to the hyperplane $H$ if the
0-cubes of $e$ lie in distinct halfspaces of $X \setminus H$, i.e., if
the midpoint of $e$ is in a midcube contained in $H$.  The relation
``dual to the same hyperplane'' is an equivalence relation on the set
of edges of $X$; denote this relation by $\Theta$ and denote by
$\Theta(H)$ the equivalence class consisting of 1-cubes dual to the
hyperplane $H$ ($\Theta$ is precisely the parallelism relation on the
edges of the median graph $X^{(1)}$).
The following results summarize the well known convexity properties of
halfspaces and carriers of CAT(0) cube complexes.

\begin{theorem}[\cite{Mu,vdV}]\label{gated-carriers}
  If $H$ is a hyperplane of a CAT(0) cube complex $X$, then the
  carrier $N(H)$ of $H$ and the two halfspaces defined by $H$
  restricted to the vertices of $X$ induce convex and thus gated
  subgraphs of the 1-skeleton $G(X)$ of $X$. Any convex subgraph $H$
  of $G(X)$ is the intersection of the halfspaces of $G(X)$ containing
  $H$.
\end{theorem}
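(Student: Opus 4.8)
The plan is to work entirely inside the $1$-skeleton $G=G(X)$, which by Theorem~\ref{CAT(0)-median} is a median graph, and to translate the hyperplane vocabulary into the parallelism vocabulary recalled above: the edges dual to $H$ form a single $\Theta$-class $\Theta_i=\Theta(H)$, the two components of $X\setminus H$ from Theorem~\ref{Sageev} are, on the level of vertices, the two sides $\{A_i,B_i\}$ of the associated convex split (with $A_i=W(x,y)$, $B_i=W(y,x)$ for any dual edge $xy$), and the vertex set of the carrier $N(H)$ is the set of endpoints of edges in $\Theta_i$. The single tool underlying everything is what I will call the \emph{single-crossing lemma}: via the canonical isometric embedding $\varphi$ of $G$ into a hypercube, $d_G(u,v)$ equals the number of $\Theta$-classes separating $u$ and $v$, so a shortest path crosses each $\Theta$-class at most once and crosses exactly those classes that separate its endpoints. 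The halfspaces are then immediate: two vertices in the same side $A_i$ are not separated by $\Theta_i$, so no edge of a geodesic between them is dual to $H$, whence the geodesic stays in $A_i$; thus $A_i$ and $B_i$ are convex, and as convex subgraphs of a median graph they are gated.

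Next I would treat the carrier. Write $N(H)=\partial A_i\cup\partial B_i$, where $\partial A_i$ consists of the vertices of $A_i$ incident to an edge of $\Theta_i$ (and symmetrically for $\partial B_i$); the edges of $\Theta_i$ form a perfect matching $\mu$ between $\partial A_i$ and $\partial B_i$. The core step is that $\partial A_i$ is convex. For this I would show that the convex hull of two $\Theta_i$-parallel edges $a\mu(a)$ and $b\mu(b)$ is a prism, i.e.\ the Cartesian product of $I(a,b)$ with a single edge: starting from $a\mu(a)$, one transports a geodesic from $a$ to $b$ across $\Theta_i$ square by square, each square produced by the quadrangle condition, obtaining a parallel geodesic from $\mu(a)$ to $\mu(b)$ and, en route, a $\Theta_i$-partner for every vertex of $I(a,b)$. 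Hence $I(a,b)\subseteq\partial A_i$, so $\partial A_i$ is convex by Lemma~\ref{convex}, and by symmetry so is $\partial B_i$.

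Convexity of the whole carrier then follows again from the single-crossing lemma. A geodesic between $p\in\partial A_i$ and $q\in\partial B_i$ crosses $\Theta_i$ exactly once, so it decomposes as a geodesic inside $\partial A_i$, one dual edge, and a geodesic inside $\partial B_i$, all of which lie in $N(H)$; the cases $p,q\in\partial A_i$ and $p,q\in\partial B_i$ are handled directly by convexity of the two sides. Thus $N(H)$ is convex, hence gated. For the last statement, that a convex subgraph $K$ is the intersection of the halfspaces containing it, I would use gatedness: for $z\notin K$ let $z^{*}$ be its gate and let $\Theta_j$ be the class of the first edge of a $z$--$z^{*}$ geodesic. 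For any $u\in K$ the path $z$--$z^{*}$--$u$ is a geodesic through the gate, so by the single-crossing lemma $\Theta_j$ cannot separate $z^{*}$ from $u$; therefore all of $K$ lies on the $z^{*}$-side of $\Theta_j$, a halfspace containing $K$ but excluding $z$. Intersecting these halfspaces over all $z\notin K$ recovers $K$.

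I expect the prism/parallel-transport step to be the main obstacle, since the existence of the completing squares is precisely where the median (equivalently CAT(0)) structure is essential and where a naive distance argument would fail (in a tree two vertices at distance $2$ have no second common neighbor); the $\Theta_i$-matching is exactly what forces the missing square. Everything else reduces to bookkeeping with the single-crossing lemma and the fact, recalled above, that in median graphs convex subgraphs are gated.
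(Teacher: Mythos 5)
The paper does not actually prove this statement: Theorem~\ref{gated-carriers} is imported from the literature (Mulder and van de Vel), so there is no in-paper proof to compare yours against. Judged on its own, your argument is correct, and it is pleasingly self-contained in exactly the toolkit the paper recalls: the median identification of the $1$-skeleton (Theorem~\ref{CAT(0)-median}), the Djokovi\'c--Winkler dictionary between hyperplanes and $\Theta$-classes, the quadrangle condition, and the fact that convex subgraphs of median graphs are gated. Your ``single-crossing lemma'' is the right engine: since $d_G(u,v)$ counts the $\Theta$-classes separating $u$ and $v$, a geodesic crosses each class at most once and exactly the separating ones, which immediately gives convexity of the two sides $A_i,B_i$, the decomposition of any $(p,q)$-geodesic through a single dual edge, and the separation step in the final halfspace-intersection argument (the gate $z^{*}$ and the class of the first edge of a $z$--$z^{*}$ geodesic do exactly what you claim).

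Two small points deserve explicit justification if this were written out in full. First, you use that the edges of $\Theta_i$ form a matching (so $d(x_1,\mu(a))=2$ rather than $1$ or $0$); this follows from convexity of $A_i,B_i$ together with bipartiteness, since two $\Theta_i$-edges sharing a vertex $\mu(a)$ would put $\mu(a)$ in the interval between two vertices of $A_i$. Second, the parallel-transport step needs an apex to feed into the quadrangle condition as stated in the paper: taking $u=\mu(b)$ works, because class-counting gives $d(\mu(b),x_{j+1})=d(\mu(b),\mu(x_j))=d(\mu(b),x_j)-1$ at every step, and the new square's edge opposite to $x_j\mu(x_j)$ is automatically in $\Theta_i$, so the partner of $x_{j+1}$ is produced and the induction closes (the terminal step $j=k-1$ degenerates to the $4$-cycle on $x_{k-1},b,\mu(b),\mu(x_{k-1})$). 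With these two routine completions, applied to every geodesic from $a$ to $b$ so that all of $I(a,b)$ is covered, your proof is complete; note that once $I(a,b)\subseteq\partial A_i$ holds for all pairs, convexity is immediate and the appeal to Lemma~\ref{convex} is not even needed.
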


\begin{proposition}[\cite{vdV}]\label{Helly-CAT0}
  For any set $\cH$ of $d$ pairwise intersecting hyperplanes of a
  CAT(0) cube complex $X$, the carriers of the hyperplanes of $\cH$
  intersect in a $d$-cube of $X$.
\end{proposition}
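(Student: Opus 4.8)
The plan is to induct on $d$, reducing the dimension at each step by passing to the hyperplane complex of one of the hyperplanes, which is itself a CAT(0) cube complex by Theorem~\ref{Sageev}. Throughout, say that two hyperplanes \emph{cross} if they intersect (equivalently, some square of $X$ has adjacent edges dual to each), and recall that the carrier $N(H)$ of a hyperplane $H$ decomposes as a product $N(H)\cong H\times[-1,1]$: its two boundary copies $H\times\{-1\}$ and $H\times\{+1\}$ lie in the two halfspaces of Theorem~\ref{Sageev}, and the edges joining them are exactly the $1$-cubes dual to $H$. Under this decomposition, a $k$-cube $Q'$ of a slice $H\cong H\times\{\varepsilon\}$ gives rise to a $(k+1)$-cube $Q'\times[-1,1]$ of $X$ that is dual to $H$ and to every hyperplane to which $Q'$ is dual inside $H$. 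The base cases are immediate: for $d=1$ any edge dual to $H_1$ is the desired $1$-cube, and for $d=2$ the crossing of $H_1$ and $H_2$ furnishes a square, i.e.\ a $2$-cube, lying in $N(H_1)\cap N(H_2)$.

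The crucial step is the case $d=3$, which I would prove directly, without appealing to the general statement. Given pairwise crossing hyperplanes $H_1,H_2,H_3$, the crossing of $H_1$ and $H_2$ yields a square $s$ dual to $H_1$ and $H_2$. By Theorem~\ref{gated-carriers} the carrier $N(H_3)$ is gated, so let $\pi_3$ be the gate (projection) map onto $N(H_3)$. Since $H_1$ and $H_2$ each cross $H_3$, they cross $N(H_3)$, hence their parallelism classes $\Theta(H_1),\Theta(H_2)$ are not collapsed by $\pi_3$; consequently $\pi_3(s)$ is again a square dual to $H_1$ and $H_2$, and because its edges are dual to $H_1,H_2\neq H_3$ it lies entirely in one slice $H_3\times\{\varepsilon\}$ of $N(H_3)$. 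Viewing $\pi_3(s)$ as a square of the CAT(0) complex $H_3$ dual to the hyperplanes $H_1\cap H_3$ and $H_2\cap H_3$, the product decomposition above promotes it to a $3$-cube $\pi_3(s)\times[-1,1]$ of $X$ dual to $H_1$, $H_2$, and $H_3$ simultaneously.

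For the inductive step with $d\geq 4$, I would restrict to the CAT(0) cube complex $H_d$. Each $H_i$ with $i<d$ crosses $H_d$, so $H_i':=H_i\cap H_d$ is a hyperplane of $H_d$; applying the already established case $d=3$ to each triple $H_i,H_j,H_d$ produces a $3$-cube whose midcube in the $H_d$-direction is a square of $H_d$ dual to $H_i'$ and $H_j'$, showing that $H_1',\dots,H_{d-1}'$ pairwise cross in $H_d$. By the induction hypothesis they bound a common $(d-1)$-cube $Q'$ of $H_d$, and the product decomposition of $N(H_d)$ promotes $Q'$ to a $d$-cube $Q=Q'\times[-1,1]$ of $X$ dual to all of $H_1,\dots,H_d$. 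Since a cube dual to $H_i$ lies in $N(H_i)$, the cube $Q$ lies in $\bigcap_{H\in\cH}N(H)$, which is exactly the assertion.

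The main obstacle I anticipate is the bookkeeping around the gate map in the $d=3$ step: verifying that the projection onto $N(H_3)$ preserves the parallelism classes of the hyperplanes that cross $N(H_3)$ (so that $\pi_3(s)$ remains a genuine square dual to $H_1,H_2$) and that it lands in a single slice. This is a standard property of projections onto convex, gated subgraphs of median graphs, available through Theorems~\ref{CAT(0)-median} and~\ref{gated-carriers}, but it is where the geometry does the real work; once it is in place, the product structure of carriers and the induction on dimension assemble the $d$-cube without further difficulty. Should the projection bookkeeping prove delicate, an alternative to the $d=3$ argument is to locate, via the Helly property of the pairwise intersecting gated carriers, a vertex incident to three pairwise square-spanning edges dual to $H_1,H_2,H_3$, and then invoke the flag (3-cube) condition of Theorem~\ref{Gromov} to fill in the $3$-cube.
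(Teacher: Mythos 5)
The paper offers no proof of Proposition~\ref{Helly-CAT0} at all: it is quoted from van de Vel's book~\cite{vdV}, where it is established in the language of median structures. So there is no in-paper argument to compare yours against; what you have produced is a reconstruction of the standard geometric proof, and its overall architecture --- induction on $d$, passing to the hyperplane $H_d$, which is itself a CAT(0) cube complex by Theorem~\ref{Sageev} --- is sound. Your inductive step is essentially fine: a $(d-1)$-cube of $H_d$ is, by the very definition of the hyperplane complex, the midcube of a $d$-cube of $X$ crossed by $H_d$, so the ``promotion'' there is free; the only unproved ingredient is your assertion that $H_i\cap H_d$ is a single (connected) hyperplane of $H_d$, which is true but needs an argument (e.g., gate-project into the convex carrier $N(H_d)$ a gallery of squares joining two $H_i$-dual edges lying in it).

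The one step that, as literally written, does not work is the promotion in the $d=3$ case. The decomposition $N(H_3)\cong H_3\times[-1,1]$ is an isomorphism of the carrier \emph{subcomplex}, whereas $\pi_3(s)$ is a priori only a square of $X$ whose four vertices happen to lie in a slice; ``viewing it as a square of the complex $H_3$'' presupposes that every such square is the bottom face of a cube crossed by $H_3$, which is exactly the extension property you are trying to establish, so this passage is circular. Fortunately, the fallback you sketch in your last sentence is not merely an alternative: it is the clean proof, and it works uniformly for all $d$, making the induction and the carrier product structure unnecessary. Carriers are gated (Theorem~\ref{gated-carriers}) and pairwise intersecting (each pair of hyperplanes of $\cH$ crosses in a square), so by the Helly property of gated sets they share a vertex $v$; every vertex of $N(H_i)$ lies in a cube crossed by $H_i$ and hence has an incident edge $e_i=va_i$ dual to $H_i$; for $i\ne j$ the edges $e_i,e_j$ span a square (choose $w$ in the quadrant of $\{H_i,H_j\}$ opposite to $v$, which is nonempty because $H_i$ and $H_j$ cross, and check that the median $m(a_i,a_j,w)$ is a common neighbor of $a_i$ and $a_j$ distinct from $v$, so that $(v,a_i,m,a_j)$ is a $4$-cycle and hence a square by Theorem~\ref{CAT(0)-median}); finally, flagness of $\Link(v,X)$ (Theorem~\ref{Gromov}) turns the $d$ pairwise square-spanning edges at $v$ into a $d$-cube, which, being dual to every $H_i$, lies in $\bigcap_{H\in\cH}N(H)$.
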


\subsection{Domains versus Median Graphs/CAT(0) Cube Complexes}\label{sec-dom-med}

Theorems 2.2 and 2.3 of Barth\'elemy and Constantin~\cite{BaCo} (this
result was independently rediscovered by Ardilla et al.~\cite{ArOwSu}
in the language of CAT(0) cube complexes) establish the following
bijection between event structures and pointed median graphs
(in~\cite{BaCo}, event structures are called sites):

\begin{theorem}[\cite{BaCo}]\label{median_domain}
  The (undirected) Hasse diagram of the domain $(\cD(\cE),\subseteq)$
  of an event structure $\cE=(E,\le, \#)$ is
  median. Conversely, for any median graph $G$ and any basepoint $v$
  of $G$, the pointed median graph $G_v$ is the Hasse diagram of the
  domain of an event structure.
\end{theorem}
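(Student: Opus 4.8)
The plan is to prove both directions of the correspondence between domains of event structures and pointed median graphs, established in Theorem~\ref{median_domain}.

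\medskip

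\textbf{From event structures to median graphs.}
First I would show that the Hasse diagram $G$ of $(\cD(\cE),\subseteq)$ is a median graph. The vertices of $G$ are the finite configurations, and by the definition of the Hasse diagram, two configurations $c,c'$ are adjacent exactly when they differ by a single event. The natural candidate for an isometric embedding into a cube is the obvious one: send each configuration $c$ to the set $c\subseteq E$ itself, viewing $E$ as the coordinate set of a (possibly infinite-dimensional) hypercube. I would first verify that this map is an isometry, i.e.\ that $d_G(c,c')=|c\,\triangle\,c'|$ (symmetric difference). The inequality $d_G(c,c')\le|c\,\triangle\,c'|$ follows by exhibiting a monotone path: one can first remove the events of $c\setminus c'$ in an order respecting $\le$ (downward-closedness guarantees that at each step we still have a configuration), then add the events of $c'\setminus c$, again respecting $\le$; conflict-freeness of $c'$ guarantees no conflict is created. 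The reverse inequality $d_G(c,c')\ge|c\,\triangle\,c'|$ holds because each edge of $G$ changes exactly one coordinate, so any path has length at least $|c\,\triangle\,c'|$. With the isometry in hand, the median of any three configurations $c_1,c_2,c_3$ is the \emph{majority set} $m=\{e: e\text{ lies in at least two of } c_1,c_2,c_3\}$. The crux is to check that $m$ is itself a configuration (downward-closed and conflict-free) and that it realizes the unique median. Downward-closedness: if $e\in m$ and $e'\le e$, then $e$ is in at least two $c_i$, and each such $c_i$ being downward-closed contains $e'$, so $e'\in m$. Conflict-freeness: if $e,e''\in m$ with $e\#e''$, then some $c_i$ contains both (pigeonhole among the at-most-three-ways each lies in two sets), contradicting that $c_i$ is conflict-free. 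Finally, in a hypercube the majority vertex is the unique median (a standard coordinatewise argument, using that the $j$-th coordinate of the median is the majority value among the three), so $m$ is the unique vertex in $I(c_1,c_2)\cap I(c_2,c_3)\cap I(c_3,c_1)$, and $G$ is median.

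\medskip

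\textbf{From pointed median graphs to event structures.}
For the converse, let $G=(V,E)$ be a median graph with basepoint $v$, and let $G_v$ be the Hasse diagram of the canonical partial order $\le_v$ defined earlier (with $\varphi(v)=\varnothing$ under the canonical embedding into a hypercube). I would define the candidate event structure on the set $E$ of parallelism classes $\{\Theta_i:i\in I\}$ of $\Theta$, since these are precisely the ``coordinates'' of the hypercube and correspond to the directions of the edges of $G_v$. The idea is that a configuration of the sought event structure should correspond to a vertex $u$ via $\varphi(u)=\{\Theta_i:\Theta_i\text{ separates }u\text{ from }v\}$, so I want to turn the halfspace data into a causal/conflict structure on the classes. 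I would declare $\Theta_i\le\Theta_j$ if every vertex $u$ whose coordinate set contains $\Theta_j$ also contains $\Theta_i$ — equivalently, crossing the hyperplane $\Theta_j$ (moving away from $v$) forces one to have already crossed $\Theta_i$ — and declare $\Theta_i\#\Theta_j$ if no vertex's coordinate set contains both, i.e.\ the two halfspaces away from $v$ are disjoint. The main work is to verify the event-structure axioms: that $\le$ is a partial order with finite principal ideals (finiteness follows because each vertex is at finite distance from $v$ and $\varphi(u)$ is finite), that $\#$ is irreflexive and symmetric, and crucially the inheritance axiom $e\#e'$ and $e'\le e''\Rightarrow e\#e''$. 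The latter I would extract from convexity of halfspaces (Theorem~\ref{gated-carriers}): the conflict means two halfspaces are disjoint, and the ordering means a third halfspace is contained in one of them, forcing disjointness from the other. Then I would check that the downward-closed conflict-free subsets of $(E,\le,\#)$ are exactly the coordinate sets $\varphi(u)$ of vertices $u$ of $G$, with inclusion matching $\le_v$, so that the domain is isomorphic to $(V,\le_v)=G_v$ as Hasse diagrams. Establishing this last bijection — that every downward-closed conflict-free set is realized by an actual vertex, and conversely — is where the geometry of median graphs (Helly property for halfspaces, or the characterization of convex sets in Lemma~\ref{convex}) does the real work.

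\medskip

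\textbf{The main obstacle.}
I expect the hardest part to be the converse direction, specifically proving that \emph{every} downward-closed conflict-free subset $F$ of the classes corresponds to a genuine vertex of $G$. One must show that the family of halfspaces $\{A_i: \Theta_i\in F\}\cup\{B_i:\Theta_i\notin F\}$ (oriented appropriately relative to $v$) has a common vertex. This is essentially a Helly-type statement for the halfspaces of a median graph, and the downward-closed/conflict-free conditions on $F$ are exactly engineered to guarantee the pairwise-intersection hypothesis needed to invoke it; translating the combinatorial consistency conditions into the geometric intersection condition, and then appealing to the Helly property (in the possibly infinite setting, taking a limit along the finite configurations, each at bounded distance from $v$), is the delicate step that makes the correspondence work.
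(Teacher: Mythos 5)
Your proposal cannot be checked against a proof in the paper, because there is none: Theorem~\ref{median_domain} is quoted from Barth\'elemy and Constantin~\cite{BaCo} (and~\cite{ArOwSu}), and all the paper supplies is, in Subsection~\ref{sec-dom-med}, the construction for the converse direction. Your converse construction coincides with that recalled construction: your events are the parallelism classes (hyperplanes), your relation $\Theta_i\le\Theta_j$ --- i.e.\ $B_j\subseteq B_i$, writing $B_k$ for the halfspace of $\Theta_k$ not containing $v$ and $A_k$ for the one containing $v$ --- is exactly ``$\Theta_i$ separates $\Theta_j$ from $v$'', and your conflict $B_i\cap B_j=\varnothing$ is exactly ``the two hyperplanes neither cross nor separate one another from $v$''. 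Your forward direction (the isometric embedding $c\mapsto c$ into the hypercube on coordinate set $E$, with the majority set as median) is the standard argument, and your verifications --- the monotone path giving $d_G(c,c')=|c\,\triangle\,c'|$, downward-closedness and conflict-freeness of the majority set --- are correct and complete. One small simplification: with your definitions the inheritance axiom is pure set theory ($B_k\subseteq B_j$ and $B_i\cap B_j=\varnothing$ give $B_i\cap B_k=\varnothing$); no appeal to convexity of halfspaces is needed there.

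The one genuine soft spot is the step you flag yourself: surjectivity of $u\mapsto\varphi(u)$ onto the configurations of $(E,\le,\#)$. Your pairwise-intersection bookkeeping is exactly right (conflict-freeness of $F$ gives $B_i\cap B_{i'}\ne\varnothing$, downward-closedness gives $B_i\cap A_j\ne\varnothing$ for $\Theta_j\notin F$, and $v\in A_j\cap A_{j'}$), but the family $\{B_i:\Theta_i\in F\}\cup\{A_j:\Theta_j\notin F\}$ is infinite, and ``taking a limit along the finite configurations'' is not yet an argument. The repair is standard and avoids infinite Helly entirely: since $F$ is finite and the $B_i$, $\Theta_i\in F$, pairwise intersect, the finite Helly property of convex sets in median graphs gives $C=\bigcap_{\Theta_i\in F}B_i\ne\varnothing$; $C$ is convex, hence gated, so let $u$ be the gate of $v$ in $C$. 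Then $F\subseteq\varphi(u)$, and for any $\Theta_j\in\varphi(u)$ the gate property ($u\in I(v,w)$ for all $w\in C$) forces $C\subseteq B_j$; applying finite Helly once more to the finite family $\{B_i:\Theta_i\in F\}\cup\{A_j\}$ shows that $B_i\subseteq B_j$ for some $\Theta_i\in F$ (otherwise $C\cap A_j\ne\varnothing$), i.e.\ $\Theta_j\le\Theta_i$, whence $\Theta_j\in F$ by downward-closedness, so $\varphi(u)=F$. Alternatively, one can induct on $|F|$: split off a maximal element $\Theta_i$, realize $F\setminus\{\Theta_i\}$ by a vertex $u'$, and use that the halfspace $B_i$ is gated (Theorem~\ref{gated-carriers}) to show that the gate of $u'$ in $B_i$ is at distance $1$ from $u'$ and realizes $F$. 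Either way, the geometry doing the work is gatedness of halfspaces, not a limit; with that replacement your proof is complete.
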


We briefly recall the canonical construction of an event structure
from a pointed median graph presented in~\cite{BaCo}.  Consider a
median graph $G$ and an arbitrary basepoint $v$. The events of the
event structure $\cE_v$ are the hyperplanes of $X$. Two
hyperplanes $H$ and $H'$ define concurrent events if and only if they
cross. The hyperplanes $H$ and $H'$ are in precedence relation
$H\leq H'$ if and only if $H=H'$ or $H$ separates $H'$ from $v$.
Finally, the events defined by $H$ and $H'$ are in conflict if and
only if $H$ and $H'$ do not cross and neither separates the other from
$v$.

\subsection{Special Cube Complexes}

Consider a cube complex $Y$.  Analogously to CAT(0) cube complexes,
one can define the parallelism relation $\Theta'$ on the edges of $Y$
by setting that two edges of $Y$ are in relation $\Theta'$ if they are
opposite edges of a common 2-cube of $Y$. Let $\Theta$ be the
reflexive and transitive closure of $\Theta'$ and let
$\{ \Theta_i: i\in I\}$ denote the equivalence classes of $\Theta$.
For an equivalence class $\Theta_i$, the hyperplane $H_i$ associated
to $\Theta_i$ is the cube complex consisting of the midcubes of all
cubes of $Y$ containing one edge of $\Theta_i$. The edges of
$\Theta_i$ are \emph{dual} to $H_i$. Let $\cH(Y)$ be the set of
hyperplanes of $Y$. The \emph{carrier} $N(H)$ of a hyperplane $H$ of
$Y$ is the union of all cubes intersecting $H$. The \emph{open carrier}
$\mathring{N}(H)$ is the union of all open cubes intersecting $H$.

The hyperplanes of a cube complex do not longer satisfy the nice
properties of the hyperplanes of CAT(0) cube complexes: they do not
partition the complex in two parts, they may self-intersect,
self-osculate, two hyperplanes may cross and osculate, etc. Haglund
and Wise~\cite{HaWi1} detected five types of pathologies which may
occur in a cube complex (see Fig.~\ref{fig-special}):
\begin{enumerate}[(a)]
\item self-intersecting hyperplane;
\item one-sided hyperplane;
\item directly self-osculating hyperplane;
\item indirectly self-osculating hyperplane;
\item a pair of hyperplanes, which both intersect and osculate.
\end{enumerate}

\begin{figure}
  \centering
  \includegraphics[scale=0.7]{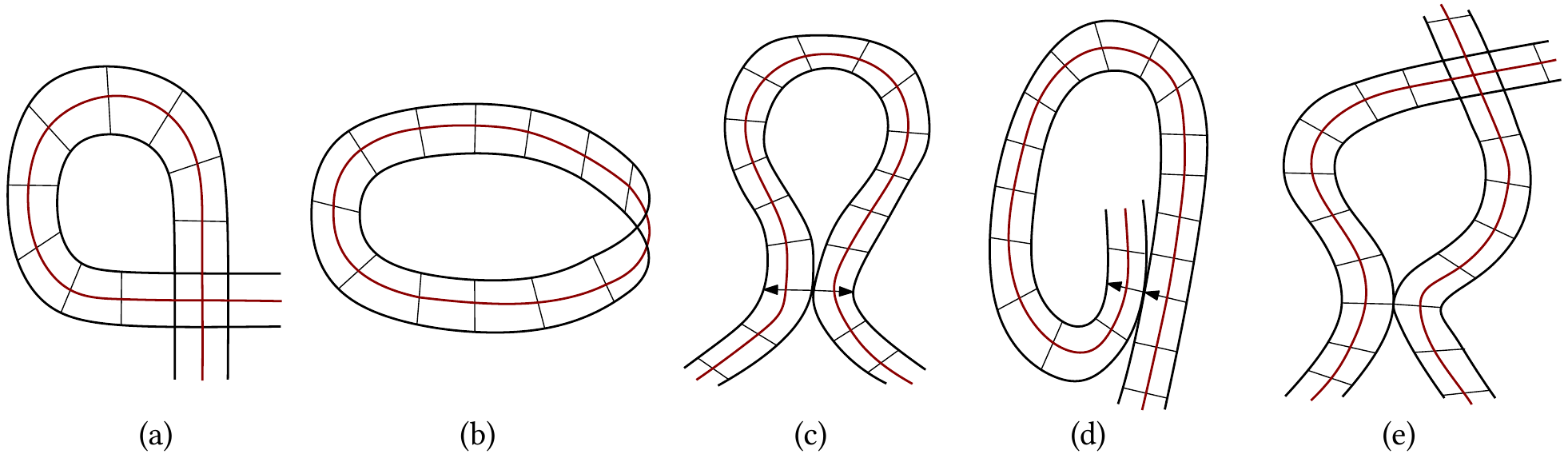}%
  \caption{A self-intersecting hyperplane~(a), a one-sided
    hyperplane~(b), a directly self-osculating hyperplane~(c), an
    indirectly self-osculating hyperplane~(d), and a pair of
    hyperplanes that inter-osculate~(e).}%
  \label{fig-special}
\end{figure}

A hyperplane $H$ is \emph{two-sided} if $\mathring{N}(H)$ is
homeomorphic to the product $H\times (-1,1)$, and there is a
combinatorial map $H\times[-1,1]\rightarrow X$ mapping $H\times \{0\}$
identically to $H$.  As noticed in~\cite[p.1562]{HaWi1}, requiring
that the hyperplanes of $Y$ are two-sided is equivalent to defining an
orientation on the dual edges of $H$ such that all sources of such
edges belong to one of the sets $H\times \{-1\}, H\times \{1\}$ and
all sinks belong to the other one. This orientation is obtained by
taking the equivalence relation generated by elementary parallelism
relation: declare two oriented edges $e_1$ and $e_2$ of $Y$
\emph{elementary parallel} if there is a square of $Y$ containing
$e_1$ and $e_2$ as opposite sides and oriented in the same
direction. Such an orientation $o$ of the edges of $Y$ is called an
\emph{admissible} orientation of $Y$. Observe that $Y$ admits an
admissible orientation if and only if every hyperplane $H$ of $Y$ is
two-sided (one can choose an admissible orientation for each
hyperplane independently). Given a cube complex $Y$ and an admissible
orientation $o$ of $Y$, $(Y,o)$ is called a \emph{directed} cube
complex.

We continue with the definition of each of the pathologies (in which
we closely follow~\cite[Section 3]{HaWi1}). The hyperplane is
\emph{one-sided} if it is not two-sided (see
Fig.~\ref{fig-special}(b)). Two hyperplanes $H_1$ and $H_2$
\emph{intersect} if there exists a cube $Q$ and two distinct midcubes
$Q_1$ and $Q_2$ of $Q$ such that $Q_1 \subseteq H_1$ and
$Q_2 \subseteq H_2$, i.e., there exists a square with two consecutive
edges $e_1,e_2$ such that $e_1$ is dual to $H_1$ and $e_2$ is dual to
$H_2$. A hyperplane $H$ of $Y$ \emph{self-intersects} if it contains
more than one midcube from the same cube, i.e., there exist two edges
$e_1,e_2$ dual to $H$ that are consecutive in some square of $Y$ (see
Fig.~\ref{fig-special}(a)). Let $v$ be a vertex of $Y$ and let
${e_1},{e_2}$ be two distinct edges incident to $v$ but such that
${e_1}$ and ${e_2}$ are not consecutive edges in some square
containing $v$. The hyperplanes $H_1$ and $H_2$ \emph{osculate} at
$(v,{e_1},{e_2})$ if $e_1$ is dual to $H_1$ and $e_2$ is dual to
$H_2$. The hyperplane $H$ \emph{self-osculate} at $(v,{e_1},{e_2})$ if
$e_1$ and $e_2$ are dual to $H$.  Consider a two-sided hyperplane $H$
and an admissible orientation $o$ of its dual edges. Suppose that $H$
self-osculate at $(v,{e_1},{e_2})$. If $v$ is the source of both $e_1$
and $e_2$ or the sink of both $e_1$ and $e_2$, then we say that $H$
\emph{directly self-osculate} at $(v,{e_1},{e_2})$ (see
Fig.~\ref{fig-special}(c)). If $v$ is the source of one of $e_1$,
$e_2$, and the sink of the other, then we say that $H$
\emph{indirectly self-osculate} at $(v,{e_1},{e_2})$ (see
Fig.~\ref{fig-special}(d)). Note that a self-osculation of a
hyperplane $H$ is either direct or indirect, and this is independent
of the orientation of the edges dual to $H$.  Two hyperplanes $H_1$
and $H_2$ \emph{inter-osculate} if they both intersect and osculate
(see Fig.~\ref{fig-special}(e)).

Haglund and Wise~\cite[Definition 3.2]{HaWi1} called a cube complex
$Y$ \emph{special} if its hyperplanes are two-sided, do not
self-intersect, do not directly self-osculate, and no two hyperplanes
inter-osculate. The definition of a special cube complex $Y$ depends
only of the 2-skeleton $Y^{(2)}$~\cite[Remark~3.4]{HaWi1}. Since no
two hyperplanes of $Y$ inter-osculate, any special cube complex and its
2-skeleton satisfy the 3-cube condition. In fact, Haglund and Wise
proved that special cube complexes can be seen as nonpositively curved
complexes:

\begin{lemma}[\cite{HaWi1}, {Lemma~3.13}]\label{cube-completable}
  If $X$ is a special cube complex, then $X$ is contained in a unique
  smallest nonpositively curved cube complex with the same
  $2$-skeleton as $X$.
\end{lemma}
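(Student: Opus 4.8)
The plan is to produce $\widehat{X}$ by completing $X$ one cube dimension at a time, and to extract uniqueness and minimality from the uniqueness of the flag completion of simplicial complexes. By Gromov's characterization (Theorem~\ref{Gromov}), a cube complex is nonpositively curved exactly when the link of every vertex is a flag simplicial complex, and the flagness of $\Link(x,\cdot)$ was restated above as a higher-dimensional $3$-cube condition on the $(k+2)$-cubes through $x$. Since $X$ is special, its $2$-skeleton already satisfies the ordinary $3$-cube condition, which will serve as the base of the induction.

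First I would fix the target links. For each vertex $v$, the vertices of $\Link(v,X)$ are the edges of $X$ incident to $v$ and the edges of $\Link(v,X)$ are the square-corners at $v$; both are determined by $X^{(2)}$. Define $\Link(v,\widehat{X})$ to be the flag completion of $\Link(v,X)$, obtained by spanning a simplex on every clique. The flag completion of a simplicial complex is unique and adds only simplices of dimension $\ge 2$; since a $(d-1)$-simplex of a vertex link corresponds to a $d$-cube through that vertex, the completion adds only cubes of dimension $\ge 3$ and therefore leaves the $2$-skeleton unchanged, as required.

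The heart of the argument is to check that these flag completions of links assemble into an honest cube complex. A clique in $\Link(v,X)$ is a family $e_1,\dots,e_d$ of edges at $v$, dual to hyperplanes $H_1,\dots,H_d$, that pairwise span squares at $v$, and the simplex to be added corresponds to a $d$-cube $Q$ spanned by this corner. Here every specialness hypothesis is used: two-sidedness together with the absence of direct self-osculation keeps the edges issuing from the corners of $Q$ from being identified, so the $2^d$ corners of $Q$ are genuinely distinct; the absence of self-intersection forces the $d$ midcubes of $Q$ to lie in $d$ distinct hyperplanes; and the absence of inter-osculation guarantees that the pairwise crossings witnessed at $v$ persist at the remaining $2^d-1$ corners, so that the same cube $Q$ is demanded from every one of its corner links. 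Inducting on $d$, with the $3$-cube condition as the base case and using the simplicity of $X$ (two squares cannot meet along two consecutive edges) to pin $Q$ down uniquely from any one of its corners, one obtains a well-defined, embedded cube $Q$ whose characteristic map is combinatorial. This global consistency of the completion is the step I expect to be the main obstacle, and it is exactly where specialness is indispensable.

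Finally, minimality and uniqueness are formal. Let $X'$ be any nonpositively curved cube complex with the same $2$-skeleton as $X$. Its vertex links are flag, and the edges of each link $\Link(v,X')$, being the square-corners at $v$, are determined by $X^{(2)}$; hence every clique among them must span a simplex of $\Link(v,X')$, which means that $X'$ contains the corresponding cube. Therefore $X'$ contains $\widehat{X}$. Since $\widehat{X}$ comprises only these forced cubes, it is the smallest nonpositively curved cube complex with $2$-skeleton $X^{(2)}$, and the uniqueness of the flag completion makes it unique up to isomorphism fixing $X$.
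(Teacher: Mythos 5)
A preliminary remark: the paper itself gives no proof of this lemma — it is imported verbatim from Haglund and Wise~\cite{HaWi1}, so your proposal can only be judged on its own terms. Your overall strategy is indeed the standard route to this result: complete dimension by dimension, guided by the flag completions of vertex links (Theorem~\ref{Gromov}); use the absence of inter-osculation to propagate a clique at one corner to cliques at all other corners of the prospective cube; and get minimality and uniqueness from the fact that any nonpositively curved complex with the same $2$-skeleton has flag links and so is forced to contain every cube you add. These are the right ingredients.

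However, the step you yourself flag as the heart of the argument contains a genuine error: you claim that two-sidedness and the absence of direct self-osculation keep the $2^d$ corners of the added cube $Q$ distinct, so that $Q$ is embedded. This is false. The $2$-skeleton of the standard cubulated $d$-torus (one vertex, $d$ edges, $\binom{d}{2}$ squares, opposite sides of each square identified in parallel) is special: each hyperplane has a single dual edge, so it cannot self-osculate, and every pair of crossing hyperplanes meets only along consecutive edges of a square, so nothing inter-osculates. Its unique nonpositively curved completion is the $d$-torus itself, whose added cubes have all $2^d$ corners identified to the single vertex. What two-sidedness, no self-intersection and no direct self-osculation actually buy you is not injectivity but \emph{consistency of the attaching map}: at a corner of $Q$ at distance two from $v$, say $v_{12}$, there are two candidate edges dual to $H_3$, obtained by transporting $e_3$ across the two intermediate squares; both carry the same transverse orientation, so if they were distinct they would either be consecutive in a square (making $H_3$ self-intersect) or exhibit a direct self-osculation of $H_3$. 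Hence they coincide, and the boundary of $Q$ closes up into a well-defined combinatorial map $\partial Q \to X^{(2)}$, along which the cube is glued. Without this corrected argument your construction is not even well defined: taken literally, ``one cube per clique'' would, in the $3$-torus example, attach eight $3$-cubes (one per triangle of the octahedral link) where a single non-embedded cube is required, and the resulting links would fail to be simplicial. So the specialness axioms must be redeployed to prove closure of the boundary identifications, in place of — not as a consequence of — the false distinctness claim; with that repair, the rest of your argument (induction on dimension, minimality via flagness of competitors' links) goes through.
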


In view of this lemma, we will consider only $2$-dimensional special
cube complexes, since they can always be canonically completed to NPC
complexes that are also special.

\section{Geodesic Traces and Prime Traces}\label{geometry-traces}

Let $M=(\Sigma, I)$ be a trace alphabet and let ${\cE}=(E,\le,\#)$ be
an $M$-labeled event structure. Then the concurrency relation $\|$ of
$\cE$ coincides with the independence relation $I$.  Let $\cD({\cE})$
be the domain of $\cE$, $G({\cE})$ be the covering graph of
$\cD({\cE})$, and $X({\cE})$ be the CAT(0) cube complex of $G({\cE})$
pointed at vertex $v_0$. Any vertex $v$ of the median graph $G({\cE})$
corresponds to a configuration $c(v)$ of $\cD({\cE})$; in particular,
$c(v_0)=\varnothing$.  In this section we characterize traces arising
from geodesics of the domain.

\subsection{Geodesic Traces}

Any shortest path $\pi=(v_0,v_1,\ldots,v_{k-1},v_k=v)$ from $v_0$ to a
vertex $v$ of $G({\cE})$ gives rise to a word $\sigma(\pi)$ of
$\Sigma^*$: the $i$th letter of $\sigma(\pi)$ is the label
$\lambda(v_{i-1}v_i)$ of the edge $v_{i-1}v_i$.  We say that a word
$\sigma\in \Sigma^*$ is \emph{geodesic} if $\sigma=\sigma(\pi)$ for a
shortest path $\pi$ between $v_0$ and a vertex $v$ of $G({\cE})$. The
trace $\trsigma$ of a geodesic word $\sigma$ is called a
\emph{geodesic trace}.

Two shortest $(v_0,v)$-paths $\pi$ and $\pi'$ of $G({\cE})$ are called
\emph{homotopic} if they can be transformed one into another by a
sequence of elementary homotopies, i.e., there exists a finite
sequence $\pi=:\pi_1,\pi_2,\ldots,\pi_{k-1},\pi_k:=\pi'$ of shortest
$(v_0,v)$-paths such that for any $i=1,\ldots,k-1$ the paths $\pi_i$
and $\pi_{i+1}$ differ only in a square
$Q_i=(v_{j-1},v_{j},v_{j+1},v'_{j})$ of $X({\cE})$.

The following result is well-known; we provide a simple proof using
median graphs.

\begin{lemma}\label{homotopic}
  Any two shortest $(v_0,v)$-paths $\pi_1$ and $\pi_2$ of $G({\cE})$
  are homotopic.
\end{lemma}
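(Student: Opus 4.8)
The plan is to induct on the distance $k = d_{G(\cE)}(v_0, v)$, showing that any two shortest $(v_0,v)$-paths are connected by elementary homotopies. When $k \le 1$ the shortest path is unique and there is nothing to prove, so I would assume $k \ge 2$ and write $\pi_1 = (v_0, u_1, \ldots, u_k = v)$ and $\pi_2 = (v_0, w_1, \ldots, w_k = v)$. If $u_1 = w_1$, the two subpaths from $u_1$ to $v$ are shortest $(u_1,v)$-paths of length $k-1$, hence homotopic by the induction hypothesis; prepending the common edge $v_0 u_1$ yields a homotopy between $\pi_1$ and $\pi_2$. The substantial case is $u_1 \ne w_1$, where the geodesics diverge already at the first edge.

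The key step will be to produce a square on which to flip. Since $G(\cE)$ is median it is bipartite, so the distinct neighbors $u_1, w_1$ of $v_0$ satisfy $d(u_1, w_1) = 2$ and $v_0 \in I(u_1, w_1)$. I would take $z := m(u_1, w_1, v)$, the median of $u_1, w_1, v$. As $z \in I(u_1, w_1)$ and this interval has diameter $2$, $z$ is a common neighbor of $u_1$ and $w_1$ (one checks $z \ne u_1, w_1$ using that $u_1, w_1 \in I(v_0,v)$). Moreover $z \in I(u_1, v)$ forces $d(v, z) = k - 2$, whence $z \ne v_0$ and $v_0 \not\sim z$ (adjacent vertices of a bipartite graph cannot differ by $2$ in their distance to $v$). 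Thus $(v_0, u_1, z, w_1)$ is an induced $4$-cycle, and since the induced $4$-cycles of a median graph are exactly the squares of its cube complex, this $4$-cycle is a square $Q$ of $X(\cE)$.

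It then remains to route $\pi_1$ through $z$ and flip across $Q$. Because $z$ is a neighbor of $u_1$ with $d(z,v) = d(u_1,v) - 1$, we have $z \in I(u_1, v)$, so some shortest $(u_1,v)$-path passes through $z$; by the induction hypothesis applied to the length-$(k-1)$ geodesics emanating from $u_1$, the subpath of $\pi_1$ from $u_1$ to $v$ is homotopic to one of the form $(u_1, z, \ldots, v)$. Prepending the fixed edge $v_0 u_1$ shows $\pi_1$ is homotopic to a path $(v_0, u_1, z, \ldots, v)$. A single elementary homotopy across $Q$ now replaces the segment $(v_0, u_1, z)$ by $(v_0, w_1, z)$, giving a geodesic $(v_0, w_1, z, \ldots, v)$. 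Finally, this path and $\pi_2$ both begin with the edge $v_0 w_1$, and their tails from $w_1$ to $v$ are shortest paths of length $k-1$, hence homotopic by induction; prepending $v_0 w_1$ completes the chain $\pi_1 \to \pi_2$.

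I expect the middle step to be the main obstacle: guaranteeing that the two diverging first edges lie on a common square, which is precisely where the median structure is used (the existence and position of $z = m(u_1,w_1,v)$, and the identification of induced $4$-cycles with squares). An equivalent, more conceptual route would encode each geodesic by the order in which it crosses the hyperplanes separating $v_0$ from $v$, observe that these orders are exactly the linear extensions of the nesting partial order on those hyperplanes (crossing hyperplanes being the incomparable pairs), and invoke the classical fact that any two linear extensions of a finite poset differ by adjacent transpositions of incomparable elements, each realized by a square flip. I would nonetheless keep the direct induction as the primary argument, since it uses only the median-graph facts already available.
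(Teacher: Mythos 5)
Your proof is correct and takes essentially the same approach as the paper's: induction on $d(v_0,v)$, with the median/quadrangle condition producing the square on which to perform the flip, and the conclusion obtained by composing three homotopies. The only difference is a mirror image: you split the two geodesics at their first edges near $v_0$ (so your induction hypothesis must range over geodesics with arbitrary start points, a harmless strengthening), whereas the paper splits them at their last edges near $v$ and keeps the basepoint fixed, its quadrangle-condition vertex $x$ playing exactly the role of your median $z$.
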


\begin{proof}
  We proceed by induction on the distance $k=d(v_0,v)$. If $k=2$, then
  the result is obvious because the paths $\pi_1$ and $\pi_2$ bound a
  square of $X({\cE})$. Let $w_1$ be the neighbor of $v$ in $\pi_1$
  and $w_2$ be the neighbor of $v$ in $\pi_2$. Observe that
  $d(v_0,w_1) = d(v_0,w_2) = d(v_0,v)-1 = k-1$. If $w_1 = w_2$, then
  the result holds by induction hypothesis. Otherwise, by the
  quadrangle condition, there exists a common neighbor $x$ of $w_1$
  and $w_2$ such that $d(v_0,x) = k-2$. Let $\pi_1'$ be the subpath of
  $\pi_1$ from $v_0$ to $w_1$, let $\pi_2'$ be the subpath of $\pi_2$
  from $v_0$ to $w_2$, and let $\pi_3$ be a shortest path from $v_0$
  to $x$. By induction hypothesis, the path $\pi_1'$ is homotopic to
  the path $\pi_1'' = \pi_3\cdot (x,w_1)$ and the path $\pi_2'$ is
  homotopic to the path $\pi_2'' = \pi_3\cdot (x,w_2)$.  Since
  $vw_1xw_2$ is a square of $X(\cE)$, the path $\pi_3 \cdot (x,w_1,v)$
  is homotopic to the path $\pi_3 \cdot (x,w_2,v)$. Consequently, the
  paths $\pi_1$ and $\pi_2$ are homotopic.
\end{proof}

\begin{lemma}\label{trace1}
  If $\pi$ and $\pi'$ are two shortest $(v_0,v)$-paths of $G({\cE})$,
  then $\sigma(\pi')$ belongs to $\trace{\sigma(\pi)}$.
\end{lemma}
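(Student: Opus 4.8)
The plan is to combine Lemma~\ref{homotopic}, which gives that any two shortest $(v_0,v)$-paths are homotopic, with the observation that a single elementary homotopy changes the associated word only by swapping two consecutive independent letters, which is exactly one application of the relation $\leftrightarrow_I$. Since $\sim_I$ is the reflexive and transitive closure of $\leftrightarrow_I$, chaining these swaps along a homotopy sequence will show $\sigma(\pi)\sim_I\sigma(\pi')$, i.e.\ that $\sigma(\pi')\in\trace{\sigma(\pi)}$.

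First I would invoke Lemma~\ref{homotopic} to obtain a sequence $\pi=\pi_1,\pi_2,\ldots,\pi_k=\pi'$ of shortest $(v_0,v)$-paths in which consecutive paths $\pi_i,\pi_{i+1}$ differ only in a single square $Q_i=(v_{j-1},v_j,v_{j+1},v'_j)$ of $X(\cE)$. By transitivity of $\sim_I$ it then suffices to treat a single such elementary homotopy: I must show that if $\pi_i$ and $\pi_{i+1}$ agree outside the square and traverse it via the two different pairs of sides, then $\sigma(\pi_i)\leftrightarrow_I\sigma(\pi_{i+1})$. The two paths share the prefix up to $v_{j-1}$ and the suffix from $v_{j+1}$, so the words $\sigma(\pi_i)$ and $\sigma(\pi_{i+1})$ share a common prefix and suffix and differ only in the two middle letters corresponding to the two edges of the square traversed. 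If $\pi_i$ goes $v_{j-1}\to v_j\to v_{j+1}$ and $\pi_{i+1}$ goes $v_{j-1}\to v'_j\to v_{j+1}$, then by the Concurrency condition on the labeling of a square (opposite edges carry the same label), the edge $v_{j-1}v_j$ has the same label $a$ as the opposite edge $v'_jv_{j+1}$, and $v_jv_{j+1}$ has the same label $b$ as $v_{j-1}v'_j$. Hence the middle of $\sigma(\pi_i)$ reads $ab$ while that of $\sigma(\pi_{i+1})$ reads $ba$, so $\sigma(\pi_i)=\sigma'ab\sigma''$ and $\sigma(\pi_{i+1})=\sigma'ba\sigma''$.

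To legitimately apply $\leftrightarrow_I$ I must verify $(a,b)\in I$, i.e.\ that the two distinct labels of adjacent edges of a square are independent. Since in an $M$-labeled event structure the concurrency relation $\|$ coincides with the independence relation $I$ (as stated at the start of Section~\ref{geometry-traces}), and the two edges of the square emanating from $v_{j-1}$ correspond to two co-initial, non-conflicting (hence concurrent) events enabled at the configuration $c(v_{j-1})$, their labels $a$ and $b$ satisfy $(a,b)\in I$; equivalently one can read this off condition (LES3) on the labeling, since distinct concurrent events are not in conflict and are incomparable, forcing $(\lambda,\lambda')\notin D$. Thus $\sigma(\pi_i)\leftrightarrow_I\sigma(\pi_{i+1})$, and concatenating over $i=1,\ldots,k-1$ gives $\sigma(\pi)\sim_I\sigma(\pi')$.

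The one genuinely substantive point, and the step I would be most careful about, is the identification of the two adjacent edge-labels of a square as an independent pair in $I$: this is where the trace-labeling axioms (LES1)--(LES3) and the identification of $\|$ with $I$ do the real work, ensuring that an elementary homotopy across a square is precisely an elementary $I$-commutation. Everything else is bookkeeping: translating the geometric homotopy sequence from Lemma~\ref{homotopic} into a chain of letter-swaps, and using that $\sim_I$ is by definition the transitive closure of $\leftrightarrow_I$. No delicate estimates or case analysis beyond the single-square case are needed, so once the independence of adjacent square-labels is secured the result follows immediately.
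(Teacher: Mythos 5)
Your proof is correct and follows essentially the same route as the paper's: apply Lemma~\ref{homotopic}, reduce to a single elementary homotopy across a square, note that opposite edges of the square carry equal labels so the words differ by one $ab\leftrightarrow ba$ swap, and justify $(a,b)\in I$ via concurrency of the two events, finishing by transitivity of $\sim_I$. The only cosmetic difference is that you certify concurrency through the co-initial, non-conflicting events at the configuration $c(v_{j-1})$ and the contrapositive of (LES3), whereas the paper phrases the same fact geometrically via the two crossing hyperplanes dual to the square's edges.
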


\begin{proof}
  By Lemma~\ref{homotopic}, the paths $\pi$ and $\pi'$ are
  homotopic. Thus there exists a finite sequence
  $\pi=:\pi_1,\pi_2,\ldots,\pi_{k-1},\pi_k:=\pi'$ of shortest
  $(v_0,v)$-paths such that for any $i=1,\ldots,k-1$ the paths $\pi_i$
  and $\pi_{i+1}$ differ only in a square
  $Q_i=(v_{j-1},v_{j},v_{j+1},v'_{j})$ of $X({\cE})$.  Denote by
  $\sigma_1$ the sequence of labels of the edges of the path
  $(v_0,\ldots,v_j)$. Analogously, denote by $\sigma_2$ the sequence
  of labels of the edges of the path $(v_{j+1},\ldots,v_k=v)$.  The
  edges $v_{j-1}v_j$ and $v'_{j}v_{j+1}$ are dual to the same hyperplane
  $H_a$, thus
  $\lambda(v_{j-1}v_j)=\lambda(v'_{j}v_{j+1})=\lambda(H_a)=a$. Analogously,
  the edges $v_{j-1}v'_j$ and $v_{j}v_{j+1}$ are dual to the same
  hyperplane $H_b$, whence
  $\lambda(v_{j-1}v'_j)=\lambda(v_{j}v_{j+1})=\lambda(H_b)=b$. Since
  $H_a$ and $H_b$ intersect, the events corresponding to those
  hyperplanes are concurrent, therefore $(a,b)\in I$. Consequently,
  $\sigma(\pi_i)=\sigma_1 ab\sigma_2\leftrightarrow_I\sigma_1
  ba\sigma_2=\sigma(\pi_{i+1})$, establishing that for any two
  consecutive paths $\pi_i,\pi_{i+1}$ the words $\sigma(\pi_i)$ and
  $\sigma(\pi_{i+1})$ belong to the same trace, proving that
  $\sigma(\pi')$ belongs to $\trace{\sigma(\pi)}$.
\end{proof}

\begin{lemma}\label{trace2}
  For any shortest $(v_0,v)$-path $\pi$ of $G({\cE})$, the geodesic
  trace $\trace{\sigma(\pi)}$ consists exactly of all $\sigma(\pi')$
  such that $\pi'$ is a shortest $(v_0,v)$-path.
\end{lemma}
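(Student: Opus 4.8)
The plan is to prove the two inclusions separately. The inclusion $\{\sigma(\pi'): \pi' \text{ a shortest } (v_0,v)\text{-path}\}\subseteq \trace{\sigma(\pi)}$ is precisely Lemma~\ref{trace1}, so the real content is the reverse inclusion: every word $\tau$ with $\tau\sim_I\sigma(\pi)$ is realized as $\sigma(\pi')$ by some shortest $(v_0,v)$-path $\pi'$. Since $\sim_I$ is the transitive closure of $\leftrightarrow_I$, I would fix a chain $\sigma(\pi)=\rho_0\leftrightarrow_I\rho_1\leftrightarrow_I\cdots\leftrightarrow_I\rho_m=\tau$ and induct on $m$. The base case $m=0$ takes $\pi'=\pi$. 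For the inductive step, suppose $\pi_i$ is a shortest $(v_0,v)$-path with $\sigma(\pi_i)=\rho_i$, and the transposition is $\rho_i=\alpha ab\beta\leftrightarrow_I\alpha ba\beta=\rho_{i+1}$ with $(a,b)\in I$. I locate the two consecutive edges $ww'$ (labelled $a$) and $w'w''$ (labelled $b$) of $\pi_i$ sitting at the swap position, and aim to produce a new shortest path by an elementary homotopy that replaces the subpath $w,w',w''$ by $w,x,w''$ through the opposite corner $x$ of a square; since opposite edges of a square carry equal labels, the new label word is exactly $\alpha ba\beta=\rho_{i+1}$.

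The crux, and the step I expect to be the main obstacle, is showing that these two consecutive edges bound a square of $X(\cE)$, equivalently that the hyperplanes $H_a,H_b$ dual to them cross. Here I would pass to the event-structure description of Subsection~\ref{sec-dom-med}, where events are hyperplanes and concurrency coincides with crossing. Writing $c(w')$ for the configuration associated with $w'$, one has $H_a\in c(w')$ while $H_b$ is enabled at $c(w')$ with $c(w'')=c(w')\cup\{H_b\}$. Downward-closedness of $c(w')$ gives $H_b\not\le H_a$, and conflict-freeness of the configuration $c(w'')$ gives that $H_a$ and $H_b$ are not in conflict. It then remains only to exclude $H_a<H_b$. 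For this I use that $c(w)=c(w')\setminus\{H_a\}$ is itself a configuration (a vertex of $\pi_i$), so $H_a$ is \emph{maximal} in $c(w')$; if some $H'$ satisfied $H_a<H'<H_b$, then downward-closedness of $c(w'')$ would force $H'\in c(w')$, contradicting the maximality of $H_a$. Hence $H_a<H_b$ would entail $H_a\lessdot H_b$, and condition (LES2) would then give $(\lambda(H_a),\lambda(H_b))=(a,b)\in D$, contradicting $(a,b)\in I$. Therefore $H_a\parallel H_b$, i.e.\ $H_a$ and $H_b$ cross.

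Finally I would assemble the square. Both $H_a$ and $H_b$ separate $w$ from $w''$ (each dual edge is crossed once by the shortest path), and $d(w,w'')=2$, so $H_a$ and $H_b$ are the only hyperplanes separating $w$ from $w''$; as they cross, the two consecutive edges $ww'$ and $w'w''$ lie in a common square of $X(\cE)$ (equivalently, their carriers meet in a $2$-cube by Proposition~\ref{Helly-CAT0}, and $I(w,w'')$ is gated and convex by Theorem~\ref{gated-carriers}). Its fourth corner $x$ is the vertex opposite to $w'$, so that $d(v_0,x)=d(v_0,w')=d(v_0,w'')-1$; consequently the path obtained by replacing $w,w',w''$ with $w,x,w''$ is still a shortest $(v_0,v)$-path $\pi_{i+1}$, and since $w''x$ is opposite to $ww'$ and $wx$ opposite to $w'w''$, the edge $wx$ is labelled $b$ and $xw''$ is labelled $a$, giving $\sigma(\pi_{i+1})=\alpha ba\beta=\rho_{i+1}$. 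This closes the induction, yields the reverse inclusion, and together with Lemma~\ref{trace1} proves the claimed equality.
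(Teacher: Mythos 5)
Your proposal is correct and follows essentially the same route as the paper's proof: both obtain one inclusion from Lemma~\ref{trace1} and prove the converse by realizing each elementary $I$-transposition as an elementary homotopy across a square of $X(\cE)$ containing the two consecutive $a$- and $b$-edges, which exists because the corresponding hyperplanes cross. The only difference is one of detail: where the paper invokes its standing identification of concurrency with independence (stated at the start of Section~\ref{geometry-traces}) to conclude that $H_a$ and $H_b$ cross, you re-derive the relevant instance of that fact from (LES2) together with the maximality of $H_a$ in the configuration of the middle vertex.
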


\begin{proof}
  From Lemma~\ref{trace1},
  $\{ \sigma(\pi'): \pi' \text{ is a shortest } (v_0,v)-{\text
    path}\}\subseteq \trace{\sigma(\pi)}$.  To prove the converse
  inclusion we show that if $\sigma_1,\sigma_2\in \Sigma^*$,
  $(a,b)\in I$ and $\sigma_1 ab\sigma_2=\sigma(\pi')$ for a shortest
  $(v_0,v)$-path $\pi'$, then $\sigma_1 ba\sigma_2=\sigma(\pi'')$ for
  a shortest $(v_0,v)$-path $\pi''$. Indeed, since $(a,b)\in I$, the
  hyperplanes $H_a$ and $H_b$ dual to the incident $a$- and $b$-edges
  of $\pi'$ intersect in an $ab$-square $Q$. Moreover, the carriers of
  $H_a$ and $H_b$ intersect in $Q$.  Since those carriers also contain
  the incident $a$- and $b$-edges of $\pi'$, $Q$ contains the $a$- and
  $b$-edges of $\pi'$. Let $\pi''$ be obtained from $\pi'$ by
  replacing the $ab$-path by the $ba$-path of $Q$. Then obviously
  $\pi''$ is a shortest $(v_0,v)$-path and that
  $\sigma(\pi'')=\sigma_1 ba\sigma_2$. 
\end{proof}

For a vertex $v$ of $G({\cE})$, we will denote by $\trace{\sigma_v}$
the Mazurkiewicz geodesic trace of all shortest $(v_0,v)$-paths, i.e.,
the trace of the interval $I(v_0,v)$.  Denote by
$\GT(\cE)$ the set of all geodesic traces of
$\cE$.  From Lemma~\ref{trace2}, we immediately obtain the following
corollary:

\begin{corollary}\label{geodesictrace}
  There exists a natural bijection $v\mapsto \trace{\sigma_v}$ between
  the set of vertices of $G(\cE)$ (i.e., the configurations of $\cE$)
  and the set $\GT(\cE)$ of geodesic traces of $\cE$.
\end{corollary}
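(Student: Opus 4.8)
The plan is to verify that the assignment $v \mapsto \trace{\sigma_v}$ is a well-defined map into $\GT(\cE)$ and then to check surjectivity and injectivity; all three properties fall out quickly from the lemmas of this subsection, so the argument is short. For well-definedness I would simply appeal to Lemma~\ref{trace1} together with Lemma~\ref{trace2}: for a fixed vertex $v$, every shortest $(v_0,v)$-path $\pi$ produces a word $\sigma(\pi)$, and these words all lie in a single $\sim_I$-equivalence class, namely the geodesic trace of the interval $I(v_0,v)$. Hence $\trace{\sigma_v}$ does not depend on the chosen representative path and is a genuine element of $\GT(\cE)$.

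Surjectivity is purely a matter of unwinding definitions. A geodesic trace is by definition $\trace{\sigma}$ for some geodesic word $\sigma$, and a geodesic word is exactly $\sigma(\pi)$ for a shortest $(v_0,v)$-path $\pi$ terminating at some vertex $v$ of $G(\cE)$. Lemma~\ref{trace2} then identifies $\trace{\sigma}$ with $\trace{\sigma_v}$, so every element of $\GT(\cE)$ is realized as the image of the corresponding endpoint $v$. No further work is needed here.

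The only point that requires an actual argument, and where I expect the (mild) obstacle to lie, is injectivity. Suppose $\trace{\sigma_v} = \trace{\sigma_{v'}}$; then the two traces contain a common word $\sigma$. The key observation is that reading $\sigma$ letter by letter starting from the basepoint $v_0$ traces out a \emph{unique} walk in $G(\cE)$: along any shortest $(v_0,\cdot)$-path each edge is directed away from $v_0$, and since $\lambda$ is a nice labeling the \emph{Determinism} condition forces the outgoing edges at each vertex to carry pairwise distinct labels, so each successive letter of $\sigma$ selects at most one edge. By Lemma~\ref{trace2}, $\sigma \in \trace{\sigma_v}$ means $\sigma = \sigma(\pi)$ for some shortest $(v_0,v)$-path, so this unique walk ends at $v$; applying the same reasoning to $v'$ shows it ends at $v'$, whence $v = v'$. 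Combining well-definedness, surjectivity, and injectivity yields the claimed natural bijection.
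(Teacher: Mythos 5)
Your proof is correct and takes essentially the same route as the paper, which states the corollary as an immediate consequence of Lemma~\ref{trace2}. Your explicit verification of injectivity via the determinism of the labeling (a word read from $v_0$ determines a unique directed walk, hence a unique endpoint) is precisely the detail left implicit in the paper's ``immediately,'' and your well-definedness and surjectivity arguments match the paper's intended use of Lemmas~\ref{trace1} and~\ref{trace2}.
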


Now we describe the precedence and the conflict relations between
geodesic traces.

\begin{lemma}\label{trace3}
  For two geodesic traces $\trace{\sigma_u}$ and $\trace{\sigma_v}$,
  we have $\trace{\sigma_u}\sqsubseteq \trace{\sigma_v}$ iff
  $u\in I(v_0,v)$.
\end{lemma}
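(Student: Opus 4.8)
The plan is to prove the two implications separately, using the description of geodesic traces from Lemma~\ref{trace2} together with the \emph{Determinism} property of the labeling $\lambda$. Recall that a trace labeling is in particular a nice labeling, so the edges outgoing from any vertex of $\cD(\cE)$ in the pointed median graph $G_{v_0}$ carry pairwise distinct labels. Throughout, I will repeatedly use that every shortest $(v_0,\cdot)$-path consists only of edges directed away from $v_0$, since the distance to $v_0$ strictly increases along such a path.

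First I would establish the easy direction, that $u\in I(v_0,v)$ implies $\trace{\sigma_u}\sqsubseteq\trace{\sigma_v}$. If $u\in I(v_0,v)$, then $d(v_0,u)+d(u,v)=d(v_0,v)$, so concatenating a shortest $(v_0,u)$-path $\pi_1$ with a shortest $(u,v)$-path $\pi_2$ yields a shortest $(v_0,v)$-path $\pi=\pi_1\cdot\pi_2$. By Lemma~\ref{trace2}, $\sigma(\pi_1)\in\trace{\sigma_u}$ and $\sigma(\pi)\in\trace{\sigma_v}$; since $\sigma(\pi)=\sigma(\pi_1)\cdot\sigma(\pi_2)$ has $\sigma(\pi_1)$ as a prefix, the definition of $\sqsubseteq$ immediately gives $\trace{\sigma_u}\sqsubseteq\trace{\sigma_v}$.

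For the converse, suppose $\trace{\sigma_u}\sqsubseteq\trace{\sigma_v}$. By definition there are words $\sigma'\in\trace{\sigma_u}$ and $\tau'\in\trace{\sigma_v}$ with $\sigma'$ a prefix of $\tau'$. By Lemma~\ref{trace2}, $\sigma'=\sigma(\pi')$ for some shortest $(v_0,u)$-path $\pi'$ and $\tau'=\sigma(\pi'')$ for some shortest $(v_0,v)$-path $\pi''$. Let $w$ be the vertex of $\pi''$ reached after $|\sigma'|=d(v_0,u)$ edges; the initial segment of $\pi''$ from $v_0$ to $w$ is itself a shortest $(v_0,w)$-path reading exactly $\sigma'$. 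I then want to conclude $w=u$, which places $u$ on the shortest $(v_0,v)$-path $\pi''$ and hence in $I(v_0,v)$.

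The heart of the argument, and the step I expect to be most delicate, is showing that the word read along a shortest path starting at $v_0$ determines its endpoint uniquely. Since every edge of such a path is directed away from $v_0$ in $G_{v_0}$, reading a word letter by letter from $v_0$ amounts to following outgoing edges of $\cD(\cE)$. By \emph{Determinism}, at each vertex the outgoing edges carry distinct labels, so there is at most one outgoing edge with a prescribed label; inductively, the vertex reached by reading a fixed word $\sigma'$ from $v_0$ along outgoing edges is unique. Applying this to $\pi'$ (ending at $u$) and to the initial segment of $\pi''$ (ending at $w$), both of which read $\sigma'$ from $v_0$ along outgoing edges, I obtain $w=u$, completing the proof. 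The only point requiring care here is verifying that both paths genuinely use outgoing edges throughout, which follows because an initial segment of a shortest $(v_0,v)$-path is again a shortest path from $v_0$.
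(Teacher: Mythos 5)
Your proof is correct and takes essentially the same route as the paper's: both reduce, via Lemma~\ref{trace2}, the trace-prefix relation $\trace{\sigma_u}\sqsubseteq\trace{\sigma_v}$ to the existence of a shortest $(v_0,u)$-path and a shortest $(v_0,v)$-path whose words are in prefix relation, handle one direction by concatenating shortest paths, and the other by identifying a word-prefix with an initial segment of the path. The only difference is one of detail: you spell out the determinism argument (a word read from $v_0$ along outgoing edges determines its endpoint uniquely), which is exactly the step the paper leaves implicit when it asserts that $\sigma(\pi')$ being a prefix of $\sigma(\pi)$ is equivalent to $\pi'$ being a subpath of $\pi$.
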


\begin{proof}
  By definition of $\sqsubseteq$ and Lemma~\ref{trace2},
  $\trace{\sigma_u}\sqsubseteq \trace{\sigma_v}$ iff there exists a
  shortest $(v_0,u)$-path $\pi'$ and a shortest $(v_0,v)$-path $\pi$
  such that $\sigma(\pi')$ is a prefix of $\sigma(\pi)$. But this is
  equivalent to the fact that $\pi'$ is a subpath of $\pi$ which is
  equivalent to the fact that $u$ belongs to the interval $I(v_0,v)$.
\end{proof}

We say that two geodesic traces $\trace{\sigma_u}$ and
$\trace{\sigma_v}$ are in \emph{conflict} if there is no geodesic
trace $\trace{\sigma_w}$ such that
$\trace{\sigma_u}\sqsubseteq \trace{\sigma_w}$ and
$\trace{\sigma_v}\sqsubseteq \trace{\sigma_w}$. By~\ref{trace3}, this
definition can be rephrased as follows:

\begin{lemma}\label{trace4}
  Two geodesic traces $\trace{\sigma_u}$ and $\trace{\sigma_v}$ are in
  conflict iff there is no $w$ with $u,v\in I(v_0,w)$.
\end{lemma}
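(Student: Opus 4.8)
The plan is to derive the statement directly from Lemma~\ref{trace3} together with the bijection of Corollary~\ref{geodesictrace}, since the notion of conflict between geodesic traces was set up precisely to match the absence of a common upper bound in the prefix order $\sqsubseteq$. So the whole argument is a translation, through these two results, of a statement about traces into a statement about intervals of $G(\cE)$.

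First I would unfold the definition of conflict: $\trace{\sigma_u}$ and $\trace{\sigma_v}$ are in conflict if and only if there is no geodesic trace $\trace{\sigma_w}$ with $\trace{\sigma_u}\sqsubseteq\trace{\sigma_w}$ and $\trace{\sigma_v}\sqsubseteq\trace{\sigma_w}$. By Corollary~\ref{geodesictrace}, every geodesic trace of $\cE$ is of the form $\trace{\sigma_w}$ for a unique vertex $w$ of $G(\cE)$; hence quantifying over geodesic traces is exactly the same as quantifying over vertices $w$, and no geodesic trace is missed by this reindexing.

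Next I would apply Lemma~\ref{trace3} to each of the two prefix conditions. It yields $\trace{\sigma_u}\sqsubseteq\trace{\sigma_w}$ iff $u\in I(v_0,w)$ and $\trace{\sigma_v}\sqsubseteq\trace{\sigma_w}$ iff $v\in I(v_0,w)$. Consequently, the existence of a geodesic trace $\trace{\sigma_w}$ dominating both $\trace{\sigma_u}$ and $\trace{\sigma_v}$ is equivalent to the existence of a vertex $w$ with $u,v\in I(v_0,w)$. Negating both sides then gives precisely the claimed equivalence: $\trace{\sigma_u}$ and $\trace{\sigma_v}$ are in conflict iff there is no $w$ with $u,v\in I(v_0,w)$.

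Since each step is a direct substitution using results already established earlier in this section, I do not expect any real obstacle here. The only points requiring a little care are to invoke Corollary~\ref{geodesictrace} so that the quantifier over geodesic traces is legitimately replaced by a quantifier over vertices $w$, and to apply Lemma~\ref{trace3} in the appropriate direction separately for $u$ and for $v$; after that, the equivalence follows by taking the contrapositive.
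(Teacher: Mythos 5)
Your proof is correct and follows exactly the paper's own (very brief) argument: the paper states this lemma as an immediate rephrasing of the definition of conflict via Lemma~\ref{trace3}, with the bijection of Corollary~\ref{geodesictrace} implicitly justifying the replacement of the quantifier over geodesic traces by a quantifier over vertices $w$. You have simply spelled out those same two substitutions in full, so there is nothing to add.
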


\subsection{Prime Geodesic Traces}

Recall that a trace $\trsigma$ is \emph{prime} if $\sigma$ is non-null
and for every $\sigma'\in \trsigma$,
$\last(\sigma)=\last(\sigma')$. We characterize now prime geodesic
traces of $\cE$, in particular we prove that they are in bijection
with the hyperplanes (events) of $\cE$.

We call an interval $I(v_0,v)$ \emph{prime} if the vertex $v$ has
degree 1 in the subgraph induced by $I(v_0,v)$.

\begin{lemma}\label{geoprimetrace1}
  A geodesic trace $\trace{\sigma_v}$ is prime iff the interval
  $I(v_0,v)$ is prime.
\end{lemma}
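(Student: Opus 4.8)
The plan is to prove the biconditional by unwinding the definition of a prime trace in terms of the structure of the interval $I(v_0,v)$, using the bijection between geodesic traces and shortest paths from Lemma~\ref{trace2}. Recall that $\trace{\sigma_v}$ is prime exactly when $\sigma_v$ is non-null and every word $\sigma'\in \trace{\sigma_v}$ has the same last letter; equivalently, by Corollary~\ref{geodesictrace} and Lemma~\ref{trace2}, every shortest $(v_0,v)$-path ends with an edge carrying the same label. On the other hand, $I(v_0,v)$ is prime by definition when $v$ has degree $1$ in the subgraph induced by $I(v_0,v)$, i.e., there is a unique neighbor of $v$ lying on a shortest path from $v_0$. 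The two sides should match once we observe that the final edges of the shortest $(v_0,v)$-paths are exactly the edges of $I(v_0,v)$ incident to $v$, and that distinct such edges receive distinct labels.

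First I would establish the direction ($I(v_0,v)$ prime $\Rightarrow \trace{\sigma_v}$ prime). If $v$ has a unique neighbor $w$ in $I(v_0,v)$, then every shortest $(v_0,v)$-path passes through $w$ as its penultimate vertex, so every $\sigma(\pi)$ ends with the letter $\lambda(wv)$; hence $\last(\sigma')$ is constant over $\trace{\sigma_v}$ and the trace is prime (note $\sigma_v$ is non-null since $v\neq v_0$ as $v$ has positive degree). For the converse, I would argue contrapositively: suppose $v$ has two distinct neighbors $w_1,w_2$ in $I(v_0,v)$. Since $G(\cE)$ is median and intervals are convex, the edges $w_1v$ and $w_2v$ span a square in $X(\cE)$, so by the quadrangle/median condition there is a common neighbor $x$ of $w_1,w_2$ with $d(v_0,x)=d(v_0,v)-2$; the hyperplanes $H_a,H_b$ dual to $vw_1$ and $vw_2$ cross, so $(a,b)\in I$ with $a=\lambda(vw_1)\neq b=\lambda(vw_2)$ (labels of crossing hyperplanes differ by (LES3), since $I$ is irreflexive). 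By Lemma~\ref{trace2} there are shortest $(v_0,v)$-paths ending in $w_1$ and in $w_2$, giving words in $\trace{\sigma_v}$ with different last letters, so the trace is not prime.

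The key technical point I want to make precise is that the neighbors of $v$ in $I(v_0,v)$ are in bijection with the possible last letters of words in $\trace{\sigma_v}$, and that distinct neighbors force distinct labels. The label-distinctness follows from the trace-labeling axioms: two edges incident to $v$ and lying on shortest paths from $v_0$ correspond to hyperplanes that cross (because the corresponding events are co-initial and concurrent, being enabled simultaneously at the configuration $c(x)$), and crossing hyperplanes have independent labels, which are distinct since $I$ is irreflexive. This is exactly the \emph{Determinism} condition on the labeling of $\cD(\cE)$ restated locally, and it is what guarantees the correspondence between ``unique neighbor in the interval'' and ``unique last letter.''

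I expect the main obstacle to be the careful verification that \emph{every} last letter occurring in $\trace{\sigma_v}$ is realized by an actual neighbor of $v$ inside $I(v_0,v)$, rather than by some edge leaving the interval. This is where Lemma~\ref{trace2} does the real work: it identifies $\trace{\sigma_v}$ precisely with the labelings of shortest $(v_0,v)$-paths, so the last letter of any word in the trace is the label of the last edge of some shortest path, and that edge is automatically an edge of $I(v_0,v)$ incident to $v$. Once this identification is in hand, both implications reduce to the combinatorial statement that the number of distinct last letters equals the number of neighbors of $v$ in $I(v_0,v)$, and primeness on either side is the assertion that this number is $1$.
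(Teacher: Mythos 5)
Your proof is correct and follows essentially the same route as the paper's: both directions rest on Lemma~\ref{trace2}'s identification of $\trace{\sigma_v}$ with the words of shortest $(v_0,v)$-paths, together with the fact that distinct neighbors of $v$ in $I(v_0,v)$ yield incoming edges with distinct labels. The only difference is presentational: where the paper asserts in one line that ``$v$ has a unique incoming edge labeled $\last(\sigma(\pi))$,'' you justify this codeterminism explicitly via the quadrangle condition, crossing hyperplanes, (LES3), and the irreflexivity of $I$ -- a worthwhile elaboration, but the same argument.
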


\begin{proof}
  If $I(v_0,v)$ is prime and $v'$ is the unique neighbor of $v$ in
  $I(v_0,v)$, then for any shortest $(v_0,v)$-path $\pi$, we have
  $\last(\sigma(\pi))=\lambda(v'v)$. Thus the geodesic trace
  $\trace{\sigma_v}$ is prime.  Conversely, if $\trace{\sigma_v}$ is
  prime, then $\last(\sigma(\pi))=\last(\sigma(\pi'))$ for any two
  shortest $(v_0,v)$-paths $\pi$ and $\pi'$. Since $v$ has a unique
  incoming edge labeled $\last(\sigma(\pi))$, necessarily $v$ has
  degree 1 in $I(v_0,v)$.
\end{proof}

\begin{lemma}\label{geoprimetrace2}
  Each hyperplane $H$ of $X({\cE})$ (i.e., each event of $\cE$) gives
  a unique prime geodesic trace $\trace{\sigma_H}:=\trace{\sigma_v}$
  defined by the prime interval $I(v_0,v)$, where $v'$ is the gate of
  $v_0$ in the carrier $N(H)$ of the hyperplane $H$ and $v$ is the
  neighbor of $v'$ such that the edge $v'v$ is dual to
  $H$. Conversely, for each prime geodesic trace $\trace{\sigma_u}$
  there exists a unique hyperplane $H$ such that
  $\trace{\sigma_u}=\trace{\sigma_H}$.
\end{lemma}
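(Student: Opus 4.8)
The plan is to prove the two directions separately, in both passing through Lemma~\ref{geoprimetrace1} (which reduces primeness of $\trace{\sigma_v}$ to primeness of the interval $I(v_0,v)$) and exploiting the gatedness of carriers from Theorem~\ref{gated-carriers}. \emph{Forward direction.} Since $N(H)$ is gated, the gate $v'$ of $v_0$ in $N(H)$ is well defined. As $v'\in N(H)$ it lies on some cube crossing $H$, so $v'$ is incident to at least one edge dual to $H$; since hyperplanes of a CAT(0) cube complex neither self-intersect nor self-osculate, it is incident to \emph{exactly} one such edge, which defines $v$ uniquely. Let $A,B$ be the two halfspaces of $X(\cE)\setminus H$ (Theorem~\ref{Sageev}), with $v_0\in A$. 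Applying the gate property to $v$ (which lies in $N(H)$) gives $v'\in I(v_0,v)$, hence $d(v_0,v)=d(v_0,v')+1$; thus $v'\in A$ and $v\in B$. It then remains to show that $I(v_0,v)$ is prime, i.e. that $v'$ is the \emph{unique} neighbour of $v$ in $I(v_0,v)$.

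Here I would argue by contradiction. Suppose $w\ne v'$ is another neighbour of $v$ with $d(v_0,w)=d(v_0,v)-1=:k-1$. Then $v',w$ are two distinct neighbours of $v$ at distance $k-1$ from $v_0$, so (median graphs being bipartite) $d(v',w)=2$, and the quadrangle condition yields a common neighbour $x$ of $v',w$ with $d(v_0,x)=k-2$; the $4$-cycle $v'\,v\,w\,x$ bounds a square of $X(\cE)$. In this square the edge $wx$ is opposite to $v'v$, hence dual to the same hyperplane $H$, so $w\in N(H)$. But then the gate property forces $v'\in I(v_0,w)$, giving $d(v_0,w)=d(v_0,v')+d(v',w)=(k-1)+2=k+1$, contradicting $d(v_0,w)=k-1$. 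Hence $I(v_0,v)$ is prime and, by Lemma~\ref{geoprimetrace1}, $\trace{\sigma_H}:=\trace{\sigma_v}$ is a prime geodesic trace.

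\emph{Converse direction.} Let $\trace{\sigma_u}$ be a prime geodesic trace; by Lemma~\ref{geoprimetrace1} the interval $I(v_0,u)$ is prime, so $u$ has a unique neighbour $u'$ in $I(v_0,u)$. Let $H$ be the hyperplane dual to $u'u$; as $d(v_0,u')<d(v_0,u)$ we get $u'\in A$ and $u\in B$ (with $v_0\in A$). I claim the gate $g$ of $v_0$ in $N(H)$ is exactly $u'$, which by the forward construction gives $u=v$ and hence $\trace{\sigma_u}=\trace{\sigma_H}$. Write $v$ for the neighbour of $g$ across $H$. Using that the carrier splits as $N(H)\cong H\times[-1,1]$ (two-sidedness, Theorem~\ref{Sageev}), with the $A$-side and $B$-side matched by the dual edges, one computes $d(g,u)=d(v,u)+1=d(v,u)+d(g,v)$, so $v\in I(g,u)$; combined with $g\in I(v_0,u)$ and $d(v_0,v)=d(v_0,g)+1$ (both from the gate property, since $u,v\in N(H)$) this yields $v\in I(v_0,u)$ and $d(v_0,u)=d(v_0,v)+d(v,u)$.

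Now suppose $g\ne u'$; then $v\ne u$, and a geodesic from $v_0$ to $u$ through $v$ reaches $u$ by an edge lying in the halfspace $B$ (since $v,u\in B$ and $H$ is not crossed inside $I(v,u)$), producing a neighbour $u''\in B$ of $u$ in $I(v_0,u)$. As $u'\in A$ and $u''\in B$ are distinct, $u$ would then have two neighbours in $I(v_0,u)$, contradicting primeness. Hence $g=u'$ and $v=u$. Uniqueness of $H$ follows at once: any $H'$ with $\trace{\sigma_{H'}}=\trace{\sigma_u}$ must, by the forward construction, have its gate-partner equal to $u$, so its gate is a neighbour of $u$ in $I(v_0,u)$, which by primeness is $u'$, whence $H'$ is dual to $u'u$, i.e. $H'=H$. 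The main obstacle is the same point in both directions — reconciling the gate of $v_0$ in the carrier with the immediate predecessor of $v$ (resp.\ $u$); this is precisely where primeness of the interval enters, through the ``two distinct lower neighbours'' contradiction.
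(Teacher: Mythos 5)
Your proof is correct and follows essentially the same route as the paper's: both directions argue by contradiction from primeness of the interval $I(v_0,v)$, using gatedness of the carrier, convexity of halfspaces, and the quadrangle condition, with only a cosmetic difference in the forward direction (you apply the gate property to the second neighbour $w$ of $v$ to get $d(v_0,w)=k+1$, whereas the paper exhibits a carrier vertex strictly closer to $v_0$ than the gate $v'$). You additionally spell out the well-definedness of $v$ and the uniqueness of $H$, which the paper leaves implicit.
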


\begin{proof}
  For a hyperplane $H$, let $v'$ and $v$ be defined as in the
  formulation of the lemma. Let $A$ and $B$ be the two complementary
  halfspaces defined by $H$ and suppose that $v'\in A$ and $v\in B$.
  We assert that the interval $I(v_0,v)$ is prime, i.e., that $v'$ is
  the unique neighbor of $v$ in $I(v_0,v)$. Suppose by way of
  contradiction that $v$ is adjacent in $I(v_0,v)$ to another vertex
  $v''$. Since $v\in I(v',v'')$ and the halfspace $A$ is convex, $v''$
  cannot belong to this halfspace. Thus $v''$ belongs to $B$.  By
  quadrangle condition, there exists a vertex $w$ adjacent to $v',v''$
  and one step closer to $v_0$ than $v'$ and $v''$. From the convexity
  of $B$ we conclude that $w$ belongs to $A$. Since $w\in A$ is
  adjacent to $v''\in B$, $w$ belongs to the carrier $N(H)$ of
  $H$. Since $d(v_0,w)<d(v_0,v')$, this contradicts the assumption
  that $v'$ is the gate of $v_0$ in $N(X)$. This establishes that the
  interval $I(v_0,v)$ is prime.
  Conversely, let $\trace{\sigma_{u}}$ be a prime geodesic trace and
  let $u'$ be the unique neighbor of $u$ in $I(v_0,u)$. Let $H$ be the
  hyperplane dual to the edge $u'u$ and $A$ and $B$ be the halfspaces
  defined by $H$ with $u'\in A, u\in B$. We assert that $u'$ is the
  gate of $v_0$ in the carrier $N(H)$ of $H$.  Suppose that this is
  not true and let $v'$ be the gate of $v_0$ in $N(H)$. Let $v'v$ be
  the edge incident to $v'$ and dual to $H$. Then $v'\in I(v_0,u)$ and
  $v\in I(v',u)\subseteq I(v_0,u)$. Since $B$ is convex,
  $I(v,u)\subseteq B$, thus $u$ has a second neighbor in $I(v_0,u)$,
  contrary to the assumption that the interval $I(v_0,u)$ is
  prime. Hence $u'$ is the gate of $v_0$ in $N(H)$, establishing that
  $\trace{\sigma_u}=\trace{\sigma_H}$.
\end{proof}

Notice that the vertices $v$ such that the interval $I(v_0,v)$ is
prime are exactly the join irreducible elements of the poset
$(\cD(\cE),\subseteq)$ (i.e., the nonminimal elements which cannot be
written as the supremum of finitely many other elements). The
bijection between the set $\JJ(X({\cE}))$ of join irreducibles and the
set $\cH$ of hyperplanes was also established in~\cite{ArOwSu}.

Let $\PGT(\cE)$ be the set of geodesic prime traces of $\cE$.
From Lemma~\ref{geoprimetrace2}, we get the following:

\begin{corollary}\label{geodesicprimetrace}
  There exist natural bijections $H\mapsto \trace{\sigma_v}$ between
  the set of hyperplanes of $X(\cE)$ (i.e., events of $\cE$), the set
  $\PGT(\cE)$ of prime geodesic traces
  of $\cE$, and the set $\JJ(X({\cE}))$ of join irreducible elements
  of $(\cD,\subseteq)$.
\end{corollary}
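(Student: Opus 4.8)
The plan is simply to stitch together bijections already established, since by the phrasing ``From Lemma~\ref{geoprimetrace2}, we get the following'' essentially all of the content lives in the preceding results. First I would invoke Corollary~\ref{geodesictrace}, which gives a bijection $v\mapsto\trace{\sigma_v}$ between the vertices of $G(\cE)$ (the configurations of $\cE$) and the set $\GT(\cE)$ of all geodesic traces. The goal is then to cut this bijection down to the prime geodesic traces and to identify their preimages on the vertex side.

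For the bijection between $\JJ(X(\cE))$ and $\PGT(\cE)$, I would argue as follows. By Lemma~\ref{geoprimetrace1}, a geodesic trace $\trace{\sigma_v}$ is prime exactly when the interval $I(v_0,v)$ is prime, i.e.\ when $v$ has degree $1$ in the subgraph it induces. By the remark preceding the statement, such vertices are precisely the join irreducible elements of the poset $(\cD(\cE),\subseteq)$, whose Hasse diagram is the median graph $G(\cE)$ by Theorem~\ref{median_domain}. Hence the bijection $v\mapsto\trace{\sigma_v}$ restricts to a bijection from $\JJ(X(\cE))$ onto $\PGT(\cE)$.

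For the bijection between the hyperplanes and $\PGT(\cE)$, I would read off Lemma~\ref{geoprimetrace2} directly: its first assertion shows that $H\mapsto\trace{\sigma_H}$ is well defined (each hyperplane $H$ yields a prime geodesic trace through the gate $v'$ of $v_0$ in the carrier $N(H)$ and the neighbor $v$ of $v'$ with $v'v$ dual to $H$), while the converse assertion, that every prime geodesic trace equals $\trace{\sigma_H}$ for a \emph{unique} $H$, furnishes both surjectivity and injectivity of this map. Composing the two bijections then produces the promised identifications among the hyperplanes (events of $\cE$), $\PGT(\cE)$, and $\JJ(X(\cE))$.

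Since every link of the chain is already in hand, I do not expect a genuine obstacle; the only points needing care are verifying that the ``unique $H$'' clause of Lemma~\ref{geoprimetrace2} really yields injectivity (two distinct hyperplanes sharing one prime geodesic trace would contradict uniqueness of the preimage), and checking that the remark's identification of prime intervals with join irreducibles is compatible with the order structure of the domain recorded in Theorem~\ref{median_domain}.
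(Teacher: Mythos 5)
Your proposal is correct and follows essentially the same route as the paper: the corollary is stated there as an immediate consequence of Lemma~\ref{geoprimetrace2} (hyperplanes $\leftrightarrow$ prime geodesic traces, with the uniqueness clause giving injectivity) combined with the preceding remark identifying prime intervals with join irreducibles, via Corollary~\ref{geodesictrace} and Lemma~\ref{geoprimetrace1}. Your write-up merely makes explicit the stitching that the paper leaves implicit, so there is nothing to correct.
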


\begin{lemma}\label{geoprimetrace3}
  For two hyperplanes $H',H$ of $X({\cE})$ with respective prime
  geodesic traces $\trace{\sigma_u}$ and $\trace{\sigma_v}$, $H'\le H$
  iff $\trace{\sigma_{u}}\sqsubseteq\trace{\sigma_{v}}$.
\end{lemma}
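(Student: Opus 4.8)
The plan is to reduce the statement to a purely metric condition on the median graph $G(\cE)$ and then read off both implications from the geometry of halfspaces and carriers. By Lemma~\ref{trace3}, $\trace{\sigma_u}\sqsubseteq\trace{\sigma_v}$ holds if and only if $u\in I(v_0,v)$, so it suffices to prove that $H'\le H$ if and only if $u\in I(v_0,v)$. I adopt the notation of Lemma~\ref{geoprimetrace2}: let $\{A',B'\}$ be the halfspaces of $H'$ with $v_0\in A'$, let $u'$ be the gate of $v_0$ in the carrier $N(H')$ (so $u'\in A'$) and $u$ its neighbor across $H'$ (so $u\in B'$); define $\{A,B\}$, $v'$, $v$ analogously for $H$. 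By the bijection of Corollary~\ref{geodesicprimetrace}, $H'=H$ is equivalent to $u=v$, which disposes of the trivial case, so from now on I assume $H'\ne H$, i.e.\ $u\ne v$. Recall that for distinct hyperplanes the relation $H'\le H$ means exactly that $H'$ separates $H$ from $v_0$, equivalently that $H$ lies entirely in the halfspace $B'$.

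First I would record two facts. (i) For every $w\in B'$ one has $u'\in I(v_0,w)$: any geodesic from $v_0$ to $w$ crosses $H'$ in an edge $a'a$ with $a'\in A'\cap N(H')$, so $a'\in I(v_0,w)$ and hence $I(v_0,a')\subseteq I(v_0,w)$; since $u'$ is the gate of $v_0$ in $N(H')$ we have $u'\in I(v_0,a')$, whence $u'\in I(v_0,w)$. (ii) A geodesic crosses each hyperplane at most once, since the two halfspaces of a hyperplane are convex (Theorems~\ref{Sageev} and~\ref{gated-carriers}). For the implication $H'\le H\Rightarrow u\in I(v_0,v)$, assume $H\subseteq B'$. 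No cube of the carrier $N(H)$ can cross $H'$, for such a cube would contain distinct midcubes of $H$ and of $H'$, forcing $H$ and $H'$ to cross; as $H\subseteq B'$ this gives $N(H)\subseteq B'$ and in particular $v\in B'$. By~(i), $u'\in I(v_0,v)$. Moreover $u\in I(u',v)$: since $u\sim u'$ we have $d(u,v)=d(u',v)\pm1$, and $d(u,v)=d(u',v)+1$ would put $u'\in A'$ on a geodesic between $u,v\in B'$, violating~(ii). Combining $u'\in I(v_0,v)$ with $u\in I(u',v)$ yields $u\in I(v_0,v)$.

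For the converse $u\in I(v_0,v)\Rightarrow H'\le H$, the hard part will be passing from a single geodesic back to a statement about the whole hyperplane $H$. Assuming $u\in I(v_0,v)$ with $u\ne v$, there is a geodesic $v_0,\dots,u',u,\dots,v$ (using that $u'$ is the unique neighbor of $u$ in the prime interval $I(v_0,u)$), whose edge $u'u$ is dual to $H'$; hence $H'$ separates $v_0$ from $v$, i.e.\ $v\in B'$. Since the edge $v'v$ is dual to $H\ne H'$ it does not cross $H'$, so $v'\in B'$ as well. Now $v'$ is the gate of $v_0$ in $N(H)$, and I would use gatedness (Theorem~\ref{gated-carriers}) to propagate $v'\in B'$ to all of $N(H)$: for any $w\in N(H)$ we have $v'\in I(v_0,w)$, and if $w\in A'$ the subpath $v_0\to v'\to w$ would be a geodesic from $A'$ to $A'$ crossing $H'$ twice, contradicting~(ii); thus $N(H)\subseteq B'$ and so $H\subseteq B'$, which is precisely $H'\le H$. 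The only genuine subtlety throughout is the repeated use of the convexity and gatedness of halfspaces and carriers to forbid a geodesic from crossing $H'$ twice, together with the cube-local fact that two distinct hyperplanes meeting a common cube must cross; both are supplied by Theorems~\ref{Sageev} and~\ref{gated-carriers}.
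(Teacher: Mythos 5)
Your proof is correct and follows essentially the same route as the paper's: both reduce the statement via Lemma~\ref{trace3} to the metric claim that $H'\le H$ iff $u\in I(v_0,v)$, and both settle the two implications using convexity and gatedness of halfspaces and carriers in the median graph. The only differences are bookkeeping ones: where the paper directly cites that $u$ (resp.\ $v$) is the gate of $v_0$ in the halfspace $B'$ (resp.\ $B$), you re-derive the needed instances by hand, and in the converse direction you propagate $B'$-membership over the carrier $N(H)$ rather than arguing by contradiction with a point of $B\cap A'$ and the convexity of $A'$.
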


\begin{proof}
  By Lemma~\ref{trace3} it suffices to show that $H'\le H$ iff
  $u\in I(v_0,v)$. Let $A',B'$ and $A,B$ be the complementary
  halfspaces defined by $H'$ and $H$, respectively, and suppose that
  $v_0\in A'\cap A$. Let $u'$ be the gate of $v_0$ in $N(H')$ and $v'$
  be the gate of $v_0$ in $N(H)$. Then $u'\in A'$ and $u$ is the
  neighbor of $u'$ in $B'$.  Analogously, $v'\in A$ and $v$ is the
  neighbor of $v'$ in $B$. Notice also that $u$ is the gate of $v_0$
  in $B'$ and that $v$ is the gate of $v_0$ in $B$.
  First suppose that $H'\le H$, i.e., $H'$ separates $v_0$ from
  $H$. This is equivalent to the inclusion $B\subseteq B'$. Since $u$
  is the gate of $v_0$ in $B'$ and $v\in B$, this implies that
  $u\in I(v_0,v)$. Conversely, let $u\in I(v_0,v)$. Suppose by way of
  contradiction that $B\nsubseteq B'$, i.e., there exists
  $x\in B\cap A'$. Since $u'\in I(v_0,u)\subset I(v_0,v)$, $v$ belongs
  to the halfspace $B'$.  On the other hand, since $v$ is the gate of
  $v_0$ in $B$ and $x\in B$, we conclude that $v\in I(v_0,x)$. Since
  $v_0,x\in A'$ and $v\in B'$, this contradicts the convexity of
  $A'$. This establishes that $u\in I(v_0,v)$ implies $H'\le H$.
\end{proof}

\section{Directed NPC Complexes}\label{sec-directed-npc}

Since we can define event structures from their domains, universal
covers of NPC complexes represent a rich source of event structures.
To obtain regular event structures, it is natural to consider
universal covers of finite NPC complexes. Since domains of event
structures are directed, it is natural to consider finite NPC
complexes with directed edges.  However, the resulting directed
universal covers are not in general domains of event structures. In
particular, the domains corresponding to pointed median graphs given
by Theorem~\ref{median_domain} cannot be obtained in this way. In
order to overcome this difficulty, we introduced in~\cite{CC-thiag}
directed median graphs and directed NPC complexes. Using them, we
constructed regular event structures starting from finite directed NPC
complexes. In this section, we recall and extend these definitions and
constructions.

\subsection{Directed Median Graphs}

A \emph{directed median graph} is a pair $\oG=(G,o)$, where $G$ is a
median graph and $o$ is an orientation of the edges of $G$ in a such a
way that opposite edges of squares of $G$ have the same direction. By
transitivity of $\Theta$, all edges from the same parallelism class
$\Theta_i$ of $G$ have the same direction. Since each $\Theta_i$
partitions $G$ into two parts, $o$ defines a partial order $\prec_o$
on the vertex-set of $G$.  For a vertex $v$ of $G$, let
$\cF_{o}(v,G)=\{ x\in V: v \prec_o x\}$ be the principal filter of $v$
in the partial order $(V(G),\prec_o)$.  For any canonical basepoint
order $\le_v$ of $G$, $(G,\le_v)$ is a directed median graph. The
converse is obviously not true: the 4-regular tree $F_4$ directed so
that each vertex has two incoming and two outgoing arcs is a directed
median graph which is not induced by a basepoint order.

\begin{lemma}[\cite{CC-thiag}]\label{directed-median}
  For any vertex $v$ of a directed median graph $\oG=(G,o)$, we have:
  \begin{enumerate}
  \item $\cF_{o}(v,G)$ induces a convex subgraph of $G$;
  \item the restriction of the partial order $\prec_o$ on
    ${\cF}_{o}(v,G)$ coincides with the restriction of the
    canonical basepoint order $\le_v$ on $\cF_{o}(v,G)$;
  \item $\cF_{o}(v,G)$ together with $\prec_o$ is the domain of an
    event structure;
  \item for any vertex $u\in \cF_{o}(v,G)$, the principal filter
    $\cF_{o}(u,G)$ is included in $\cF_{o}(v,G)$ and coincides with
    the principal filter of $u$ with respect to the canonical
    basepoint order $\le_v$ on $\cF_{o}(v,G)$.
  \end{enumerate}
\end{lemma}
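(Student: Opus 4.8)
The plan is to reduce all four assertions to a single description of $\prec_o$ in terms of hyperplanes and their orientation, and then to exploit the convexity of halfspaces in median graphs.

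First I would record the mechanism of the order. Since opposite edges of squares receive the same direction, transitivity of $\Theta$ forces every edge of a parallelism class $\Theta_i$ to be oriented the same way; relabelling its associated convex split $\{A_i,B_i\}$ if necessary, choose $A_i,B_i$ so that every oriented edge dual to $\Theta_i$ points from $A_i$ into $B_i$. A directed path then crosses each $\Theta_i$ only from $A_i$ to $B_i$, so it can never recross a hyperplane and is automatically a geodesic. Hence $v\prec_o x$ holds \emph{iff} every class $\Theta_i$ separating $v$ and $x$ satisfies $v\in A_i$ and $x\in B_i$, i.e.\ the orientation of each separating hyperplane points from the side of $v$ to the side of $x$. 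This is the only structural fact I use, and I expect establishing it cleanly — in particular the observation that directed paths are geodesics, which is also what makes $\prec_o$ antisymmetric — to be the one genuine step; the rest is bookkeeping with separating hyperplanes.

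For (1), this characterization exhibits $\cF_{o}(v,G)$ as an intersection of halfspaces. A class with $v\in A_i$ imposes no condition on $x$ (if it separates $v$ from $x$ then automatically $v\in A_i$, $x\in B_i$), whereas a class with $v\in B_i$ forces $x\in B_i$. Thus $\cF_{o}(v,G)=\bigcap\{B_i : v\in B_i\}$. Each halfspace $B_i$ induces a convex subgraph of $G$ by Theorem~\ref{gated-carriers} (applied through Theorem~\ref{CAT(0)-median}), and an intersection of convex subgraphs of a median graph is again convex, so $\cF_{o}(v,G)$ is convex. (Local convexity in the sense of Lemma~\ref{convex} could be checked directly instead.)

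For (2), fix $x,y\in\cF_{o}(v,G)$; recall $x\le_v y$ means $x\in I(v,y)$, i.e.\ every class separating $v$ from $x$ also separates $v$ from $y$. If $x\le_v y$ and $\Theta_i$ separates $x,y$, then $\Theta_i$ cannot separate $v$ from $x$ (else it would separate $v$ from $y$ too, putting $x,y$ on the same side); so $v,x$ share a side and $y$ is opposite, whence $\Theta_i$ separates $v,y$ and, since $v\prec_o y$, its orientation runs from the side of $x$ $(=$ side of $v)$ to the side of $y$, giving $x\prec_o y$. Conversely, if $x\prec_o y$ but some $\Theta_i$ separated $v$ from $x$ and not from $y$, then $x$ would lie opposite to both $v$ and $y$, so $\Theta_i$ would separate $x,y$ and its orientation would have to point simultaneously from the side of $v$ to the side of $x$ (as $v\prec_o x$) and from the side of $x$ to the side of $y=$ side of $v$ (as $x\prec_o y$), a contradiction; hence $x\le_v y$. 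So $\prec_o$ and $\le_v$ agree on $\cF_{o}(v,G)$. Assertion (3) is now immediate: by (1) the subgraph $\cF_{o}(v,G)$ is convex, hence isometric, so it is itself a median graph and the intervals $I(v,x)$ it induces for $x\in\cF_{o}(v,G)$ coincide with those of $G$; its canonical basepoint order at $v$ is therefore $\le_v$, which equals $\prec_o$ by (2), and Theorem~\ref{median_domain} turns the pointed median graph $(\cF_{o}(v,G),v)$ into the domain of an event structure. For (4), if $u\in\cF_{o}(v,G)$ and $u\prec_o x$ then $v\prec_o u\prec_o x$ gives $x\in\cF_{o}(v,G)$ by transitivity of $\prec_o$, so $\cF_{o}(u,G)\subseteq\cF_{o}(v,G)$; and since then both $u$ and every such $x$ lie in $\cF_{o}(v,G)$, part (2) yields $u\prec_o x \iff u\le_v x$, showing that $\cF_{o}(u,G)$ is exactly the principal filter of $u$ for $\le_v$ inside $\cF_{o}(v,G)$.
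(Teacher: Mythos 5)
Your proposal is correct in substance, but note that the paper itself does not prove Lemma~\ref{directed-median}: it is imported from~\cite{CC-thiag} without proof, so there is no in-paper argument to compare against. The only overlap with the present text is your opening observation, which is precisely Lemma~\ref{directedpath} (all edges dual to a hyperplane share an orientation, so a directed path crosses each parallelism class at most once and is a geodesic), proved in the paper by the same argument you give. Your reduction of all four parts to the hyperplane description of $\prec_o$ plus convexity of halfspaces is sound: the identity $\cF_{o}(v,G)=\bigcap\{B_i : v\in B_i\}$ gives (1), the two separation arguments give (2), and (3)--(4) follow formally as you say, using that a convex subgraph of a median graph is itself median with the same intervals.

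One step must be made explicit, because as written it is only half-proved: the ``iff'' in your characterization of $\prec_o$. The argument you give establishes only the forward implication (a directed path crosses each separating class exactly once, from $A_i$ into $B_i$). The reverse implication --- if every class separating $v$ and $x$ satisfies $v\in A_i$, $x\in B_i$, then $v\prec_o x$ --- is what part (1) and the forward half of part (2) actually consume, since in both places you must \emph{produce} a directed path from the hyperplane data. It is true, and its proof is the dual of your observation: take any geodesic from $v$ to $x$; each of its edges is dual to a class separating $v$ and $x$, and the geodesic traverses that edge from the halfspace containing $v$ to the halfspace containing $x$, i.e.\ from $A_i$ to $B_i$, which is exactly the edge's orientation; hence this geodesic is a directed path and $v\prec_o x$. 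With that argument inserted, your proof is complete.
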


A \emph{directed $(x,y)$-path} of a directed median graph $\oG=(G,o)$
is a $(x,y)$-path $\pi(x,y)=(x=x_1,x_2,\ldots,x_{k-1},x_k=y)$ of $G$
in which any edge $x_{i}x_{i+1}$ is directed in $\oG$ from $x_i$ to
$x_{i+1}$.

\begin{lemma}\label{directedpath}
  Any directed path of a directed median graph $\oG$ is a shortest
  path of $G$.
\end{lemma}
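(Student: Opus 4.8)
The plan is to use the characterization of shortest paths in median graphs in terms of the parallelism classes $\Theta_i$. Recall from the canonical isometric embedding of $G$ into a hypercube that $d_G(x,y)$ equals the number of parallelism classes $\Theta_i$ separating $x$ and $y$; consequently, a path from $x$ to $y$ in $G$ is a shortest path if and only if no two of its edges belong to the same class $\Theta_i$ (equivalently, it crosses each convex split $\{A_i,B_i\}$ at most once). So it suffices to prove that a directed path never uses two edges of the same parallelism class.

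First I would record the key consequence of the definition of a directed median graph: since opposite edges of squares carry the same direction and $\Theta$ is the reflexive and transitive closure of the ``opposite in a common square'' relation, all edges of a fixed class $\Theta_i$ are coherently oriented with respect to the convex split $\{A_i,B_i\}$. Concretely, after fixing notation so that each $\Theta_i$-edge is directed from its endpoint in $A_i$ to its endpoint in $B_i$, every edge of $\Theta_i$ points from $A_i$ to $B_i$.

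Next, consider a directed path $\pi(x,y)=(x=x_1,\ldots,x_k=y)$ and suppose for contradiction that two of its edges, say $x_px_{p+1}$ and $x_qx_{q+1}$ with $p<q$, lie in the same class $\Theta_i$. Traversing $x_px_{p+1}$ in its prescribed direction moves the path from $A_i$ into $B_i$, so $x_{p+1}\in B_i$. Because $\{A_i,B_i\}$ is the convex split associated with $\Theta_i$, the only edges of $G$ joining $A_i$ and $B_i$ are exactly the edges of $\Theta_i$, and every such edge is oriented from $A_i$ to $B_i$; hence no directed edge can carry the path from $B_i$ back to $A_i$. Thus once the path enters $B_i$ it remains in $B_i$, giving $x_q\in B_i$. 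But the edge $x_qx_{q+1}\in\Theta_i$ must be directed from $A_i$ to $B_i$, forcing $x_q\in A_i$, a contradiction.

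Therefore all edges of a directed path lie in pairwise distinct parallelism classes, and by the characterization recalled above the path is a shortest path of $G$. The step that requires the most care is the claim in the third paragraph that the path cannot cross back once it has crossed the split: this is precisely where the coherent orientation of each $\Theta_i$ (guaranteed by the directed-median-graph axiom) combines with the fact that $\Theta_i$ is exactly the set of edges of the split $\{A_i,B_i\}$ to rule out any back-crossing by a directed path.
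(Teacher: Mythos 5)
Your proof is correct and takes essentially the same route as the paper's: both reduce the claim to showing that a directed path crosses each parallelism class (hyperplane) at most once, using the fact that all edges of a class are coherently oriented in a directed median graph. The paper's version is terser—it invokes convexity of halfspaces for the shortest-path criterion and leaves the no-back-crossing argument implicit—whereas you spell out that step explicitly; the content is the same.
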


\begin{proof}
  Since halfspaces of $G$ are convex, a path $\pi(x,y)$ of $G$ is a
  shortest path if and only if any hyperplane $H$ of $G$ intersects
  $\pi(x,y)$ in at most one edge. Since all edges of $G$ dual to the
  same hyperplane $H$ are directed in $\oG$ in the same way, $H$
  intersects a directed path $\pi(x,y)$ of $\oG$ in at most one
  edge. Hence the support of $\pi(x,y)$ is a shortest $(x,y)$-path in
  $G$.
\end{proof}

\subsection{Directed NPC Complexes}

A \emph{directed NPC complex} is a directed cube complex $(Y,o)$,
where $Y$ is a NPC complex and $o$ is an admissible orientation of
$Y$.  Recall that this means that $o$ is an orientation of the edges
of $Y$ in a such a way that the opposite edges of the same square of
$Y$ have the same direction.  For an edge $xy$, we denote $o(xy)$ by
$\ovr{xy}$ if the edge is directed from $x$ to $y$.  Note that there
exist NPC complexes that do not admit any admissible orientation:
consider a M{\"o}bius band of squares, for example.  An admissible
orientation $o$ of $Y$ induces in a natural way an orientation
$\tildo$ of the edges of its universal cover $\tY$, so that
$(\tY,\tildo)$ is a directed CAT(0) cube complex and
$(\tY^{(1)},\tildo)$ is a directed median graph.

In the following, we need to consider directed colored NPC complexes
and directed colored median graphs. A coloring $\nu$ of a directed NPC
complex $(Y,o)$ is an arbitrary map $\nu: E(Y) \to \Sigma$ where
$\Sigma$ is a set of colors. Note that a labeling is a coloring, but
not the converse: \emph{labelings} are precisely the colorings in
which opposite edges of any square have the same color.  In the
following, we denote a directed colored NPC complexes by bold letters
like $\bY = (Y,o,\nu)$.  Sometimes, we need to forget the colors and
the orientations of the edges of these complexes. For a complex $\bY$,
we denote by $Y$ the complex obtained by forgetting the colors and the
orientations of the edges of $\bY$ ($Y$ is called the \emph{support}
of $\bY$), and we denote by $(Y,o)$ the directed complex obtained by
forgetting the colors of $\bY$. We also consider directed colored
median graphs that are the $1$-skeletons of directed colored CAT(0)
cube complexes. Again we denote such directed colored median graphs by
bold letters like $\bG = (G,o,\nu)$. Note that (uncolored) directed
NPC complexes can be viewed as directed colored NPC complexes where
all edges have the same color.
When dealing with directed colored NPC complexes, we consider only
homomorphisms that preserve the colors and the directions of edges.
More precisely, $\bY' = (Y',o',\nu')$ is a covering of
$\bY = (Y,o,\nu)$ via a covering map $\varphi$ if $Y'$ is a covering
of $Y$ via $\varphi$ and for any edge $e \in E(Y')$ directed from $s$
to $t$, $\nu(\varphi(e)) = \nu'(e)$ and $\varphi(e)$ is directed from
$\varphi(s)$ to $\varphi(t)$. Since any coloring $\nu$ of a directed
colored NPC complex $Y$ leads to a coloring of its universal cover
$\tY$, one can consider the colored universal cover
$\tbY = (\tY,\tildo,\tnu)$ of $\bY$.
When we consider principal filters in directed colored median graphs
$\bG=(G,o,\nu)$ (in particular, when $G$ is the $1$-skeleton of the
universal cover $\tbY$ of a directed colored NPC complex $\bY$), we
say that two filters are isomorphic if there is an isomorphism between
them that preserves the directions and the colors of the edges.

We now formulate the crucial regularity property of
$(\tY^{(1)},\tildo,\tnu)$ when $(Y,o,\nu)$ is finite.

\begin{lemma}[\cite{CC-thiag}]\label{regular-cube}
  If $\bY = (Y,o,\nu)$ is a finite directed colored NPC complex, then
  $\tbY^{(1)} = (\tY^{(1)},\tildo,\tnu)$ is a directed median graph
  with at most $|V(Y)|$ isomorphism types of colored principal
  filters.  In particular, if $(Y,o)$ is a finite directed NPC
  complex, then $(\tY^{(1)},\tildo)$ is a directed median graph with
  at most $|V(Y)|$ isomorphism types of principal filters.
\end{lemma}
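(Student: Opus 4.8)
The plan is to separate the two assertions. For the structural claim that $(\tY^{(1)},\tildo)$ is a directed median graph, I would note that it is essentially already assembled from the preceding material: by Gromov's Theorem~\ref{Gromov} the universal cover $\tY$ of the NPC complex $Y$ is a CAT(0) cube complex, so by Theorem~\ref{CAT(0)-median} its $1$-skeleton $\tY^{(1)}$ is a median graph; and since $\tildo$ is the pull-back of the admissible orientation $o$ along the covering map $\varphi\colon\tY\to Y$, opposite edges of every square of $\tY$ receive the same direction, which is exactly the defining property of a directed median graph. The colouring $\tnu$ is likewise the pull-back of $\nu$, and two opposite edges of a square, having the same $\nu$-colour in $Y$, keep the same $\tnu$-colour in $\tY$.

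The real content is the bound on isomorphism types, and here the idea is to show that the colored principal filter $\cF_{\tildo}(\tv,\tY^{(1)})$ depends, up to colour- and direction-preserving isomorphism, only on the projection $\varphi(\tv)\in V(Y)$; since $V(Y)$ is finite this immediately yields at most $|V(Y)|$ types. So I would fix two lifts $\tu,\tv$ of the same vertex $v=\varphi(\tu)=\varphi(\tv)$ and construct an isomorphism $\psi\colon\cF_{\tildo}(\tu,\tY^{(1)})\to\cF_{\tildo}(\tv,\tY^{(1)})$ by transport along directed paths: for $\tx$ in the first filter, choose a directed $(\tu,\tx)$-path $\tilde\pi$, project it to a directed $(v,\varphi(\tx))$-path $\varphi(\tilde\pi)$ of $Y$, lift $\varphi(\tilde\pi)$ in $\tY$ starting at $\tv$ (unique path lifting for coverings), and let $\psi(\tx)$ be its endpoint. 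Because $\tildo$ is the pull-back of $o$, this lift is again directed, so $\psi(\tx)\in\cF_{\tildo}(\tv,\tY^{(1)})$.

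The crux, and the step I expect to be the main obstacle, is well-definedness: the endpoint $\psi(\tx)$ must not depend on the chosen directed $(\tu,\tx)$-path. To handle this I would argue that any two directed $(\tu,\tx)$-paths are shortest paths (Lemma~\ref{directedpath}) and are therefore homotopic in $\tY$ through squares while fixing their endpoints (Lemma~\ref{homotopic}); their projections are then homotopic directed paths in $Y$ (rel endpoints), and by the homotopy-lifting property of the universal cover, homotopic paths lift from the common basepoint $\tv$ to paths with the same endpoint. With well-definedness secured, bijectivity follows by running the identical construction from $\tv$ to define $\psi'$ in the reverse direction; the maps are mutually inverse because, by uniqueness of path lifting, projecting the lift and relifting from the other basepoint recovers the original path. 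Finally $\psi$ preserves adjacency, edge directions and edge colours edge-by-edge: a directed $\tnu$-coloured edge $\tx\tx'$ extends a directed path to $\tx$, so its image under $\psi$ is the lift of the same projected edge and hence carries the same direction and the same colour, and applying $\psi'$ shows non-edges map to non-edges. Thus $\psi$ is a colour- and direction-preserving isomorphism of filters, completing the count; the uncoloured statement is the special case in which $\nu$ is constant.
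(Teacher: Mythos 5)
Your proposal is correct, but it takes a different route from the one behind the paper's citation. The paper does not prove Lemma~\ref{regular-cube} here at all --- it imports it from~\cite{CC-thiag} --- and the proof there (echoed later in this very paper, in the proof of Proposition~\ref{mso-graph-complex}) is shorter: a universal cover is a \emph{normal} covering, so for any two lifts $\tu,\tv$ of the same vertex of $Y$ there is a deck transformation $g$ of $\tY$ with $g(\tu)=\tv$ and $\varphi\circ g=\varphi$; since $\tildo$ and $\tnu$ are pullbacks along $\varphi$, the automorphism $g$ preserves directions and colors and hence restricts to an isomorphism of the colored principal filters, giving the bound $|V(Y)|$ at once. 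Your directed-path-transport map is exactly this deck transformation restricted to the filter, and your well-definedness argument (directed paths are geodesics by Lemma~\ref{directedpath}, geodesics are square-homotopic by Lemma~\ref{homotopic}, elementary homotopies project to $Y$ and lift back to the basepoint $\tv$ with the same endpoint) is a combinatorial substitute for the covering-space lifting criterion that produces $g$. What your route buys is self-containedness: apart from unique path lifting you stay inside the paper's median-graph toolkit, and the application of Lemma~\ref{homotopic} is legitimate since both directed $(\tu,\tx)$-paths lie in $\cF_{\tildo}(\tu,\tY^{(1)})$, which is a domain by Lemma~\ref{directed-median}(3). What it costs is having to check well-definedness, mutual inverseness, and edge/direction/color preservation separately --- all of which you do correctly --- whereas the global automorphism gives them for free. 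One small slip, harmless to the argument: opposite edges of a square of $Y$ need \emph{not} have the same $\nu$-color (the paper stresses that colorings, unlike labelings, need not satisfy this); but nothing in your proof uses that claim, since color preservation of your map only needs that $\tnu$ is the pullback of $\nu$.
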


\begin{proposition}[\cite{CC-thiag}]\label{regular-finite-npc}
  Consider a finite (uncolored) directed NPC complex $(Y,o)$. Then for
  any vertex $\tv$ of the universal cover $\tY$ of $Y$, the principal
  filter $\cF_{\tildo}(\tv,\tY^{(1)})$ with the partial order
  $\prec_{\tildo}$ is the domain of a regular event structure with at
  most $|V(Y)|$ different isomorphism types of principal filters.
\end{proposition}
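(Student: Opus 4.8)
The plan is to realise the filter $\cF_{\tildo}(\tv,\tY^{(1)})$ as a directed median graph, use Lemma~\ref{directed-median} to extract the event structure, and then read off the two regularity conditions (finite degree, finite index) from the geometry of the universal cover, counting isomorphism types by Lemma~\ref{regular-cube}. First I would note that, since $Y$ is an NPC complex, Gromov's Theorem~\ref{Gromov} makes $\tY$ a CAT(0) cube complex, so by Theorem~\ref{CAT(0)-median} the $1$-skeleton $\tY^{(1)}$ is median; with the induced admissible orientation $\tildo$ it is a directed median graph. Applying Lemma~\ref{directed-median}(3) at the vertex $\tv$ then produces an event structure $\cE$ whose domain is exactly $\cF_{\tildo}(\tv,\tY^{(1)})$ ordered by $\prec_{\tildo}$. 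By Lemma~\ref{directed-median}(2) this order agrees on the filter with the canonical basepoint order $\le_{\tv}$, so the filter is a genuine pointed median graph and the construction of Theorem~\ref{median_domain} recovers $\cE$.

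Next I would verify finite degree. The events enabled at a configuration $c$ are in bijection with the edges of the domain outgoing from the corresponding vertex $u$, and hence with a subset of the edges of $\tY^{(1)}$ incident to $u$. Since the covering map $\varphi\colon\tY\to Y$ restricts to a bijection $\St(u,\tY)\to\St(\varphi(u),Y)$, the degree of $u$ in $\tY^{(1)}$ equals that of $\varphi(u)$ in $Y^{(1)}$; as $Y$ is finite, these degrees are bounded by $\Delta:=\max_{w\in V(Y)}\degree_{Y^{(1)}}(w)$. Thus $|en(c)|\le\Delta$ for every configuration $c$, whence $\degree(\cE)\le\Delta<\infty$.

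It remains to bound the index, which is the crux. The future $\cF(c)$ of a configuration $c$ is, by the earlier remark that $\cD(\cE\setminus c)$ is isomorphic to $\cF(c)$, precisely the domain of the rooted event structure $\cE\setminus c$; so two futures are isomorphic as event structures iff they are isomorphic as directed median-graph domains. By Lemma~\ref{directed-median}(4), $\cF(c)$ coincides with the principal filter $\cF_{\tildo}(u,\tY^{(1)})$ of the vertex $u$ representing $c$, taken with $\prec_{\tildo}$. Lemma~\ref{regular-cube} asserts that $(\tY^{(1)},\tildo)$ has at most $|V(Y)|$ isomorphism types of such directed principal filters. Hence there are at most $|V(Y)|$ isomorphism types of futures, the index of $\cE$ is at most $|V(Y)|$, and together with finite degree this shows $\cE$ is regular with the claimed bound on the number of isomorphism types of principal filters.

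The step I expect to demand the most care is the identification in the last paragraph: matching the event-structure notion of isomorphism of $\cE\setminus c$ (preserving both $\le$ and $\#$) with the purely graph-theoretic notion of isomorphism of directed principal filters used in Lemma~\ref{regular-cube}. This relies on the faithfulness of the domain-to-event-structure correspondence of Theorem~\ref{median_domain}, together with the fact (Lemma~\ref{directed-median}(2),(4)) that $\prec_{\tildo}$ restricted to any filter is the basepoint order; an orientation-preserving graph isomorphism of two filters is therefore exactly an isomorphism of the associated event structures, so that the two counts genuinely agree.
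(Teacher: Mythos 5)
Your proof is correct and follows exactly the route the paper intends: Proposition~\ref{regular-finite-npc} is imported from~\cite{CC-thiag} and is stated immediately after Lemmas~\ref{directed-median} and~\ref{regular-cube} precisely because it follows from them, which is what you do (Lemma~\ref{directed-median}(3) for the domain structure, finiteness of $Y$ for bounded degree, and Lemma~\ref{directed-median}(4) plus Lemma~\ref{regular-cube} for the bound of $|V(Y)|$ on isomorphism types of filters, hence finite index). Your final remark identifying directed-graph isomorphisms of filters with event-structure isomorphisms of futures is the right point to make explicit, and only the direction ``graph isomorphism implies event-structure isomorphism'' is actually needed for the upper bound.
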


We call an event structure $\cE=(E,\le, \#)$ and its domain $\cD(\cE)$
\emph{strongly regular} if $\cD(\cE)$ is isomorphic to a principal
filter of the universal cover of some finite directed NPC complex. In
view of Proposition~\ref{regular-finite-npc}, any strongly regular
event structure is regular.

\section{Directed Special Cube Complexes}\label{sec-dspec}

\subsection{The Results}

Consider a finite NPC complex $Y$ and let $\cH = \cH(Y)$ be the set of
hyperplanes of $Y$. We define a canonical labeling
$\lambda_\cH: E(Y) \to \cH$ by setting $\lambda_\cH(e) = H$ if the
edge $e$ is dual to $H$. For any covering map $\varphi: \tY \to Y$,
$\lambda_\cH$ is naturally extended to a labeling $\tlambda_{\cH}$ of
$E(\tY)$ by setting $\tlambda_{\cH}(e) =
\lambda_\cH(\varphi(e))$. In~\cite{CC-thiag} we proved that the
strongly regular event structures obtained from finite special cube
complexes are trace-regular event structures and that this
characterizes special cube complexes:

\begin{proposition}[\cite{CC-thiag}]\label{special-trace}
  A finite NPC complex $Y$ with two-sided hyperplanes is special if
  and only if there exists an independence relation $I$ on
  $\cH = \cH(Y)$ such that for any admissible orientation $o$ of $Y$,
  for any covering map $\varphi: \tY \to Y$, and for any principal
  filter $\cD = (\cF_{\tildo}(\tv,\tY^{(1)}),\prec_{\tildo})$ of
  $(\tY,\tildo)$, the canonical labeling $\tlambda_{\cH}$ is a
  trace-regular labeling of $\cD$ with the trace alphabet $(\cH,I)$.
\end{proposition}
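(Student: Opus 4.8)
The plan is to prove both implications around a single explicit choice: declare $(H,H')\in I$ exactly when $H\ne H'$ and $H,H'$ \emph{cross} in $Y$, i.e. some square of $Y$ has two consecutive edges dual to $H$ and to $H'$. This relation is symmetric, and it is irreflexive precisely because specialness forbids self-intersecting hyperplanes, so for special $Y$ the pair $(\cH,I)$ is a legitimate trace alphabet. The whole argument rests on one translation dictionary between the three worlds involved. First, the covering map $\varphi\colon\tY\to Y$ is a local isomorphism: it restricts to a bijection between the star of any vertex $\tw$ and the star of $\varphi(\tw)$, preserving orientations and the labels $\tlambda_{\cH}$. Hence two edges at $\tw$ span a square of $\tY$ iff their images span a square of $Y$; in particular crossings, osculations, and self-osculations in $Y$ lift to, and project from, the corresponding local pictures in $\tY$. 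Second, inside the convex filter $\cD=\cF_{\tildo}(\tv,\tY^{(1)})$ (convex by Lemma~\ref{directed-median}) the events are the hyperplanes of $\tY$ meeting $\cD$, two events are concurrent iff the hyperplanes cross (Section~\ref{sec-dom-med}), and enabledness and precedence are read off from the directed edges.

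For the forward implication I would verify (LES1)--(LES3) directly from this dictionary, for an arbitrary admissible orientation, cover, and filter. Condition (LES3) is cleanest: two concurrent events cross in $\tY$, so a square realizes their crossing, and this square projects to a square of $Y$ exhibiting a crossing of the two labels; the labels must be distinct (else $Y$ self-intersects), hence lie in $I$. For (LES1) and both cases of (LES2) I would first produce, from a minimal conflict $\tH\#_{\mu}\tH'$ or an immediate precedence $\tH\lessdot\tH'$, a vertex $\tw$ of $\cD$ carrying two edges dual to $\tH$ and $\tH'$ that span no square: for minimal conflict take a common enabling configuration (both edges leave $\tw$); for immediate precedence take $\tw$ corresponding to $\downarrow\!\tH'\setminus\{\tH'\}$, in which $\tH$ is maximal, so one edge enters and one leaves $\tw$. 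Projecting, the labels $H,H'$ osculate in $Y$. Since $Y$ is special an osculating pair cannot also cross, so $(H,H')\notin I$; and in the minimal-conflict case $H\ne H'$ as well, for otherwise $H$ would directly self-osculate. This yields (LES1) and (LES2). Finite index -- hence trace-regularity and not merely a trace labeling -- is automatic from the finiteness of $Y$ by Lemma~\ref{regular-cube}, since $\tlambda_{\cH}$ is a genuine labeling: opposite edges of a square are dual to the same hyperplane of $\tY$ and so carry the same label.

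For the converse I assume some $I$ makes $\tlambda_{\cH}$ a trace labeling for \emph{all} orientations, covers, and filters, and I rule out each forbidden pathology by exhibiting one filter in which it contradicts (LES1)--(LES3); the basepoint is chosen just below the relevant configurations so they all lie in the filter, and where convenient I flip the (independently choosable) orientation of a single hyperplane so that a given vertex becomes the common source of the two edges in question. A self-intersection lifts to a crossing of two \emph{distinct} hyperplanes of $\tY$ sharing the label $H$, giving concurrent events with $(H,H)\in I$, impossible since $I$ is irreflexive. A direct self-osculation lifts to two distinct hyperplanes of $\tY$, both labelled $H$, whose dual edges share a source but lie in no common square; these are co-initial and non-concurrent, hence in minimal conflict, contradicting (LES1). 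Finally an inter-osculation simultaneously supplies a crossing of $H_1,H_2$ in $Y$, forcing $(H_1,H_2)\in I$ by the (LES3) analysis, and an osculation of $H_1,H_2$, which after orienting both edges out of the common lifted vertex yields a minimal conflict and hence $(H_1,H_2)\in D$ by (LES2); as $I$ is a single fixed relation this is a contradiction. Two-sidedness being assumed, the absence of these three pathologies is exactly specialness.

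The main obstacle I anticipate is the precedence case of (LES2): unlike concurrency and minimal conflict, which are visibly local at one enabling vertex, the relation $\tH\lessdot\tH'$ must be converted into a concrete pair of edges osculating at a single vertex of the filter. This requires the configuration-theoretic claim that at $\downarrow\!\tH'\setminus\{\tH'\}$ the event $\tH$ is maximal and $\tH'$ is enabled, together with the observation that the corresponding in- and out-edges cannot span a square (otherwise $\tH,\tH'$ would cross and be concurrent, contradicting $\tH<\tH'$). Keeping the orientation bookkeeping straight here -- so that ``one edge in, one edge out'' is genuine under $\prec_{\tildo}$ via Lemma~\ref{directed-median} -- and ensuring all auxiliary vertices stay inside the chosen principal filter, is where the care will be needed.
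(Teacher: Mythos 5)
Your proof is correct, but note that this paper never proves Proposition~\ref{special-trace}: the statement is imported from~\cite{CC-thiag}, and what the paper proves here are the generalizations Theorem~\ref{special-trace-bis} and Proposition~\ref{special-trace1}. Your forward direction is essentially the proof of Proposition~\ref{special-trace1} specialized to the canonical labeling with $I$ the crossing relation: concurrency becomes a crossing, and minimal conflict/immediate precedence become osculations of $Y$ (your configuration $\downarrow\!\tH'\setminus\{\tH'\}$ actually justifies the source/sink edge pair that the paper merely posits), after which specialness plays the role of (TL2)--(TL4). Your backward direction is where the genuine difference lies: the paper's analogue, (3)$\Rightarrow$(1) of Theorem~\ref{special-trace-bis}, argues locally on $Y$ from the (TL) axioms and never ascends to the cover, whereas your hypothesis concerns filters of $\tY$, so you must lift each pathology into a filter. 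Your re-orientation of a single hyperplane to make a direct self-osculation source-type is not a convenience but a necessity --- for a fixed orientation, a sink-type self-osculation is invisible in every principal filter, since two filter edges sharing a sink always span a square there; this is precisely the (TL4) obstruction behind Remark~\ref{remark-trace-lab-special}'s warning that Proposition~\ref{special-trace} does not follow formally from the two in-paper results. Your route buys a self-contained proof in the filter language; the paper's (TL)-route buys a purely local criterion on $Y$, at the price of the extra axiom (TL4).
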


A finite NPC complex $X$ is called \emph{virtually
  special}~\cite{HaWi1,HaWi2} if $X$ admits a finite special cover,
i.e., there exists a finite special NPC complex $Y$ and a covering map
$\varphi: Y\rightarrow X$.  We call a strongly regular event structure
$\cE=(E,\le, \#)$ and its domain $\cD(\cE)$ \emph{cover-special} if
$\cD(\cE)$ is isomorphic to a principal filter of the universal cover
of a virtually special complex with an admissible orientation.

\begin{theorem}[\cite{CC-thiag}]\label{virtuallyspecial}
  Any cover-special event structure $\cE$ admits a trace-regular
  labeling, i.e., Thiagarajan's Conjecture~\ref{conjecture_regular} is
  true for cover-special event structures.
\end{theorem}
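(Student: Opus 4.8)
The plan is to reduce the statement to Proposition~\ref{special-trace} by passing from the virtually special complex to its finite special cover and identifying the two directed universal covers. Unpacking the definition, since $\cE$ is cover-special its domain $\cD(\cE)$ is isomorphic to a principal filter $(\cF_{\tildo}(\tv,\tX^{(1)}),\prec_{\tildo})$, where $X$ is a finite virtually special NPC complex carrying an admissible orientation $o$, and $\tildo$ is the orientation induced by $o$ on the universal cover $\tX$. By virtual specialness there is a finite special NPC complex $Y$ together with a covering map $\psi\colon Y\to X$.

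The crucial step is to see that $X$ and $Y$ have a common directed universal cover. Writing $\varphi_X\colon\tX\to X$ and $\varphi_Y\colon\tY\to Y$ for the two universal coverings, the composite $\psi\circ\varphi_Y\colon\tY\to X$ is a covering whose total space is simply connected, so by uniqueness of the universal cover there is an isomorphism $\iota\colon\tY\to\tX$ of cube complexes with $\varphi_X\circ\iota=\psi\circ\varphi_Y$. Next I would pull the orientation back: orient each edge $f$ of $Y$ by declaring $f$ directed from $s$ to $t$ exactly when $\psi(f)$ is $o$-directed from $\psi(s)$ to $\psi(t)$, obtaining an orientation $o_Y$. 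Since $\psi$ sends squares to squares, opposite edges of a square of $Y$ map to opposite edges of a square of $X$ and therefore inherit the same direction, so $o_Y$ is \emph{admissible}, not merely an orientation. Chasing the three covering maps shows that the lift $\widetilde{o_Y}$ of $o_Y$ agrees with $\tildo$ under $\iota$: for an edge $e$ of $\tY$ both directions are the $o$-lift of the direction of $\psi(\varphi_Y(e))=\varphi_X(\iota(e))$. Hence $\iota\colon(\tY,\widetilde{o_Y})\to(\tX,\tildo)$ is an isomorphism of directed cube complexes.

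Now I would invoke Proposition~\ref{special-trace} for the finite special complex $Y$, whose hyperplanes are two-sided because specialness requires it. It furnishes an independence relation $I$ on $\cH=\cH(Y)$ such that, for the admissible orientation $o_Y$, the covering $\varphi_Y$, and the principal filter $(\cF_{\widetilde{o_Y}}(\iota^{-1}(\tv),\tY^{(1)}),\prec_{\widetilde{o_Y}})$, the canonical labeling $\tlambda_\cH$ is a trace-regular labeling with trace alphabet $(\cH,I)$. Transporting along $\iota$ yields a trace-regular labeling of $\cF_{\tildo}(\tv,\tX^{(1)})$ and hence of the isomorphic domain $\cD(\cE)$. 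Because opposite edges of any square of $\tY$ are dual to the same hyperplane of $Y$, the labeling $\tlambda_\cH$ assigns them equal colors, so it satisfies the concurrency condition and descends from the edges of the domain to the events of $\cE$; its finite index (regularity of $\cE^{\tlambda_\cH}$) is rooted in Lemma~\ref{regular-cube}. Since every trace-regular event structure is regular, this establishes Conjecture~\ref{conjecture_regular} for the cover-special class.

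I expect the main obstacle to be the careful identification of the two directed universal covers and the bookkeeping of orientations and labelings across the three covering maps $\psi$, $\varphi_X$, and $\varphi_Y$. In particular one must verify that the pulled-back orientation $o_Y$ is genuinely admissible and that the hyperplane labeling $\tlambda_\cH$ transported through $\iota$ still respects squares, so that it induces a well-defined trace labeling of the events of $\cE$. Once these compatibilities are pinned down, trace-regularity is inherited \emph{verbatim} from Proposition~\ref{special-trace}.
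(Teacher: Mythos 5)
Your proof is correct and takes essentially the same route as the paper: although Theorem~\ref{virtuallyspecial} is stated here as a result imported from~\cite{CC-thiag}, the very same reduction---pass to the finite special cover $Y$, pull back the admissible orientation, identify the two directed universal covers via uniqueness of the universal cover, and invoke the trace-labeling result for special complexes on a principal filter---is exactly what the paper carries out in its proof of Proposition~\ref{Petri-to-special-isomoprhism2}. Your appeal to Proposition~\ref{special-trace} (canonical hyperplane labeling) instead of Theorem~\ref{special-trace-bis} combined with Proposition~\ref{special-trace1} is an immaterial variant, since the former is the special case of the latter noted in Remark~\ref{remark-trace-lab-special}.
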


In the following, we need an extension of
Proposition~\ref{special-trace}.  Let $Y$ be a finite cube complex
with two-sided hyperplanes with an admissible orientation $o$.  Since
the hyperplanes of $Y$ are two-sided, there exists a bijection between
the labelings of the edges of $Y$ (i.e., colorings in which opposite
edges of each square have equal colors) and the labelings of the
hyperplanes of $Y$.  Let $M=(\Sigma,I)$ be a trace alphabet.
Extending the definition of trace labelings of domains of event
structures, we call a labeling $\lambda: E(Y) \to \Sigma$ of $(Y,o)$ a
\emph{trace labeling} if the following conditions hold:
\begin{enumerate}[{(TL}1)]
\item if there exists a square of $Y$ in which two opposite edges are
  labeled $a$ and two other opposite edges are labeled $b$, then
  $(a,b) \in I$;
\item for any vertex $v$ of $Y$, any two distinct outgoing edges
  $\ovr{vx},\ovr{vy}$ have different labels and
  $(\lambda(\ovr{vx}),\lambda(\ovr{vy}))\in I$ iff $\ovr{vx}$ and
  $\ovr{vy}$ belong to a common square of $Y$;
\item $(\lambda(\ovr{xv}),\lambda(\ovr{vy}))\in I$ iff $\ovr{xv}$ and
  $\ovr{vy}$ belong to a common square of $Y$;
\item for any vertex $v$ of $Y$, any two distinct incoming edges
  $\ovr{xv},\ovr{yv}$ have different labels and
  $(\lambda(\ovr{xv}),\lambda(\ovr{yv}))\in I$ iff $\ovr{xv}$ and
  $\ovr{yv}$ belong to a common square of $Y$.
\end{enumerate}

Since for a trace labeling $\lambda$ all edges dual to a hyperplane of
$Y$ have the same label, $\lambda$ defines in a canonical way a
labeling $\lambda: \cH\to \Sigma$ of the hyperplanes $\cH$ of $Y$: for
a hyperplane $H$, $\lambda(H)=\lambda(e)$ for any edge $e$ dual to
$H$.  Notationally, for an edge $xy$ of $Y$ directed from $x$ to $y$
and its dual hyperplane $H$, we  write
$\lambda(xy)=\lambda(\ovr{xy})=\lambda(H)$ to denote the (same) label
of $xy$, $\ovr{xy}$, and $H$.

\begin{remark}\label{rem-TL}
  Note that (TL1) is a consequence of the axioms (TL2)-(TL4).  Observe
  that (TL2)-(TL4) are equivalent to the condition that for any two
  incident edges $e_1,e_2 \in Y$, $(\lambda(e_1),\lambda(e_2)) \in I$
  iff $e_1$ and $e_2$ belong to a common square of $Y$.
\end{remark}

\begin{remark}
  Even if formulated differently, the axioms (TL1)-(TL3) can be viewed
  as a ``local'' reformulation of the axioms (LES1)-(LES3).  On the
  other hand, for domains of event structures the axiom (TL4) is
  implied by the axioms (LES1)-(LES3) because such domains are pointed
  median graphs and thus any two directed edges with the same sink
  belong to a square.
\end{remark}

The existence of trace labelings characterizes the special cube
complexes:

\begin{theorem}\label{special-trace-bis}
  For a finite cube complex $Y$ with two-sided hyperplanes, the
  following conditions are equivalent:
  \begin{enumerate}
  \item $Y$ is special;
  \item for any admissible orientation $o$ of $Y$ there exists a trace
    labeling $\lambda$ of $(Y,o)$;
  \item there exists an admissible orientation $o$ of $Y$ such that
    $(Y,o)$ admits a trace labeling.
  \end{enumerate}
\end{theorem}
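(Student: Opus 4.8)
The plan is to prove the cycle of implications $(1)\Rightarrow(2)\Rightarrow(3)\Rightarrow(1)$, since $(2)\Rightarrow(3)$ is immediate (an admissible orientation exists because all hyperplanes are two-sided, and (2) asserts a trace labeling for \emph{every} such orientation, in particular one of them). The real content is in $(1)\Rightarrow(2)$ and $(3)\Rightarrow(1)$, and these are dual in spirit: specialness is exactly the statement that the hyperplane-labeling $\lambda_\cH$ avoids the pathologies, while a trace labeling is exactly a labeling whose independence relation records the ``local squareness'' of incident edges (Remark~\ref{rem-TL}). So the heart of the argument is translating between the Haglund--Wise pathologies and the axioms (TL1)--(TL4).

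\medskip

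For $(1)\Rightarrow(2)$, fix an admissible orientation $o$ and take $\lambda = \lambda_\cH$ to be the canonical hyperplane labeling, so $\Sigma = \cH(Y)$. I would define the independence relation by declaring $(H,H')\in I$ iff $H$ and $H'$ intersect in $Y$ (equivalently, some square of $Y$ has consecutive edges dual to $H$ and $H'$). By Remark~\ref{rem-TL} it then suffices to verify, for any two incident edges $e_1,e_2$ at a vertex $v$, that $(\lambda(e_1),\lambda(e_2))\in I$ if and only if $e_1,e_2$ span a common square. The ``if'' direction is immediate from the definition of $I$. For the ``only if'' direction, suppose $e_1,e_2$ are incident at $v$, do not lie in a common square, yet their dual hyperplanes $H_1,H_2$ intersect somewhere in $Y$. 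If $H_1 = H_2$ this is a self-osculation; combined with the fact that $e_1,e_2$ are not consecutive in a square, and using the admissible orientation to check the source/sink pattern at $v$, this is either a self-intersection or a direct self-osculation (an indirect self-osculation is excluded because $H_1=H_2$ intersecting itself forces a self-intersection, also forbidden). If $H_1\ne H_2$, then $H_1,H_2$ both osculate (at $(v,e_1,e_2)$) and intersect, i.e.\ they inter-osculate. All of these are forbidden by specialness, a contradiction. I would also separately check that distinct outgoing (resp.\ incoming) edges at $v$ receive distinct labels: two distinct edges dual to the same $H$ at a common vertex would again be a self-intersection or self-osculation, both forbidden.

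\medskip

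For $(3)\Rightarrow(1)$, I assume an admissible orientation $o$ and a trace labeling $\lambda$ with alphabet $(\Sigma,I)$, and I must rule out each of the four pathologies defining specialness (two-sidedness is given by hypothesis). The key observation is that a trace labeling assigns the same label to all edges dual to a common hyperplane, so each pathology produces a \emph{forbidden labeling configuration} at a single vertex that contradicts one of (TL2)--(TL4). Concretely: a self-intersecting hyperplane $H$ gives a square with all four edges labeled $\lambda(H)=a$; but (TL1), which is a consequence of (TL2)--(TL4), would force $(a,a)\in I$, contradicting irreflexivity of $I$. A direct self-osculation at $(v,e_1,e_2)$ gives two distinct outgoing (or two distinct incoming) edges at $v$ with the \emph{same} label $a = \lambda(H)$, which contradicts the ``distinct labels'' clause of (TL2) (resp.\ (TL4)). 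An inter-osculation of $H_1,H_2$ with labels $a,b$ gives, at the osculation point, two incident edges labeled $a,b$ that do not span a square, yet $(a,b)\in I$ because $H_1,H_2$ intersect somewhere (producing a square with labels $a,b$, forcing $(a,b)\in I$ via (TL2)/(TL3)); this directly violates the ``iff'' of the appropriate axiom among (TL2)--(TL4) as packaged in Remark~\ref{rem-TL}.

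\medskip

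\textbf{The main obstacle} I anticipate is the bookkeeping in $(1)\Rightarrow(2)$: showing that an osculation of \emph{distinct} hyperplanes together with an intersection of those same two hyperplanes elsewhere in the complex genuinely matches Haglund and Wise's definition of inter-osculation, and correctly distinguishing, in the self-osculation case, between the direct and indirect variants using the source/sink pattern determined by the admissible orientation $o$ at the osculation vertex. Because indirect self-osculation is \emph{not} forbidden by specialness, I must be careful that the configuration forced by $(H,H)\in I$ (i.e.\ by a square with all edges labeled $a$, or by two like-oriented dual edges) always lands in a \emph{forbidden} pathology rather than in indirect self-osculation; this is exactly where the hypothesis that the two offending edges are co-oriented at $v$ (both outgoing or both incoming, as dictated by (TL2)/(TL4)) does the work, since an indirect self-osculation has one source and one sink at $v$ and therefore never arises from the ``same outgoing label'' or ``same incoming label'' clauses. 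Once this case analysis is pinned down, the remaining verifications are routine unwindings of the definitions.
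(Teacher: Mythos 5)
Your proof is correct, and for two of the three implications it coincides with the paper's. The implication (2)$\Rightarrow$(3) is handled the same way (you add the useful remark that two-sidedness guarantees an admissible orientation exists), and your (3)$\Rightarrow$(1) is essentially the paper's argument: rule out the three pathologies one by one, with only a cosmetic difference in the self-intersection case, where you invoke (TL1) and irreflexivity of $I$ on the monochromatic square while the paper applies the distinctness clause of (TL2) at that square's source; your inter-osculation argument (deduce $(a,b)\in I$ from the square where $H_1,H_2$ intersect, then contradict the ``iff'' of Remark~\ref{rem-TL} at the osculation vertex) is the paper's verbatim. Where you genuinely differ is (1)$\Rightarrow$(2): the paper dispatches it by citing Proposition~\ref{special-trace} --- its earlier result about trace-regular labelings of principal filters of universal covers, cf.\ Remark~\ref{remark-trace-lab-special} --- whereas you give a direct, self-contained verification on the finite complex itself: take $\lambda=\lambda_{\cH}$, declare $(H,H')\in I$ iff $H$ and $H'$ intersect, and check (TL2)--(TL4) through Remark~\ref{rem-TL}, using no-inter-osculation for the case $H_1\ne H_2$, no-self-intersection for the case $H_1=H_2$ (which also yields irreflexivity of $I$), and no-direct-self-osculation for the distinctness clauses, where your observation that co-oriented dual edges at a vertex can only produce a \emph{direct} self-osculation (never an indirect one) is exactly the needed point. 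Your route buys self-containedness --- it stays at the level of the finite complex $Y$, never invoking universal covers or domains --- and it makes the two nontrivial implications visibly dual, each being a dictionary between the (TL) axioms and the Haglund--Wise pathologies; the paper's route is shorter on the page but leans on the heavier imported proposition. One presentational caveat: in your $H_1=H_2$ case, the detour through direct versus indirect self-osculation is a red herring, since once $(H,H)\in I$ is assumed, the definition of $I$ already produces a self-intersection, which specialness forbids, and that alone closes the case.
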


In view of Theorem~\ref{special-trace-bis}, if a finite directed cube
complex $(Y,o)$ is given with a trace labeling, $Y$ is supposed to be
special. Given a directed special cube complex $(Y,o)$ with a trace
labeling $\lambda: E(Y) \to \Sigma$ for a trace alphabet
$M=(\Sigma, I)$, let $\tbY=(\tY,\tildo,\tlambda)$ denotes the directed
labeled universal cover of $Y$ and let $\varphi: \tY \to Y$ denotes
the covering map.

\begin{proposition}\label{special-trace1}
  Let $(Y,o)$ be a finite directed special cube complex with two-sided
  hyperplanes, $M=(\Sigma,I)$ be a trace alphabet, and
  $\lambda: E(Y)\to \Sigma$ be a trace labeling of $Y$.  Then for any
  principal filter $\cD=(\cF_{\tildo}(\tv,\tY^{(1)}),\prec_{\tildo})$
  of $\tbY=(\tY,\tildo)$, the labeling $\tlambda$ is a trace-regular
  labeling of $\cD$.
\end{proposition}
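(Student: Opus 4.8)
The plan is to establish the two defining ingredients of a trace-regular labeling separately: first that $\tlambda$ is a valid $M$-labeling of the event structure with domain $\cD$, i.e. that it satisfies the axioms (LES1)--(LES3), and then that the labeled event structure $\cD^{\tlambda}$ has finite index. Throughout I would work with the dictionary between the median graph $\cD$ (which is the domain of an event structure by Lemma~\ref{directed-median}, with basepoint $\tv$) and its events, which are exactly the hyperplanes of $\cD$ crossing it, equivalently the join-irreducibles, cf. Theorem~\ref{median_domain} and Corollary~\ref{geodesicprimetrace}. For an event $e$ with dual hyperplane $\widetilde{H}_{e}$ I set $\tlambda(e)=\lambda(\varphi(\te))$ for any edge $\te$ dual to $\widetilde{H}_{e}$; this is well defined since $\lambda$, being a labeling of $Y$, is constant on each parallelism class. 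The single fact that drives everything is that the covering map $\varphi:\tY\to Y$ restricts to an isomorphism of stars: it sends cubes to cubes, preserves $\tildo$-directions and $\tlambda$-colors, and two edges incident at a vertex $w$ of $\tY$ lie in a common square of $\tY$ if and only if their images lie in a common square of $Y$ at $\varphi(w)$. This lets me transport the local trace-labeling conditions (TL2)--(TL4) of $(Y,o,\lambda)$, and their unified form in Remark~\ref{rem-TL}, to incident edges of $\cD$.

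For the axioms I would use the standard local reading of the event-structure relations at a configuration $c$ (a vertex of $\cD$): the events enabled at $c$ are the outgoing edges at $c$; two such co-initial events are concurrent iff their edges span a square and are in minimal conflict otherwise; and concurrency of arbitrary events is equivalent to the crossing of their hyperplanes. Then (LES1) is immediate, since two distinct outgoing edges at $c$ carry distinct labels by (TL2). For (LES3), if $e\|e'$ their hyperplanes cross, so some square of $\tY$ carries two incident edges dual to $\widetilde{H}_{e}$ and $\widetilde{H}_{e'}$; pushing this square through $\varphi$ and applying Remark~\ref{rem-TL} (incident edges lying in a common square have independent labels) gives $(\tlambda(e),\tlambda(e'))\in I$. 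For (LES2) with $e\#_{\mu}e'$, the two outgoing edges at a common enabling configuration do not span a square (otherwise $e$ and $e'$ would be concurrent), so by Remark~\ref{rem-TL} their labels are dependent.

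The delicate case, which I expect to be the main obstacle, is (LES2) for $e\lessdot e'$, where I must manufacture a local witness for a relation that is a priori global. Here I would take $c:=\downarrow\! e'\setminus\{e'\}$, which is a configuration, and observe that $e$ is maximal in $c$: any $f$ with $e<f<e'$ is forbidden by $e\lessdot e'$. Hence $c':=c\setminus\{e\}$ is again a configuration, $e$ is enabled at $c'$, and $e'$ is enabled at $c$. This produces a vertex $w\leftrightarrow c$ carrying an incoming edge dual to $\widetilde{H}_{e}$ and an outgoing edge dual to $\widetilde{H}_{e'}$. These two incident edges cannot lie in a common square of $\tY$, for such a square would be a crossing of $\widetilde{H}_{e}$ and $\widetilde{H}_{e'}$ and force $e\|e'$, contradicting $e<e'$; transporting this ``no common square'' through the star-isomorphism at $\varphi(w)$ and applying (TL3) yields $(\tlambda(e),\tlambda(e'))\in D$. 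The care needed is exactly in checking that non-concurrency survives the passage through $\varphi$ (the images of $\widetilde{H}_{e},\widetilde{H}_{e'}$ may well cross elsewhere in $Y$, but the specific incident pair at $\varphi(w)$ does not bound a square), which is why I phrase the argument in terms of the ambient cover $\tY$ rather than $Y$.

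Finally, for finite index I would invoke Lemma~\ref{regular-cube} with the coloring $\nu:=\lambda$. The index of $\cD^{\tlambda}$ is the number of isomorphism types of the labeled futures $\cF(c)$ of configurations $c$ of $\cD$; by Lemma~\ref{directed-median}(4) each such future is the colored principal filter $\cF_{\tildo}(c,\tY^{(1)})$ of $(\tY^{(1)},\tildo,\tlambda)$, and isomorphism of labeled futures is precisely direction- and color-preserving isomorphism of these filters. Lemma~\ref{regular-cube} bounds the number of such types by $|V(Y)|$, so $\cD^{\tlambda}$ has finite index. Combining this with the verification of (LES1)--(LES3) shows that $\tlambda$ is a trace-regular labeling of $\cD$, as required.
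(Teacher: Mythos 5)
Your proposal is correct and follows essentially the same route as the paper's proof: finite index comes from Lemma~\ref{regular-cube} applied to the labeling viewed as a coloring, and (LES1)--(LES3) are verified by producing, for each pair of related events, a pair of incident edges in $\cD$, showing they do or do not span a square of $\tY$ (using that the square's source lies in the upward-closed filter), and transporting this through the star-isomorphism of the covering map to apply (TL2)--(TL4). The only cosmetic difference is that in the $e\lessdot e'$ case you build the osculation witness from the configurations $\downarrow\! e'\setminus\{e'\}$ and $\downarrow\! e'\setminus\{e',e\}$, whereas the paper invokes the characterization of $\lessdot$ via separation and osculation of hyperplanes; both amount to the same geometric picture.
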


\begin{remark}
  Since a trace labeling of a directed special cube complex $(Y,o)$
  satisfies (TL4), $\tlambda$ is a trace-regular labeling not only of
  $(\cF_{\tildo}(\tv,\tY^{(1)}),\prec_{\tildo})$ but also of the whole
  complex $\tbY=(\tY,\tildo)$.
\end{remark}

\begin{remark}\label{remark-trace-lab-special}
  The canonical labeling $\lambda_\cH$ in
  Proposition~\ref{special-trace} is a trace labeling because all
  hyperplanes are labeled differently. Thus,
  Theorem~\ref{special-trace-bis} and Proposition~\ref{special-trace1}
  can be viewed as extensions of
  Proposition~\ref{special-trace}. However,
  Proposition~\ref{special-trace} cannot be directly deduced from
  Theorem~\ref{special-trace-bis} and Proposition~\ref{special-trace1}
  because in those  results we assume that trace labelings of
  complexes satisfy (TL4).
\end{remark}

\subsection{Proof of Theorem~\ref{special-trace-bis}}

The implication (1)$\Rightarrow$(2) follows from
Proposition~\ref{special-trace} while (2)$\Rightarrow$(3) is trivial.
To prove (3)$\Rightarrow$(1) suppose that $o$ is an admissible
orientation of $Y$ such that $(Y,o)$ admits a trace labeling
$\lambda: E(Y)\to \Sigma$ with the trace alphabet $M=(\Sigma,I)$.  We
show that $Y$ is special.
First, if $Y$ contains a self-intersecting hyperplane $H$, then there
exists a square $Q$ such that the four edges of $Q$ are dual to
$H$. Consequently, the four directed edges of $Q$ are labeled
$\lambda(H)$. Since $o$ is an admissible orientation, there exists a
vertex $v \in Q$ that has two outgoing edges with the same label,
contradicting (TL2).
Now suppose that $Y$ contains a hyperplane $H$ that directly
self-osculate at $(v,{e_1},{e_2})$. Let $e_1=xv$ and $e_2=yv$, and
observe that with respect to the orientation $o$, either
$e_1 = \ovr{vx}$ and $e_2 = \ovr{vy}$, or $e_1 = \ovr{xv}$ and
$e_2 = \ovr{yv}$. This contradicts (TL2) in the first case and (TL4)
in the second case, since $\lambda(e_1) = \lambda(e_2)$.
Finally, if $Y$ contains two hyperplanes $H_1$ and $H_2$ that
inter-osculate, then they osculate at $(v,e'_1,e'_2)$ and they
intersect on a square $Q$.  Let $Q=(e_1,e_2,e_3,e_4)$. Suppose that
the edges $e'_1,e_1,e_3$ are dual to $H_1$ and $e'_2,e_2,e_4$ are dual
to $H_2$.  Hence
$\lambda(e'_1)=\lambda(e_1)=\lambda(e_3)=\lambda(H_1)$ and
$\lambda(e'_2)=\lambda(e_2)=\lambda(e_4)=\lambda(H_2)$.  Since $o$ is
an admissible orientation, $Q$ has a source $s$. By (TL2) applied at
$s$, $(\lambda(e_1),\lambda(e_2)) \in I$. But if we consider the edges
$e_1', e_2'$ incident to $v$, by Remark~\ref{rem-TL},
$(\lambda(e_1),\lambda(e_2)) \notin I$, a contradiction.

\subsection{Proof of Proposition~\ref{special-trace1}}

By Proposition~\ref{regular-finite-npc},
$\cD=(\cF_{\tildo}(\tv,\tY^{(1)}),\prec_{\tildo})$ is the domain of a
regular event structure $\cE$. As explained in
Subsection~\ref{sec-dom-med}, the events of $\cE$ are the hyperplanes
of $\cD$.  Hyperplanes $\tH$ and $\tH'$ are concurrent if and only if
they cross, and $\tH \leq \tH'$ if and only if $\tH=\tH'$ or $\tH$
separates $\tH'$ from $\tv$.  The events $\tH$ and $\tH'$ are in
conflict iff $\tH$ and $\tH'$ do not cross and neither separates the
other from $v$. Note that this implies that $\tH \lessdot \tH'$ iff
$\tH$ separate $\tH'$ from $v$ and $\tH$ and $\tH'$ osculate, and
$\tH \#_\mu \tH'$ iff $\tH$ and $\tH'$ osculate and neither of $\tH$
and $\tH'$ separates the other from $v$. Notice also that each
hyperplane $\tH'$ of $\cD$ is the intersection of a hyperplane $\tH$
of $\tY$ with $\cD$.

\begin{claim}
  $\tlambda$ is a trace-regular labeling of $\cD$ with the trace
  alphabet $(\Sigma,I)$.
\end{claim}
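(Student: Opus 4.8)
The claim asserts that $\tlambda$ (the lift of the trace labeling $\lambda$ of $Y$) is a trace-regular labeling of the domain $\cD = (\cF_{\tildo}(\tv,\tY^{(1)}),\prec_{\tildo})$. By Proposition~\ref{regular-finite-npc}, $\cD$ is already the domain of a regular event structure $\cE$, and by Lemma~\ref{regular-cube} (applied with the coloring induced by $\lambda$), the labeled filter has only finitely many isomorphism types, so $\cE^{\tlambda}$ has finite index. Thus the content of the claim is really that $\tlambda$ satisfies the three trace-labeling axioms (LES1)-(LES3) with respect to the trace alphabet $(\Sigma,I)$; finite index then upgrades this to trace-regularity.

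**Translating the event-structure axioms into hyperplane/geometry language.**

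Let me examine the plan in detail. The plan is to verify (LES1)-(LES3) by exploiting the dictionary between events of $\cE$ and hyperplanes of $\cD$ spelled out just before the claim: concurrency $=$ crossing, $\tH \le \tH'$ $=$ separation from $\tv$, conflict $=$ neither crossing nor separating, and crucially $\tH \lessdot \tH'$ iff $\tH$ separates $\tH'$ from $\tv$ \emph{and} the two hyperplanes osculate, while $\tH \#_\mu \tH'$ iff they osculate and neither separates the other. First I would establish the key local fact: two edges $\te_1,\te_2$ of $\tY$ incident to a common vertex have $(\tlambda(\te_1),\tlambda(\te_2)) \in I$ if and only if $\te_1,\te_2$ span a common square. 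This is exactly the content of Remark~\ref{rem-TL} pushed up to the universal cover: since $\varphi$ is a covering map it induces an isomorphism of stars $\St(\tv,\tY) \to \St(\varphi(\tv),Y)$, so the local incidence structure (which pairs of incident edges bound a square) is preserved, and $\tlambda(\te)=\lambda(\varphi(\te))$ by definition. Hence the trace-labeling axioms (TL2)-(TL4) transfer verbatim from $Y$ to $\tY$, and in particular to $\cD$ once we note that $\cD$ is a convex (hence gated, locally square-closed) subcomplex of $\tY$.

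**Verifying the three axioms.**

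For (LES1), if $\tH \#_\mu \tH'$ then the two hyperplanes osculate, so there is a vertex $v$ with two incident dual edges $\te_1 \in \tH$, $\te_2 \in \tH'$ that do \emph{not} bound a common square; by the local fact, $(\tlambda(\te_1),\tlambda(\te_2)) \notin I$, and since distinct incident edges carry distinct labels by (TL2)/(TL4) this forces $\tlambda(\tH) \neq \tlambda(\tH')$. For (LES2), if $\tH \lessdot \tH'$ or $\tH \#_\mu \tH'$ then in both cases $\tH$ and $\tH'$ osculate (not cross), so again two incident dual edges fail to bound a square and $(\tlambda(\tH),\tlambda(\tH')) \notin I$, i.e.\ the labels are dependent. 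For (LES3), suppose $(\tlambda(\tH),\tlambda(\tH')) \in I$; I must show $\tH \le \tH'$, $\tH' \le \tH$, or $\tH \# \tH'$, which is equivalent to showing $\tH$ and $\tH'$ cross (i.e.\ are \emph{not} concurrent would be the negation—here concurrency is crossing, so the three alternatives are exactly ``not crossing''). The cleanest route is contrapositive: if $\tH$ and $\tH'$ cross, pick a square $Q$ realizing the crossing with two consecutive dual edges, whose labels are then $(\tlambda(\tH),\tlambda(\tH'))$; but the source of $Q$ has two outgoing edges in a common square, so by (TL2) their labels lie in $I$—wait, this gives the wrong direction. Let me instead argue directly: assuming $(\tlambda(\tH),\tlambda(\tH'))\in I$, take any vertex $v \in N(\tH) \cap N(\tH')$ (the carriers meet since we need to relate them locally) with dual edges $\te,\te'$; the local fact gives that $\te,\te'$ bound a square, forcing $\tH,\tH'$ to cross, i.e.\ to be concurrent—the negation of the conclusion we want. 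So the correct formulation is that (LES3) is the \emph{contrapositive} of the crossing characterization, and I would present it as: $\tH \| \tH'$ (crossing) iff $(\tlambda(\tH),\tlambda(\tH'))\in I$, from which (LES3) follows since $\|$ is precisely the complement of ``$\le$ or $\ge$ or $\#$''.

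**The main obstacle.**

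The delicate point—and the step I expect to require the most care—is that the trace-labeling axioms for $Y$ are stated about \emph{all} incident edges at a vertex of $Y$, whereas in $\cD$ we only see a principal filter, so not every pair of crossing hyperplanes of $\tY$ survives with a realizing square inside $\cD$, and not every osculation in $\cD$ need correspond to an osculation in $\tY$ (though convexity of $\cF_{\tildo}(\tv,\tY^{(1)})$ via Lemma~\ref{directed-median}(1) should prevent spurious osculations). I would resolve this by using that $\cD$ is a convex subcomplex (Lemma~\ref{directed-median}) together with Proposition~\ref{Helly-CAT0} and Theorem~\ref{gated-carriers}: whenever two hyperplanes of $\cD$ cross, their carriers intersect in a genuine square already present in $\cD$, so the ``iff'' in the local fact is not broken by passing to the filter. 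Tying the finite-index conclusion from Lemma~\ref{regular-cube} to these three pointwise axioms then completes the proof that $\tlambda$ is a trace-regular labeling with alphabet $(\Sigma,I)$.
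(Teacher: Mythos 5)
Your overall route is the paper's: establish finite index of the labeled structure via Lemma~\ref{regular-cube}, translate (LES1)--(LES3) into statements about hyperplanes of $\cD$ (osculation, crossing, separation from the basepoint), transfer these between $\cD$, $\tY$ and $Y$ using the filter/convexity property and the fact that covering maps preserve stars, and conclude with the axioms (TL2)--(TL4). Your packaging of the transfer as a single ``local fact'' (the lift of Remark~\ref{rem-TL}), and your use of convexity of the filter (Lemma~\ref{directed-median}) in place of the paper's observation that a square of $\tY$ whose source lies in $\cD$ lies entirely in $\cD$, are only cosmetic differences; both correctly rule out the spurious squares that worry you in ``the main obstacle'' (your extra appeal to Proposition~\ref{Helly-CAT0} and Theorem~\ref{gated-carriers} is unnecessary, since two hyperplanes crossing in $\cD$ by definition already share a square of $\cD$).

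The one genuine flaw is in your treatment of (LES3). The argument you abandon (``wait, this gives the wrong direction'') is in fact exactly the right one: (LES3) says that dependent labels force the events to be ordered or in conflict, so its contrapositive is ``concurrent (i.e., crossing) $\Rightarrow$ labels in $I$'', which is precisely what the realizing square plus (TL2) at its source gives --- and this is how the paper proves it. What you assert instead, the biconditional ``$\tH$ and $\tH'$ cross iff $(\tlambda(\tH),\tlambda(\tH'))\in I$'', is false: take three events $e_1\lessdot e_2\lessdot e_3$ labeled $a,c,b$ with $(a,c)$ and $(c,b)$ dependent but $(a,b)\in I$; this satisfies (LES1)--(LES3), yet the hyperplanes of $e_1$ and $e_3$ are nested (ordered), not crossing, while their labels are independent. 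Correspondingly, your ``direct'' proof of that false direction breaks at the step ``take any vertex $v\in N(\tH)\cap N(\tH')$'': the carriers of two hyperplanes need not meet (in the example above they are disjoint). The repair is trivial --- delete the converse direction; the implication you proved first, and only it, is what (LES3) requires --- but as written your verification of (LES3) rests on a false statement.
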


First note that if $\te_1,\te_2$ are opposite edges of a square of
$\cD$, then $e_1 = \varphi(\te_1)$ and $e_2 = \varphi(\te_2)$ are
opposite edges of a square of $Y$ and thus
$\tlambda(\te_1) = \lambda(e_1) = \lambda(e_2) =
\tlambda(\te_2)$. Consequently, $\tlambda$ is a labeling of the edges
of $\cD$. Since each labeling is a coloring, from
Lemma~\ref{regular-cube}, $\cD$ has at most $|V(Y)|$ isomorphism types
of labeled principal filters. Therefore, in order to show that
$\tlambda$ is a trace-regular labeling of $\cD$, we just need to show
that $\tlambda$ satisfies the conditions (LES1),(LES2), and (LES3).

For any two hyperplanes $\tH_1, \tH_2$ in minimal conflict in $\cD$,
there exist edges $\te_1$ dual to $\tH_1$ and $\te_2$ dual to $\tH_2$
such that $\te_1$ and $\te_2$ have the same source $\tu$. Note that
since $\tH_1$ and $\tH_2$ are in conflict, $\te_1$ and $\te_2$ do not
belong to a common square of $\cD$. Moreover, if $\te_1$ and $\te_2$
are in a square $\tQ$ in $\tY$, then since there is a directed path
from $\tv$ to $\tu$, and since $\tu$ is the source of $\tQ$, all
vertices of $\tQ$ are in
$(\cF_{\tildo}(\tv,\tY^{(1)}),\prec_{\tildo}) = \cD$. Consequently,
the hyperplanes $\tH_1$ and $\tH_2$ osculate at $(\tu,\te_1,\te_2)$ in
$\tY$. Let $u = \varphi(\tu)$, $e_1 = \varphi(\te_1)$, and
$e_2 = \varphi(\te_2)$, and note that $u$ is the source of $e_1$ and
$e_2$. Let $H_1$ and $H_2$ be the hyperplanes of $Y$ that are
respectively dual to $e_1$ and $e_2$. Since $\varphi$ is a covering
map, $e_1$ and $e_2$ do not belong to a common square. Therefore
$\lambda(e_1)\ne \lambda(e_2)$ and
$(\lambda(e_1),\lambda(e_2))\notin I$.  Since
$\tlambda(\tH_1)=\lambda(e_1)$ and $\tlambda(\tH_2)=\lambda(e_2)$,
this establishes (LES1).  This also establishes (LES2) when
$\tH_1 \#_\mu \tH_2$.

Suppose now that $\tH_1 \lessdot \tH_2$ in $\cD$. There exist edges
$\te_1$ dual to $\tH_1$ and $\te_2$ dual to $\tH_2$ such that the sink
$\tu$ of $\te_1$ is the source of $\te_2$. Since $\tH_1$ separates
$\tH_2$ from $\tv$ in $\cD$, $\tH_1$ also separates $\tH_2$ from $\tv$
in $\tY$. Consequently, $\te_1$ and $\te_2$ do not belong to a common
square of $\tY$ and the hyperplanes $\tH_1$ and $\tH_2$ osculate at
$(\tu,\te_1,\te_2)$. Let $u = \varphi(\tu)$, $e_1 = \varphi(\te_1)$,
and $e_2 = \varphi(\te_2)$, and note that $u$ is the sink of $e_1$ and
the source of $e_2$.  Since $\varphi$ is a covering map, $e_1$ and
$e_2$ do not belong to a common square and thus
$(\lambda(e_1),\lambda(e_2))\notin I$. Since
$\tlambda(\tH_1)=\lambda(e_1)$ and $\tlambda(\tH_2)=\lambda(e_2)$,
this establishes (LES2) when $\tH_1 \lessdot \tH_2$.

We prove (LES3) by contraposition. Consider two hyperplanes
$\tH_1, \tH_2$ that are concurrent, i.e., they intersect in
$\cD$. Since $\tH_1$ and $\tH_2$ intersect in $\tY$, there exists a
square $\tQ$ containing two consecutive edges $\te_1$ dual to $\tH_1$
and $\te_2$ dual to $\tH_2$.  Let $H_1$ and $H_2$ be the hyperplanes
of $Y$ that are dual to $e_1 =\varphi(\te_1)$ and
$e_2 = \varphi(\te_2)$. Note that $\tlambda(\te_1) = \lambda(e_1)$ and
$\tlambda(\te_2) = \lambda(e_2)$. Since $\varphi$ is a covering map,
$e_1$ and $e_2$ belong to a square in $Y$.  Therefore
$(\tlambda(\tH_1), \tlambda(\tH_2))=(\lambda(e_1),\lambda(e_2)) \in
I$, establishing (LES3).

\section{1-Safe Petri Nets and Special Cube Complexes}

\subsection{The Results}

In this section we present the first main result of the paper, namely
we show that to any net system $N=(S,\Sigma,F,m_{0})$ one can
associate a finite directed special cube complex ${\bX}_N=(X_N,o)$
with a trace labeling $\lambda_N: E(X_N)\to \Sigma$ such that the
domain $\cD(\cE_N)$ of the event structure unfolding
$\cE_N$ of $N$ is a principal filter of the universal cover
$\tbX_N$ of ${\bX}_N$.

Let $N=(S,\Sigma,F,m_{0})$ be a net system.  The transition relation
$\longrightarrow\subseteq 2^S\times \Sigma\times 2^S$ defines a
directed graph whose vertices are all markings of $N$ and there is an
arc from a marking $m$ to a marking $m'$ iff there exists a transition
$a\in \Sigma$ such that $m\xlongrightarrow{a} m'$ (i.e.,
$\bua\subseteq m$, $(\abu-\bua)\cap m=\varnothing$, and
$m'=(m-\bua)\cup \abu$).  Denote by $G_N$ the connected component of
the support of this graph that contains the initial marking $m_{0}$
and call the undirected graph $G_N$ the \emph{marking graph} of $N$.
Let $\oG_N=(G_N, o)$ denote $G_N$ whose edges are oriented according
to $\longrightarrow$ (for notational conveniences we use $o$ instead
of $\longrightarrow$) and call $\oG_N$ the \emph{directed marking
  graph}. The marking graph $G_N$ contains all markings reachable from
$m_{0}$ but it may also contain other markings. Notice also that the
directed marking graph $\oG_N$ is deterministic and codeterministic,
i.e., for any vertex $m$ and any transition $a\in \Sigma$ there exists
at most one arc $m\xlongrightarrow{a} m'$ and at most one arc
$m''\xlongrightarrow{a} m$. We say that two distinct transitions
$a,b\in \Sigma$ are \emph{independent} if
$(\bua\cup \abu)\cap (\bub\cup \bbu)=\varnothing$.  Consider the trace
alphabet $(\Sigma,I)$ where $(a,b) \in I$ if and only if the
transitions $a$ and $b$ are independent.

\begin{definition}
  The $2$-dimensional \emph{cube complex} $X_N$ of $N$ is defined in
  the following way. The 0-cubes and the 1-cubes of $X_N$ are the
  vertices and the edges of the marking graph $G_N$.  A 4-cycle
  $(m,m_1,m',m_2)$ of $G_N$ defines a square of $X_N$ iff there exist
  two (necessarily distinct) independent transitions $a,b\in \Sigma$
  such that
  $m\xlongrightarrow{a} m_1, m\xlongrightarrow{b} m_2,
  m_1\xlongrightarrow{b} m'$, and $m_2\xlongrightarrow{a} m'$.
\end{definition}

The cube complex $X_N$ can be transformed into a directed and colored
cube complex $\bX_N=(X_N,o,\lambda_N)$: an edge $mm'$ of $G_N$ is
oriented from $m$ to $m'$ and $\lambda_N(mm')=a$ iff
$m\xlongrightarrow{a} m'$ holds.

\begin{theorem}\label{Petri-to-special}
  $(X_N,o)$ is a finite directed special cube complex with two-sided
  hyperplanes and $\lambda_N$ is a trace labeling of $X_N$ with the
  trace alphabet $(\Sigma,I)$.
\end{theorem}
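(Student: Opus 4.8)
The plan is to verify the three easy structural facts (finiteness, admissibility of $o$, two-sidedness of hyperplanes), then do the real work of checking that $\lambda_N$ is a trace labeling, and finally obtain specialness for free from Theorem~\ref{special-trace-bis}. Finiteness is immediate, since every $0$-cube of $X_N$ is a marking, i.e.\ a subset of the finite set $S$ of places, so $|V(X_N)|\le 2^{|S|}$ and $X_N$ has finitely many edges and squares. Admissibility of $o$ is read off from the definition of a square: a square coming from independent transitions $a,b$ has the two $a$-edges $m\xlongrightarrow{a}m_1$ and $m_2\xlongrightarrow{a}m'$ as one pair of opposite sides and the two $b$-edges as the other, each pair consistently oriented and carrying one common label. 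Hence $o$ is admissible and $\lambda_N$ is a bona fide labeling; and since a cube complex admits an admissible orientation precisely when all its hyperplanes are two-sided, every hyperplane of $X_N$ is two-sided.

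To show $\lambda_N$ is a trace labeling I would check (TL1)--(TL4). The distinct-label requirements in (TL2) and (TL4) are exactly the determinism and codeterminism of $\oG_N$: at most one $a$-arc leaves, and at most one $a$-arc enters, a given marking. By Remark~\ref{rem-TL} the rest of (TL2)--(TL4) collapses to a single biconditional: for two edges $e_1,e_2$ incident to a common vertex, with labels $a$ and $b$, one has $(a,b)\in I$ iff $e_1,e_2$ lie in a common square of $X_N$. The implication from a common square to $(a,b)\in I$ is immediate, because the two edges of a square meeting at a vertex are labeled by the two independent transitions defining that square.

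The substance is the converse, which I would deduce from one commutation lemma for the firing rule. If $a,b$ are independent, then $(\bua\cup\abu)\cap(\bub\cup\bbu)=\varnothing$ yields the four disjointnesses $\bua\cap\bub=\bua\cap\bbu=\abu\cap\bub=\abu\cap\bbu=\varnothing$, and a direct set computation with $m'=(m-\bua)\cup\abu$ shows that, whenever the relevant transitions are enabled, firing $a$ then $b$ and firing $b$ then $a$ both reach $(m-\bua-\bub)\cup\abu\cup\bbu$. I would split the biconditional into the three incidence patterns. For two outgoing edges $m\xlongrightarrow{a}m_1$, $m\xlongrightarrow{b}m_2$ the disjointnesses show $b$ stays enabled at $m_1$ and $a$ at $m_2$, and the common target gives the square $(m,m_1,m',m_2)$, settling (TL2). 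For (TL3), with $x\xlongrightarrow{a}v\xlongrightarrow{b}y$, an analogous computation shows $b$ is enabled at $x$ and then $a$ at the result, recovering $y$ and yielding the square $(x,v,y,w)$. The remaining both-incoming case of (TL4) I would reduce to (TL2) by passing to the reversed net (swapping $\bua$ and $\abu$ for each transition), under which independence is preserved and each incoming $a$-edge becomes an outgoing one; applying the outgoing case there and reversing arrows gives the square. In all cases the four markings are joined to $v$ by edges and so lie in $G_N$, and determinism together with the uniqueness of the computed target shows that two consecutive edges determine at most one square, so $X_N$ is a simple square complex.

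Finally (TL1) is subsumed by Remark~\ref{rem-TL} (or read off directly from the square definition), so $\lambda_N$ is a trace labeling of the finite, two-sided, admissibly oriented complex $(X_N,o)$. The implication (3)$\Rightarrow$(1) of Theorem~\ref{special-trace-bis} then immediately gives that $X_N$ is special, which completes the proof. I expect the main obstacle to be purely organizational: keeping the firing-rule set algebra uniform across the three incidence cases. The single commutation identity for independent transitions and the reversed-net trick for the incoming/incoming case are precisely what prevent this from degenerating into three separate and error-prone set manipulations.
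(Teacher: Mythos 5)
Your high-level route coincides with the paper's: check two-sidedness and the axioms (TL1)--(TL4) for $\lambda_N$, then invoke the implication (3)$\Rightarrow$(1) of Theorem~\ref{special-trace-bis} to conclude specialness. Your treatment of the trace-labeling axioms matches the paper's Claim~\ref{XNlambda}: determinism and codeterminism of $\oG_N$ give the distinct-label halves of (TL2) and (TL4), and the firing-rule set algebra gives the square-existence halves of (TL2) and (TL3). Your reversed-net trick for the incoming/incoming case is a genuine variation: the paper instead constructs the source of the square explicitly as $m:=m'-(\abu \cup \bbu)+(\bua\cup \bub)$ and verifies the conditions of Claim~\ref{squares}. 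The trick is valid --- swapping $\bua$ and $\abu$ for every transition preserves the independence relation, exactly reverses the firing relation, and hence preserves the marking graph and the squares of $X_N$ --- but these three facts are the entire content of the reduction and should be proved, not just asserted.

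The genuine gap is in the step where you get two-sidedness ``for free.'' The equivalence ``a cube complex admits an admissible orientation iff all its hyperplanes are two-sided'' can only be invoked once you know that $o$ \emph{is} an orientation of $X_N$, i.e., that the firing rule assigns to each edge of $G_N$ a unique direction. That is precisely the nontrivial point here: a priori an edge $mm'$ could satisfy both $m\xlongrightarrow{a} m'$ and $m'\xlongrightarrow{a} m$, in which case $o$ is not defined on it; and this is exactly how one-sidedness manifests itself in $X_N$, because opposite edges of any square carry coherent firing directions (and the same label), so a one-sided hyperplane forces, along its chain of squares, some edge fired in both directions with the same label. Your square-by-square coherence check does not exclude this, since one-sidedness is a global phenomenon: a chain of squares can return to the same edge with the orientation flipped while each square in the chain is internally coherent. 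The paper closes this hole with a set computation (Claim~\ref{XN2-sided}): $m\xlongrightarrow{a} m'$ and $m'\xlongrightarrow{a} m$ force $\abu\subseteq \bua$ and $\bua\subseteq \abu$, hence $\bua=\abu$ and $m=m'$. This computation also surfaces the second omission in your write-up: degenerated transitions ($\bua=\abu$) genuinely occur; they produce loops in $G_N$, squares with repeated vertices, and degenerated hyperplanes of $X_N$, all of which fall outside your admissibility argument (which tacitly assumes the four sides of a square are distinct, coherently directed edges) and which the paper handles separately, noting that degenerated hyperplanes can always be regarded as two-sided. With the both-ways-firing exclusion and the degenerate case added, your argument goes through.
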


By Lemma~\ref{cube-completable}, $X_N$ can be completed in a canonical
way to a NPC complex that is also special. In the following, we also
denote this completion by $X_N$.

\begin{figure}
  \includegraphics[scale=0.85]{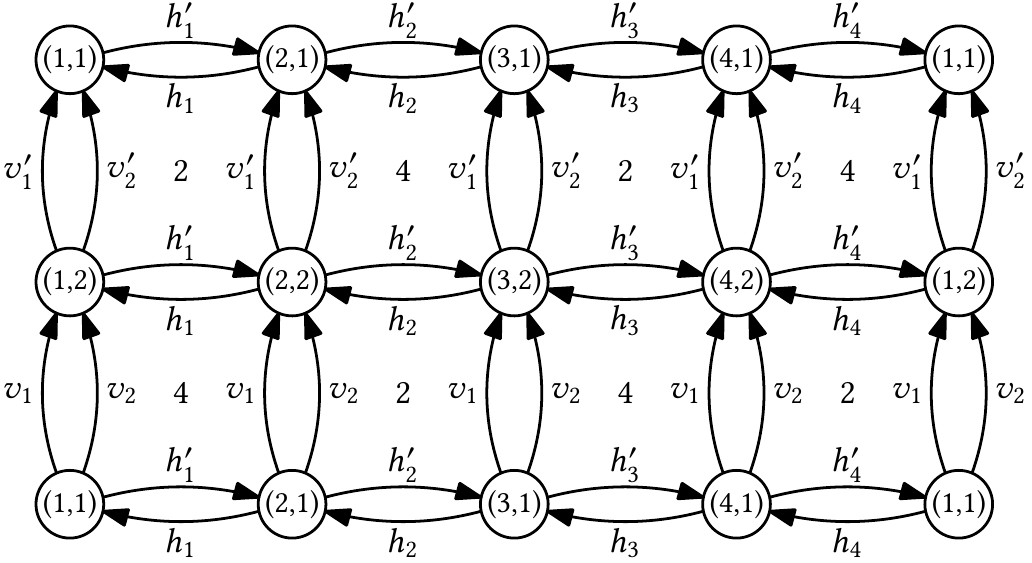}%
  \caption{The special cube complex of the net system $N^*$. A vertex
    labeled $(i,j)$ corresponds to the marking
    $\{H_i,V_j,C_1,C_2,C_3,C_4\}$ of $N$.}%
  \label{fig-petrinet-XN}
\end{figure}

\begin{example}\label{ex-petrinet-XN}
  The special cube complex $X_{N^*}$ of the net system $N^*$ from
  Example~\ref{ex-petrinet} is representend in
  Fig.~\ref{fig-petrinet-XN}. In the figure, the leftmost vertices
  (respectively, edges) should be identified with the rightmost
  vertices (respectively, edges) that have the same label.
  Similarly, the lower vertices and edges should be identified with
  the uppper vertices and edges.
  The complex $X_{N^*}$ has $8$ vertices, $32$ edges, and $24$
  squares. A 4-cycle in the figure is a square of $X_{N^*}$ if
  opposite edges have the same label and if the labels appearing on
  the edges of the square correspond to independent transitions of
  $N^*$. For example, on the right bottom corner, the directed
  $4$-cycle labeled by $h_4$ and $v_1$ is not a square of $X_{N^*}$
  because the transitions $h_4$ and $v_1$ are not independent (as
  explained in Example~\ref{ex-petrinet}).
  The number ($2$ or $4$) in the middle of each $4$-cycle represent
  the number of squares of $X_{N^*}$ on the vertices of this
  $4$-cycle.
\end{example}

Let $\tX_N$ denotes the universal cover of the special cube complex
$X_N$ and let $\varphi: \tX_N \to X_N$ denotes a covering map.  Let
$\tbX_N=(\tX_N,\tildo,\tlambda_N)$ be the directed colored CAT(0) cube
complex, in which the orientation and the coloring are defined as in
Section~\ref{sec-dspec}. For any lift $\tm_0$ of $m_{0}$, denote by
${\cE}_{X_N}=(E',\le',\#',\tlambda_N)$ the $\Sigma$-labeled event
structure whose domain is the principal filter
$\cF_{\tildo}(\tm_0,\tX_N^{(1)})$. Finally, let
${\cE}_N=(E,\le,\#,\lambda)$ be the event structure unfolding of $N$
as defined in Subsection~\ref{unfold} and denote by $\cD(\cE_N)$ the
domain of ${\cE}_N$. The main result of this section is the following
theorem:

\begin{theorem}\label{Petri-to-special-isomorphism}
  The event structures ${\cE}_N=(E,\le,\#,\lambda)$ and
  ${\cE}_{X_N}=(E',\le',\#',\tlambda_N)$ are isomorphic.
\end{theorem}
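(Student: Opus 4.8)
The plan is to exhibit an explicit isomorphism between the two \emph{domains} $\cD(\cE_N)$ and $\cD(\cE_{X_N})$, viewed as pointed, directed, $\Sigma$-labeled median graphs, and then to promote it to an isomorphism of the event structures by the canonical correspondence between pointed median graphs and labeled event structures (Theorem~\ref{median_domain}), checking that edge labels are preserved. Recall that the vertices of $\cD(\cE_N)$ are the firing traces of $N$ ordered by $\sqsubseteq$ (each configuration of $\cE_N$ is the set of prime firing traces that are prefixes of a unique firing trace $\langle\tau\rangle$), with basepoint $\varnothing$, while the vertices of $\cD(\cE_{X_N})$ are the elements of the filter $\cF_{\tildo}(\tm_0,\tX_N^{(1)})$, with basepoint $\tm_0$; by Theorem~\ref{Petri-to-special} and Proposition~\ref{special-trace1} the latter is indeed the domain of a trace-regular event structure. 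The bridge is a map $\Phi$ from firing traces to vertices of this filter.

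To define $\Phi$, I would fix a firing sequence $\sigma\in FS$. Since $\oG_N$ is deterministic, $\sigma$ traces a unique directed path from $m_{0}$ in $\oG_N$, and by the unique path-lifting property of the covering map $\varphi\colon\tX_N\to X_N$ this path lifts to a unique directed path from $\tm_0$ in $\tX_N$; set $\Phi(\sigma)$ to be its endpoint, which by Lemma~\ref{directedpath} is reached by a geodesic and hence lies in $\cF_{\tildo}(\tm_0,\tX_N^{(1)})$. The first key step is that $\Phi$ factors through firing traces: if $\sigma=\sigma_1 ab\sigma_2$ and $\sigma'=\sigma_1 ba\sigma_2$ with $(a,b)\in I$, then after firing $\sigma_1$ the independent transitions $a,b$ close a diamond in $\oG_N$ which, by the very definition of $X_N$, is a square; lifting that square to $\tX_N$ forces $\Phi(\sigma)=\Phi(\sigma')$. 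Transitivity of $\sim_I$ then yields a well-defined map $\Phi\colon\FT(N)\to\cF_{\tildo}(\tm_0,\tX_N^{(1)})$ sending the empty trace to $\tm_0$.

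Next I would show that $\Phi$ is a bijection preserving directions and labels. Surjectivity is immediate: any vertex $\tm$ with $\tm_0\prec_{\tildo}\tm$ is the endpoint of a directed path from $\tm_0$, whose projection under $\varphi$ is a firing sequence mapped to $\tm$. For injectivity I would use Lemma~\ref{directed-median}(2): inside the filter the order $\prec_{\tildo}$ agrees with the basepoint order $\le_{\tm_0}$, so the geodesic trace $\langle\sigma_{\tm}\rangle$ of Corollary~\ref{geodesictrace} is well defined, and since every shortest $(\tm_0,\tm)$-path is directed and projects to a firing sequence, $\langle\sigma_{\tm}\rangle$ is exactly the firing trace $\Phi^{-1}(\tm)$; thus distinct firing traces give distinct vertices. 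That $\Phi$ is a directed, label-preserving graph isomorphism follows edge by edge: the Hasse edge $\langle\tau\rangle\to\langle\tau a\rangle$ corresponds to firing $a$, hence to the lifted $a$-labeled directed edge $\Phi(\langle\tau\rangle)\to\Phi(\langle\tau a\rangle)$, and conversely every directed edge of the filter projects to a transition of $N$; the labels agree since $\tlambda_N(\tm\tm')=\lambda_N(\varphi(\tm\tm'))$ equals the last letter recorded by $\lambda$ on $\cE_N$.

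Finally, having identified the two domains as pointed directed $\Sigma$-labeled median graphs, I would transfer the conflict relation. Under $\Phi$, the existence of a common firing trace extending two prime firing traces corresponds to a filter vertex $w$ with both corresponding join-irreducibles below it, i.e.\ to $u,v\in I(\tm_0,w)$; by Lemma~\ref{trace4} this is precisely the negation of conflict of the associated geodesic traces, so the Nielsen--Plotkin--Winskel conflict of $\cE_N$ matches that of $\cE_{X_N}$. Combining this with the precedence translation of Lemma~\ref{geoprimetrace3}, the event--hyperplane correspondence of Corollary~\ref{geodesicprimetrace}, and the agreement $\lambda=\tlambda_N$ on corresponding events, Theorem~\ref{median_domain} gives the label-preserving isomorphism $\cE_N\equiv\cE_{X_N}$. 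I expect the main obstacle to be the well-definedness and injectivity of $\Phi$: precisely, verifying that the projection of a directed geodesic of $\tX_N$ is always a firing sequence and that $\sim_I$-equivalence is realized \emph{exactly} by the squares of $X_N$, which is where the combinatorics of the firing rule must be reconciled with covering-space path lifting and the median-graph geometry.
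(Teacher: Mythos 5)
Your proposal is correct, and at its core it is the paper's own proof in a different wrapper. Your map $\Phi$ --- well-defined because independence diamonds of $\oG_N$ are squares of $X_N$ and lift to squares of $\tbX_N$, surjective because every filter vertex is reached by a directed path projecting to a firing sequence, injective because directed lifts are geodesics (Lemma~\ref{directedpath}) so that two firing sequences ending at the same vertex lie in a single geodesic trace (Lemmas~\ref{trace1} and~\ref{trace2}) --- is exactly the content of the paper's Claim~\ref{faringtrace}, which identifies geodesic traces of $\cE_{X_N}$ with firing traces of $N$; and your transfer of precedence and conflict invokes the same Lemmas~\ref{trace3}, \ref{trace4} and~\ref{geoprimetrace3} as the paper does. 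The one substantive difference is packaging: the paper stays at the level of events (prime firing traces $\leftrightarrow$ prime geodesic traces $\leftrightarrow$ hyperplanes, via Corollary~\ref{geodesicprimetrace}) and checks directly that this bijection preserves $\le$ and $\#$, which is by definition an isomorphism of event structures. Your route instead first builds an isomorphism of domains, and for that you need two facts the paper never proves: the identification of $(\cD(\cE_N),\subseteq)$ with the poset $(\FT(N),\sqsubseteq)$ of firing traces (your parenthetical ``each configuration of $\cE_N$ is the set of prime prefixes of a unique firing trace''), and the recoverability of an event structure, up to isomorphism, from its pointed domain behind Theorem~\ref{median_domain}. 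Both are classical (Nielsen--Plotkin--Winskel; Barth\'elemy--Constantin), so this is a citation obligation rather than a fatal gap, but it makes your argument less self-contained than the paper's. Note that your final paragraph in any case redoes the event-level checks (events via Corollary~\ref{geodesicprimetrace}, precedence via Lemma~\ref{geoprimetrace3}, conflict via Lemma~\ref{trace4}), so you could drop the domain-level detour entirely and your proof would coincide with the paper's; as written, you should at least supply a reference for the configurations-versus-firing-traces correspondence rather than merely recalling it.
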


Using Thiagarajan's characterization of trace-regular event structures
(Theorem~\ref{th:regular_trace}), we establish the converse of
Theorem~\ref{Petri-to-special}.

\begin{proposition}\label{Petri-to-special-isomoprhism2}
  For any finite (virtually) special cube complex $X$, any admissible
  orientation $o$ of $X$, and any vertex $\tv$ in the universal cover
  $\tX$ of $X$, there exists a finite net system $N$ such that the
  domain of the event structure ${\cE}_{N}$ is isomorphic to the
  principal filter $(\cF_{\tildo}(\tv_0,\tX^{(1)}),\leq_{\tildo})$.
\end{proposition}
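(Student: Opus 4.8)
The plan is to realize the principal filter as the domain of a trace-regular event structure and then invoke Thiagarajan's equivalence (Theorem~\ref{th:regular_trace}) to produce the net system; the only preliminary work is reducing the virtually special case to the special case.

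First I would eliminate the word ``virtually''. If $X$ is virtually special, pick a finite special cover $\psi\colon Y\to X$. Since a covering of a covering is a covering, the universal covers coincide, $\tY=\tX$, and the universal covering map of $X$ factors through $Y$. Pulling the admissible orientation $o$ of $X$ back along $\psi$ gives an orientation $o_Y$ of $Y$; as $\psi$ carries squares to squares and preserves the pairing of opposite edges, opposite edges of every square of $Y$ are oriented alike, so $o_Y$ is admissible, and its lift to $\tY=\tX$ is exactly $\tildo$. Thus the directed CAT(0) cube complex $(\tX,\tildo)$ and the basepoint $\tv=\tv_0$ are untouched by the passage to $Y$, and it suffices to prove the statement for the finite special complex $(Y,o_Y)$. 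From now on I assume $X$ itself is special; completing it via Lemma~\ref{cube-completable} to its smallest NPC complex with the same $2$-skeleton (hence the same edges, squares, and hyperplanes) lets me treat $X$ as a finite NPC complex with two-sided hyperplanes.

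Next I would obtain trace-regularity on the filter. By Proposition~\ref{regular-finite-npc}, the principal filter $\cD=(\cF_{\tildo}(\tv_0,\tX^{(1)}),\prec_{\tildo})$ is the domain of a regular event structure $\cE$. Because $X$ is special with two-sided hyperplanes, the forward direction of Proposition~\ref{special-trace} furnishes an independence relation $I$ on $\cH=\cH(X)$ for which the canonical labeling $\tlambda_{\cH}$ is a trace-regular labeling of $\cD$ with trace alphabet $(\cH,I)$. (Equivalently, one may run the given orientation through Theorem~\ref{special-trace-bis} to obtain a trace labeling and then apply Proposition~\ref{special-trace1}; the outcome is the same.) Hence $\cE$ is a trace-regular event structure. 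Finally I would apply Thiagarajan's characterization: by Theorem~\ref{th:regular_trace}, every trace-regular event structure is isomorphic to the unfolding of some finite net system, so there is a finite net system $N$ with $\cE_N\equiv\cE$. An isomorphism of event structures preserves the order and conflict relations, hence induces an isomorphism of domains respecting inclusion of configurations; therefore $\cD(\cE_N)$ is isomorphic, as a directed domain, to $\cD(\cE)=(\cF_{\tildo}(\tv_0,\tX^{(1)}),\prec_{\tildo})$, which is precisely the claim.

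I expect the one genuinely delicate point to be the reduction from virtually special to special: one must verify not only that the pulled-back orientation $o_Y$ stays admissible, but that replacing $X$ by its finite special cover leaves the directed universal cover $(\tX,\tildo)$ and the chosen basepoint literally unchanged, so that the filter realized by the constructed net system is the very one named in the statement. Once this is checked, the remainder is a direct concatenation of Proposition~\ref{regular-finite-npc}, Proposition~\ref{special-trace}, and Theorem~\ref{th:regular_trace}.
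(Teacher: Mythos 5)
Your proposal is correct and follows essentially the same route as the paper: reduce the virtually special case to a finite special cover $Y$ with the pulled-back admissible orientation (noting the directed universal covers coincide), obtain a trace-regular labeling of the principal filter, and conclude via Thiagarajan's Theorem~\ref{th:regular_trace}. The only cosmetic difference is that you invoke the forward direction of Proposition~\ref{special-trace} (with the canonical hyperplane labeling), whereas the paper chains Theorem~\ref{special-trace-bis} and Proposition~\ref{special-trace1}; as you yourself observe, these yield the same conclusion.
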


By Theorem~\ref{Petri-to-special} and
Proposition~\ref{Petri-to-special-isomoprhism2}, we obtain a
correspondence between trace-regular event structures and special cube
complexes, leading to the following corollary:

\begin{corollary}\label{cor-tracereg-implies-stronglyreg}
  Any trace-regular event structure is cover-special, and thus
  strongly regular.
\end{corollary}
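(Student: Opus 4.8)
The plan is to chain the two main results of this section with Thiagarajan's characterization of trace-regular event structures (Theorem~\ref{th:regular_trace}), so that essentially no new work is required. Let $\cE$ be a trace-regular event structure. First I would apply Theorem~\ref{th:regular_trace} to obtain a net system $N=(S,\Sigma,F,m_{0})$ with $\cE\equiv\cE_N$; in particular their domains are isomorphic. Then Theorem~\ref{Petri-to-special} yields that $(X_N,o)$ is a finite directed special cube complex with two-sided hyperplanes and that $\lambda_N$ is a trace labeling for the trace alphabet $(\Sigma,I)$; by Lemma~\ref{cube-completable}, as noted just after Theorem~\ref{Petri-to-special}, we may regard $X_N$ as a finite NPC complex that is still special and still carries the admissible orientation $o$.

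The key identification is Theorem~\ref{Petri-to-special-isomorphism}, which gives $\cE_N\equiv\cE_{X_N}$, where $\cE_{X_N}$ is the event structure whose domain is the principal filter $\cF_{\tildo}(\tm_0,\tX_N^{(1)})$ of the universal cover $\tbX_N$ of $X_N$. Composing the two isomorphisms, $\cD(\cE)$ is isomorphic to a principal filter of the universal cover of the finite special complex $X_N$ equipped with its admissible orientation $o$.

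It then remains to match the definitions. I would first establish cover-specialness: since every special cube complex is trivially virtually special (it is a finite special cover of itself), the complex $X_N$ is virtually special and carries the admissible orientation $o$; hence $\cD(\cE)$ is isomorphic to a principal filter of the universal cover of a virtually special complex with an admissible orientation, which is precisely the definition of cover-special. Strong regularity then follows for free, exactly as in the statement: a virtually special complex is by definition a finite NPC complex, so cover-specialness already exhibits $\cD(\cE)$ as a principal filter of the universal cover of a finite directed NPC complex (cf.\ Proposition~\ref{regular-finite-npc}). I do not anticipate a genuine obstacle here, as all the substance has been packed into Theorems~\ref{Petri-to-special} and~\ref{Petri-to-special-isomorphism}; the only remaining points—that special implies virtually special, and that a virtually special complex is NPC—are immediate from the definitions.
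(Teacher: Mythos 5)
Your proposal is correct and takes essentially the same route as the paper: the corollary is obtained there by chaining Thiagarajan's Theorem~\ref{th:regular_trace} with Theorem~\ref{Petri-to-special} and Theorem~\ref{Petri-to-special-isomorphism}, exactly as you do. The closing observations you make---that the NPC completion of Lemma~\ref{cube-completable} preserves specialness and the orientation, that a finite special complex is trivially virtually special, and that cover-specialness already exhibits the domain as a principal filter of the universal cover of a finite directed NPC complex, hence strong regularity---are precisely the glue the paper leaves implicit.
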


\begin{remark}
  In~\cite{CC-thiag}, the following question was formulated: Is it
  true that any regular event structure is strongly regular?
  In view of Corollary~\ref{cor-tracereg-implies-stronglyreg}, if the
  answer to this question is negative, this would provide
  automatically other counterexamples to Thiagarajan's
  Conjecture~\ref{conjecture_regular}, as the counterexamples provided
  in~\cite{CC-thiag} are strongly regular event structures that are
  not trace-regular.
\end{remark}

\subsection{Proof of Theorem~\ref{Petri-to-special}}

We say that a square $Q$ of $X_N$ is an $\{a,b\}$-\emph{square} if two
opposite edges of $Q$ are labeled $a$ and two other opposite edges are
labeled $b$. By the definition of squares of $X_N$, if $Q$ is an
$\{a,b\}$-{square}, then necessarily $(a,b) \in I$.
Each square $Q$ of $X_N$ has a unique source (a vertex $m$ of $Q$
whose two incident edges are directed from $m$) and a unique sink (a
vertex $m'$ of $Q$ whose two incident edges are directed to $m'$). We
restate the definition of squares of $X_N$ in the following way:

\begin{claim}\label{squares}
  A vertex $m$ of $X_N$ is the source of an $\{a,b\}$-square
  $Q=(m,m_1,m',m_2)$ of $X_N$ iff $\bua\cup \bub\subseteq m$,
  $((\abu-\bua)\cup (\bbu-\bub))\cap m=\varnothing$, and
  $(\bua\cup \abu)\cap (\bub\cup \bbu)=\varnothing$. In this case,
  $m_1=m-\bua+\abu, m_2=m-\bub+\bbu, m'=m_1-\bub+\bbu=m_2-\bua+\abu$
  and
  $m\xlongrightarrow{a} m_1, m\xlongrightarrow{b} m_2,
  m_1\xlongrightarrow{b} m', m_2\xlongrightarrow{a} m'$.
\end{claim}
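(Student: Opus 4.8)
The plan is to prove the biconditional by directly unfolding the firing rule $\xlongrightarrow{}$ and exploiting the independence hypothesis $(\bua\cup\abu)\cap(\bub\cup\bbu)=\varnothing$, which says precisely that the four sets $\bua,\abu,\bub,\bbu$ are pairwise disjoint across the pair $\{a,b\}$. I would first record that, since $\oG_N$ is deterministic and codeterministic, the markings $m_1,m_2$ (and hence $m'$) are uniquely determined once $m$, $a$, and $b$ are fixed; thus it suffices to match the defining conditions of an $\{a,b\}$-square with the three displayed conditions, rather than to worry about which markings occur.

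For the forward implication, I would use that $m$ being the \emph{source} means both edges of $Q$ incident to $m$ are outgoing, i.e.\ $m\xlongrightarrow{a}m_1$ and $m\xlongrightarrow{b}m_2$. Unfolding the firing rule gives $\bua\subseteq m$, $(\abu-\bua)\cap m=\varnothing$, $m_1=m-\bua+\abu$, and symmetrically for $b$; together these yield the first two displayed conditions and the stated values $m_1=m-\bua+\abu$, $m_2=m-\bub+\bbu$. The third condition is exactly the independence of $a$ and $b$ demanded by the definition of a square. Reading off the remaining two firings $m_1\xlongrightarrow{b}m'$ and $m_2\xlongrightarrow{a}m'$ then gives $m'=m_1-\bub+\bbu=m_2-\bua+\abu$.

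For the converse, I would assume the three conditions, set $m_1:=m-\bua+\abu$, $m_2:=m-\bub+\bbu$, $m':=m_1-\bub+\bbu$, and verify the four firings. The hypotheses $\bua\cup\bub\subseteq m$ and $((\abu-\bua)\cup(\bbu-\bub))\cap m=\varnothing$ give $m\xlongrightarrow{a}m_1$ and $m\xlongrightarrow{b}m_2$ at once. The crux is that $b$ remains enabled at $m_1$: since $\bub\cap\bua=\varnothing$ by independence, $\bub\subseteq m\setminus\bua\subseteq m_1$, and since $\bbu\cap\abu=\varnothing$, one gets $(\bbu-\bub)\cap m_1=\varnothing$ from $(\bbu-\bub)\cap m=\varnothing$; hence $m_1\xlongrightarrow{b}m'$, and symmetrically $m_2\xlongrightarrow{a}(m_2-\bua+\abu)$. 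The main obstacle, purely set-theoretic bookkeeping, is to confirm that the square closes, i.e.\ that the two completions agree: expanding both expressions and repeatedly using the pairwise disjointness $\bua\cap\bub=\bua\cap\bbu=\abu\cap\bub=\abu\cap\bbu=\varnothing$, both $m_1-\bub+\bbu$ and $m_2-\bua+\abu$ reduce to $(m\setminus(\bua\cup\bub))\cup\abu\cup\bbu$, so $m'$ is well defined and independent of the order of firing. Finally, since $m\in V(G_N)$ and $m_1,m_2,m'$ are joined to $m$ by edges, they lie in the same component $G_N$, so $(m,m_1,m',m_2)$ is a genuine $\{a,b\}$-square of $X_N$ with source $m$, as required.
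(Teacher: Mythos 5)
Your proof is correct and follows the same route the paper intends: the paper states this claim as a mere ``restatement of the definition'' of the squares of $X_N$ and gives no explicit proof, and your argument is precisely the definitional unfolding it leaves implicit. The details you supply --- in particular checking via the pairwise disjointness from independence that $b$ stays enabled at $m_1$ (and $a$ at $m_2$) and that both completions reduce to $(m\setminus(\bua\cup\bub))\cup\abu\cup\bbu$ --- are exactly the routine verifications needed, and they are carried out correctly.
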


From Claim~\ref{squares}, for any square $Q=(m,m_1,m',m_2)$ of $X_N$,
$\lambda_N(mm_1)=\lambda_N(m_2m')$ and
$\lambda_N(mm_2)=\lambda_N(m_1m')$. Therefore the coloring $\lambda_N$
is a labeling of the directed complex $(X_N,o)$.

A transition $a\in \Sigma$ is called \emph{degenerated} if $\bua=\abu$.

\begin{claim}\label{degenerate-transition}
  $a\in \Sigma$ is degenerated iff for any arc
  $m\xlongrightarrow{a} m'$ of $\oG_N$, we have $m=m'$,
  i.e., $m\xlongrightarrow{a} m'$ is a loop. Moreover, if $a$ is
  degenerated, then $m\xlongrightarrow{a} m$ iff $\bua\subseteq m$.
\end{claim}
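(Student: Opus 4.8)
The plan is to reduce everything to a direct computation with the firing rule. Recall that $m\xlongrightarrow{a}m'$ means $\bua\subseteq m$, $(\abu-\bua)\cap m=\varnothing$, and $m'=(m-\bua)\cup\abu$. The heart of the argument is the observation that, for a \emph{single} arc $m\xlongrightarrow{a}m'$ of $\oG_N$, the equality $m=m'$ is equivalent to $\bua=\abu$; once this per-arc equivalence is established, the biconditional of the claim follows at once.

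First I would prove the forward direction. Assuming $a$ is degenerated, i.e.\ $\bua=\abu$, take any arc $m\xlongrightarrow{a}m'$. Then $m'=(m-\bua)\cup\abu=(m-\bua)\cup\bua$, and since $\bua\subseteq m$ this equals $m$; hence every $a$-labeled arc of $\oG_N$ is a loop.

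For the converse I would start from an arc $m\xlongrightarrow{a}m'$ with $m=m'$ and read off $\bua=\abu$ by a membership analysis. From $m=m'=(m-\bua)\cup\abu$ we get $\abu\subseteq m$; combined with $(\abu-\bua)\cap m=\varnothing$ and $\abu-\bua\subseteq\abu\subseteq m$ this forces $\abu-\bua=\varnothing$, i.e.\ $\abu\subseteq\bua$. Symmetrically, $\bua\subseteq m=(m-\bua)\cup\abu$ together with $\bua\cap(m-\bua)=\varnothing$ forces $\bua\subseteq\abu$. Hence $\bua=\abu$, so $a$ is degenerated. The only point requiring care is the (logically vacuous) case in which $a$ labels no arc of $\oG_N$ at all: there the ``for any arc'' condition holds trivially, so one either restricts attention to transitions that actually occur in $\oG_N$ or simply notes that it is the per-arc equivalence above that is used in the sequel.

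Finally, the ``moreover'' part is immediate once $a$ is degenerated: then $\abu-\bua=\varnothing$, so the condition $(\abu-\bua)\cap m=\varnothing$ holds for every marking $m$, and $m'=(m-\bua)\cup\abu=m$ whenever $\bua\subseteq m$. Thus the firing rule for $a$ collapses to the single requirement $\bua\subseteq m$, giving $m\xlongrightarrow{a}m$ iff $\bua\subseteq m$. I expect no genuine obstacle beyond this bookkeeping; the only subtlety is the vacuous-existence remark noted above.
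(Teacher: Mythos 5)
Your proof is correct and follows essentially the same route as the paper's: a direct computation with the firing rule, showing $m'=(m-\bua)\cup\abu=m$ when $\bua=\abu\subseteq m$, and extracting $\bua=\abu$ from a loop via the membership analysis that the paper states tersely but you spell out. Your remark about the vacuous case (a transition $a$ with no $a$-labeled arc in $\oG_N$) is a fair observation about the statement's literal ``iff'' that the paper's proof also leaves implicit, and your handling of it is appropriate.
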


\begin{proof}
  If $a$ is degenerated and $m\xlongrightarrow{a} m'$, since
  $\abu=\bua \subseteq m$, we have $m'=(m-\bua)\cup
  \abu=m$. Conversely, if $m\xlongrightarrow{a} m'$ is a loop, since
  $\bua\subseteq m, (\abu-\bua)\cap m=\varnothing$, and
  $m=(m-\bua)\cup \abu$, we have $\bua=\abu$. The second assertion
  follows from the definition of $m\xlongrightarrow{a} m'$.
\end{proof}

Notice that an $\{ a,b\}$-square $(m,m_1,m',m_2)$ of $X_N$ is either
non-degenerated (its four vertices are pairwise distinct) or one of
the transitions $a$ or $b$ is degenerated and another one not (if $a$
is degenerated, then $m=m_1$ and $m_2=m')$ or both transitions $a$ and
$b$ are degenerated (in this case, $m=m_1=m'=m_2$ and $\bua=\abu$,
$\bub=\bbu$, and $\bua \cap \bub = \varnothing$).

For $a\in \Sigma$, a hyperplane $H$ of $X_N$ is called an
$a$-\emph{hyperplane} if all edges $mm'$ dual to $H$ are labeled
$a$. An $a$-hyperplane $H$ such that $a$ is a degenerated transition
is called a \emph{degenerated hyperplane}. By
Claim~\ref{degenerate-transition}, a degenerated $a$-hyperplane can be
viewed as a connected component of the subgraph of the marking graph
$G_N$ induced by all markings $m$ having a loop labeled $a$.

\begin{claim}\label{XN2-sided}
  The hyperplanes of $X_N$ are two-sided.
\end{claim}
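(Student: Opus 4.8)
The plan is to show that the orientation $o$ induced by the firing rule is an \emph{admissible} orientation of $X_N$, and then to invoke the equivalence recorded earlier (following~\cite{HaWi1}) that a cube complex admits an admissible orientation if and only if all of its hyperplanes are two-sided. So the whole claim reduces to verifying that $o$ is admissible, i.e.\ that opposite edges of every square of $X_N$ carry the same direction.

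First I would check that $o$ is a genuine orientation, assigning a single direction to each $1$-cube. This is immediate from the fact, noted in the construction, that $\oG_N$ is deterministic and codeterministic: for a fixed transition $a$, each marking has at most one outgoing and at most one incoming $a$-arc, so the edge $mm'$ dual to an $a$-hyperplane receives the unique direction given by $m\xlongrightarrow{a}m'$, with no conflict. The second and main step is then a direct application of Claim~\ref{squares}: for any square $Q=(m,m_1,m',m_2)$ with independent transitions $a,b$ we have $m\xlongrightarrow{a}m_1$, $m_2\xlongrightarrow{a}m'$, $m\xlongrightarrow{b}m_2$, and $m_1\xlongrightarrow{b}m'$. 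Hence the two opposite $a$-edges $mm_1$ and $m_2m'$ point the same way, and likewise the two opposite $b$-edges $mm_2$ and $m_1m'$. Thus $o$ satisfies the defining property of an admissible orientation, and every hyperplane of $X_N$ is therefore two-sided.

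The one delicate point, which I expect to be the main (though mild) obstacle, is the degenerate case. When a transition $a$ is degenerated, Claim~\ref{degenerate-transition} tells us the arc $m\xlongrightarrow{a}m$ is a loop, so the corresponding degenerated $a$-hyperplane is a union of $a$-labeled loops; for such a loop the ``source'' and ``sink'' $0$-cells are identified to the single vertex $m$. Here I would argue that the loop is nonetheless coherently oriented from its pre-firing end to its post-firing end, so that the splitting of the loop ends into sources and sinks still furnishes the two sides $H\times\{-1\}$ and $H\times\{1\}$ demanded by the definition of two-sidedness. Consequently degenerated hyperplanes are not one-sided either, and the verification of admissibility — and hence of two-sidedness — goes through uniformly for degenerate and non-degenerate squares alike.
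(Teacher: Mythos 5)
Your overall strategy (show that the firing-rule orientation $o$ is admissible, then invoke the equivalence between admissibility and two-sidedness of all hyperplanes) is legitimate and is, in spirit, the same reduction the paper uses. The genuine gap is in your first step, and unfortunately it is exactly where the mathematical content of the claim lies. You assert that determinism and codeterminism of $\oG_N$ immediately imply that each edge $mm'$ of $G_N$ receives a unique direction. This is false: determinism says each marking has at most one \emph{outgoing} $a$-arc, codeterminism that it has at most one \emph{incoming} $a$-arc; neither excludes a $2$-cycle $m\xlongrightarrow{a} m'$ and $m'\xlongrightarrow{a} m$ with $m\ne m'$, in which each of $m$, $m'$ has exactly one outgoing and one incoming $a$-arc. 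This configuration is precisely how a hyperplane of $X_N$ could fail to be two-sided: since the squares of $X_N$ are built from firing-rule arcs (Claim~\ref{squares}), propagating an orientation along a hyperplane by elementary parallelism always reproduces firing-rule directions, so a one-sided hyperplane manifests itself exactly as an edge carrying both $m\xlongrightarrow{a} m'$ and $m'\xlongrightarrow{a} m$. Your proof never rules this out; your second step (opposite edges of squares are consistently directed) holds by the very definition of the squares of $X_N$ and is the easy part.

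The paper's proof consists essentially of excluding this $2$-cycle by a computation with markings: from $m'\xlongrightarrow{a} m$ one gets $\abu\subseteq m$, which together with $(\abu-\bua)\cap m=\varnothing$ (from $m\xlongrightarrow{a} m'$) yields $\abu\subseteq\bua$; moreover any $e\in\bua-\abu$ would satisfy $e\notin m'=(m-\bua)\cup\abu$, contradicting $\bua\subseteq m'$. Hence $\bua=\abu$, i.e., $a$ is degenerated, so by Claim~\ref{degenerate-transition} the arc is a loop and $m=m'$, a contradiction. This computation (or an equivalent one) must appear in your argument; without it, the claimed admissibility of $o$ is unsupported. Your treatment of the degenerated (loop) case, on the other hand, is at the same level of detail as the paper's, which simply observes that degenerated hyperplanes can always be regarded as two-sided.
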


\begin{proof}
  Any degenerated hyperplane can always be viewed as a two-sided
  hyperplane. Now suppose that $H$ is a non-degenerated
  $a$-hyperplane. If $H$ is not two-sided, then there exists an edge
  $mm'$ dual to $H$ such that $m\xlongrightarrow{a} m'$ and
  $m'\xlongrightarrow{a} m$. Consequently,
  $\bua\subseteq m, (\abu-\bua)\cap m=\varnothing, m'=(m-\bua)\cup
  \abu$ and
  $\bua\subseteq m', (\abu-\bua)\cap m'=\varnothing, m=(m'-\bua)\cup
  \abu$.  Since $\abu\subseteq m$ and $(\abu-\bua)\cap m=\varnothing$,
  we have $\abu\subseteq\bua$. If there exists $e\in \bua-\abu$, then
  $e\notin m'=(m-\bua)\cup \abu$, contradicting that
  $\bua\subseteq m'$. As a result, we have $\bua=\abu$, i.e., $a$
  is a degenerated transition, contradicting the choice of $H$.
\end{proof}

\begin{claim}\label{XNlambda}
  $\lambda_N$ is a trace labeling of $(X_N,o)$.
\end{claim}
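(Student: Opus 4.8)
The plan is to verify that $\lambda_N$ satisfies the three axioms (TL2), (TL3), (TL4); condition (TL1) will then follow from Remark~\ref{rem-TL}. I will treat each axiom as the conjunction of a \emph{distinct-labels} clause and an \emph{independence-iff-common-square} clause. Recall that it was already observed after Claim~\ref{squares} that $\lambda_N$ is a genuine labeling (opposite edges of a square carry equal labels), so $\lambda_N(H)$ is well defined for each hyperplane $H$. The distinct-labels clauses of (TL2) and (TL4) are immediate from the fact, noted when $\oG_N$ was introduced, that the directed marking graph is deterministic and codeterministic: two distinct edges leaving a common vertex $m$ (respectively entering $m$) must carry distinct labels, since for each transition $a$ there is at most one arc $m\xlongrightarrow{a} m'$ (respectively $m''\xlongrightarrow{a} m$).

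The heart of the argument is the independence-iff-common-square equivalences, all of which I would derive from Claim~\ref{squares}. For (TL2), consider two outgoing edges $m\xlongrightarrow{a} m_1$ and $m\xlongrightarrow{b} m_2$. If $(a,b)\in I$, then independence gives $(\bua\cup\abu)\cap(\bub\cup\bbu)=\varnothing$, while the firing conditions for $a$ and $b$ at $m$ give $\bua\cup\bub\subseteq m$ and $((\abu-\bua)\cup(\bbu-\bub))\cap m=\varnothing$. These are precisely the three hypotheses of Claim~\ref{squares}, so $m$ is the source of an $\{a,b\}$-square, and determinism identifies its two intermediate markings with $m_1$ and $m_2$; hence both edges lie in a common square. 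Conversely, if the two edges lie in a common square, that square is an $\{a,b\}$-square and $(a,b)\in I$ by the definition of squares of $X_N$.

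I expect the main obstacle to be (TL3), where the edges are $x\xlongrightarrow{a} v$ and $v\xlongrightarrow{b} y$ and one must produce a square having $x$ as its source. This is exactly the diamond (local confluence) property of independent transitions in Mazurkiewicz trace theory. The key computation is to show that $b$ is already enabled at $x$, i.e. $\bub\subseteq x$ and $(\bbu-\bub)\cap x=\varnothing$. I would deduce this from the facts that $b$ fires at $v=(x-\bua)\cup\abu$ and that independence keeps $\bub$ and $\bbu$ disjoint from $\bua$ and $\abu$, so that moving between $x$ and $v$ does not affect the tokens relevant to $b$; together with $\bua\subseteq x$ and $(\abu-\bua)\cap x=\varnothing$, this verifies the three hypotheses of Claim~\ref{squares} at $x$. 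The resulting $\{a,b\}$-square, by determinism, has edges $x\xlongrightarrow{a} v$ and $v\xlongrightarrow{b} y$ (and the parallel pair), so the two given edges are incident edges of one square. The converse direction is uniform: if $x\xlongrightarrow{a} v$ and $v\xlongrightarrow{b} y$ lie in a common square, then $v$ is an intermediate vertex of that square, so $a$ and $b$ are the two adjacent labels of an $\{a,b\}$-square, whence $(a,b)\in I$. Condition (TL4) is handled symmetrically, using codeterminism and the dual ``sink'' form of Claim~\ref{squares} to recover the source of the square from two edges entering $v$.

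Finally, I would record a short separate check for degenerate transitions (those with $\bua=\abu$), whose dual edges are loops. Since, by Claim~\ref{degenerate-transition}, a degenerate $a$-edge is exactly a loop at each marking $m$ with $\bua\subseteq m$, these edges interact with the square conditions in the way catalogued in the discussion following Claim~\ref{squares} (where one or both of $a,b$ may be degenerate and the square partly or fully collapses to a vertex). The same disjointness computations as above go through verbatim, so the equivalences extend to incident edges involving loops; this collapsed-square bookkeeping is the only place where a little extra care is required.
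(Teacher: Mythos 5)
Your proposal is correct and takes essentially the same approach as the paper's proof: distinct labels come from (co)determinism of $\oG_N$, and every independence-iff-common-square equivalence is reduced to Claim~\ref{squares} via the token bookkeeping that independence keeps the $a$-relevant and $b$-relevant places disjoint (your direct enabledness transfer for (TL3) is the paper's contradiction argument run forwards, and your ``dual sink form'' for (TL4) is exactly the paper's explicit reconstruction of the source $m:=m'-(\abu \cup \bbu)+(\bua\cup \bub)$ followed by an appeal to Claim~\ref{squares}). The only cosmetic differences are that the paper gets (TL1) directly from the definition of the squares of $X_N$ rather than from Remark~\ref{rem-TL}, and it needs no separate case analysis for degenerate (loop) transitions, since all the marking computations are uniform in whether $\bua=\abu$.
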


\begin{proof}
  From the definition of the squares of $X_N$ it follows that
  $\lambda_N$ satisfies (TL1).

  To prove (TL2), let $m$ be a vertex of $X_N$ with two outgoing edges
  $\ovr{mm_1},\ovr{mm_2}$. Since $\oG_N$ is deterministic,
  $\lambda_N(mm_1)\ne \lambda_N(mm_2)$, say $\lambda_N(mm_1)=a$ and
  $\lambda_N(mm_2)=b$.  We assert that $(a,b)\in I$ iff $\ovr{mm_1}$
  and $\ovr{mm_2}$ belong to a common square of $X_N$. If $\ovr{mm_1}$
  and $\ovr{mm_2}$ belong to a square of $X_N$, by the definition of
  the squares of $X_N$, we have
  $(a,b)=(\lambda_N(mm_1),\lambda_N(mm_2))\in I$. Conversely, suppose
  that $(a,b)\in I$, i.e.,
  $(\bua\cup \abu)\cap (\bub\cup \bbu)=\varnothing$. Since
  $m\xlongrightarrow{a} m_1$ and $m\xlongrightarrow{b} m_2$, we also
  have $\bua\cup \bub\subseteq m$ and
  $((\abu-\bua)\cup (\bbu-\bub))\cap m=\varnothing$ hold. From
  Claim~\ref{squares} we deduce that $m$ is the source of an
  $\{a,b\}$-square in which the two neighbors of $m$ are $m_1$ and
  $m_2$. This establishes (TL2).

  To prove (TL3), let $m\xlongrightarrow{a} m_1$ and
  $m_1\xlongrightarrow{b} m'$ be two distinct arcs. Then we have to
  prove that $(a,b)\in I$ iff $\ovr{mm_1}$ and $\ovr{m_1m'}$ belong to
  a square of $X_N$.  If $\ovr{mm_1}$ and $\ovr{m_1m'}$ belong to a
  square of $X_N$, then $\ovr{mm_1}$ and $\ovr{m_1m'}$ are consecutive
  edges of that square, and by the definition of the squares of $X_N$,
  this implies that $(a,b)\in I$.  Conversely, let $(a,b)\in I$, i.e.,
  $(\bua\cup \abu)\cap (\bub\cup \bbu)=\varnothing$. Set
  $m_2=m-\bub+\bbu$. Suppose by way of contradiction that
  $(m,m_1,m',m_2)$ is not a square of $X_N$. By Claim~\ref{squares}
  and since $(\bua\cup \abu)\cap (\bub\cup \bbu)=\varnothing$, either
  $\bua\cup \bub \nsubseteq m$ or
  $((\abu-\bua)\cup (\bbu-\bub))\cap m\ne \varnothing$. Since
  $\bua\subseteq m$ and $(\abu-\bua)\cap m=\varnothing$ because
  $m\xlongrightarrow{a} m_1$, either $\bub \nsubseteq m$ or
  $(\bbu-\bub)\cap m\ne \varnothing$. If $\bub \nsubseteq m$, since
  $(\bua\cup \abu)\cap \bub=\varnothing$ and $m_1=m-\bua+\abu$, we
  deduce that $\bub \nsubseteq m_1$, contradicting that
  $m_1\xlongrightarrow{b} m'$. On the other hand, if there exists
  $e\in (\bbu-\bub)\cap m$, since $(\bbu-\bub)\cap m_1=\varnothing$
  and $m_1=m-\bua+\abu$, we have $e\in \bua-\abu$. Hence
  $e\in \bua\cap \bbu$, contradicting that $(a,b) \in I$. This proves
  that $m_2$ is an admissible marking and that $(m,m_1,m',m_2)$ is a
  square of $X_N$, establishing (TL3).

  Finally, we  establish (TL4). Let $m'$ be a vertex of $X_N$ and
  let $m_1\xlongrightarrow{a} m'$ and $m_2\xlongrightarrow{b} m'$ be
  two distinct arcs. Since $\oG_N$ is codeterministic,
  $a \neq b$.  To prove (TL4) we have to show that $(a,b)\in I$ iff
  $\ovr{m_1m'}$ and $\ovr{m_2m'}$ belong to a common square of
  $X_N$. Again, one direction directly follows from the definition of
  the squares of $X_N$. Conversely, suppose that $(a,b)\in I$, i.e.,
  $(\bua\cup \abu)\cap (\bub\cup \bbu)=\varnothing$. Hence
  $(\abu-\bua)\cap (\bbu-\bub)=\varnothing$. Set
  $m:=m'-(\abu \cup \bbu)+(\bua\cup \bub)$.  We assert that
  $(m,m_1,m',m_2)$ is a square of $X_N$ with source $m$. Since the
  transitions $a$ and $b$ are independent, by Claim~\ref{squares} it
  suffices to show that $\bua\cup \bub\subseteq m$ and
  $((\abu-\bua)\cup (\bbu-\bub))\cap m=\varnothing$. Both these
  properties directly follow from the definition of $m$. This
  establishes (TL4).
\end{proof}

Theorem~\ref{Petri-to-special} now follows from Claims~\ref{XN2-sided}
and~\ref{XNlambda} and Theorem~\ref{special-trace-bis}.

\subsection{Proof of Theorem~\ref{Petri-to-special-isomorphism} and Proposition~\ref{Petri-to-special-isomoprhism2}}

Let $N=(S,\Sigma,F,m_{0})$ be a net system. As above, $FS$ denotes the
set of all firing sequences at $m_0$, i.e., all words
$\sigma\in \Sigma^*$ for which there exists a marking $m$ such that
$m_0\xlongrightarrow{\sigma}m$. The trace alphabet associated to $N$
is the pair $M=(\Sigma,I)$ where $(a,b)\in I$ iff
$(\abu\cup \bua)\cap (\bbu \cup \bub)=\varnothing$. Recall that
$\FT(N)$ is the set of \emph{firing traces} of $N$, i.e., the set of
traces of the form $\trsigma$ for $\sigma\in FS$.  Let
${\cE}_N=(E,\le,\#,\lambda)$ be the $M$-labeled event structure
unfolding of a net system $N$. Recall that the set events of ${\cE}_N$
is the set $\PFT(N)$ of prime firing traces of $N$ and the label of an
event $\trsigma$ is $\lambda(\trsigma)=\last(\sigma)$.

Let also ${\cE}_{X_N}=(E',\le',\#',\tlambda_N)$ be the
$\Sigma$-labeled event structure whose domain is the principal filter
$\cF_{\tildo}(\tm_0,\tX_N^{(1)})$ of the universal cover
$\tbX_N=(\tX_N,\tildo,\tlambda_N)$ of the special cube complex
$(X_N,o,\lambda_N)$ of $N$. Let $\varphi: \tX_N\mapsto X_N$ denote a
covering map.  Let $G({\cE}_{X_N})$ denotes the median graph of
${\cE}_{X_N}$. From Theorem~\ref{Petri-to-special} and
Proposition~\ref{special-trace1} it follows that $\tlambda_N$ is a
trace labeling of the event structure ${\cE}_{X_N}$. By
Corollary~\ref{geodesictrace} there exists a bijection between the set
of configurations of $\cE_{X_N}$ and the set $\GT(\cE_{X_N})$ of
geodesic traces of $\cE_{X_N}$. By Corollary~\ref{geodesicprimetrace}
there exists a bijection $H\mapsto \trace{\sigma_v}$ between the set
of hyperplanes (events) of $\cE_{X_N}$ and the set $\PGT(\cE_{X_N})$
of prime geodesic traces of $\cE_{X_N}$.

The next claim establishes a bijection between geodesic traces and
firing traces.

\begin{claim}\label{faringtrace}
  Any geodesic trace $\trace{\sigma_{\tm}}$ of ${\cE}_{X_N}$ is a
  firing trace of $N$. Conversely, for any firing trace $\trsigma$
  there exists a geodesic trace $\trace{\sigma_{\tm}}$ such that
  $\trsigma=\trace{\sigma_{\tm}}$. In particular, there is a bijection
  between prime geodesic traces of $\cE_{X_N}$ and the prime firing
  traces of $N$.
\end{claim}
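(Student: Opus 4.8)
The plan is to set up a dictionary between shortest $(\tm_0,\tm)$-paths in the filter $\cD=\cF_{\tildo}(\tm_0,\tX_N^{(1)})=G(\cE_{X_N})$ and firing sequences of $N$ at $m_0$, and then pass to traces. The covering map $\varphi\colon\tX_N\to X_N$ preserves directions and labels of edges, while $X_N$ was built so that its directed labeled edges are exactly the transitions $m\xlongrightarrow{a}m'$ of $\oG_N$; since $\oG_N$ is deterministic and codeterministic, projecting and lifting edge-paths is forced and translates geodesic words and firing sequences letter by letter.

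For the first inclusion I would take a geodesic trace $\trace{\sigma_{\tm}}$ and a shortest $(\tm_0,\tm)$-path $\pi=(\tm_0=\tv_0,\tv_1,\ldots,\tv_k=\tm)$ realizing it in the convex filter $\cD$ (convexity by Lemma~\ref{directed-median}(1)). As $\tm_0$ is the basepoint and, by Lemma~\ref{directed-median}(2), $\prec_{\tildo}$ restricted to $\cD$ is the basepoint order $\le_{\tm_0}$, each edge $\tv_{i-1}\tv_i$ of $\pi$ is $\tildo$-oriented from $\tv_{i-1}$ to $\tv_i$. Putting $m_i:=\varphi(\tv_i)$ and $a_i:=\tlambda_N(\tv_{i-1}\tv_i)=\lambda_N(m_{i-1}m_i)$, preservation of orientations and labels under $\varphi$ yields $m_0\xlongrightarrow{a_1}m_1\xlongrightarrow{a_2}\cdots\xlongrightarrow{a_k}m_k$ in $\oG_N$. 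Hence $\sigma(\pi)=a_1\cdots a_k$ is a firing sequence and $\trace{\sigma_{\tm}}=\trace{\sigma(\pi)}\in\FT(N)$.

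For the converse I would start from a firing trace $\trsigma$ with $\sigma=a_1\cdots a_k\in FS$ and its walk $m_0\xlongrightarrow{a_1}m_1\xlongrightarrow{a_2}\cdots\xlongrightarrow{a_k}m_k$ in $\oG_N$, and lift it to $\tX_N$ from $\tm_0$. Since $\varphi$ restricts to a bijection $\St(\tv,\tX_N)\to\St(\varphi(\tv),X_N)$ on stars, each step $m_{i-1}\xlongrightarrow{a_i}m_i$ lifts uniquely, given the lift $\tv_{i-1}$, to a $\tildo$-directed $a_i$-edge $\tv_{i-1}\to\tv_i$, producing a directed walk $\tm_0=\tv_0\to\tv_1\to\cdots\to\tv_k$. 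As every edge is $\tildo$-directed we get $\tm_0=\tv_0\prec_{\tildo}\tv_1\prec_{\tildo}\cdots\prec_{\tildo}\tv_k$, so the walk lies in $\cD$ and is strictly increasing; in particular no vertex repeats and it is a directed \emph{path}, hence a shortest $(\tm_0,\tv_k)$-path of $\tX_N^{(1)}$ by Lemma~\ref{directedpath} and, by convexity of $\cD$, a shortest path of $G(\cE_{X_N})$. Thus $\sigma$ is a geodesic word and, by Lemma~\ref{trace2}, $\trsigma=\trace{\sigma}=\trace{\sigma_{\tv_k}}$. Together with the first inclusion this gives the equality $\GT(\cE_{X_N})=\FT(N)$ of sets of traces.

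Finally, primeness of a trace $\trsigma$ is intrinsic (that $\sigma$ is non-null and $\last(\sigma')=\last(\sigma)$ for every $\sigma'\in\trsigma$), so the equality $\GT(\cE_{X_N})=\FT(N)$ restricts to $\PGT(\cE_{X_N})=\PFT(N)$, yielding the asserted bijection. I expect the delicate point to be the assertion that the lifted directed walk never repeats a vertex: this rests on $\tX_N^{(1)}$ being a median graph with the acyclic orientation induced by $\prec_{\tildo}$ (no loops, no directed cycles), and it is precisely here that degenerate transitions with $\bua=\abu$ must be accounted for, since their loops in $X_N$ unroll into genuine distinct edges of the universal cover $\tX_N$.
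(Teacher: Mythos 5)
Your proposal is correct, and it rests on the same core mechanism as the paper's proof: the covering map $\varphi$ as a dictionary between directed paths issuing from $\tm_0$ in the filter and firing sequences at $m_0$, combined with Lemma~\ref{directedpath} (directed paths are geodesics) and Lemma~\ref{trace2}. Where you genuinely diverge is in the trace-level bookkeeping. The paper establishes the two inclusions of traces word by word: inside the proof it re-derives that swapping adjacent independent letters of a firing sequence yields another firing sequence whose lift differs by an elementary homotopy across a square, and, symmetrically, that every elementary homotopy between shortest paths (Lemma~\ref{homotopic}) projects to such a swap of transitions; primeness is then transferred through prime intervals via Lemma~\ref{geoprimetrace1}. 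You bypass all of this by exploiting that a firing trace and a geodesic trace are both $\sim_I$-equivalence classes determined by a single witness --- one firing sequence, respectively one geodesic word --- so producing one witness on each side (project a shortest path; lift a firing sequence) already forces equality of the traces, and primeness transfers for free since it is an intrinsic property of the trace. This buys a much shorter argument; what the paper's longer version buys is an explicit square-by-square correspondence between homotopies in $\tX_N$ and commutations of transitions of $N$, independent of the equivalence-class formalism. One point where you are in fact more careful than the paper: you justify why the lifted walk is a simple path (strict increase with respect to $\prec_{\tildo}$, with loops of degenerate transitions unrolling into genuine edges of $\tX_N$), whereas the paper simply asserts that the lift of $\pi(\sigma)$ is a directed path.
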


\begin{proof}
  Each firing sequence $\sigma$ of $N$ corresponds to a directed path
  in the directed marking graph $\oG_N$: if
  $\sigma=a_1\ldots a_k\in FS$ is a firing sequence, then there exists
  reachable markings $m_1,\ldots,m_{k+1}$ such that
  $\pi(\sigma):=m_0\xlongrightarrow{a_1}m_1\xlongrightarrow{a_2}\ldots
  \xlongrightarrow{a_{k-1}}m_k\xlongrightarrow{a_k}m_{k+1}$ is a
  directed $(m_0,m_{k+1})$-path of $\oG_N$. Since $G_N$ is the
  1-skeleton of the special cube complex $X_N$, the directed universal
  cover $\tbX_N$ of $X_N$ contains a directed path $\tpi(\sigma)$ from
  $\tm_0$ to $\tm_{k+1}$ whose image under the covering map $\varphi$
  is $\pi(\sigma)$. By Lemma~\ref{directedpath}, $\tpi(\sigma)$ is a
  shortest $(\tm_0,\tm_{k+1})$-path in the 1-skeleton of $\tbX_N$. Let
  $\sigma'$ be another firing sequence such that there exists
  $(a_{i},a_{i+1})\in I$ and
  $\sigma'=a_1\ldots a_{i-1}a_{i+1}a_{i}a_{i+2}\ldots a_k$. Since the
  transitions $a_i$ and $a_{i+1}$ are independent, there exists
  $m'_{i+1}$ such that
  $m_i\xlongrightarrow{a_{i+1}}m'_{i+1}\xlongrightarrow{a_i}m_{i+2}$
  and $(m_i,m_{i+1},m_{i+2},m'_{i+1})$ is a square of $X_N$.  Then
  $\sigma'$ corresponds to the directed path
  $\pi(\sigma')=m_0\xlongrightarrow{a_1} \ldots
  \xlongrightarrow{a_{i-1}}m_i\xlongrightarrow{a_{i+1}}m'_{i+1}
  \xlongrightarrow{a_i}m_{i+2}\xlongrightarrow{a_{i+2}}\ldots
  \xlongrightarrow{a_k}m_{k+1}$ of $\oG_N$. Analogously to
  $\pi(\sigma)$, $\tbX_N$ also contains a directed path
  $\tpi(\sigma')$ with origin $\tm_0$ whose $\varphi$-image is
  $\pi(\sigma')$. Moreover, $(\tm_i,\tm_{i+1},\tm_{i+2},\tm'_{i+1})$
  is a square of $\tbX_N$, thus $\tpi(\sigma')$ can be obtained from
  $\tpi(\sigma)$ by an elementary homotopy with respect to this
  square. Consequently, the firing trace $\trsigma$ of $N$ is
  contained in the geodesic trace $\trace{\sigma_{\tm_{k+1}}}$ of
  ${\cE}_{X_N}$.

  It remains to prove the converse inclusion, i.e., that any geodesic
  trace $\trace{\sigma_{\tm}}$ of ${\cE}_{X_N}$ is contained in a
  firing trace. Let $\tpi$ be a shortest $(\tm_0,\tm)$-path in the
  graph $G({\cE}_{X_N})$. Then the edges of $\tpi$ are directed from
  $\tm_0$ to $\tm$. The image of $\tpi$ by the covering map $\varphi$
  is a directed $(m_0,m)$-path $\pi$ in the graph $\oG_N$. This path
  is not necessarily shortest or simple, however the words defined by
  the labels of edges of $\tpi$ and $\pi$ coincide:
  $\sigma(\tpi)=\sigma(\pi)$. Since $\pi$ is a directed $(m_0,m)$-path
  in the marking graph, necessarily $\sigma(\pi)$ is a firing
  sequence, yielding that $\sigma(\tpi)\in FS$. By Lemma~\ref{trace2},
  the geodesic trace $\trace{\sigma(\tpi)}$ consists exactly of all
  $\sigma(\tpi')$ such that $\tpi'$ is a shortest
  $(\tm_0,\tm)$-path. By Lemma~\ref{homotopic}, the paths $\tpi$ and
  $\tpi'$ are homotopic, i.e., there exists a finite sequence
  $\tpi=:\pi_1,\tpi_2,\ldots,\tpi_{k-1},\tpi_k:=\tpi'$ of shortest
  $(\tm_0,\tm)$-paths such that for any $i=1,\ldots,k-1$ the paths
  $\tpi_i$ and $\tpi_{i+1}$ differ only in a square
  $\widetilde{Q}_i=(\tm_{j-1},\tm_{j},\tm_{j+1},\tm'_{j})$ of
  $\tbX_N$. Let $\pi_i$ denote the image of the path $\tpi_i$ under
  the covering map $\varphi$. Let also
  $Q_i:=(m_{j-1},m_{j},m_{j+1},m'_{j})=\varphi(\widetilde{Q}_i)$. Each
  $\pi_i$ is a directed $(m_0,m)$-path of $\oG_N$ and each $Q_i$ is a
  square of $X_N$. Moreover, $\sigma(\tpi_i)=\sigma(\pi_i)$ for each
  $i$ and the edges of the squares $\widetilde{Q}_i$ and $Q_i$ are
  labeled in the same way. Each $\pi_{i+1}$ is obtained from $\pi_i$
  by an elementary homotopy with respect to the square $Q_i$. From the
  definition of the squares of $X_N$ it follows that there exists
  $(a_j,a_{j+1})\in I$ such that
  $m_{j-1} \xlongrightarrow{a_{j}} m_j \xlongrightarrow{a_{j+1}}
  m_{j+1}$ and
  $m_{j-1} \xlongrightarrow{a_{j+1}} m'_j \xlongrightarrow{a_{j}}
  m_{j+1}$. But this implies that $\sigma(\pi_{i+1})$ is obtained from
  $\sigma(\pi_i)$ by exchanging $a_{j}$ with $a_{j+1}$, yielding that
  $\sigma(\pi_{i+1})$ belongs to the trace of $\sigma(\pi_{i})$. Since
  all $\sigma(\pi_i)$ are firing sequences of $FS$, they all belong to
  the firing trace of $\sigma(\pi)$. Since $\sigma(\tpi)=\sigma(\pi)$,
  we conclude that the geodesic trace of $\sigma(\tpi)$ is included in
  the firing trace of $\sigma(\pi)$. This concludes the proof of the
  equality between geodesic traces and firing traces.

  Observe that if $\trace{\sigma_{\tm}}$ is a prime geodesic trace,
  then the interval $I(\tm_0,\tm)$ is prime by
  Lemma~\ref{geoprimetrace1}. Since each
  $\sigma'_{\tm} \in \trace{\sigma_{\tm}}$ corresponds to a shortest
  $(\tm_0,\tm_{k})$-path, all such paths share the same last
  edge. Consequently, all the words in $\trace{\sigma_{\tm}}$ have the
  same last letter, and thus the corresponding firing trace is prime.
  Conversely, for any prime firing trace $\trsigma$, let $\tm$ be the
  vertex of $\tbX_N$ such that $\trsigma =
  \trace{\sigma_{\tm}}$. Since $\trsigma$ is prime, all words in
  $\trsigma = \trace{\sigma_{\tm}}$ have the same last letter. Since
  two incoming arcs of $\tm$ have different labels, this implies that
  $\tm$ has only one incoming arc, i.e., the interval $I(\tm_0,\tm)$
  is prime.  By Lemma~\ref{geoprimetrace1}, the geodesic trace
  $\trace{\sigma_{\tm}}$ is prime.
\end{proof}

Claim~\ref{faringtrace} establishes a bijection between prime geodesic
traces of $\cE_{X_N}$ and prime firing traces of $N$. Consequently,
there is a bijection between the hyperplanes (events) of ${\cE}_{X_N}$
and the hyperplanes (events) of $\cE_N$ and this bijection preserves
labels.  Therefore, to establish that the event structures ${\cE}_N$
and ${\cE}_{X_N}$ are isomorphic it remains to show that this
bijection preserves the precedence and the conflict relations.  By
Lemma~\ref{geoprimetrace3}, for two hyperplanes $H',H$ of
${\cE}_{X_N}$ with prime geodesic traces $\trace{\sigma_{\tu}}$ and
$\trace{\sigma_{\tv}}$, respectively, we have $H'\le H$ iff
$\trace{\sigma_{\tu}}\sqsubseteq\trace{\sigma_{\tv}}$. On the other
hand, for ${\cE}_N$ the precedence relation $\le$ is the prefix
relation $\sqsubseteq$. Therefore the precedence relation is
preserved.

Finally, we show that the conflict relation is also preserved.  Taking
into account the bijection between firing traces and geodesic traces
from Claim~\ref{faringtrace}, the definition of the conflict relation
in $\cE_N$ can be rephrased in the following way: two prime firing
traces $\trace{\sigma_{\tu}}$ and $\trace{\sigma_{\tv}}$ are in
conflict iff there does not exist a firing trace
$\trace{\sigma_{\tw}}$ such that $\trace{\sigma_{\tu}}$ and
$\trace{\sigma_{\tv}}$ are prefixes of $\trace{\sigma_{\tw}}$. By
Lemma~\ref{trace3},
$\trace{\sigma_{\tu}}\sqsubseteq \trace{\sigma_{\tw}}$ iff
$\tu\in I(\tm_0,\tw)$.  Consequently, two prime firing traces
$\trace{\sigma_{\tu}}$ and $\trace{\sigma_{\tv}}$ are in conflict iff
there does not exist a vertex $\tw$ such that
$\tu,\tv\in I(\tm_0,\tw)$.  By Lemma~\ref{trace4}, there does not
exists a vertex $\tw$ such that $\tu,\tv\in I(\tm_0,\tw)$ iff the
prime geodesic traces $\trace{\sigma_{\tu}}$ and
$\trace{\sigma_{\tv}}$ are in conflict in ${\cE}_{X_N}$.  This proves
that the conflict relation is preserved and finishes the proof of
Theorem~\ref{Petri-to-special-isomorphism}.

\medskip

We conclude with the proof of
Proposition~\ref{Petri-to-special-isomoprhism2}.  Consider a finite
virtually special cube complex $X$ with an admissible orientation $o'$
and let $Y$ be a finite special cover of $X$ and $o$ be the
orientation of $Y$ lifted from $o'$. Since the directed universal
cover $(\tX,\tildo')$ of $(X,o')$ is isomorphic to the directed
universal cover $(\tY,\tildo)$ of $(Y,o)$, it is enough to prove
Proposition~\ref{Petri-to-special-isomoprhism2} for the finite
directed special complex $(Y,o)$.  By Theorem~\ref{special-trace-bis},
there exists a trace labeling $\lambda$ of $(Y,o)$. By
Proposition~\ref{special-trace1}, for any vertex $\tv_0$ of $\tY$, the
lift $\tlambda$ of $\lambda$ is a trace-regular labeling of the
principal filter
$(\cF_{\tildo}(\tv,\tY^{(1)}),\prec_{\tildo})$. Hence,
$(\cF_{\tildo}(\tv,\tY^{(1)}),\prec_{\tildo})$ is the domain of a
trace-regular event structure $\cE$. By Thiagarajan's
Theorem~\ref{th:regular_trace}, there exists a net system $N$ such
that $\cE$ is isomorphic to $\cE_N$. This ends the proof of
Proposition~\ref{Petri-to-special-isomoprhism2}.

\section{The MSO Theory of Net Systems and of their
  Domains}\label{MSO-domain}

\subsection{The Results}

Let ${\cE}=(E,\le, \#,\lambda)$ be a trace-regular event structure and
let $\cD(\cE)$ denote the domain of $\cE$. Let $G(\cE)$ denote the
undirected covering median graph of $\cD(\cE)$ and
$\oG(\cE)=(G(\cE),o)$ denote the directed graph of $\cD(\cE)$. First
we characterize the trace event structures for which the MSO theories
of graphs $G(\cE)$ and $\oG(\cE)$ are decidable. In the following
theorem, we consider two MSO theories of graphs; informally, $\MSO_1$
allows only to quantify on vertices of the graph while $\MSO_2$ allows
to quantify on vertices and edges (see~Subsection~\ref{ssec:MSO} for
precise definitions).

\begin{theorem}\label{mso-graph}
  For a trace-regular event structure ${\cE}=(E,\le, \#,\lambda)$,
  conditions (1)-(6) are equivalent:
  \begin{enumerate}
  \item $\MSO(\oG(\cE))$ is decidable;
  \item $\MSO_1(G(\cE))$ is decidable;
  \item $\MSO_2(G(\cE))$ is decidable:
  \item $G(\cE)$ has finite treewidth;
  \item the clusters of $G(\cE)$ have bounded diameter;
  \item $\oG(\cE)$ is context-free.
  \end{enumerate}
\end{theorem}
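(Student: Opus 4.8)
The plan is to prove the six conditions equivalent through a single cycle $(1)\Rightarrow(2)\Rightarrow(3)\Rightarrow(4)\Rightarrow(5)\Rightarrow(6)\Rightarrow(1)$, splitting the work into soft logical interpretations and a combinatorial core governed by the median-graph geometry of $G(\cE)$. Throughout I will lean on two structural facts that hold because $\cE$ is trace-regular. First, a trace-regular event structure is regular and hence has finite degree, so $G(\cE)$ has \emph{bounded degree}. Second, by Corollary~\ref{cor-tracereg-implies-stronglyreg} together with Proposition~\ref{regular-finite-npc}, $\cD(\cE)$ is a principal filter of the universal cover of a finite (special) complex, so by Lemma~\ref{regular-cube} there are only finitely many isomorphism types of principal filters of $G(\cE)$. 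This strong regularity is exactly what will let me upgrade local boundedness into a global finite presentation.

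For the logical steps, $(1)\Rightarrow(2)$ is immediate: the undirected adjacency of $G(\cE)$ is first-order definable from the oriented edge relation of $\oG(\cE)$ by symmetrization, so every $\MSO_1$-sentence over $G(\cE)$ translates into an $\MSO$-sentence over $\oG(\cE)$, and decidability transfers. For $(2)\Rightarrow(3)$ I will invoke Courcelle's result that on graphs of bounded degree $\MSO_1$ and $\MSO_2$ have the same expressive power: a proper edge-colouring with a bounded number of colours (which exists and is $\MSO_1$-guessable since the degree is bounded) encodes each edge set as a bounded tuple of vertex sets, one per colour class, so edge-set quantification is simulated by vertex-set quantification and decidability of $\MSO_1(G(\cE))$ yields decidability of $\MSO_2(G(\cE))$. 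For $(3)\Rightarrow(4)$ I will apply Seese's theorem: if $G(\cE)$ had infinite treewidth then, by the excluded-grid theorem, it would contain arbitrarily large grids as minors, these grids can be uniformly $\MSO_2$-interpreted, and the undecidable $\MSO_2$ theory of grids would reduce to $\MSO_2(G(\cE))$, contradicting $(3)$. Finally $(6)\Rightarrow(1)$ is the theorem of M\"uller and Schupp~\cite{MullerSchupp} applied to the directed, edge-labelled graph $\oG(\cE)$: a context-free graph has decidable monadic second-order theory, the context-free notion being taken in the labelled/oriented sense so that the orientation and the labelling $\lambda$ belong to the decidable structure.

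The geometric heart is the pair $(4)\Rightarrow(5)$ and $(5)\Rightarrow(6)$, where the CAT(0) cube-complex structure enters. For $(4)\Rightarrow(5)$ I will argue by contraposition: if the clusters of $G(\cE)$ have unbounded diameter, then a cluster of diameter at least $n$ carries a family of hyperplanes whose crossing pattern exhibits a grid-like subgraph, producing finite subgraphs of treewidth $\Omega(n)$; as $n\to\infty$ this gives $G(\cE)$ infinite treewidth, contradicting $(4)$. For $(5)\Rightarrow(6)$ I will combine the bounded cluster diameter with bounded degree and with the finitely many filter types: bounded cluster diameter means $G(\cE)$ decomposes, along its hyperplanes and their carriers, into boundedly-large pieces glued across boundedly-small separators in a tree-like fashion, which is precisely the feature grids lack. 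Reading this decomposition against the finitely many isomorphism types of principal filters, I will check that the connected components of $G(\cE)\setminus B_n(v_0)$ (for the basepoint $v_0$, the empty configuration) realize, up to rooted isomorphism, only finitely many \emph{end-types} as $n$ varies; by the M\"uller--Schupp characterization this is equivalent to $\oG(\cE)$ being context-free.

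The step I expect to be the main obstacle is $(5)\Rightarrow(6)$. Strong regularity by itself does not force context-freeness: the standard grid already has only finitely many filter types, yet it is not context-free and has infinite treewidth. Thus the entire content of this implication is to show that the bounded-cluster-diameter hypothesis is exactly what upgrades ``finitely many filter types'' to ``finitely many end-types.'' Making this quantitative requires a careful analysis of how a ball $B_n(v_0)$ meets the cluster/hyperplane decomposition, controlling the interface between the ball and each unbounded component so that only finitely many ends arise up to isomorphism; this is the place where the gatedness and convexity of halfspaces and carriers (Theorem~\ref{gated-carriers}) and the Helly property of pairwise-crossing hyperplanes (Proposition~\ref{Helly-CAT0}) must be used with explicit bounds rather than merely qualitatively.
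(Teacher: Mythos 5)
Your cycle and your citations for the soft steps match the paper exactly: (1)$\Rightarrow$(2) by symmetrization, (2)$\Rightarrow$(3) by Courcelle's theorem for bounded degree, (3)$\Rightarrow$(4) by Seese's theorem, and (6)$\Rightarrow$(1) by M\"uller--Schupp. The genuine gap is in (4)$\Rightarrow$(5). You propose to extract, from a cluster of diameter $n$, ``a family of hyperplanes whose crossing pattern exhibits a grid-like subgraph.'' This mechanism cannot work. Crossing hyperplanes produce isometric (flat) grids, and the existence of large such configurations is equivalent to non-hyperbolicity of the median graph (Lemma~\ref{hyp_median}, Theorem~\ref{hyp_median_hagen}, Lemma~\ref{lem-isometric-sq}), not to large treewidth. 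The paper's own counterexample shows the two are independent: the graph $G(\cE_Z)$ of Section~\ref{sec-cex} is hyperbolic --- every isometric grid in it is at most $4\times 4$, so every bipartite family of pairwise-crossing hyperplanes is bounded --- yet its clusters have unbounded diameter and its treewidth is infinite. So a large-diameter cluster can never be converted into a large crossing configuration; the grid witnessing large treewidth must be built as a \emph{minor}, by contraction. That is exactly what the paper does: given two vertices $u,v$ far apart in a common cluster, it uses the quadrangle condition to build zigzag paths one sphere up and one sphere down (Claims~\ref{zigzag1} and~\ref{zigzag2}), stacks these ``fences'' level by level down toward the basepoint, and contracts subpaths of the lower paths to obtain a half-grid minor whose size grows with $d(u,v)$. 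Your sketch replaces this contraction argument by a crossing argument that is false in general.

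For (5)$\Rightarrow$(6) you correctly identify the crux --- finitely many isomorphism types of principal filters alone do not give context-freeness (the standard grid is the witness), so bounded cluster diameter must do real work --- but your outline stops where the work begins. The device the paper uses, and which is absent from your plan, is the ``recent past'' of a cluster $C$: the finite graph $H(C)$ spanned inside the median closure of $C$ between the meet $m_C$ and $C$, with edges labeled by $\lambda$ and vertices labeled by their filter type. Bounded cluster diameter plus bounded degree makes $H(C)$ of constant size (Claim~\ref{immediate-past}, via convexity and gatedness as in Theorem~\ref{gated-carriers}), so there are finitely many recent-past types; then the \emph{determinism of the trace labeling} forces any isomorphism between two recent pasts to coincide with the canonical isomorphism of labeled principal filters $\cF(m_C)\to\cF(m_{C'})$ (Claim~\ref{recent-past-isomorphism}), and this isomorphism is finally checked to restrict to an end-isomorphism $\Upsilon(C)\to\Upsilon(C')$ (Claim~\ref{f-end-isomorphism}). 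Without this determinism step, ``boundedly-large pieces glued across boundedly-small separators'' does not by itself yield finitely many end-isomorphism classes, so your version of this implication remains a plan rather than a proof.
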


If instead of the domain of an event structure unfolding of a net
system $N$, we consider the 1-skeleton of the universal cover of the
special cube complex $X_N$, we obtain the following result:

\begin{proposition}\label{mso-graph-complex}
  Let $N=(S,\Sigma,F,m_{0})$ be a net system, $X_N$ be the special
  cube complex of $N$, and $\oG(\tbX_N)$ be the 1-skeleton of the
  directed labeled universal cover of $X_N$.  Then conditions (1)-(4)
  are equivalent:
  \begin{enumerate}
  \item $\MSO(\oG(\tbX_N))$ is decidable;
  \item $\MSO_2(G(\tbX_N))$ is decidable:
  \item $G(\tbX_N)$ has finite treewidth;
  \item $\oG(\tbX_N)$ is context-free.
  \end{enumerate}
\end{proposition}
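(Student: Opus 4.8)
The plan is to prove Proposition~\ref{mso-graph-complex} along the same route as Theorem~\ref{mso-graph}, the only change being that the single principal filter $G(\cE)$ is replaced by the entire $1$-skeleton of the universal cover. First I would record the structural input that makes the argument go through. By Theorem~\ref{CAT(0)-median}, $G(\tbX_N)=\tX_N^{(1)}$ is a median graph, and with the orientation $\tildo$ it is a directed median graph. By Lemma~\ref{regular-cube} applied to the finite special complex $X_N$, the directed colored graph $(\tX_N^{(1)},\tildo,\tlambda_N)$ has at most $|V(X_N)|$ isomorphism types of directed colored principal filters. This finite-index regularity is exactly the hypothesis that drives the proof of Theorem~\ref{mso-graph}; the difference is only that there $G(\cE)$ is itself a principal filter, whereas here we analyze the whole cover. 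Note that $G(\cE_N)=\cF_{\tildo}(\tm_0,\tX_N^{(1)})$ is a convex subgraph of $G(\tbX_N)$ by Theorem~\ref{Petri-to-special-isomorphism}, but the proof cannot simply be reduced to Theorem~\ref{mso-graph} for $G(\cE_N)$: the filter $\cF_{\tildo}(\tm_0,\cdot)$ only meets lifts of the \emph{reachable} markings of $N$, whereas $G(\tbX_N)$ also contains lifts of the non-reachable markings present in $G_N$. This is precisely why Lemma~\ref{regular-cube}, which bounds the number of filter types over \emph{all} vertices of the cover, is the right tool here.

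Next I would establish the four conditions as a cycle, using standard graph logic together with the regularity above. For $(3)\Rightarrow(1)$ and $(3)\Rightarrow(2)$ I would invoke Courcelle's theorem: a relational structure of bounded treewidth has decidable MSO theory, and this applies both to the directed labeled structure $\oG(\tbX_N)$ and to the undirected graph $G(\tbX_N)$. For $(1)\Rightarrow(2)$, the undirected unlabeled graph $G(\tbX_N)$ is obtained from $\oG(\tbX_N)$ by a trivial MSO-transduction forgetting the orientation and the labels, so decidability of $\MSO(\oG(\tbX_N))$ transfers to $\MSO_2(G(\tbX_N))$. For $(2)\Rightarrow(3)$ I would use Seese's theorem in contrapositive form: if $G(\tbX_N)$ had infinite treewidth, the excluded-grid theorem would yield arbitrarily large grid minors, and MSO-interpreting these grids would force $\MSO_2(G(\tbX_N))$ to be undecidable; this closes the loop $(1)\Leftrightarrow(2)\Leftrightarrow(3)$. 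Finally, for $(3)\Leftrightarrow(4)$ I would use the theorem of M\"uller and Schupp: context-free graphs have bounded treewidth, and conversely a median graph that arises as the universal cover of a finite complex and has bounded treewidth is context-free, the finitely many isomorphism types of principal filters supplied by Lemma~\ref{regular-cube} providing the finitely many end-cone types demanded by the M\"uller--Schupp characterization.

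The main obstacle, exactly as in Theorem~\ref{mso-graph}, is the two directions that convert the finite-index regularity into a global combinatorial structure: the converse-Seese step $(2)\Rightarrow(3)$ and the step $(3)\Rightarrow(4)$. For both I would pass through condition~(5) of Theorem~\ref{mso-graph}, namely that the clusters of the graph have bounded diameter, which for directed median graphs is equivalent to finite treewidth and is a purely local condition. Since $G(\tbX_N)$ and its finite quotient $X_N$ have identical local structure, and since the cluster of any vertex is a convex subgraph lying inside a single principal filter, the cluster-diameter analysis carried out for $G(\cE)$ in the proof of Theorem~\ref{mso-graph} applies verbatim to $G(\tbX_N)$: bounded cluster diameter (equivalently, finite treewidth) lets one assemble a M\"uller--Schupp tree decomposition out of the finitely many filter types, while unbounded cluster diameter produces the grid minors needed for undecidability. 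The one point requiring care is that the clusters of $G(\tbX_N)$ are covered by the \emph{full} list of filter types from Lemma~\ref{regular-cube}, and not only by those types occurring inside $\cF_{\tildo}(\tm_0,\cdot)$; once this is granted, Proposition~\ref{mso-graph-complex} follows by the same computation as Theorem~\ref{mso-graph}.
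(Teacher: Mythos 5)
Your plan contains two genuine gaps, and it also misses the short argument that actually drives this proposition. The first gap is the step $(3)\Rightarrow(1)$ and $(3)\Rightarrow(2)$: the claim that ``a relational structure of bounded treewidth has decidable MSO theory'' is false for infinite structures, and the paper says so explicitly right after Seese's Theorem~\ref{th-seese} -- one can construct trees (treewidth $1$) with undecidable $\MSO_2$ theory. Courcelle's results give decidability for the \emph{class} of graphs of treewidth $\le k$, and (Theorem~\ref{th-courcelle}) that $\MSO_1$-decidability implies $\MSO_2$-decidability for a single bounded-degree graph; neither gives decidability of an arbitrary infinite bounded-treewidth graph such as $G(\tbX_N)$. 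For an infinite graph the only route to decidability runs through context-freeness and M\"uller--Schupp (Theorem~\ref{th-mullerschupp}, which you misquote: it says context-free $\Rightarrow$ decidable $\MSO$, not context-free $\Rightarrow$ bounded treewidth), so the cycle must be organized as $(3)\Rightarrow(4)\Rightarrow(1)\Rightarrow(2)\Rightarrow(3)$; as stated, your cycle has an unsound edge. The second gap is the claim that the cluster analysis proving $(5)\Rightarrow(6)$ in Theorem~\ref{mso-graph} applies ``verbatim'' to $G(\tbX_N)$. That proof is anchored in the structure of a \emph{pointed domain}: the edge orientation coincides with the basepoint order $\le_{v_0}$, each cluster $C$ has a meet $m_C$ in the median semilattice $(\cD(\cE),\subseteq)$, the end $\Upsilon(C)$ lies inside the single labeled principal filter $\cF(m_C)$, and the labeling is deterministic with respect to that order. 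None of this transfers automatically to $\oG(\tbX_N)$: the orientation $\tildo$ of the full cover is in general not induced by any basepoint (for instance, when $N$ has a degenerated transition $a$, the cover contains a bi-infinite directed $a$-labeled path, and every vertex of the cover of Example~\ref{ex-petrinet-XN} has four incoming edges), ends and clusters are taken with respect to a basepoint whose order disagrees with $\tildo$, and determinism with respect to the basepoint order can fail, since (TL3) allows two same-labeled edges at a vertex, one incoming and one outgoing in $\tildo$, both pointing away from the basepoint. Your closing caveat about ``the full list of filter types'' does not touch this; the recent-past machinery would have to be rebuilt, not copied.

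The missing idea is the one the paper uses, and it explains why the proposition is stated for the full universal cover rather than a principal filter. The universal cover of a finite complex is a \emph{normal} cover, so its deck transformation group (isomorphic to $\pi_1(X_N)$) acts on the vertices of $\tX_N$ with at most $|V(X_N)|$ orbits; hence $\Aut(\oG(\tbX_N))$ has finitely many orbits. This is exactly the hypothesis of the Kuske--Lohrey Theorem~\ref{th-kuskelohrey}, which immediately yields the equivalence of (1), (3) and (4); condition (2) is then tied in by $(1)\Rightarrow \MSO_1\text{-decidability}\Rightarrow(2)$ (Theorem~\ref{th-courcelle}, using bounded degree) and $(2)\Rightarrow(3)$ (Theorem~\ref{th-seese}). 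The group-theoretic symmetry -- finitely many automorphism orbits -- is the crucial ingredient: it is what fails for $\oG(\cE)$ (forcing the bespoke proof of Theorem~\ref{mso-graph}) and what holds for $\oG(\tbX_N)$. Your proposal never invokes it, relying instead on Lemma~\ref{regular-cube}, which gives finitely many isomorphism types of filters but not the transitive-on-fibers group action that Kuske--Lohrey requires.
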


The proof of this result essentially follows from the result by Kuske
and Lohrey~\cite{KuLo} (see Theorem~\ref{th-kuskelohrey} below) that
the decidability of the MSO theory of a directed graph $\oG$ of
bounded degree whose automorphism group $\Aut(\oG)$ has only finitely
many orbits on $\oG$ is equivalent to the fact that $\oG$ is
context-free and to the fact that its undirected support has finite
treewidth. This result cannot be applied to prove
Theorem~\ref{mso-graph} because $\Aut(\oG(\cE))$ may have an infinite
number of orbits (however this is true for $\oG(\tbX_N)$).

To relate the MSO theory of a trace-regular event structure graph with
the MSO theory of its domain, we introduce the notion of the
\emph{hairing} $\dotcE = (\dotE,\dotleq,\dotdiese)$ of an event structure
$\cE =(E,\leq,\#)$. To obtain $\dotcE$, we add a \emph{hair event} $e_c$
for each configuration $c$ of $\cE$, i.e., $\dotE = E \cup E_C$ where
$E_C = \{e_c: c \in \cD(\cE)\}$. For any hair event $e_c$ and any
event $e \in \dotE$, we set $e \dotleq e_c$ if $e \in c$ and
$e \dotdiese e_c$ otherwise.
Suppose additionally that $\cE$ is trace-regular and let $\lambda$ be
a trace labeling of $\cE$ with a trace alphabet $M=(\Sigma,I)$.  Let
$h$ be a letter that does not belong to $\Sigma$ and consider the
trace alphabet $\dotM = (\Sigma \cup \{h\},I)$ (note that since $I$ is
not modified, $(h,a) \notin I$ for every $a \in \Sigma$). Let
$\dotlambda$ be the labeling of $\dotcE$ extending $\lambda$ by
setting $\dotlambda(e_c) = h$ for any $e_c \in E_C$.  The labeled
event structure obtained in this way is trace-regular:

\begin{proposition}\label{barycentric-subdivision}
  For a trace-regular event structure ${\cE}=(E,\le, \#,\lambda)$, the
  hairing $\dotcE =(\dotE,\dotleq, \dotdiese,\dotlambda)$ is also a
  trace-regular event structure.
\end{proposition}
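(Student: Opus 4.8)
The plan is to first pin down the domain of $\dotcE$ and check it is an event structure, then verify that $\dotlambda$ is a trace labeling for the alphabet $\dotM=(\Sigma\cup\{h\},I)$, and finally establish finite index by showing that hairing commutes with passing to futures. \textbf{Domain.} Since the hair events are pairwise in conflict, every configuration of $\dotcE$ contains at most one hair event; and if a configuration $\dot c$ contains $e_c$, then downward-closedness forces $c\subseteq\dot c$ while conflict-freeness forces $\dot c\cap E\subseteq c$, so $\dot c=c\cup\{e_c\}$. Hence $\cD(\dotcE)$ is exactly $\cD(\cE)$ together with, for each configuration $c$, one extra vertex $c\cup\{e_c\}$ covering $c$; i.e.\ $\dotcE$ is the domain of $\cE$ with a pendant edge attached at each vertex. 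The event-structure axioms for $\dotcE$ are then routine: the hair events are $\dotleq$-maximal, so transitivity, antisymmetry and finiteness of $\downarrow e$ reduce to the corresponding facts in $\cE$ (using that each $c$ is finite), while conflict heredity $e\dotdiese e',\ e'\dotleq e''\Rightarrow e\dotdiese e''$ only needs to be checked when $e''=e_c$ is a hair and $e'\in c$, where it follows from conflict-freeness of $c$.

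\textbf{Trace labeling.} First I would describe the covering and minimal-conflict pairs that involve a hair. An event $e\in E$ is an immediate predecessor of $e_c$ iff $e$ is $\le$-maximal in $c$; and, for $e\in E$, the events $e_c$ and $e$ are in minimal conflict iff $e$ is enabled at $c$ in $\cE$. Crucially, two \emph{distinct} hairs $e_c,e_{c'}$ are never co-initial (each $e_c$ is enabled only at $c$), hence never in minimal conflict. Now (LES1)--(LES3) split into pairs inside $E$, inherited from the trace labeling $\lambda$ of $\cE$, and pairs involving a hair. For the latter one uses that $h$ is dependent on every letter, i.e.\ $(a,h)\notin I$ for all $a\in\Sigma\cup\{h\}$: (LES1) holds because $h\notin\Sigma$; (LES2) holds because any pair involving $h$ lies in the dependence relation of $\dotM$ automatically; and (LES3) holds because for every $e\in E$ exactly one of $e\dotleq e_c$ (when $e\in c$) and $e\dotdiese e_c$ (when $e\notin c$) occurs, while distinct hairs are always in conflict. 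Thus $\dotlambda$ is a trace labeling of $\dotcE$ with $\dotM$; in particular it is a nice labeling over the finite alphabet $\Sigma\cup\{h\}$, so $\dotcE$ has finite degree.

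\textbf{Finite index.} The heart of the argument is to show that the isomorphism type of a future in $\dotcE$ (with $\dotlambda$) is controlled by a future in $\cE$ (with $\lambda$). If $\dot c=c\cup\{e_c\}$, then $e_c$ conflicts with every event outside $c$, so the event structure rooted at $\dot c$ is empty, giving a single trivial type. If $\dot c=c$, the surviving events rooted at $c$ are $E\setminus(c\cup\#(c))$, i.e.\ the events of $\cE\setminus c$, together with the hairs $e_{c'}$ for $c'\supseteq c$; under the canonical bijection $c'\mapsto c'\setminus c$ between configurations $c'\supseteq c$ of $\cE$ and configurations of $\cE\setminus c$, these hairs correspond exactly to the hairs of $\cE\setminus c$, and the defining order and conflict relations match ($e\dotleq e_{c'}$ iff $e\in c'\setminus c$, and $e_{c'}\dotdiese e_{c''}$ iff $c'\ne c''$). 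Hence the structure rooted at $c$ is isomorphic, as a labeled event structure, to the hairing of $\cE^{\lambda}\setminus c$, so its type depends only on the type of $\cE^{\lambda}\setminus c$. Since $\cE^{\lambda}$ has finite index, $\dotcE$ equipped with $\dotlambda$ has finite index (at most one more than that of $\cE^{\lambda}$). Combined with finite degree, this makes $\dotcE^{\dotlambda}$ a regular $\dotM$-labeled event structure, so $\dotcE$ is trace-regular.

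\textbf{Main obstacle.} I expect the delicate point to be the finite-index step: one must correctly identify which hairs survive rooted at $c$ and verify the ``hairing commutes with futures'' isomorphism, rather than merely counting types. The minimal-conflict bookkeeping for hairs---in particular noticing that distinct hairs are never co-initial and hence never in minimal conflict---is the other place where care is needed.
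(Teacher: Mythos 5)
Your proposal is correct and follows essentially the same route as the paper's proof: verify (LES1)--(LES3) using that hair events are dependent on every letter, never concurrent with anything, and that distinct hairs are never co-initial (hence never in minimal conflict), then obtain finite index by reducing futures in $\dotcE$ to futures in $\cE$ plus one trivial class for configurations containing a hair. The only cosmetic difference is that you package the index argument as ``hairing commutes with futures'' ($\dotcE\setminus c \equiv$ the hairing of $\cE^{\lambda}\setminus c$), whereas the paper directly extends an isomorphism $f:\cE\setminus c_0\to\cE\setminus c_0'$ to the hairs via $e_c\mapsto e_{f(c)}$ --- the same content in a different order.
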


By the definition of $\dotcE$, the directed graph $\oG(\dotcE)$ of its
domain $\cD(\dotcE)$ is obtained from the directed graph $\oG(\cE)$ of
the domain $\cD(\cE)$ of $\cE$ by adding an outgoing arc $\ovr{vw_v}$
to each vertex $v$ of $\oG(\cE)$.  In a similar way, we can define the
\emph{hairing} $\dotG$ (respectively $\dotX$) of any directed graph
$G$ (respectively, any directed NPC complex $X$) by adding for each
vertex $v$, a new vertex $v'$ and an arc $\ovr{vv'}$. Observe that
each new vertex $v'$ has in-degree $1$ and out-degree $0$. With this
definition, the hairing $\dotoG(\cE)$ of the directed graph
$\oG(\dotcE)$ of the domain $\cD(\dotcE)$ of $\cE$ coincides with the
directed graph $\oG(\dotcE)$ of the domain $\cD(\dotcE)$ of the
hairing $\dotcE$ of $\cE$.
Given a poset $\cD=(D,\prec)$, we define the \emph{hairing} of $\cD$
as the poset $\dotcD=(\dotD,\dotprec)$ such that the Hasse diagram of
$\dotcD$ is the hairing of the Hasse diagram of $\cD$. With this
definition, the domain $\cD(\dotcE)$ of $\cE$ coincides with the
domain $\cD(\dotcE)$ of the hairing $\dotcE$ of $\cE$.

Note that the hairing $\dottX$ of the universal cover $\tX$ of a
directed NPC complex $X$ coincides with the universal cover $\tdotX$
of the hairing $\dotX$ of $X$.  When an event structure $\cE$ is
strongly regular, there exists a finite directed NPC complex $X$ such
that $\cD(\cE)=\cF_{\tildo}(\tv,\tX^{(1)})$. In this case, we have:
\begin{equation}\label{eq-DcE}
  \cD(\dotcE) = \dotcD(\cE) = \dotcF_{\tildo}(\tv,\tX^{(1)}) =
  \cF_{\tildo}(\tv,\dottX{^{(1)}}) = \cF_{\tildo}(\tv,\tdotX{^{(1)}}).
\end{equation}

We can also define the \emph{hairing} $\dotN = (\dotS,\dotSigma,\dotF,\dotm_0)$
of a net system $N = (S,\Sigma,F,m_0)$ as follows. First, for each
transition $a \in \Sigma$, we add a place $p_a$ such that
$\bup_a = \pbu_a = \{a\}$ and such that $p_a$ contains a token in the
initial configuration. Then, we add a transition $h$ such that
$\buh = \{p_a: a \in \Sigma\}$ and $\hbu = \varnothing$. In other
words, $\dotS = S \cup \{p_a: a \in \Sigma\}$,
$\dotSigma = \Sigma \cup \{h\}$,
$\dotF = F \cup \{(p_a,a), (a,p_a), (p_a, h):a \in \Sigma\}$, and
$\dotm_0 = m_0 \cup \{p_a: a \in \Sigma\}$.

\begin{proposition}\label{prop-hairing-N}
  For a net system $N$, there is an isomorphism between the special
  cube complex $X_{\dotN}$ of the hairing $\dotN$ of $N$ and the hairing
  $\dotX_N$ of the special cube complex $X_N$ of $N$ that maps the
  initial marking $m_0$ of $N$ to the initial marking $\dotm_0$ of
  $\dotN$.
  Consequently, the event structure unfolding $\cE_{\dotN}$ of the
  hairing $\dotN$ of $N$ is isomorphic to the hairing $\dotcE_N$ of the
  event structure unfolding $\cE_N$ of $N$.
\end{proposition}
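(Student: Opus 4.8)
The statement has two parts: first, an isomorphism between the special cube complex $X_{\dotN}$ of the hairing $\dotN$ and the hairing $\dotX_N$ of the special cube complex $X_N$; second, a consequence about event structure unfoldings. The plan is to prove the first part by directly analyzing how the hairing construction on net systems (adding places $p_a$, the transition $h$, and tokens) modifies the marking graph and its squares, and then to derive the second part by invoking Theorem~\ref{Petri-to-special-isomorphism}. The guiding principle is that $X_{\dotN}$ is built canonically from the transition relation of $\dotN$, so I just need to track exactly which new vertices, edges, and squares appear and match them against the combinatorial description of the hairing $\dotX_N$ (one pendant vertex $v'$ and one arc $\ovr{vv'}$ per vertex $v$).

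\textbf{Step 1: Understand the markings of $\dotN$.}
First I would describe the reachable markings of $\dotN$. By construction, each transition $a$ consumes and produces the token on $p_a$ (since $\bup_a = \pbu_a = \{a\}$), so firing any $a \in \Sigma$ leaves all the $p_a$-tokens present; thus a marking of $\dotN$ reachable without firing $h$ is exactly $m \cup \{p_a : a \in \Sigma\}$ for a reachable marking $m$ of $N$. This gives a bijection between markings of $G_N$ and the ``pre-$h$'' markings of $G_{\dotN}$, carrying $m_0$ to $\dotm_0$. The only additional behaviour is that from any such marking one may fire $h$, which consumes all tokens $p_a$ (since $\buh = \{p_a : a \in \Sigma\}$) and produces nothing ($\hbu = \varnothing$); the resulting marking $m$ has no $p_a$-tokens, so no transition of $\dotN$ is enabled there. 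Hence each marking $m \cup \{p_a\}$ of $G_{\dotN}$ has a unique outgoing $h$-arc to a terminal vertex, exactly mirroring the pendant arc $\ovr{vv'}$ of the hairing $\dotX_N$.

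\textbf{Step 2: Match edges and squares.}
Next I would check that the $\Sigma$-labeled arcs and the squares of $X_{\dotN}$ among the pre-$h$ markings coincide with those of $X_N$. Since firing $a$ in $\dotN$ has the same net effect on the $S$-part of the marking as in $N$ and merely reuses the token $p_a$, the $a$-arcs correspond exactly. For squares, I must verify that the independence relation is unchanged: two transitions $a,b \in \Sigma$ are independent in $\dotN$ iff $(\bua^{\dotN} \cup \abu^{\dotN}) \cap (\bub^{\dotN} \cup \bbu^{\dotN}) = \varnothing$, and since the added places satisfy $\bua^{\dotN} = \bua \cup \{p_a\}$ and the $p_a$ are all distinct, the intersection picks up no new common place; thus $I$ is preserved on $\Sigma$. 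The letter $h$ is dependent on every $a$ because $p_a \in \buh \cap \abu^{\dotN}$, so $h$-arcs are in no square, matching the pendant structure. This establishes the claimed complex isomorphism, and it visibly sends $m_0$ to $\dotm_0$.

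\textbf{Step 3: Deduce the event-structure consequence.}
For the second assertion, I would apply Theorem~\ref{Petri-to-special-isomorphism} to $\dotN$: the domain of $\cE_{\dotN}$ is the principal filter of the universal cover of $X_{\dotN}$. Using the complex isomorphism $X_{\dotN} \cong \dotX_N$ just established, together with the fact (stated before the proposition) that the universal cover of a hairing is the hairing of the universal cover, $\tdotX_N = \dottX_N$, and with Equation~\eqref{eq-DcE}, the domain of $\cE_{\dotN}$ becomes the principal filter $\cF_{\tildo}(\tm_0, \dottX_N^{(1)}) = \cD(\dotcE_N)$. Since an event structure is determined by its domain (the median-graph correspondence of Theorem~\ref{median_domain} together with the labeling matching), this yields $\cE_{\dotN} \equiv \dotcE_N$. \textbf{The main obstacle} I anticipate is bookkeeping in Step~2: one must confirm not only that no \emph{new} squares arise involving $h$ but also that adding the places $p_a$ does not accidentally destroy or create independence among the old transitions, and that the terminal $h$-vertices are genuinely sinks of out-degree $0$ and in-degree $1$ so they match the formal hairing definition exactly rather than merely approximately.
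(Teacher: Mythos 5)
Your proposal is correct and takes essentially the same route as the paper's proof: the marking bijection $m \mapsto \dotm = m \cup \{p_a : a \in \Sigma\}$ matching $\Sigma$-arcs, the identification of the $h$-arcs $\dotm \xlongrightarrow{h} m$ as pendant arcs (out-degree $0$ at $m$), and the deduction of the event-structure isomorphism from Theorem~\ref{Petri-to-special-isomorphism} together with Equation~(\ref{eq-DcE}). The only difference is one of detail: you explicitly check that the independence relation on $\Sigma$ is unchanged by the added places and that $h$ is dependent with every letter, so that the squares of $X_{\dotN}$ and $\dotX_N$ match --- a verification the paper leaves implicit.
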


Note that Proposition~\ref{barycentric-subdivision} follows from
Proposition~\ref{prop-hairing-N} and Thiagarajan's
Theorem~\ref{th:regular_trace}. However, we provide a simple proof of
this result that does not rely on the involved construction of a net
system from a trace-regular event structure used in the proof of
Theorem~\ref{th:regular_trace}.

Notice that the hair events of $\dotcE$ introduce a lot of conflicting
events in $\dotcE$, and we use them to encode vertex variables as event
variables in order to prove the following result:

\begin{theorem}\label{barycentric-subdivision-MSO}
  For a trace-regular event structure ${\cE}=(E,\le, \#,\lambda)$,
  $\MSO(\dotcE)$ is decidable if and only if $\MSO(\oG(\cE))$ is
  decidable.  In particular, $\MSO(\dotcE)$ is decidable if and only if
  $G(\cE)$ has finite treewidth.
\end{theorem}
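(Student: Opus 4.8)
The plan is to establish both implications by exhibiting, in each direction, an \emph{MSO-interpretation} (a monadic second-order transduction with a fixed number of copies, a definable domain, and definable relations) of one relational structure inside the other, and then to invoke the standard backwards translation of such interpretations: if $\mathcal B$ is MSO-interpretable in $\mathcal A$, then every MSO-sentence about $\mathcal B$ effectively translates into an MSO-sentence about $\mathcal A$, so that $\MSO(\mathcal A)$ decidable implies $\MSO(\mathcal B)$ decidable. The ``in particular'' clause is then immediate from Theorem~\ref{mso-graph}, which already equates decidability of $\MSO(\oG(\cE))$ with finite treewidth of $G(\cE)$. Throughout I use that the hair events $E_C$ of $\dotcE$ are in bijection with the configurations of $\cE$, i.e.\ with the vertices of $\oG(\cE)$, and that these are exactly the events of $\dotcE$ carrying the label $h$.

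For the implication ``$\MSO(\dotcE)$ decidable $\Rightarrow \MSO(\oG(\cE))$ decidable'' I interpret $\oG(\cE)$ inside $\dotcE$, taking as vertex domain the hair events $\{x : R_h(x)\}$. The down-set of a hair event $e_c$ in $\dotcE$ is exactly $c \cup \{e_c\}$, so the set of \emph{original} predecessors of $e_c$ recovers the configuration $c$. Hence $c \subseteq c'$ is expressed by $\forall x(\neg R_h(x)\wedge x\dotleq e_c \Rightarrow x\dotleq e_{c'})$, and the covering relation $c \lessdot c'=c\cup\{e\}$ is expressed by adding that there is exactly one original event $e$ with $e\dotleq e_{c'}$ and $\neg(e\dotleq e_c)$; the label $a$ of the directed edge $e_c\to e_{c'}$ is read off as $R_a(e)$ for that unique $e$. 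This yields a one-copy interpretation of the (edge-labelled) directed graph $\oG(\cE)$ in $\dotcE$, and the backwards translation gives the first implication.

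For the converse ``$\MSO(\oG(\cE))$ decidable $\Rightarrow \MSO(\dotcE)$ decidable'' I interpret $\dotcE$ inside $\oG(\cE)$ by a \emph{two-copy} transduction, because the universe $\dotE = E\cup E_C$ contains two sorts of events that both live on vertices of $\oG(\cE)$. Copy $1$ is the whole vertex set and represents the hair events $E_C$; copy $2$ is the set $\JJ$ of join-irreducible vertices and represents the original events $E$. A vertex is join-irreducible exactly when it has a unique lower cover, i.e.\ exactly one incoming arc in $\oG(\cE)$, which is MSO-definable; under the bijection of Corollary~\ref{geodesicprimetrace} a join-irreducible vertex $u$ is the principal ideal $\downarrow e_u$ of a unique event $e_u$, and the label of $e_u$ is the label of the unique arc entering $u$, so $R_a$ on copy $2$ is definable (and $R_h$ holds precisely on copy $1$). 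The precedence $\dotleq$ is recovered from directed reachability: by Lemmas~\ref{trace3} and~\ref{geoprimetrace3}, for an original event $e_u$ (join-irreducible vertex $u$) and any event $y$ with associated vertex $w$ (namely $w=v$ if $y=e_v$ is original, and $w=c$ if $y=e_c$ is a hair event) one has $e_u\dotleq y$ iff $u\in I(v_0,w)$, which, since all arcs point away from the basepoint, is equivalent to the existence of a directed path from $u$ to $w$; directed reachability is MSO-definable via a set quantifier. The remaining cases of $\dotleq$ are trivial, since distinct hair events are pairwise in conflict and no hair event lies below another event. The backwards translation of this two-copy interpretation then transfers decidability, and combined with Theorem~\ref{mso-graph} finishes the proof.

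The main obstacle is the converse direction, and specifically the faithful recovery of the causal order $\dotleq$ of $\dotcE$ from the purely graph-theoretic structure of $\oG(\cE)$: one must simultaneously (i) identify the two disjoint sorts of events with vertices of a single graph via a copying transduction, (ii) single out the original events as the in-degree-one vertices and read their labels off incoming arcs, and (iii) translate the global causal-dependency relation into directed reachability, which is exactly where the trace-theoretic identifications of Section~\ref{geometry-traces} (Lemmas~\ref{trace3} and~\ref{geoprimetrace3} and Corollary~\ref{geodesicprimetrace}) are needed. Verifying that a bounded-copy MSO-interpretation transfers decidability is routine (Courcelle), so no further difficulty arises there.
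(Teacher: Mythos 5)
Your proof is correct, and your two directions split exactly as the paper's do, but only the first coincides with the paper's argument: like Proposition~\ref{prop-MSOb-MSOG}, you represent vertices of $\oG(\cE)$ by hair events of $\dotcE$ and recover the labelled arcs from the order, so that direction is essentially the same interpretation. Your converse, however, is a genuinely different route. The paper never interprets $\dotcE$ directly in $\oG(\cE)$; it goes through treewidth: $\MSO(\oG(\cE))$ decidable $\Rightarrow$ $G(\cE)$ has finite treewidth (Theorem~\ref{mso-graph}) $\Rightarrow$ $G(\dotcE)$ has finite treewidth $\Rightarrow$ $\MSO(\oG(\dotcE))$ decidable (Theorem~\ref{mso-graph} applied to $\dotcE$, which requires knowing $\dotcE$ is trace-regular, i.e., Proposition~\ref{barycentric-subdivision}) $\Rightarrow$ $\MSO(\dotcE)$ decidable (Proposition~\ref{prop-MSOG-MSO} applied to $\dotcE$). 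Moreover, in Proposition~\ref{prop-MSOG-MSO} an event is encoded not as a single vertex but as the parallelism-closed set of sources of all edges dual to its hyperplane, with minimal such sets simulating first-order event quantification, since that encoding works uniformly without reference to hairing. Your two-copy transduction instead exploits the bijection of Corollary~\ref{geodesicprimetrace} between events and join-irreducible (in-degree-one) vertices, so each event is a single vertex, precedence becomes directed reachability (via Lemmas~\ref{trace3} and~\ref{geoprimetrace3}), and you bypass both the treewidth detour and Proposition~\ref{barycentric-subdivision} entirely, at the cost of invoking Courcelle's backwards-translation theorem for copying transductions as a black box. One point you should make explicit: the case $e_u \dotleq e_c$ for a hair event $e_c$ sitting over a vertex $w$ is not literally covered by the lemmas you cite; it needs the observation that the configuration attached to a join-irreducible vertex $u$ is the principal ideal of its event (equivalently, that $u$ is the gate of $v_0$ in the halfspace cut off by the corresponding hyperplane, as established inside the proof of Lemma~\ref{geoprimetrace3}), whence $e_u\in c(w)$ iff $u\in I(v_0,w)$. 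With that small addition your argument is complete. What the paper's longer route buys is the standalone Proposition~\ref{prop-MSOG-MSO} (reused to deduce that finite treewidth implies $\MSO(\cE)$ decidable) and the byproduct that $\MSO(\oG(\dotcE))$ itself is decidable; what yours buys is a shorter, more direct proof of the stated equivalence.
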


\begin{remark}
  The condition on the treewidth of $G(\cE)$ in
  Theorem~\ref{barycentric-subdivision-MSO} is independent of the
  choice of a particular trace labeling of $\cE$. Therefore, for a
  trace-regular event structure ${\cE}=(E,\le, \#)$ such that $G(\cE)$
  has bounded (respectively, unbounded) treewidth, $\MSO(\dotcE)$ is
  decidable (respectively, undecidable) for any trace-regular labeling
  $\dotlambda$ of $\dotcE$.  Consequently, if there exists a
  trace-regular labeling of $\dotcE$ such that $\MSO(\dotcE)$ is
  decidable (respectively, undecidable), then $\MSO(\dotcE)$ is
  decidable (respectively, undecidable) for all trace-regular
  labelings of $\dotcE$.
\end{remark}

Since $\MSO(\cE)$ is a fragment of $\MSO(\dotcE)$, we obtain the
following corollary of Theorem~\ref{barycentric-subdivision-MSO}:
\begin{corollary}
  For a trace-regular event structure $\cE =(E,\le, \#,\lambda)$, if
  $G(\cE)$ has finite treewidth, then $\MSO(\cE)$ is decidable.
\end{corollary}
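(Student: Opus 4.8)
The plan is to derive the corollary from Theorem~\ref{barycentric-subdivision-MSO} by a truth-preserving, computable translation of $\MSO$-sentences over $\cE$ into $\MSO$-sentences over its hairing $\dotcE$. Since $G(\cE)$ has finite treewidth, Theorem~\ref{barycentric-subdivision-MSO} already yields that $\MSO(\dotcE)$ is decidable, so it suffices to reduce the decision problem for $\MSO(\cE)$ to that for $\MSO(\dotcE)$ via an effective relativization (an $\MSO$-interpretation). Composing this reduction with the decision procedure for $\MSO(\dotcE)$ then produces a decision procedure for $\MSO(\cE)$, which is the assertion ``$\MSO(\cE)$ is a fragment of $\MSO(\dotcE)$'' made in the text.

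The key observation I would isolate first is that the relational structure $(E,(R_a)_{a\in\Sigma},\le)$ defining $\MSO(\cE)$ is, up to signature, the induced substructure of the structure $(\dotE,(R_a)_{a\in\Sigma\cup\{h\}},\dotleq)$ defining $\MSO(\dotcE)$ on the $\MSO$-definable set of real events. Indeed, by the definition of the hairing $\dotcE$ (given before Proposition~\ref{barycentric-subdivision}), one has $\dotE = E \cup E_C$, every hair event in $E_C$ carries the label $h$ while every real event carries a label in $\Sigma$; hence $E$ is defined inside $\dotcE$ by $\delta(x) := \neg R_h(x)$, equivalently by $\bigvee_{a\in\Sigma} R_a(x)$. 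Moreover, the hairing adds only tuples that involve a hair event: for $e,e'\in E$ we have $e \dotleq e'$ iff $e \le e'$, and for every $a\in\Sigma$ the predicate $R_a$ has the same extension in both structures. Thus, relative to $\delta$, the two structures agree on $\le$/$\dotleq$ and on all $R_a$ with $a\in\Sigma$.

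Granting this, I would define the translation $\varphi \mapsto \varphi^{*}$ by the standard relativization to $\delta$: each first-order quantifier $\exists x\,\psi$ is replaced by $\exists x\,(\delta(x)\wedge\psi^{*})$ and $\forall x\,\psi$ by the corresponding guarded implication; each set quantifier is restricted to subsets of $E$, so $\exists X\,\psi$ becomes $\exists X\,((\forall x\,(x\in X \Rightarrow \delta(x)))\wedge\psi^{*})$; the atomic formulas $R_a(x)$, $x=y$, $x\in X$ are left unchanged; and the order atom $x\le y$ is interpreted by $x\dotleq y$. Because the two structures coincide on $\delta$, a routine induction on $\varphi$ (the relativization lemma for $\MSO$) gives $\cE\models\varphi$ iff $\dotcE\models\varphi^{*}$. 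As $\varphi\mapsto\varphi^{*}$ is plainly computable, decidability of $\MSO(\dotcE)$ transfers to $\MSO(\cE)$, completing the proof.

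The only delicate point, and the place the argument would fail for a less symmetric construction, is verifying that hairing alters no relation \emph{internal} to $E$, so that $E$ is genuinely an induced substructure and not merely a definable subset; this is immediate from the hairing's definition, since all new incidences $e\dotleq e_c$ and all new conflicts $e\dotdiese e_c$ involve a hair event $e_c\in E_C$. I note that I do not need the conflict relation $\#$ as a primitive: in the formalism of Subsection~\ref{MSO-def}, $\#$ is already $\MSO$-definable from $\le$ and the labels $R_a$ (via a fixed finite disjunction over the dependence relation), so preserving $\le$ and the $R_a$ on $E$ suffices. In short, the substantive work is done by Theorem~\ref{barycentric-subdivision-MSO} (finite treewidth implies $\MSO(\dotcE)$ decidable); the interpretation described here is the routine remaining half.
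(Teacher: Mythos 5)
Your proof is correct and takes essentially the same route as the paper: the paper obtains this corollary from Theorem~\ref{barycentric-subdivision-MSO} via the one-line remark that $\MSO(\cE)$ is a fragment of $\MSO(\dotcE)$, and your relativization of first- and second-order quantifiers to the $\MSO$-definable set $\delta(x):=\neg R_h(x)$ of non-hair events is precisely the content of that remark, using that the hairing adds no relations internal to $E$. You simply spell out the routine interpretation that the paper leaves implicit.
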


\subsection{Treewidth}

Let $G=(V,E)$ be a simple graph, not necessarily finite. A \emph{tree
  decomposition}~\cite{RoSey} of $G$ is a pair $(T,f)$, where
$T=(V(T),E(T))$ is a tree and
$f: V(T)\rightarrow 2^V\setminus \{ \varnothing\}$ is a function such
that \emph{(i)} $\bigcup_{t\in V(T)} f(t)=V$, \emph{(ii)} for every
edge $uv\in E$ of $G$ there exists $t\in V(T)$ such that
$u,v\in f(t)$, and \emph{(iii)} if $t',t''\in V(T)$ and $t$ lies on
the unique path of $T$ from $t'$ to $t''$, then
$f(t')\cap f(t'')\subseteq f(t)$.  The \emph{width} of the tree
decomposition $(T,f)$ is $\sup \{|f(t)|-1: t\in V(T)\}$ and belongs to
${\mathbb N}\cup \{ \infty\}$.  The graph $G$ has \emph{treewidth}
$\le b$ if there exists a tree decomposition of $G$ of width $\le
b$. A graph $G$ has \emph{bounded} (or \emph{finite}) \emph{treewidth}
if it has treewidth $\le b$ for some $b\in {\mathbb N}$.  The
treewidth represents how close a graph is to a tree from a
\emph{combinatorial} point of view.

A graph $H$ is a \emph{minor} of a graph $G$ if $H$ can be obtained
from a subgraph $G'$ of $G$ by contracting edges. Equivalently, $H$ is
a minor of a connected graph $G$ if $G$ contains a subgraph $G'$ such
that there exists a partition of vertices of $G'$ into connected
subgraphs $\cP=\{ P_1,\ldots,P_t\}$ and a bijection
$f: V(H)\rightarrow \cP$ such that if $uv\in E(H)$ then there exists
an edge of $G'$ between the subgraphs $f(u)$ and $f(v)$ of $\cP$
(i.e., after contracting each subgraph $P_i\in \cP$ into a single
vertex we obtain a graph containing $H$ as a spanning
subgraph). Treewidth does not increase when taking a minor.

Since the treewidth of an $n\times n$ square grid is $n$, the
treewidth of a graph $G$ is always greater than or equal to the size
of the largest square grid minor of $G$. In the other direction, the
grid minor theorem by Robertson and Seymour~\cite{RoSey1} shows that
there exists a function $f$ such that the treewidth is at most $f(r)$
where $r$ is the size of the largest square grid minor of $G$:

\begin{theorem}[\cite{RoSey1}]\label{th-robseymour}
  A graph $G$ has bounded treewidth if and only if the square grid
  minors of $G$ have bounded size.
\end{theorem}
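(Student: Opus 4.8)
The plan is to prove the two implications separately, since almost all the content sits in one of them. The easy direction is essentially recorded in the paragraph preceding the statement: treewidth is minor-monotone and the $n \times n$ grid has treewidth exactly $n$, so any $n \times n$ grid minor forces $\mathrm{tw}(G) \geq n$. Hence if $\mathrm{tw}(G) \leq b$, then no grid minor larger than $b \times b$ can occur, and bounded treewidth implies bounded grid minors. The whole difficulty lies in the converse, the \emph{Excluded Grid Theorem}: there is a function $f$ such that every graph with $\mathrm{tw}(G) \geq f(r)$ contains an $r \times r$ grid minor, equivalently, bounded grid minors force bounded treewidth.

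For the hard direction I would first replace large treewidth by a \emph{dual} certificate. By the treewidth--bramble duality of Seymour and Thomas, $\mathrm{tw}(G) \geq k$ if and only if $G$ carries a bramble of order $> k$, where a bramble is a family of pairwise touching connected subgraphs and its order is the minimum size of a vertex set meeting all of them. Thus a graph of large treewidth possesses a bramble of large order, which I would convert into a large \emph{well-linked} (highly connected) vertex set $Z$: a set that no small separator can split into two comparably large parts. Intuitively, a large well-linked set guarantees, via Menger's theorem, that many pairwise vertex-disjoint paths can be routed across any balanced bipartition of $Z$.

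The combinatorial heart is then to turn this abundance of disjoint paths into an explicit grid-like pattern. The cleanest route is through \emph{walls}: one shows that a sufficiently large well-linked set yields a large subdivided wall (an $r$-wall being a canonical planar, bounded-degree, grid-like graph), and that every large enough wall contains an $r \times r$ grid as a minor. Constructing the wall amounts to building two transverse, mutually crossing systems of disjoint paths whose intersection pattern is grid-like, using the high connectivity to keep re-routing and to control how the two systems interact. I expect \textbf{this routing and extraction step to be the main obstacle}: connectivity alone does not organize itself into a grid, and managing the interaction of the two path systems is exactly what makes $f(r)$ large and the proof long. In the original Robertson--Seymour argument $f$ is produced non-constructively and is enormous, while later work (Chekuri--Chuzhoy) makes $f$ polynomial; for our purposes only the \emph{existence} of $f$ matters.

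Finally, because treewidth is defined here for possibly infinite graphs, I would reduce the infinite case to the finite one via the standard compactness fact that $\mathrm{tw}(G) \leq k$ if and only if every finite subgraph of $G$ has treewidth $\leq k$. Indeed, if the grid minors of $G$ are bounded by $r$, then so are the grid minors of all finite subgraphs, whence by the finite theorem their treewidths are bounded by $f(r)$, and therefore $\mathrm{tw}(G) \leq f(r)$; the reverse implication is immediate since a grid minor of a finite subgraph is a grid minor of $G$.
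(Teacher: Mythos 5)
The paper does not prove this statement at all: it is imported as the Excluded Grid Theorem of Robertson and Seymour~\cite{RoSey1}, with only the easy direction (treewidth is minor-monotone and the $n\times n$ grid has treewidth $n$) recalled in the surrounding text. So your proposal is being compared against a citation rather than a proof. As an outline, what you describe is faithful to the standard argument: the easy direction is exactly as the paper states it, and for the hard direction the chain of bramble--treewidth duality, then a large well-linked set, then a large wall, then an $r\times r$ grid minor is the accepted route (Robertson--Seymour originally, with polynomial bounds later by Chekuri and Chuzhoy). You also make explicit a point the paper silently glosses over: its graphs are allowed to be infinite (the domains $G(\cE)$ are infinite median graphs), while the Excluded Grid Theorem is a statement about finite graphs, so one genuinely needs the compactness theorem for treewidth (that $G$ has treewidth $\le k$ iff every finite subgraph does, due to Thomas) to transfer the finite result; your reduction using it is correct in both directions. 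The one caveat is that your text is a proof plan rather than a proof: essentially all of the content of the theorem lives in the routing and wall-extraction step, which you explicitly defer as ``the main obstacle.'' Carrying that step out is the long and difficult part of~\cite{RoSey1}, so in effect your proposal does what the paper does --- identify the correct architecture and delegate the core to the literature --- which is acceptable here precisely because the paper itself treats the theorem as a black box.
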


\subsection{Hyperbolicity}

Similarly to nonpositive curvature, Gromov hyperbolicity is defined in
metric terms. However, as for the CAT(0) property, the hyperbolicity
of a CAT(0) cube complex can be expressed in a purely combinatorial
way.  A metric space $(X,d)$ is
$\delta$-\emph{hyperbolic}~\cite{BrHa,Gromov} if for any four points
$v,w,x,y$ of $X$,
$d(v,w)+d(x,y) \leq \max \{d(v,x)+d(w,y), d(v,y)+d(w,x)\} + 2 \delta$.
A graph $G=(X,E)$ endowed with its standard graph-distance $d_G$ is
$\delta$-\emph{hyperbolic} if the metric space $(X,d_G)$ is
$\delta$-{hyperbolic}.  At the difference of treewidth, Gromov
hyperbolicity represents how close \emph{metrically} a graph is to a
tree.  A metric space $(X,d)$ is \emph{hyperbolic} if there exists
$\delta<\infty$ such that $(X,d)$ is $\delta$-hyperbolic.  In case of
median graphs, i.e., of 1-skeletons of CAT(0) cube complexes, the
hyperbolicity can be characterized in the following way:

\begin{lemma}[\cite{ChDrEsHaVa,Ha}]\label{hyp_median}
  Let $X$ be a CAT(0) cube complex.  Then its $1$-skeleton $X^{(1)}$
  is hyperbolic if and only if all isometrically embedded square grids
  are uniformly bounded.
\end{lemma}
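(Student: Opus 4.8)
The plan is to prove the two implications separately. Since, by Theorem~\ref{CAT(0)-median}, $X^{(1)}$ is a median graph, I may freely use the median structure, the convexity of halfspaces and carriers (Theorem~\ref{gated-carriers}), the Helly property for pairwise-crossing hyperplanes (Proposition~\ref{Helly-CAT0}), and the local convexity criterion (Lemma~\ref{convex}). The implication "hyperbolic $\Rightarrow$ grids uniformly bounded'' is the routine direction: it follows from the fact that large grids are themselves badly non-hyperbolic, together with the invariance of the four-point condition under isometric embeddings. The converse, "uniformly bounded grids $\Rightarrow$ hyperbolic'', is the substantial direction and is where the structural theory of \cite{ChDrEsHaVa,Ha} enters.

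For the forward direction I would argue by contraposition. Consider the side-$n$ square grid $\Gamma_n = P_n \times P_n$ with the graph ($\ell_1$) metric and its four corners $a=(0,0)$, $b=(n,n)$, $c=(n,0)$, $d=(0,n)$, for which $d(a,b)=d(c,d)=2n$ while all four remaining pairwise distances equal $n$. Applying the four-point condition (in the form given in Subsection~\ref{regular}'s hyperbolicity definition, cf. \cite{BrHa}) to $v,w,x,y := a,b,c,d$ gives $4n = d(a,b)+d(c,d) \le \max\{d(a,c)+d(b,d),\, d(a,d)+d(b,c)\} + 2\delta = 2n + 2\delta$, hence $\delta \ge n$. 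Because an isometric embedding preserves all pairwise distances, the four images in $X^{(1)}$ witness the same inequality; thus if $X^{(1)}$ is $\delta$-hyperbolic, every isometrically embedded grid has side at most $\delta$. So hyperbolicity bounds the grids uniformly.

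For the converse I would again argue by contraposition, assuming $X^{(1)}$ is \emph{not} hyperbolic and producing arbitrarily large isometrically embedded grids. The key is to translate metric non-hyperbolicity into a combinatorial crossing pattern of hyperplanes. Using the characterization of hyperbolicity through thinness of geodesic triangles (equivalently, through the four-point condition \cite{BrHa}), for every $L$ there is a quadruple of vertices witnessing the failure of $L$-hyperbolicity. Unwinding such a witness in terms of the parallelism classes $\Theta$ and the convex halfspaces of Theorem~\ref{gated-carriers} should yield two \emph{nested chains} of hyperplanes $H_1,\dots,H_n$ and $V_1,\dots,V_n$ (each chain consisting of pairwise disjoint, nested hyperplanes) such that every $H_i$ crosses every $V_j$, with $n\to\infty$ as $L\to\infty$. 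Given such a transverse pair of chains, each pair $\{H_i,V_j\}$ crosses, so by Proposition~\ref{Helly-CAT0} their carriers meet in a square; intersecting the associated halfspaces and invoking the gatedness/convexity of carriers and halfspaces, one assembles the convex hull of these crossings into an $n\times n$ grid. The nestedness of each chain supplies the two path directions, the crossings supply the square cells, and Lemma~\ref{convex} together with the quadrangle condition certifies that this convex subgraph is \emph{isometric} rather than merely quasi-isometrically embedded. This contradicts the assumed bound and completes the contrapositive.

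The main obstacle is the extraction step in the converse: turning a purely metric failure of hyperbolicity into the clean pair of transverse nested hyperplane chains of growing length. This is a Ramsey/pigeonhole argument on walls intertwined with the divergence analysis of geodesics in the cube complex, and it is exactly the technical core carried out in \cite{ChDrEsHaVa,Ha}; the subtlety is that the witnessing quadruple only guarantees \emph{many hyperplanes separating the points in a fat configuration}, whereas one needs to promote this to hyperplanes that are \emph{mutually crossing} in the required bipartite pattern. The subsequent verification that the reconstructed grid is genuinely isometric is comparatively routine convexity bookkeeping via Theorem~\ref{gated-carriers}, Proposition~\ref{Helly-CAT0}, and Lemma~\ref{convex}, so I expect the combinatorial extraction of the two crossing chains, and not the embedding check, to be the genuinely hard component.
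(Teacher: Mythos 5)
First, a point of order: the paper itself contains no proof of Lemma~\ref{hyp_median} --- it is imported verbatim from \cite{ChDrEsHaVa,Ha}, with the only accompanying remark being that in \cite{Ha} it follows from the thin-biclique characterization (Theorem~\ref{hyp_median_hagen}). So there is no internal proof to compare you against, and your attempt must stand on its own. Your forward direction does: the four-point computation on the corners of $\Gamma_n$ forces $\delta\geq n$, and an isometric embedding transports the witnessing quadruple, so hyperbolicity uniformly bounds isometric grids. The converse, however, has a genuine gap, and moreover you have located the difficulty in the wrong place. The extraction step that you defer to \cite{ChDrEsHaVa,Ha} is circular as written (those are precisely the references whose result you are proving), but it is actually the tractable half, and it can be done with tools already quoted in the paper: for four vertices $u,v,x,y$ of a median graph, writing $n_1,n_2,n_3$ for the numbers of Djokovi\'c--Winkler classes inducing the splits $\{u,v\}|\{x,y\}$, $\{u,x\}|\{v,y\}$, $\{u,y\}|\{v,x\}$, the four-point defect $d(u,v)+d(x,y)-\max\{d(u,x)+d(v,y),\,d(u,y)+d(v,x)\}$ equals $2\big(\min(n_2,n_3)-n_1\big)$; every class of the second kind crosses every class of the third kind because the four points themselves witness all four quadrants; hence non-hyperbolicity yields arbitrarily large bicliques of hyperplanes, with no degree hypothesis. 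Inside each family, Ramsey gives either a large pairwise-crossing subfamily --- whence by Proposition~\ref{Helly-CAT0} a large cube, which already contains a large isometric grid, since $\Gamma_m$ embeds isometrically into a $2m$-cube as a product of two geodesics --- or a large pairwise-disjoint subfamily, which is automatically a nested chain because all its members separate $u$ from $v$.

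The genuinely problematic step is the one you dismiss as ``comparatively routine convexity bookkeeping''. Given transverse nested chains $H_1,\dots,H_n$ and $V_1,\dots,V_n$, Proposition~\ref{Helly-CAT0} produces for each pair $(H_i,V_j)$ \emph{some} square in $N(H_i)\cap N(V_j)$, but this intersection may contain many squares and nothing in your construction chooses them coherently: in $\mathbb{Z}^3$ with $H_i=\{x=i-\frac{1}{2}\}$ and $V_j=\{y=j-\frac{1}{2}\}$, the crossing squares can be picked at arbitrary heights, and the convex hull of such a choice is a three-dimensional box, not an $n\times n$ grid. What the chain configuration yields for free is the grid only as a \emph{strong-minor}: the cells of the arrangement are nonempty by the Helly property for convex sets, and contracting every hyperplane outside the two chains maps them onto $\Gamma_n$. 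Promoting this minor to an \emph{isometric} grid (choosing one vertex per cell so that adjacent cells get adjacent representatives, after refining the chains) is exactly where the substance of the cited theorem lies; Lemma~\ref{convex} and the quadrangle condition certify nothing until such representatives have actually been constructed. The distinction you are gliding over is not pedantic --- it is the central point of this very paper: by Lemmas~\ref{GZ-hyperbolic} and~\ref{GZ-treewidth}, the domain of the complex $Z$ has unbounded grid minors (infinite treewidth) yet is hyperbolic, i.e., has uniformly bounded isometric grids. An argument that only produces grid minors therefore cannot conclude non-hyperbolicity. So your proposal is not a proof: the first half is outsourced to the result being proved (though it is repairable along the lines sketched above), and the second half, as stated, fails.
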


In Hagen's paper~\cite[Theorem 7.6]{Ha}, previous lemma is a
consequence of another combinatorial characterization of the
hyperbolicity of CAT(0) cube complexes of bounded degree. The
\emph{crossing graph} $\Gamma(X)$ of a CAT(0) cube complex $X$ has the
hyperplanes of $X$ as vertices and pairs of intersecting hyperplanes
as edges. We  say that a graph $\Gamma$ has \emph{thin bicliques}
if there exists a natural number $n$ such that any complete bipartite
subgraph $K_{p,q}$ of $\Gamma$ satisfies $p\le n$ or $q\le n$.

\begin{theorem}[\cite{Ha}]\label{hyp_median_hagen}
  A CAT(0) cube complex $X$ with bounded degree is hyperbolic if and
  only if its crossing graph $\Gamma(X)$ has thin bicliques.
\end{theorem}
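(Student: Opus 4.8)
The plan is to route the equivalence through the geometric characterization already recorded in Lemma~\ref{hyp_median}, which says that for a CAT(0) cube complex $X$ hyperbolicity is equivalent to a uniform bound on the side-length of isometrically embedded square grids. It then suffices to prove that, under the bounded-degree hypothesis, the crossing graph $\Gamma(X)$ has thin bicliques if and only if the isometrically embedded grids of $X$ are uniformly bounded. One direction is immediate: an isometrically embedded $n \times n$ grid contains $n$ ``horizontal'' and $n$ ``vertical'' hyperplanes, the horizontal ones pairwise disjoint, the vertical ones pairwise disjoint, and every horizontal hyperplane crossing every vertical one; hence these $2n$ hyperplanes span a $K_{n,n}$ in $\Gamma(X)$. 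Thus unbounded grids force arbitrarily large balanced bicliques, so thin bicliques imply bounded grids.

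The substantive direction is to extract a large isometric grid from a large biclique. First I would exploit bounded degree: since a $d$-cube forces $d$ edges at each of its vertices, bounded degree bounds $\dimension(X)$, and by Proposition~\ref{Helly-CAT0} any family of pairwise crossing hyperplanes has at most $\dimension(X)$ members, so the clique number of $\Gamma(X)$ is bounded by a constant $k$. Now suppose $\Gamma(X)$ contains $K_{p,q}$ with sides $\mathcal{A}=\{A_1,\dots,A_p\}$ and $\mathcal{B}=\{B_1,\dots,B_q\}$, every $A_i$ crossing every $B_j$. Because the clique number is at most $k$, the subgraph of $\Gamma(X)$ induced on $\mathcal{A}$ (respectively $\mathcal{B}$) is $K_{k+1}$-free, so by Ramsey's theorem a sufficiently large $\mathcal{A}$ contains a large independent subset $\mathcal{A}'$ of pairwise disjoint hyperplanes, and similarly a large $\mathcal{B}'$; all cross-pairs still cross.

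It then remains to promote the two pairwise-disjoint, mutually crossing families into an isometrically embedded grid. Pairwise-disjoint hyperplanes are pairwise nested at the level of halfspaces, but a family of them need not form a single nested chain (think of a tripod of hyperplanes facing a common region); so I would apply Ramsey once more to the $2$-colouring of disjoint pairs according to the orientation of their nesting, extracting from $\mathcal{A}'$ a totally nested chain $A_{i_1},\dots,A_{i_m}$ (and likewise a chain inside $\mathcal{B}'$) whose successive halfspaces strictly decrease. Two such chains, one crossing the other, determine grid vertices as the carrier-intersections provided by Proposition~\ref{Helly-CAT0} and the gates supplied by Theorem~\ref{gated-carriers}: the candidate grid is cut out by a nested system of halfspaces, each of which is convex, so the resulting $m \times m$ subcomplex is an intersection of halfspaces, hence convex and therefore isometrically embedded. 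Letting the biclique size grow then yields unbounded isometric grids, completing the contrapositive.

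The hard part will be this last step: first, arranging the disjoint families into genuine nested chains (ruling out, via bounded degree together with the crossing condition, that the ``facing/tripod'' colour class can itself remain large), and second, checking that the chains truly bound an isometrically embedded grid rather than a merely combinatorial one. The cleanest way to secure the latter is to fix halfspace orientations pointing ``into'' the grid and to verify, using gatedness of carriers and halfspaces (Theorem~\ref{gated-carriers}) together with the $d$-cube intersection property (Proposition~\ref{Helly-CAT0}), that the span of the chosen corners is convex; convexity then upgrades the combinatorial grid to an isometric one, which is precisely the hypothesis needed to invoke Lemma~\ref{hyp_median} and conclude.
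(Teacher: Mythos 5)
First, a point of comparison: the paper does not prove this statement at all — it is quoted from Hagen~\cite{Ha} (Theorem~7.6 there) and used as a black box, e.g.\ inside Lemma~\ref{lem-isometric-sq} — so your proposal has to stand on its own. Its first half does stand: the direction ``large isometric grid $\Rightarrow$ large biclique'' is correct, and so is the Ramsey preprocessing (bounded degree plus Proposition~\ref{Helly-CAT0} bounds the clique number of $\Gamma(X)$, so each side of a large biclique contains a large pairwise-disjoint subfamily); this mirrors the argument the paper itself runs in Lemma~\ref{lem-isometric-sq}. The fatal gap is exactly the step you flag as the hard part, namely ``ruling out \ldots that the facing/tripod colour class can itself remain large.'' This cannot be ruled out, because it genuinely occurs under all of your hypotheses. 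Take $X=T\times T$, the product of two binary trees: a bounded-degree, $2$-dimensional CAT(0) cube complex. For each $k$, the hyperplanes dual to the $2^k$ edges at distance $k$ from the root in the first factor are pairwise disjoint, and no one of them separates any two of the others (each separates only its own subtree from everything else); the same holds in the second factor; and, $X$ being a product, every hyperplane of the first family crosses every hyperplane of the second. So $\Gamma(X)$ contains copies of $K_{2^k,2^k}$ both of whose sides are entirely ``facing'': every triple is a tripod, and with respect to \emph{any} choice of basepoint at most one member of such a family is comparable to the rest, so the longest chain extractable from either side is $2$, no matter how Ramsey is applied. Of course $T\times T$ is not hyperbolic and does contain arbitrarily large isometric grids — but those grids are dual to chains of hyperplanes running along two rays, which are \emph{not} subfamilies of the given biclique. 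So in the facing–facing case the grid has to be manufactured from hyperplanes other than the ones you started with; that is where the real content of Hagen's theorem lies (his proof goes through disc-diagram/contact-graph techniques), and your outline provides no mechanism for it.

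There is a second, independent problem even in the chain–chain case: your concluding step ``the candidate grid is cut out by a nested system of halfspaces, hence convex, hence isometrically embedded'' does not follow. The intersection of the halfspaces associated with two crossing chains is indeed convex, but it is the convex hull of the whole configuration — in general far larger than, and not isomorphic to, an $m\times m$ grid (it contains the carriers, all the crossing cubes, and whatever lies between them). The Helly property (Proposition~\ref{Helly-CAT0}) gives you nonempty ``cells'' between consecutive hyperplanes of the two chains, i.e.\ a grid \emph{minor}, but Lemma~\ref{hyp_median} requires an \emph{isometric} grid; bridging that gap (or instead violating the four-point condition directly from the two chains) needs a genuine metric argument via gates and medians, essentially the content of the results of~\cite{ChDrEsHaVa,Ha} behind Lemma~\ref{hyp_median}, not a formal consequence of Theorem~\ref{gated-carriers}. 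As it stands, then, the proposal proves only the easy implication.
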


We call an event structure $\cE=(E,\le, \#)$ and its domain $\cD(\cE)$
\emph{hyperbolic} if $\cD(\cE)$ is isomorphic to a principal filter of
a directed CAT(0) cube complex, whose 1-skeleton is hyperbolic.  We
call an event structure $\cE=(E,\le, \#)$ and its domain $\cD(\cE)$
\emph{strongly hyperbolic-regular} if there exists a finite directed
NPC complex $(X,o)$ such that $\tX$ is hyperbolic and $\cD$ is a
principal filter of $(\tX^{(1)},\tildo)$. Note that an event structure
can be strongly regular and hyperbolic without being strongly
hyperbolic-regular (see Remark~\ref{rem-tore-bazar}).

\subsection{Context-free Graphs}

Let $G$ be an edge-labeled graph of uniformly bounded degree and $v_0$
be an arbitrary root (basepoint) of $G$. Let
$S(v_0,k)=\{ x\in V: d_G(v_0,x)=k\}$ denote the sphere of radius $k$
centered at $v_0$. A connected component $\Upsilon$ of the subgraph of
$G$ induced by $V\setminus S(v_0,k)$ is called an \emph{end} of $G$.
The vertices of $\Upsilon\cap S(v_0,k+1)$ are called \emph{frontier
  points} and this set is denoted by $C(\Upsilon)$~\cite{MullerSchupp}
and called a \emph{cluster}.  There exists a bijection between the ends
and the clusters: each end contains a unique cluster and conversely,
for a cluster $C$, the unique end $\Upsilon(C)$ containing $C$
consists of the union of all principal filters of the vertices
$v\in C$ (with respect to the basepoint order).

Let ${\Phi}(G)$ and $\ccC(G)$ denote the set of all ends and all
clusters of $G$, respectively.  An \emph{end-isomorphism} between two
ends $\Upsilon$ and $\Upsilon'$ of $G$ is a label-preserving mapping
$f$ between $\Upsilon$ and $\Upsilon'$ such that $f$ is a graph
isomorphism and $f$ maps $C(\Upsilon)$ to $C(\Upsilon')$.  Then $G$ is
called a \emph{context-free graph}~\cite{MullerSchupp} if ${\Phi}(G)$
has only finitely many isomorphism classes under end-isomorphisms.
Since $G$ has uniformly bounded degree, each cluster $C(\Upsilon)$ is
finite. Moreover, a context-free graph $G$ has only finitely many
isomorphism classes of clusters. Thus, there exists $\delta<\infty$
such that the diameters of the clusters of $G$ are bounded by
$\delta$. By~\cite[Proposition 12]{ChDrEsHaVa} any graph $G$ whose
diameters of clusters is uniformly bounded by $\delta$ is
$\delta$-hyperbolic (in fact, $G$ is quasi-isometric to a tree). Note
that the converse is not true (see the 1-skeleton of the square
complex $\tZ$ described in Section~\ref{th-counterexample}).

\subsection{Some Results from MSO Theory}\label{ssec:MSO}

In this subsection, we recall some results from MSO theory of
undirected graphs, directed labeled graphs, latices and posets, and
event structures. 
Among the MSO theories of various discrete structures, the MSO theory
of undirected graphs is probably the most complete, with various and
deep applications (see the book of Courcelle and
Engelfriet~\cite{CouEn}). Let $G=(V,E)$ be an undirected and unlabeled
graph.  The MSO logic as introduced in Subsection~\ref{MSO-def} only
allow quantifications over (subsets of) vertices of $G$. This theory
is usually denoted by $\MSO_1(G)$. In order to allow also
quantifications over (subsets of) edges, an extended representation of
a graph is used. This is the relational structure
$G^{e}=(V\cup E, \inc)$, where
$\inc=\{ (e,v)\in E\times V: \exists u\in V \text{ such that } e\in \{
uv, vu\}\}$.  The MSO theory of this relational structure $G^{e}$ is
usually denoted by $\MSO_2(G)$. Seese~\cite{See} proved the following
fundamental result about $\MSO_2$ decidability:

\begin{theorem}[\cite{See}]\label{th-seese}
  If $\MSO_2(G)$ is decidable, then $G$ has finite treewidth.
\end{theorem}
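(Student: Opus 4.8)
The plan is to prove the contrapositive: assuming $G$ does not have finite treewidth, I will show that $\MSO_2(G)$ is undecidable. The entry point is the grid minor theorem (Theorem~\ref{th-robseymour}): since $G$ does not have finite treewidth, the square grid minors of $G$ are not of bounded size, so $G$ contains the $n \times n$ grid as a minor for every $n$. As every finite (rectangular) grid is a minor of a sufficiently large square grid, the finite grids occurring as minors of $G$ are \emph{exactly} all finite grids.

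The second ingredient is the undecidability of the monadic second-order theory of the class of finite grids. I would establish this by encoding Turing machine computations: fix a universal machine $M$; an accepting computation tableau (configurations stacked row by row) of a run using at most $a$ cells and $b$ steps tiles an $a \times b$ grid, and the existence of a locally consistent tiling by the finite set of tile types, with the bottom row coding the initial configuration and some top cell carrying a halting state, is expressed by a single $\MSO$ sentence $\chi_M$ (the tile classes are quantified existentially as a vertex partition, and the transition constraints are first-order checks on each cell and its neighbours). Then $M$ halts on empty input if and only if some finite grid satisfies $\chi_M$; since halting is undecidable, the set of $\MSO$ sentences satisfied by some finite grid is undecidable.

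The core of the argument is an $\MSO_2$-interpretation of grids inside $G$. A grid minor is witnessed by a minor model: pairwise disjoint connected branch sets of vertices indexed by grid coordinates, together with, for each pair of grid-adjacent coordinates, an edge of $G$ joining the two branch sets. Crucially, this model is encoded by a tuple $\overline{X}$ of set parameters that includes a quantified \emph{set of edges} of $G$ — the edges spanning the branch sets and the designated connecting edges — so that the interpreted adjacency is defined via these selected edges rather than via arbitrary edges of $G$; this is precisely why $\MSO_2$ (quantification over edge sets) is needed, since using $\MSO_1$ one would inherit spurious adjacencies (complete graphs already carry large grid minors yet have a decidable $\MSO_1$ theory). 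One writes an $\MSO_2$ formula $\psi(\overline{X})$ asserting that $\overline{X}$ encodes such a model whose contraction is a grid, and for each grid sentence $\chi$ a relativized sentence $\chi^{*}(\overline{X})$ over $G$ in which grid-vertex quantifiers range over branch sets, grid-set quantifiers over sets of branch sets, and grid adjacency is read off from the designated connecting edges. Then $G \models \exists \overline{X}\,(\psi(\overline{X}) \wedge \chi^{*}(\overline{X}))$ holds iff some grid minor of $G$ satisfies $\chi$, i.e.\ (by the first paragraph) iff some finite grid satisfies $\chi$. A decision procedure for $\MSO_2(G)$ would therefore decide the undecidable grid problem, a contradiction; hence $\MSO_2(G)$ is undecidable, as required.

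The main obstacle is making the interpretation faithful. The delicate points are: (i) expressing connectivity of each branch set and the disjointness/coverage conditions in $\MSO_2$; (ii) forcing the contracted structure to be a \emph{bona fide} grid, with the correct row/column incidence pattern and no collapse of distinct grid edges — this is best handled by letting $\overline{X}$ carry linear orders along rows and columns and checking adjacency locally, rather than attempting to $\MSO$-axiomatize ``being a grid'' abstractly; and (iii) verifying that $\chi \mapsto \chi^{*}$ correctly simulates the grid's set quantification by subsets of the encoded branch sets. Uniformity across all grid sizes is automatic, since the single existentially quantified tuple $\overline{X}$ ranges over minor models of every size at once, so one fixed translation serves all instances $\chi_M$.
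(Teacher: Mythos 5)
The paper does not actually prove this statement: Theorem~\ref{th-seese} is imported verbatim from Seese's paper \cite{See}, so the only meaningful comparison is with the proof in that literature. Your route --- taking the contrapositive, invoking the grid-minor theorem (Theorem~\ref{th-robseymour}) to get every finite grid as a minor, building an $\MSO_2$-interpretation of minor models inside $G$ (branch sets, a quantified \emph{edge} set of connecting edges, relativized translation $\chi \mapsto \chi^{*}$), and concluding via the undecidability of the tiling/halting problem encoded on grids --- is precisely the classical proof of Seese's theorem. The delicate points you isolate are the right ones: connectivity and disjointness of branch sets, axiomatizing ``grid'' via row/column path structures with MSO-definable orders along paths rather than abstractly, and the observation that $\MSO_1$ cannot work because edge deletion is unavailable (the infinite complete graph has all grids as minors yet a decidable $\MSO_1$ theory) is exactly the standard justification for needing edge-set quantification.

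There is, however, one genuine gap, and it matters in this paper's setting where the theorem is applied to \emph{infinite} graphs $G(\cE)$ (for finite $G$ the statement is vacuous). Finiteness is not MSO-expressible, so your formula $\psi(\overline{X})$ cannot force the interpreted contraction to be a \emph{finite} grid: the path/order axioms you propose are equally satisfied by minor models contracting to infinite grids (quarter-planes, one-way infinite strips), and such models may exist in $G$. Consequently the asserted equivalence ``$G \models \exists \overline{X}(\psi \wedge \chi^{*})$ iff some finite grid satisfies $\chi$'' fails for \emph{arbitrary} grid sentences $\chi$ --- there are sentences separating finite from infinite grids --- so you do not obtain a decision procedure for the full finite-grid theory, which is what your final contradiction invokes. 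The repair is standard but must be said: run the reduction only on the tiling sentences $\chi_M$, and design them so that satisfaction on \emph{any} (finite or infinite) grid-like model is sound --- local tile consistency, initial configuration anchored at a corner of degree two, and a halting state occurring \emph{somewhere} already force $M$ to halt --- while completeness (halting implies satisfaction) is witnessed by a finite grid minor, which exists by the first paragraph of your argument. With that adjustment the proof closes; as written, the last step claims more from the interpretation than it delivers.
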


The converse of Seese's theorem is not true: one can construct trees
with undecidable $\MSO_2$ theory. On the other hand,
Courcelle~\cite{Cou2} proved that for any integer $k$ the class of all
graphs of treewidth at most $k$ has a decidable $\MSO_2$ theory.  If
$\MSO_2(G)$ is decidable, then $\MSO_1(G)$ is also decidable. Again,
the reverse implication is not true. However, Courcelle~\cite{Cou}
proved that the converse holds for graphs of bounded degree:

\begin{theorem}[\cite{Cou}]\label{th-courcelle}
  If $G$ is a graph with uniformly bounded degree and $\MSO_1(G)$ is
  decidable, then $\MSO_2(G)$ is also decidable.
\end{theorem}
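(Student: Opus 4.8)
The plan is to reduce the decidability of $\MSO_2(G)$ to that of $\MSO_1(G)$ by encoding every set of edges of $G$ as a bounded tuple of sets of vertices, in a way that is itself $\MSO_1$-definable. Let $\Delta$ be the degree bound. The crucial combinatorial fact is that a \emph{proper edge coloring} $E=\bigcup_i E_i$ into matchings lets one recover any edge subset from vertex data: if $F\subseteq E$ and $U_i$ is the set of endpoints of the edges of $F\cap E_i$, then, since $E_i$ is a matching, $F\cap E_i$ is \emph{exactly} the set of color-$i$ edges both of whose endpoints lie in $U_i$. Hence an arbitrary edge set $F$ is faithfully represented by the tuple $(U_i)_i$, and an $\MSO_2$ set variable ranging over edges can be simulated by finitely many ordinary $\MSO_1$ set variables subject to a matching-saturation constraint (``every vertex of $U_i$ has a unique color-$i$ neighbour in $U_i$''), which is $\MSO_1$-expressible once the coloring is available.

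The only genuinely delicate point is that $\MSO_1$ cannot quantify over edges, so the proper edge coloring must be produced from vertex sets alone. First I would existentially quantify a proper vertex coloring $V_1,\dots,V_c$ of the square graph $G^2$, using $c=\Delta^2+1$ colors; such a coloring exists greedily because $G^2$ has degree at most $\Delta^2$, and ``$u,w$ lie at distance at most $2$'' is first-order definable by $u=w\vee u\sim w\vee \exists x(x\sim u\wedge x\sim w)$, so ``$V_1,\dots,V_c$ is a proper coloring of $G^2$'' is an $\MSO_1$ formula in the $V_p$. Given such a coloring $\gamma$, assigning to each edge $uv$ the unordered pair $\{\gamma(u),\gamma(v)\}$ yields a proper edge coloring with $k=\binom{c}{2}$ color classes: any two edges sharing a vertex $v$ have their other endpoints at distance at most $2$, hence differently $\gamma$-colored. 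Each color class ``pair $\{p,q\}$'' and each incidence assertion is now first-order definable from $\sim$ and the $V_p$, which is exactly the data needed to run the encoding of the previous paragraph.

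With this in place I would define, by induction on formula structure, a translation $\varphi\mapsto\widehat\varphi$ sending each $\MSO_2$ formula over $G^{e}=(V\cup E,\inc)$ to an $\MSO_1$ formula over $G$ in which every edge element is replaced by a quantified pair of adjacent vertices (with a bounded disjunction over its color) and every edge-set variable is replaced by the tuple $(U_{\{p,q\}})_{p<q}$ of vertex sets described above; the atomic incidence formula $\inc(e,v)$ translates into a membership assertion ``$v\in U_{\{p,q\}}$'' for the relevant pair. A closed $\MSO_2$ sentence $\varphi$ then satisfies $G\models_{\MSO_2}\varphi$ if and only if $G\models_{\MSO_1}\exists V_1\cdots\exists V_c\bigl(\mathrm{Proper}(V_1,\dots,V_c)\wedge\widehat\varphi\bigr)$, because a valid square-coloring always exists and, for \emph{any} valid coloring, $\widehat\varphi$ faithfully expresses $\varphi$. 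Decidability of $\MSO_1(G)$ then decides the right-hand sentence, hence $\varphi$, proving $\MSO_2(G)$ decidable. The main obstacle, and the step deserving the most care, is precisely this faithfulness of the translation: one must check that the matching-saturation constraint makes the correspondence between edge sets and vertex-set tuples a bijection, that incidence and edge quantification are captured exactly, and that the answer is independent of which valid $G^2$-coloring is chosen.
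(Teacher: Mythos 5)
A point of comparison first: the paper does not prove this theorem at all --- it is imported from Courcelle's work \cite{Cou} as a black box --- so your proof can only be judged on its own merits, not against an in-paper argument. On those merits it is correct, and it follows the standard line for bounded-degree graphs: decompose the edge set into matchings, represent an edge set $F$ by the tuple of endpoint-sets $U_i$ of $F\cap E_i$, and observe that the matching property makes this representation faithful. Your specific device for making the decomposition available to $\MSO_1$ --- existentially quantify a proper vertex coloring $V_1,\dots,V_c$ of the square $G^2$ (which exists since $G^2$ has degree at most $\Delta^2$, and whose properness is first-order expressible), then color each edge by the unordered pair of colors of its endpoints --- is sound: two edges sharing a vertex have their other endpoints at distance at most $2$, so they receive distinct pairs, hence each pair class is a matching; this neatly circumvents the fact that $\MSO_1$ cannot quantify over edge colorings directly. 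The bijectivity of $F\mapsto (U_i)_i$ under your saturation constraint, and the independence of the final truth value from which proper coloring is chosen, are both argued correctly. Two routine details should be added for completeness: (i) in the relational structure $G^{e}=(V\cup E,\inc)$ the individual and set variables range over the mixed domain $V\cup E$, so before your translation applies, each individual variable must be relativized to the definable sorts ($x$ is an edge iff $\exists y\,\inc(x,y)$) and each set variable split into a vertex part and an edge part; (ii) the greedy coloring of $G^2$ is immediate for countable graphs (which covers every graph occurring in this paper), but for an arbitrary bounded-degree graph one should invoke De Bruijn--Erd\H{o}s compactness. Neither point is a genuine gap.
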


Now, consider labeled directed graphs. Let $\Sigma$ be a finite
alphabet. A $\Sigma$-\emph{labeled directed graph} is a relational
structure $\oG=(V,{(E_a)}_{a\in \Sigma})$, where $V$ is the set of
vertices and $E_a\subseteq V\times V$ is the set of $a$-labeled
directed edges.  Denote by $\MSO(\oG)$ the MSO theory of this
relational structure.  We associate to $\oG$ the unlabeled graph
$G=(V,\bigcup_{a\in \Sigma} \{ uv: u\ne v, (u,v)\in E_a \text{ or }
(v,u)\in E_a\})$.

M\"uller and Schupp~\cite{MullerSchupp} proved the following
fundamental theorem about $\Sigma$-labeled pointed context-free graphs
of bounded degree (and directed according to the basepoint order):

\begin{theorem}[\cite{MullerSchupp}]\label{th-mullerschupp}
  If $\oG$ is a context-free graph, then $\MSO(\oG)$ is decidable.
\end{theorem}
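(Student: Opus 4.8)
The plan is to reduce decidability of $\MSO(\oG)$ to Rabin's theorem on the monadic second-order theory of the infinite binary tree. The strategy has three parts: (i) encode the end/cluster structure of $\oG$ as a node-colored tree $T$; (ii) show that $T$ is \emph{regular}, i.e.\ has only finitely many isomorphism types of rooted colored subtrees, so that $\MSO(T)$ is decidable; and (iii) produce an MSO-interpretation of $\oG$ inside $T$. Since MSO-interpretations preserve decidability of the MSO theory, the third step transfers decidability from $T$ to $\oG$.

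For (i), I would build $T$ from the ends of $\oG$ relative to the basepoint $v_0$. Take the nodes of $T$ to be the ends of $G$, and declare an end $\Upsilon'$ arising from removing $S(v_0,k+1)$ to be a child of an end $\Upsilon$ arising from removing $S(v_0,k)$ whenever $\Upsilon'\subsetneq\Upsilon$; the root records a finite ball around $v_0$. Because $\oG$ is context-free it has bounded degree and, as noted after the definition of a context-free graph, its clusters have uniformly bounded diameter and hence bounded size. Thus each node $t$ of $T$, corresponding to an end $\Upsilon$, can be decorated with a single color from a finite palette recording the finite labeled cluster $C(\Upsilon)$ together with the bounded adjacency pattern between $C(\Upsilon)$ and the cluster of its parent. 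Since, apart from a finite ball around $v_0$, every vertex of $G$ is a frontier point of a unique end, the vertices of $\oG$ are in bijection with the pairs consisting of a node of $T$ and an element of its coded cluster.

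For (ii), the key point is that the colored subtree of $T$ hanging below a node depends only on the isomorphism type of the corresponding end: an end-isomorphism, being label-preserving and mapping clusters to clusters, extends to an isomorphism of the whole colored subtree, because each end $\Upsilon(C)$ is exactly the union of the principal filters of the vertices of its cluster $C$. As $\oG$ is context-free, $\Phi(G)$ has only finitely many isomorphism classes under end-isomorphisms, so $T$ has only finitely many isomorphism types of rooted colored subtrees; that is, $T$ is a regular colored tree. A regular colored tree is the unfolding of a finite colored graph and hence is MSO-definable inside the full binary tree, so by Rabin's theorem $\MSO(T)$ is decidable.

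For (iii), I would write down explicit MSO formulas interpreting $\oG$ in $T$. Encoding each vertex of $\oG$ as a pair consisting of a node of $T$ and an index from a fixed finite range --- legitimate since cluster sizes are uniformly bounded --- gives an MSO-interpretation of bounded dimension, and every edge, joining vertices on the same or on consecutive spheres, is either an intra-cluster edge (determined by the color at one node) or a parent--child edge (determined by the two colors and the recorded adjacency), so each labeled directed relation $E_a$ is a Boolean combination of the color predicates and the tree-adjacency relation, hence MSO-definable; the orientation is dictated by the distance to $v_0$, which is the depth in $T$. The interpretation lemma then transfers decidability from $T$ to $\oG$. The delicate point, and the main obstacle, is to confirm that this one uniform family of formulas recovers \emph{all} vertices and \emph{all} labeled directed edges faithfully, matching frontier points correctly across the bijection and losing or duplicating none; this is exactly where bounded cluster diameter and finiteness of end-types are used, since they guarantee that all adjacency between consecutive spheres is captured by the finite colors. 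An alternative classical route, the one originally taken by M\"uller and Schupp, realizes $\oG$ as the configuration graph of a pushdown automaton and interprets it in the binary tree through the stack contents.
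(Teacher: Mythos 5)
There is no internal proof to compare your attempt with: the paper states Theorem~\ref{th-mullerschupp} as a black box imported from M\"uller and Schupp~\cite{MullerSchupp}, and only uses it (it supplies the implication (6)$\Rightarrow$(1) in the proof of Theorem~\ref{mso-graph}). Judged on its own merits, your sketch reconstructs the standard argument for this result, and essentially the original one: encode $\oG$ by its tree of ends, argue that having finitely many end-isomorphism types (together with bounded degree and hence bounded cluster size) makes this a regular colored tree whose MSO theory is decidable by Rabin's tree theorem, and transfer decidability back to $\oG$ through a finite-copying MSO interpretation. You also correctly identify the pushdown-automaton formulation as the route actually taken by M\"uller and Schupp. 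The overall plan is sound; your key structural claims check out, e.g.\ every edge of $G$ is indeed either inside one cluster or between the clusters of a node and one of its children, and distances to $v_0$ inside an end $\Upsilon$ are determined by the induced graph on $\Upsilon$ together with $C(\Upsilon)$, which is what makes an end-isomorphism carry the whole decomposition of one end onto that of another.

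Two details would need to be nailed down in a complete write-up. First, in step (ii) the color of a node presupposes an identification of its cluster $C(\Upsilon)$ with a standard finite set; these identifications must be chosen coherently with the end-isomorphisms (for instance, fix one representative end per isomorphism class and transport its enumeration to every end of that class), since arbitrary, uncoordinated choices can yield infinitely many distinct colored subtrees even though $\oG$ is context-free, destroying the regularity you need. Second, regularity in the sense required for Rabin's theorem also needs $T$ to be finitely branching; this does hold (each child's cluster contains a vertex adjacent to the parent's cluster, which has bounded size in a graph of bounded degree), but it is an assumption of the ``regular tree $=$ unfolding of a finite graph'' step and should be stated. Neither point is a fundamental obstacle.
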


For a directed graph $\oG$, denote by $\Aut(\oG)$ its group of
automorphisms. Two vertices $u,v$ of $\oG$ belongs to the same orbit
of $\Aut(\oG)$ if there exists $f \in \Aut(\oG)$ such that $f(u) = v$.
Kuske and Lohrey~\cite{KuLo} established a kind of converse to
Theorem~\ref{th-mullerschupp} (the formulation of
Theorem~\ref{mso-graph} is inspired by this theorem, but the proofs
are different):

\begin{theorem}[\cite{KuLo}]\label{th-kuskelohrey}
  Let $\oG$ be a $\Sigma$-labeled connected graph of bounded degree
  such that $\Aut(\oG)$ has only finitely many orbits on $\oG$. Then
  conditions (1)-(3) are equivalent:
  \begin{enumerate}
  \item $\MSO(\oG)$ is decidable;
  \item $G$ has finite treewidth;
  \item $\oG$ is context-free.
  \end{enumerate}
\end{theorem}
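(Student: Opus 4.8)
The plan is to establish the three equivalences by proving the cycle of implications $(3)\Rightarrow(1)\Rightarrow(2)\Rightarrow(3)$, so that each imported tool is invoked exactly once. The implication $(3)\Rightarrow(1)$ is immediate: it is precisely the theorem of M\"uller and Schupp (Theorem~\ref{th-mullerschupp}), which asserts that every context-free graph has a decidable $\MSO$ theory. For $(1)\Rightarrow(2)$ I would first observe that the undirected edge relation of $G$ is first-order definable inside $\oG$, since $uv$ is an edge of $G$ iff there is a letter $a\in\Sigma$ with $(u,v)\in E_a$ or $(v,u)\in E_a$. Hence every $\MSO_1(G)$-sentence translates effectively into an $\MSO(\oG)$-sentence, and decidability of $\MSO(\oG)$ yields decidability of $\MSO_1(G)$. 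Because $\oG$, and therefore $G$, has uniformly bounded degree, Courcelle's theorem (Theorem~\ref{th-courcelle}) upgrades this to decidability of $\MSO_2(G)$, and Seese's theorem (Theorem~\ref{th-seese}) then forces $G$ to have finite treewidth.

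The whole difficulty is concentrated in $(2)\Rightarrow(3)$, which is the only step where the hypothesis that $\Aut(\oG)$ has finitely many orbits (i.e.\ that $\oG$ is quasi-transitive) is genuinely used. Given that $G$ has finite treewidth and bounded degree and that $\Aut(\oG)$ has finitely many orbits, the plan is to produce an $\Aut(\oG)$-invariant tree decomposition $(T,f)$ of $G$ of bounded width and then read off context-freeness from it. Bounded degree already guarantees (as noted after the definition of clusters) that every cluster $C(\Upsilon)$ is finite. The construction of the invariant decomposition relies on the theory of canonical tree decompositions of locally finite quasi-transitive graphs of finite treewidth: such a graph admits a tree decomposition of bounded adhesion and bounded width on which $\Aut(\oG)$ acts. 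Since $\Aut(\oG)$ has finitely many orbits on $V(G)$ and the decomposition is invariant, the induced action on the nodes of $T$ has finitely many orbits as well, so only finitely many isomorphism types of bags and separators occur and $T$ is \emph{periodic}. An end $\Upsilon$ of $G$---a component of $G\setminus S(v_0,k)$---is then carried by an infinite subtree of $T$ attached at a separator of bounded size, and its end-isomorphism type (matching the clusters $C(\Upsilon)$) is determined by the finite configuration at that separator together with the periodic subtree it spans. Periodicity forces finitely many such configurations, hence finitely many end-isomorphism classes, i.e.\ ${\Phi}(G)$ is finite up to end-isomorphism and $\oG$ is context-free.

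I expect the main obstacle to be exactly the construction of the bounded-width $\Aut(\oG)$-invariant tree decomposition: extracting from the bare fact of finite treewidth a decomposition that is simultaneously of bounded width and respected by the full automorphism group is the technical crux, whereas the passage from such a decomposition to finitely many end-types is a bookkeeping argument using bounded degree and finite clusters. One could alternatively use the grid-minor theorem (Theorem~\ref{th-robseymour}) to replace finite treewidth by the boundedness of grid minors, but this reformulation does not by itself circumvent the need to organize the ends of $\oG$ via the orbit structure of $\Aut(\oG)$; the quasi-transitivity hypothesis cannot be dropped, since without it a finite-treewidth bounded-degree graph (such as the one obtained from the square complex $\tZ$ discussed later) may fail to be context-free.
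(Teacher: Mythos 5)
This statement is not proved in the paper at all: it is imported verbatim from Kuske and Lohrey \cite{KuLo} and used as a black box (to deduce Proposition~\ref{mso-graph-complex}), and the paper explicitly notes that its own Theorem~\ref{mso-graph} is only \emph{inspired} by this result, with different proofs, precisely because $\Aut(\oG(\cE))$ may have infinitely many orbits in that setting. So your proposal must be judged against the literature rather than against an in-text argument. Your cycle $(3)\Rightarrow(1)\Rightarrow(2)$ is correct and coincides with how the paper handles the corresponding implications of Theorem~\ref{mso-graph}: M\"uller--Schupp (Theorem~\ref{th-mullerschupp}) gives $(3)\Rightarrow(1)$, and interpreting $\MSO_1(G)$ inside $\MSO(\oG)$, then applying Courcelle (Theorem~\ref{th-courcelle}, using bounded degree) and Seese (Theorem~\ref{th-seese}), gives $(1)\Rightarrow(2)$. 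You are also right that the orbit hypothesis enters only in $(2)\Rightarrow(3)$.

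However, $(2)\Rightarrow(3)$ carries the entire content of the theorem, and there your proposal contains a genuine gap rather than a proof. You delegate the construction of an $\Aut(\oG)$-invariant tree decomposition of bounded width and adhesion to ``the theory of canonical tree decompositions of locally finite quasi-transitive graphs of finite treewidth''; that existence statement is itself a theorem of depth comparable to the one being proved (it belongs to the accessibility/structure-tree circle of ideas of Dunwoody and Thomassen--Woess on which \cite{KuLo} builds), so citing it wholesale leaves the crux unproven. Moreover, even granting the invariant decomposition, the remaining step is not mere bookkeeping: M\"uller--Schupp ends are components of $G \setminus S(v_0,k)$ for a \emph{fixed basepoint} $v_0$, and automorphisms of $\oG$ do not preserve spheres around $v_0$, so ``periodicity'' of the decomposition does not act on ends in any direct way; an end-isomorphism must additionally preserve the $\Sigma$-labels and carry cluster to cluster. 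This is exactly the point where the paper's proof of the analogous implication $(5)\Rightarrow(6)$ of Theorem~\ref{mso-graph} has to work hardest — it first bounds cluster diameters via the half-grid minor construction and then manufactures end-isomorphisms from isomorphisms of ``recent pasts'' using determinism of the labeling, a tool unavailable for a general $\Sigma$-labeled graph — so the orbit hypothesis must be exploited in some concrete, spelled-out way that your sketch never supplies. Finally, a factual slip in your closing remark: the graph arising from $\tZ$, namely $G(\cE_Z)$, cannot witness that quasi-transitivity is indispensable, since by Lemma~\ref{GZ-treewidth} it has \emph{infinite} treewidth (that is the whole point of the paper's counterexample); a correct witness is, e.g., a ray with pendant edges attached at an aperiodic set of positions, which has treewidth one and bounded degree but infinitely many end-isomorphism types.
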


Kuske~\cite{Ku} characterized the decidability of the MSO logic of
distributive lattices. Let $\cL=(L,\le)$ be a distributive lattice.
Let $\JJ(\cL)$ denote the set of join irreducible elements of $\cL$;
$(\JJ(\cL),\le)$ can be viewed as a poset.  An \emph{antichain} is a
set of pairwise incomparable elements. Denote by $w(\cL)$ the
\emph{width} of $\cL$, i.e., the supremum of the cardinalities of its
antichains.  For a poset $(L,\le)$, $\MSO(L,\le)$ is the MSO theory of
the relational structure $(L,\le)$.

\begin{theorem}[\cite{Ku}]\label{th-kuske}
  Let $\cL$ be a distributive lattice. Then $\MSO(\cL)$ is decidable
  if and only if $\MSO(\JJ(\cL))$ is decidable and the width $w(\cL)$
  is bounded.
\end{theorem}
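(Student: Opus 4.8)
The plan is to prove both implications through Birkhoff's representation of a distributive lattice by the down-sets of its poset of join-irreducibles, converting the width hypothesis into a bounded-dimension coding via Dilworth's theorem. Throughout write $P := \JJ(\cL)$ and note the two algebraic facts that make $P$ the right carrier: in a distributive lattice every join-irreducible $p$ is \emph{join-prime} (from $p = p\wedge(a\vee b) = (p\wedge a)\vee(p\wedge b)$ and join-irreducibility), and consequently the map $\iota\colon x\mapsto\{p\in P: p\le x\}$ is a lattice embedding of $\cL$ into the lattice of down-sets of $P$. I will assume, as holds for the algebraic lattices that arise as event-structure domains, that every element is the join of the join-irreducibles below it, so that $\iota$ is injective.

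For the forward direction ($\MSO(\cL)$ decidable implies the two conditions), the first condition is easy: since $\vee$ and $\wedge$ are first-order definable from $\le$ in a lattice, the predicate ``$x$ is join-irreducible'' is first-order, so $(\JJ(\cL),\le)$ is first-order (hence MSO) interpretable in $(\cL,\le)$; as MSO-interpretations preserve decidability, $\MSO(\JJ(\cL))$ is decidable. For the second condition I argue contrapositively. Here the key rigidity is that distributivity forbids an antichain whose pairwise joins all coincide (that would produce a copy of the non-distributive $M_n$); consequently a distributive lattice of unbounded width must contain arbitrarily large grids $[n]\times[n]$ (products of two chains) as MSO-interpretable subposets. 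Since the MSO theory of the class of all finite grids is undecidable (the standard encoding of Turing-machine halting into grid size), a uniform interpretation of these grids in the single structure $\cL$ forces $\MSO(\cL)$ to be undecidable. Producing the grids rigorously from a large antichain is a structural/Ramsey argument and is one of the two non-routine points.

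The substantial direction is the converse: assume $\MSO(\JJ(\cL))$ decidable and $w(\cL)=k<\infty$. First, bounded width of $\cL$ forces $w(P)\le d$ for some bounded $d$: an antichain $j_1,\dots,j_m$ in $P$ gives, by join-primeness, $j_i\le\bigvee_{k\in S}j_k\iff i\in S$, so $S\mapsto\bigvee_{k\in S}j_k$ embeds $2^{[m]}$ into $\cL$ and its middle layer is an antichain of size $\binom{m}{\lfloor m/2\rfloor}$ in $\cL$. Fixing (by Dilworth's theorem applied to $P$) a partition $P=Q_1\sqcup\cdots\sqcup Q_d$ into $d$ chains, each down-set $\iota(x)$ is determined by its at-most-$d$ maximal elements, which meet each chain $Q_i$ in at most one point; this codes every $x\in\cL$ by a tuple $(a_1,\dots,a_d)$ with $a_i\in Q_i\cup\{\bot\}$, and conversely every such tuple codes $\bigvee_i a_i\in\cL$. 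The relation $x\le y$ and the predicate selecting valid codes are first-order in $(P,\le)$, so this bounded-dimension coding realizes $(\cL,\le)$ as a first-order interpretation of $(P,\le)$ whose elements are $d$-tuples.

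The main obstacle, and the step I expect to be hardest, is transferring monadic second-order quantification through this coding: a subset of $\cL$ corresponds to a set of $d$-tuples, i.e. a subset of $P^d$, which $\MSO(P)$ cannot quantify directly — and indeed MSO is \emph{not} transferable through arbitrary products (this is exactly the grid phenomenon of the forward direction). This is precisely where the width bound is essential: because the coordinate dimension $d$ is \emph{fixed and finite}, I would resolve the transfer by a Feferman--Vaught/composition argument that decomposes a set of codes along the $d$ coordinate chains $Q_1,\dots,Q_d$ and pushes each monadic quantifier over $\cL$ to a bounded block of monadic quantifiers over the fixed chains of $P$, so that the truth of an MSO sentence about $\cL$ is computed from finitely many MSO queries about $P$; decidability of $\MSO(P)=\MSO(\JJ(\cL))$ then yields an algorithm for $\MSO(\cL)$. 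Making this composition rigorous despite the general failure of MSO-Feferman--Vaught — using that only boundedly many coordinate interactions occur, and that the chain decomposition of $P$ can be chosen MSO-definably — is the delicate part of the proof, and it is exactly the bounded width that prevents the blow-up.
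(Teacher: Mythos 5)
First, a point of comparison: the paper does \emph{not} prove Theorem~\ref{th-kuske}. It is quoted from Kuske~\cite{Ku} as background (alongside Seese, Courcelle, and M\"uller--Schupp) and used as a black box, so there is no in-paper proof to measure your argument against; the assessment below is of your proposal on its own terms.

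Your forward direction contains an outright false step. The claimed ``key rigidity'' --- that distributivity forbids an antichain whose pairwise joins all coincide --- is wrong: in the Boolean lattice $2^{\{1,2,3\}}$ the coatoms $\{1,2\},\{1,3\},\{2,3\}$ form an antichain with all pairwise joins equal to the top, and the lattice is distributive. A copy of $M_n$ requires the pairwise \emph{meets} to coincide as well, and here they are distinct. So your route from unbounded width to uniformly interpretable grids collapses at its first move; the conclusion may still be reachable (e.g.\ by combining your cube-embedding observation for antichains of join-irreducibles with an Erd\H{o}s--Szekeres/staircase analysis inside a product of chains when $\JJ(\cL)$ has bounded width), but that is precisely the ``non-routine'' work you deferred, and the shortcut you offered in its place is not valid.

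The converse direction has a gap that is fatal \emph{as described}. Your transfer of monadic quantification through the $d$-tuple coding is justified only by the phrase ``because the coordinate dimension $d$ is fixed and finite''. That reason cannot suffice: take $\cL=\NN\times\NN$, the infinite grid. Then $\JJ(\cL)$ consists of two disjoint $\omega$-chains, so $d=2$ is fixed and finite, $\MSO(\JJ(\cL))$ is decidable (B\"uchi plus composition over disjoint unions), and \emph{every} pair is a valid code --- yet $\MSO(\cL)$ is undecidable, which is exactly why grids drive Conjecture~\ref{MSO}. Hence any Feferman--Vaught-style argument of the shape you describe would ``prove'' a false statement. A correct proof must inject the hypothesis $w(\cL)<\infty$ into the transfer step itself --- for instance via a Dilworth decomposition of $\cL$ (not merely of $\JJ(\cL)$) into finitely many chains, so that second-order quantification over $\cL$ can be replaced by quantification over boundedly many subsets of chains --- and your proposal does not say how this is done; note that in the grid $w(\cL)$ is infinite, which is the only thing separating it from your setting. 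To be fair, three ingredients of your plan are sound and worth keeping: join-irreducibility is first-order definable, so decidability of $\MSO(\cL)$ does yield decidability of $\MSO(\JJ(\cL))$; your Boolean-cube middle-layer argument correctly shows $w(\cL)<\infty$ forces $w(\JJ(\cL))<\infty$; and the $d$-tuple coding of elements is legitimate under your (silently added) finitary hypothesis that every element is a finite join of join-irreducibles --- an assumption adequate for the paper's application to domains of conflict-free event structures, but a restriction relative to the statement as given.
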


Since distributive lattices are exactly the domains of conflict-free
event structures and there exists a bijection between join
irreducibles and the events of that event structure
(Corollary~\ref{geodesicprimetrace}), Theorem~\ref{th-kuske} can be
viewed as a result about decidability of MSO theory of conflict-free
event structures (graphs and event structures). That the MSO theory of
trace conflict-free event structures is decidable follows from a more
general result of Madhusudan~\cite{Mad}:

\begin{theorem}[\cite{Mad}]\label{th-madhusudan}
  The MSO theory of a trace event structure $\cE$ is decidable
  provided quatifications over sets are restricted to conflict-free
  subsets of events. In particular, if $\cE$ is conflict-free, then
  $\MSO(\cE)$ is decidable.
\end{theorem}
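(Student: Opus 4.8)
The plan is to reduce the restricted MSO theory of $\cE$ to the MSO theory of a recognizable set of Mazurkiewicz traces, for which decidability is classical (Thomas; Ebinger and Muscholl). The starting point is the description of events as prime traces: by Corollary~\ref{geodesicprimetrace} the events of a trace event structure are in label-preserving bijection with the prime geodesic traces over the trace alphabet $M=(\Sigma,I)$, with the precedence $\le$ being the prefix order $\sqsubseteq$ and the conflict $\#$ being the impossibility of a common extension. The structural fact I would isolate first is that a subset $X\subseteq E$ is conflict-free if and only if its members are pairwise co-realizable, i.e., there is a single (finite or infinite) trace $t$ such that every event of $X$ is a prime prefix of $t$; equivalently, the corresponding hyperplanes are pairwise non-conflicting and are all crossed by one geodesic ray from the basepoint. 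Thus the restriction of set quantification to conflict-free subsets is exactly the restriction to subsets of events that lie inside one run of the system.

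Next I would encode each run as a single labeled trace and assemble all runs into one object on which MSO is decidable. In the regular case relevant here, $\cE$ is the unfolding $\cE_N$ of a finite net system $N$ (Theorem~\ref{th:regular_trace}), so its firing traces form a recognizable trace language, and the maximal runs are the (infinite) traces accepted by a Zielonka-style asynchronous automaton built from $N$. I would interpret the relational structure $(E,{(R_a)}_{a\in\Sigma},\le)$ together with conflict-free second-order variables into the MSO theory of this recognizable set of traces: an event is named by the position of its prime prefix, $R_a$ becomes the letter at that position, $\le$ becomes the trace order, and a conflict-free set variable becomes a set of positions inside one trace, all MSO-definable.

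The conclusion would then follow from the decidability of the MSO theory of recognizable trace languages: by the B\"uchi--Thomas theorem for traces (Ebinger and Muscholl), MSO over infinite Mazurkiewicz traces is equi-expressive with asynchronous-automaton recognizability and the associated satisfiability problem is decidable; applied to the recognizable set of runs of $N$ this yields an algorithm deciding each sentence of the restricted theory of $\cE$. The special case of a conflict-free $\cE$ is then immediate, since $E$ is a single trace (its domain is a distributive lattice by Theorem~\ref{median_domain}) and one only asks for decidability of the MSO theory of one recognizable trace, which the same theorem provides; note that this does \emph{not} follow from Kuske's Theorem~\ref{th-kuske}, which additionally requires the width $w(\cL)$ to be bounded.

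The main obstacle I anticipate is the faithfulness of the interpretation, precisely at the conflict-free restriction. One must verify that every conflict-free event set is realized inside a single run (so that its image is a legitimate subset of trace positions) and, conversely, that a set of positions of a run pulls back to a conflict-free event set; the delicate direction is the former, where independent (concurrent) events of $X$ must be \emph{simultaneously} schedulable, which is exactly where the axioms (LES2)--(LES3) of the trace labeling and the Helly property of carriers (Proposition~\ref{Helly-CAT0}) enter. A secondary difficulty is that a given event is the prime prefix of many linearizations, so events must be pinned down canonically---via the gate characterization of Lemma~\ref{geoprimetrace2}---to make the translation commute with set quantification; isolating such a canonical representative while keeping the encoding MSO-definable is the technical heart of the argument.
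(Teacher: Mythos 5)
First, a point of comparison: the paper itself contains no proof of this statement. Theorem~\ref{th-madhusudan} is quoted from Madhusudan \cite{Mad} as a black box, so your proposal cannot be checked against an in-paper argument; it has to stand on its own as a reconstruction of Madhusudan's proof. Your structural preliminaries are sound: a set $X\subseteq E$ is conflict-free iff it is contained in a single (possibly infinite) configuration, since the downward closure of a pairwise non-conflicting set is again conflict-free by conflict inheritance, and the (possibly infinite) configurations of an unfolding $\cE_N$ correspond to the (possibly infinite) firing traces of $N$, which form a recognizable trace language. Your closing paragraph on the conflict-free special case is also fine (there all events lie in one run, whose poset is MSO-interpretable in an $\omega$-word linearization, so B\"uchi's theorem applies), as is your remark that Kuske's Theorem~\ref{th-kuske} does not yield this, since it concerns the domain and needs bounded width.

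The genuine gap is in the interpretation step, and it is not the ``faithfulness'' issue you flag at the end. The decidability results you invoke (Thomas, Ebinger--Muscholl) concern MSO over Mazurkiewicz traces evaluated \emph{one trace at a time}: they let you decide whether some, or every, trace of a recognizable language satisfies a given sentence. The restricted theory of $\cE$ is instead the theory of a single structure $(E,{(R_a)}_{a\in \Sigma},\le)$ in which first-order variables range over \emph{all} events and distinct conflict-free set variables are, in general, instantiated inside different, mutually \emph{conflicting} runs --- the union of two conflict-free sets need not be conflict-free (two singletons $\{e\},\{e'\}$ with $e\# e'$ already show this). So your translation ``a conflict-free set variable becomes a set of positions inside one trace'' does not produce anything evaluable in a single trace: each set (and each event) variable needs its own existentially quantified host run, after which the atomic relations $\le$, $=$, and the derived conflict must be expressed \emph{between positions of distinct runs}, relations that are simply outside the vocabulary of MSO over one trace. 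Your phrase ``assemble all runs into one object on which MSO is decidable'' is exactly where the proof should be, and no such object is exhibited; note that gluing all runs along their common events just reproduces $\cE$ itself, which is circular. Handling this cross-run comparison (via an encoding into a tree structure to apply Rabin's theorem, or a product/synchronization construction on asynchronous automata) is the technical heart of Madhusudan's argument, and as written your reduction does not go through.
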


\subsection{Grids}

In this section we consider several types of square grids, which
characterize different properties of event structures and their
graphs.  The infinite \emph{square grid} $\Gamma$ is the graph whose
vertices correspond to the points in the plane with nonnegative
integer coordinates and two vertices are connected by an edge whenever
the corresponding points are at distance 1. The $n\times n$ square
grid $\Gamma_n$ is the subgraph of $\Gamma$ whose vertices are all
vertices of $\Gamma$ with $x$- and $y$-coordinates in the range
$0,\ldots,n$. $\Gamma$ and $\Gamma_n$ can be viewed as directed graphs
with respect to the basepoint order with respect to the corner
$(0,0)$.  Note that $\Gamma$ is the domain of the event structure
consisting of two pairwise disjoint sets
$X=\{x_0,x_1,x_2,\ldots\},Y=\{y_0,y_1,y_2,\ldots\}$ of events, such
that $x_0<x_1<x_2<\cdots$ and $y_0<y_1<y_2<\cdots$, and all events of
$X$ are concurrent with all events of $Y$. This event structure is
conflict-free and trace-regular.  Below, if not specified, by
$\Lambda$ we denote either of the grids $\Gamma$ or $\Gamma_n$. A
\emph{directed grid} $\oLambda$ is a grid $\Lambda$ with basepoint
orientation with respect to the origin $(0,0)$.

By Theorem~\ref{th-robseymour}, the treewidth of a graph is
characterized by square grid minors. We say that a square grid
$\Lambda$ is a \emph{grid minor} of a graph $G$ if $\Lambda$ is a
minor of $G$.

By Lemma~\ref{hyp_median}, the hyperbolicity of a median graph (event
domain or 1-skeleton of a CAT(0) cube complex) is characterized by
isometrically embedded square grids. We say that a square grid
$\Lambda$ is an \emph{isometric grid} of a median graph $G=(V,E)$ if
there exists an isometric embedding of $\Lambda$ in $G$. An event
structure characterization of isometric grids is provided below.

A stronger version of isometric grid is the notion of a flat grid. We
say that an isometric grid $\Lambda$ is a \emph{flat grid} of a median
graph $G$ if for any two vertices $x,y$ of $\Lambda$ at distance 2,
any common neighbor $z$ of $x$ and $y$ in $G$ belongs to $\Lambda$.
Since any locally-convex connected subgraph of $G$ is convex
(Lemma~\ref{convex}), any flat grid is convex. If $G$ is the
$1$-skeleton of a 2-dimensional cube complex, then any isometric grid
is flat.  If $\Lambda$ is a flat grid of the graph $G(\cE)$ of an
event domain $\cD(\cE)$, then there are two disjoint subsets
$X=\{x_0,x_1,x_2,\ldots\},Y=\{y_0,y_1,y_2,\ldots\}$ of events of $\cE$
such that $x_0\lessdot x_1\lessdot x_2\lessdot\cdots$ and
$y_0\lessdot y_1\lessdot y_2\lessdot\cdots$, and all events of $X$ are
concurrent with all events of $Y$.

The minor of a graph is defined by contracting edges. Minors are also
implicitly used in the theory of event structures, namely, to define
the event structure $\cE\setminus c$ rooted at a configuration
$c$. The domain of $\cE\setminus c$ is the principal filter $\cF(c)$
of $c$, $\cF(c)$ is a convex subgraph of $G(\cE)$, and thus $\cF(c)$
is the intersection of all halfspaces containing $\cF(c)$. Therefore,
$\cF(c)$ can be obtained from the median graph $G(\cE)$ of $\cE$ by
contracting all hyperplanes which do not intersect $\cF(c)$.

Given a median graph $G$ and a hyperplane $H$ of its CAT(0) cube
complex, the median graph $G'$ is obtained by
\emph{hyperplane-contraction} of $G$ with respect to $H$ if $G'$ is
obtained from $G$ by contracting all edges of $G$ dual to $H$.  We say
that a median graph $G'$ is a \emph{strong-minor} of a median graph
$G$ if $G'$ can be obtained from $G$ by hyperplane-contraction of a
set of hyperplanes of $G$.

Finally recall the event structure $\cE_{TY}=(E,\le, \#)$ occurring in
the definition of grid-free event structures.  Recall that $E$
consists of three pairwise disjoint sets $X,Y,Z$ such that:
\begin{itemize}
\item $X=\{x_0,x_1,x_2,\ldots\}$ is an infinite set of events with
  $x_0<x_1<x_2<\cdots$.
\item $Y=\{y_0,y_1,y_2,\ldots\}$ is an infinite set of events with
  $y_0<y_1<y_2<\cdots$.
\item $X\times Y\subseteq \parallel$.
\item There exists an injective mapping $g: X\times Y\rightarrow Z$
  satisfying: if $g(x_i,y_j)=z$ then $x_i<z$ and $y_j<z$. Furthermore,
  if $i'>i$ then $x_{i'}\nless z$ and if $j'>j$ then $y_{j'}\nless z$.
\end{itemize}
The domain of $\cE_{TY}$ contains the infinite square grid $\Lambda$
as a strong-minor. This grid corresponds to the events defined by the
sets $X$ and $Y$ and is obtained by contracting all hyperplanes
corresponding to the events in $E\setminus (X\cup Y)$. On the
other hand, the events from $Z$ correspond to the hairs attached to
the grid $\Lambda$ in the definition of the hairing of an event
structure.  However, the relationship between the events of $Z$ or the
events of $Z$ and a part of events of $X\cup Y$ is not specified, thus
one cannot say more about the structure of the domain of $\cE_{TY}$.

We continue with relationships between isometric grids and
hyperbolicity.

\begin{lemma}\label{lem-isometric-sq}
  Let $\cE=(E,\le, \#)$ be an event structure of bounded degree. If
  the directed median graph $\oG(\cE)$ contains an isometric
  $n\times n$ directed square grid, then $E$ contains two disjoint
  conflict-free sets of events
  $A=\{ x_0,x_1,\ldots, x_{n-1}\},B=\{ y_0,y_1,\ldots,y_{n-1}\}$ such
  that $x_i \parallel y_j$ for any two events $x_i\in A, y_j\in B$.
  Conversely, if for any $n \in \NN$, $E$ contains two disjoint sets
  of events
  $A=\{ x_0,x_1,\ldots, x_{n-1}\},B=\{ y_0,y_1,\ldots,y_{n-1}\}$ such
  that $x_i \parallel y_j$ for any two events $x_i\in A, y_j\in B$,
  then the median graph $G(\cE)$ is not hyperbolic, and thus contains
  arbitrarily large isometric square grids.
\end{lemma}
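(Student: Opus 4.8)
The plan is to prove both implications through the hyperplane (event) description of $G(\cE)$ recalled in Subsection~\ref{sec-dom-med}, under which the events of $\cE$ are exactly the hyperplanes of the CAT(0) cube complex $X(\cE)$, two events are concurrent iff the corresponding hyperplanes cross, and $H\le H'$ iff $H$ separates $H'$ from the basepoint $v_0$. With this dictionary in hand, the forward statement becomes an extraction of two families of pairwise-crossing hyperplanes from an isometric grid, and the converse becomes an application of Hagen's crossing-graph criterion for hyperbolicity.

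For the forward implication, suppose $\oG(\cE)$ contains an isometric directed $n\times n$ grid $\Lambda$ with source corner $(0,0)$. First I would note that, $G(\cE)$ being a median graph and $\Lambda$ being isometrically embedded, every elementary $4$-cycle (cell) of $\Lambda$ is a $4$-cycle of $G(\cE)$ whose opposite vertices lie at distance $2$, and hence bounds a square of $X(\cE)$. It follows that the $n$ parallelism classes separating consecutive columns and the $n$ classes separating consecutive rows are dual to $2n$ distinct hyperplanes $H^x_0,\dots,H^x_{n-1}$ and $H^y_0,\dots,H^y_{n-1}$; let $x_i,y_j$ be the corresponding events. Since each cell is a square, its two consecutive bounding edges are dual to $H^x_i$ and $H^y_j$, so these hyperplanes cross and therefore $x_i\parallel y_j$ for all $i,j$ (in particular $A=\{x_0,\dots,x_{n-1}\}$ and $B=\{y_0,\dots,y_{n-1}\}$ are disjoint, as a hyperplane never crosses itself). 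To see that $A$ and $B$ are conflict-free I would use that the embedding preserves the basepoint orientation: a directed edge $u\to w$ satisfies $d(v_0,w)=d(v_0,u)+1$, so $v_0$ lies on the small-column side of every $H^x_i$; as $H^x_{i'}$ lies entirely on the opposite side of $H^x_i$ for $i<i'$, the hyperplane $H^x_i$ separates $H^x_{i'}$ from $v_0$, i.e.\ $x_i\le x_{i'}$. Thus $A$ is a chain, and symmetrically $B$ is a chain; since any two comparable events lie together in a configuration (e.g.\ in $\downarrow\!x_{i'}$), they are never in conflict, so $A$ and $B$ are conflict-free, as required.

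For the converse, assume that for every $n$ there are disjoint sets $A,B$ of $n$ events with $x\parallel y$ for all $x\in A$, $y\in B$. Reading concurrency as crossing, the hyperplanes of $A$ and those of $B$ form the two sides of a complete bipartite subgraph $K_{n,n}$ of the crossing graph $\Gamma(X(\cE))$ (no hypothesis on crossings inside $A$ or inside $B$ is needed, and both sides have size $n\to\infty$), so $\Gamma(X(\cE))$ does not have thin bicliques. Next I would record that $\cE$ of bounded degree forces $X(\cE)$ of bounded degree: at any configuration $c$ the out-degree is $|en(c)|\le\degree(\cE)$, while the set $M$ of maximal events of $c$ is pairwise concurrent and simultaneously removable, so $M\subseteq en(c\setminus M)$ and the in-degree equals $|M|\le\degree(\cE)$; hence every vertex of $G(\cE)$ has degree at most $2\degree(\cE)$. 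Theorem~\ref{hyp_median_hagen} then applies: a bounded-degree CAT(0) cube complex whose crossing graph lacks thin bicliques is not hyperbolic, so $G(\cE)=X(\cE)^{(1)}$ is not hyperbolic. Finally, Lemma~\ref{hyp_median} yields that the isometrically embedded square grids in $G(\cE)$ are not uniformly bounded, i.e.\ $G(\cE)$ contains arbitrarily large isometric square grids.

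The combinatorics of the grid's parallelism classes and the degree count are routine; the one step that genuinely relies on external machinery is the passage from a thick biclique in the crossing graph to non-hyperbolicity, which is precisely Hagen's criterion (Theorem~\ref{hyp_median_hagen}) and is exactly where the bounded-degree hypothesis is needed. I expect the main bookkeeping subtlety to be the orientation argument in the forward direction, ensuring that $A$ and $B$ come out as chains (and hence conflict-free); the corresponding point in the converse, that the extracted $K_{n,n}$ has both sides growing without bound, is immediate since $|A|=|B|=n$.
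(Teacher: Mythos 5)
Your forward direction follows the paper's route up to one false intermediate claim: that for $i<i'$ the hyperplane $H^x_{i'}$ ``lies entirely on the opposite side'' of $H^x_i$, whence $x_i\le x_{i'}$ and $A$ is a chain. Isometricity of the grid only prevents $H^x_i$ and $H^x_{i'}$ from crossing \emph{inside} the grid; they may perfectly well cross elsewhere in $X(\cE)$. Concretely, the domain of $2n$ pairwise concurrent events is the $2n$-cube, and the ``staircase'' map sending $(i,j)$ to the vertex whose first $i$ coordinates among $e_1,\dots,e_n$ and first $j$ coordinates among $f_1,\dots,f_n$ equal $1$ is an isometric directed $n\times n$ grid whose column hyperplanes $H_{e_1},\dots,H_{e_n}$ pairwise cross; there $A$ is an antichain, not a chain. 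The conclusion you need (conflict-freeness) survives via the dichotomy the paper uses: for $i<i'$, either $H^x_i$ and $H^x_{i'}$ cross, and then $x_i\parallel x_{i'}$; or they do not cross, and then, since inside the grid $H^x_i$ separates the origin from the dual edges of $H^x_{i'}$, nestedness of the convex halfspaces forces $H^x_i$ to separate $H^x_{i'}$ from $v_0$ globally, i.e.\ $x_i\le x_{i'}$. Either way the two events are not in conflict, but the chain claim itself is wrong and must be replaced by this case analysis.

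The converse is where you genuinely diverge from the paper, and the step you declare unnecessary is exactly the paper's key step. The parenthetical ``no hypothesis on crossings inside $A$ or inside $B$ is needed'' is what requires proof: the biclique criterion of Theorem~\ref{hyp_median_hagen} concerns families of hyperplanes with no crossings within each family, i.e.\ \emph{induced} copies of $K_{n,n}$ in $\Gamma(X(\cE))$ --- these are the configurations that yield flat grids, whereas crossings inside a family lead toward high-dimensional cubes, which by themselves do not obstruct hyperbolicity (the crossing graph of a single $2n$-cube contains a non-induced $K_{n,n}$). Your sets $A,B$ carry no information about crossings within each family, so the $K_{n,n}$ you exhibit need not be induced, and invoking the cited theorem on it presupposes the bounded-degree reduction from non-induced to induced bicliques --- which is precisely what the paper proves and you omit. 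The paper's argument: cliques in $\Gamma(X(\cE))$ have size at most $d=\degree(\cE)$, since $d+1$ pairwise-crossing hyperplanes would give a $(d+1)$-cube (Proposition~\ref{Helly-CAT0}) whose source is a vertex of out-degree $d+1$; hence, taking $|A|=|B|=m\ge R(n,d+1)$ (the hypothesis supplies sets of every size) and applying Ramsey's theorem inside each family produces stable subsets $A',B'$ of size $n$, so that $A'\cup B'$ \emph{induces} $K_{n,n}$, to which Theorem~\ref{hyp_median_hagen} applies unambiguously. Your preliminary observation that $\degree(\cE)\le d$ bounds the vertex degrees of $G(\cE)$ by $2d$ (out-degree via $en(c)$, in-degree via the maximal events of $c$) is correct and makes explicit a point the paper leaves implicit, but it does not replace the missing Ramsey argument.
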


\begin{proof}
  If $G(\cE)$ contains an isometric $n\times n$ directed grid
  $\oLambda$, then let $X=\{ x_0,\ldots,x_{n-1}\}$ denote the events
  defining the edges of one side of $\oLambda$ and let
  $Y=\{y_0,\ldots,y_{n-1}\}$ denote the events defining the edges of
  another incident side of $\oLambda$. Since each hyperplane $H_{x_i}$
  intersects each hyperplane $H_{y_j}$ we conclude that the events of
  $X$ are concurrent with the events of $Y$. It remains to show that
  two events of $X$ or two events of $Y$ cannot be in conflict. Pick
  any $x_i,x_{j}\in X$ with $i<j$ and suppose that $X$ define
  horizontal edges of $\Lambda$.  Then in $\oLambda$ the hyperplane
  $H_{x_i}$ separates the origin of the grid from the carrier of
  $H_{x_{j}}$. This implies that $x_i$ and $x_j$ cannot be in
  conflict.

  To prove the converse, we use Theorem~\ref{hyp_median_hagen}
  of~\cite{Ha}: we show that for any $n$, the crossing graph
  $\Gamma(X(\cE))$ contains a complete bipartite subgraph $K_{n,n}$.
  Suppose that the maximum degree of $\cE$ is $d$. Recall that the
  \emph{Ramsey theorem} asserts that for any two integers $r$ and $s$
  there exists a least positive integer $R(r,s)$ such that any graph
  with at least $R(r,s)$ vertices either contains a stable set of size
  $r$ or a clique of size $s$. Let $m\ge R(n,d+1)$. Then $E$ contains
  two disjoint sets of events
  $A=\{ x_0,x_1,\ldots, x_{m-1}\},B=\{ y_0,y_1,\ldots,y_{m-1}\}$ such
  that $x_i \parallel y_j$ for any two events $x_i\in A, y_j\in
  B$. Recall that two events $e$ and $e'$ of $E$ are concurrent iff
  their hyperplanes $H_{e}$ and $H_{e'}$ intersect, i.e., $H_e$ and
  $H_{e'}$ are adjacent in $\Gamma(X(\cE))$. Consequently, $H_{x_i}$
  and $H_{y_j}$ are adjacent in $\Gamma(X(\cE))$ for any $x_i\in A$
  and $y_j\in B$. Let $\Gamma'$ (respectively, $\Gamma''$) be the
  subgraph of $\Gamma(X(\cE))$ induced by the hyperplanes defined by
  the events of $A$ (respectively, of $B$). Since $\Gamma'$ contain
  $m\ge R(n,d+1)$ vertices, by Ramsey's theorem, $\Gamma'$ either
  contains a stable set $A'$ of size $n$ or a clique $C'$ of size
  $d+1$. In the second case we conclude that $X(\cE)$ contains $d+1$
  pairwise intersecting hyperplanes. By Proposition~\ref{Helly-CAT0},
  this implies that $X(\cE)$ contains a $(d+1)$-cube $Q$. Since the
  orientation of the edges of $X(\cE)$ is admissible, $Q$ contains a
  source of degree $d+1$, contrary to the assumption that the maximum
  degree of $\cE$ is $d$.  Consequently, $\Gamma'$ contains a stable
  set $A'$ of size $n$. Similarly, $\Gamma''$ contains a stable set
  $B'$ of size $n$. But then $A'\cup B'$ induce the complete bipartite
  graph $K_{n,n}$ in the crossing graph $\Gamma(X(\cE))$.
\end{proof}

\begin{proposition}\label{hyperbolic-gridfree}
  If the graph $G(\cE)$ of an event structure $\cE$ of bounded degree
  is hyperbolic, then $\cE$ is grid-free.
\end{proposition}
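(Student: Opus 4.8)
The plan is to prove the contrapositive: assuming $\cE$ is \emph{not} grid-free, I would show that the median graph $G(\cE)$ is not hyperbolic. So suppose $\cE$ is not grid-free. By the definition of grid-freeness, there exist three pairwise disjoint subsets $X=\{x_0,x_1,x_2,\ldots\}$, $Y=\{y_0,y_1,y_2,\ldots\}$, and $Z$ of $E$ satisfying the four listed conditions. For this implication the set $Z$ and the injection $g$ are irrelevant; I would retain only the first three conditions, namely that $X$ and $Y$ are disjoint infinite sets with $x_0<x_1<\cdots$ and $y_0<y_1<\cdots$ and that $X\times Y\subseteq \parallel$, i.e.\ every event of $X$ is concurrent with every event of $Y$.

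The key step is then an easy extraction. Fix an arbitrary $n\in\NN$ and set $A=\{x_0,x_1,\ldots,x_{n-1}\}$ and $B=\{y_0,y_1,\ldots,y_{n-1}\}$. Since $A\subseteq X$, $B\subseteq Y$, and $X\cap Y=\varnothing$, the sets $A$ and $B$ are disjoint of cardinality $n$; moreover $x_i\parallel y_j$ for all $x_i\in A$ and $y_j\in B$ because $X\times Y\subseteq\parallel$. Hence for \emph{every} $n\in\NN$ the event set $E$ contains two disjoint sets $A,B$ of size $n$ all of whose cross-pairs are concurrent — which is precisely the hypothesis of the converse direction of Lemma~\ref{lem-isometric-sq}. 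Invoking that direction (note $\cE$ has bounded degree by assumption) immediately yields that $G(\cE)$ is not hyperbolic, establishing the contrapositive and hence the proposition.

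I do not expect a genuine obstacle here, since all the weight of the argument has already been placed in Lemma~\ref{lem-isometric-sq}, whose converse combines Ramsey's theorem, Proposition~\ref{Helly-CAT0} (ruling out large families of pairwise-crossing hyperplanes via the degree bound), and Hagen's thin-bicliques characterization of hyperbolicity (Theorem~\ref{hyp_median_hagen}). The one point I would verify carefully is that the data consumed by Lemma~\ref{lem-isometric-sq} is exactly what non-grid-freeness supplies: the lemma asks only for disjointness of $A,B$ together with cross-concurrency, placing no constraint on conflicts \emph{inside} $A$ or \emph{inside} $B$, and this is indeed all that conditions (1)--(3) in the definition of non-grid-freeness guarantee. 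No use of the $Z$-data is needed, so the implication is in fact strictly weaker than the full non-grid-free hypothesis, which is consistent with the asymmetric (``only if'') direction being the substantive one.
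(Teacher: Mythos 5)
Your proposal is correct and takes essentially the same approach as the paper: both arguments discard the $Z$-data, keep only the disjoint chains $X,Y$ with $X\times Y\subseteq\,\parallel$, and feed this into the converse direction of Lemma~\ref{lem-isometric-sq} to conclude that $G(\cE)$ is not hyperbolic. The only cosmetic difference is that you explicitly truncate $X$ and $Y$ to finite sets $A,B$ of each size $n$, whereas the paper applies the lemma directly with $A=X$ and $B=Y$; your remark that no conflict-freeness inside $A$ or $B$ is needed is also accurate, since the Ramsey argument in the lemma's proof handles arbitrary internal relations.
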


\begin{proof}
  Suppose by way of contradiction that $\cE$ contains three disjoint
  infinite sets of events $X,Y,Z$ defining the event structure
  $\cE_{TY}$. Since every $x \in X$ is concurrent with every
  $y \in Y$, applying Lemma~\ref{lem-isometric-sq} with $A = X$ and
  $B = Y$, we deduce that $G(\cE)$ is not hyperbolic, a contradiction.
\end{proof}

\subsection{Proof of Theorem~\ref{mso-graph}}

Since for a $\Sigma$-labeled directed graph $\oG$, the decidability of
$\MSO(\oG)$ implies the decidability of $\MSO_1(G)$,
(1)$\Rightarrow$(2).  Since the degrees of the vertices of $G(\cE)$
are uniformly bounded, the implication (2)$\Rightarrow$(3) follows
from Courcelle's Theorem~\ref{th-courcelle}~\cite{Cou}.  The
implication (3)$\Rightarrow$(4) is a particular case of Seese's
Theorem~\ref{th-seese}~\cite{See}. Finally, the implication
(6)$\Rightarrow$(1) follows from the M\"uller and Schupp
Theorem~\ref{th-mullerschupp}~\cite{MullerSchupp} that the MSO theory
of context-free graphs is decidable. It remains to establish the
implications (4)$\Rightarrow$(5) and (5)$\Rightarrow$(6).

\paragraph{(4)$\Rightarrow$(5)}
Suppose by way of contradiction that $G(\cE)$ has clusters of
arbitrarily large diameters. In this case, for any $n$ we construct in
$G(\cE)$ a half of the square $n\times n$ grid as a minor (denote this
half-grid by $\frac{1}{2}\Gamma_n$). Since $\frac{1}{2}\Gamma_n$
contains the $\frac{n}{2}\times \frac{n}{2}$ square grid, we deduce
that $G(\cE)$ contains arbitrarily large square grids as minors,
contradicting that the treewidth of $G(\cE)$ is finite.  Let
$\{ z_{i,j}: 0\le i\le n, 0\le j\le n, \text{ and } i+j\le n\}$ be the
set of vertices of $\frac{1}{2}\Gamma_n$.

Let $v_0$ be the basepoint. Recall that $S(v_0,k)$ is the sphere of
radius $k$ centered at $v_0$. We need the following properties of
clusters of $G(\cE)$ (which hold for all median graphs):

\begin{claim}\label{zigzag1}
  Let $u,v$ be two vertices in a common cluster $C$ of $G(\cE)$ at
  distance $k$ from $v_0$. Then there exists a $(u,v)$-path
  $P'(u,v)=(u,p_1,q_1,p_2,q_2,\ldots,p_{m-1},q_{m-1},p_{m},v)$ such
  that $Q_1(u,v)=\{ p_1,\ldots,p_{m}\}\subseteq S(v_0,k+1)$ and
  $Q_2(u,v)=\{ q_1,\ldots, q_{m-1}\}\subseteq C\subseteq S(v_0,k)$.
\end{claim}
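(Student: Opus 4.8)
The plan is to exploit that $G(\cE)$ is a median graph (Theorem~\ref{median_domain}), hence bipartite and subject to the quadrangle condition, and then to produce the zigzag path by a \emph{peak-lowering} argument carried out inside the cluster's end. First I would fix the index convention: since $C\subseteq S(v_0,k)$, the cluster $C$ is the set of frontier points $\Upsilon\cap S(v_0,k)$ of a connected component $\Upsilon$ of $G(\cE)\setminus S(v_0,k-1)$ lying on the far side of the separating sphere, so that every vertex of $\Upsilon$ is at distance $\ge k$ from $v_0$. As $u,v\in C\subseteq\Upsilon$ and $\Upsilon$ is connected, there is a $(u,v)$-path all of whose vertices lie at distance $\ge k$ from $v_0$; this is the path I will reshape (if $u=v$ the statement is trivial, so assume $k\ge 1$).

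Among all \emph{simple} $(u,v)$-paths contained in $\Upsilon$, I would choose one, $P$, that is optimal for the following measure: first minimize the largest distance $\max_i d(v_0,x_i)$ attained along the path, and then minimize the number of vertices attaining it (equivalently, take the lexicographically smallest vector of level-counts read from the top). The goal is to show that this optimal $P$ never reaches level $k+2$ or higher. Suppose it does, and let $k+r$ with $r\ge 2$ be the top level. Since $G(\cE)$ is bipartite, consecutive vertices of $P$ differ in distance by exactly one, so a top vertex $b$ has both of its path-neighbours $a,c$ at level $k+r-1$; as $P$ is simple, $a\ne c$, and having the common neighbour $b$ of the same parity forces $d(a,c)=2$.

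Now I apply the quadrangle condition to $v_0,a,c,b$ (in the roles of $u,v,w,z$): from $d(a,b)=d(c,b)=1$, $d(a,c)=2\le k+r-1=d(v_0,a)=d(v_0,c)=d(v_0,b)-1$, it produces a common neighbour $b'$ of $a$ and $c$ with $d(v_0,b')=k+r-2\ge k$. Hence $b'\notin S(v_0,k-1)$ and $b'$ is adjacent to $a\in\Upsilon$, so $b'\in\Upsilon$; replacing $b$ by $b'$ and extracting a simple subpath yields another simple $(u,v)$-path in $\Upsilon$ with strictly fewer vertices at level $k+r$ (extraction only deletes vertices, so the lower levels cannot make the measure rise), contradicting the optimality of $P$. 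Therefore $P$ stays between levels $k$ and $k+1$. Using bipartiteness once more, $P$ strictly alternates between these two levels; since $u,v\in S(v_0,k)$, it has the shape $(u,p_1,q_1,\dots,p_m,v)$ with each $p_i\in S(v_0,k+1)$ and each $q_i\in\Upsilon\cap S(v_0,k)=C$, which is exactly the asserted path.

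I expect the main obstacle to be the bookkeeping that guarantees termination while keeping the path inside $\Upsilon$: one must pick a monotone measure (top level, then top-level count) that provably decreases under a single peak-lowering step, and must check that the replacement vertex $b'$ does not drop below level $k$ (it does not, precisely because $r\ge 2$), so the modified path never leaves the cluster's end. The remaining ingredients --- connectivity of $\Upsilon$, bipartiteness, and the verification of the quadrangle hypotheses --- are routine given the median structure of $G(\cE)$.
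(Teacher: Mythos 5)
Your proof is correct and takes essentially the same route as the paper's: both select an extremal $(u,v)$-path inside the end avoiding $S(v_0,k-1)$ and lower its peaks via the quadrangle condition until the path zigzags between $S(v_0,k)$ and $S(v_0,k+1)$. The only differences are cosmetic --- you minimize the lexicographic pair (top level, number of top-level vertices) where the paper minimizes $\sum_{w\in P'(u,v)} d(v_0,w)$, and you are slightly more explicit about simplicity of the path and about why the replacement vertex stays in the component $\Upsilon$.
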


\begin{proof} 
  Since $u,v$ belong to a common cluster $C\subseteq S(v_0,k)$, there
  exists a $(u,v)$-path $P(u,v)$ in $G(\cE) \setminus S(v_0,k-1)$.
  Among all such paths, let $P'(u,v)$ be a path minimizing the sum
  $\sum_{w\in P'(u,v)} d(v_0,w)$. We assert that all vertices of
  $P'(u,v)$ have distance $k$ or $k+1$ to $v_0$. Suppose that $x$ is a
  furthest from $v_0$ vertex of $P'(u,v)$ and that
  $k':=d(v_0,x)\ge k+2$. Let $y$ and $z$ are the neighbors of $x$ in
  $P'(u,v)$. From the choice of $x$ and since $G(\cE)$ is bipartite it
  follows that $d(v_0,y)=d(v_0,z)=k'-1$. By quadrangle condition,
  there exists a vertex $x'$ adjacent to $y$ and $z$ such that
  $d(v_0,x') = k'-2\ge k$. Replacing $x$ by $x'$ in $P'(u,v)$, we
  obtain a path $P'_0(u,v)$ in $G(\cE) \setminus S(v_0,k-1)$ such that
  $\sum_{w\in P'_0(u,v)} d(v_0,w) < \sum_{w\in P'(u,v)} d(v_0,w)$,
  contradicting the choice of $P'(u,v)$. Therefore all vertices of
  $P'(u,v)$ have distance $k$ or $k+1$ from $v_0$. Since the ends
  $u,v$ of $P'(u,v)$ have distance $k$ to $v_0$ and $G(\cE)$ is
  bipartite, the path $P'(u,v)$ is zigzagging, i.e.,
  $P'(u,v) = (u,p_1,q_1,p_2,q_2,\ldots,p_{m-1},q_{m-1},p_{m},v)$ and
  we can set $Q_1(u,v):=\{ p_1,\ldots,p_{m}\}\subseteq S(v_0,k+1)$ and
  $Q_2(u,v):=\{ q_1,\ldots, q_{m-1}\}\subseteq C\subseteq S(v_0,k)$.
\end{proof}

\begin{claim}\label{zigzag2}
  Let $u,v$ be two vertices in a common cluster $C$ of $G(\cE)$ at
  distance $k$ from $v_0$. Then for any $(u,v)$-path
  $P_1(u,v)=(q_0=u,p_1,q_1,p_2,q_2,\ldots,p_{m-1},q_{m-1},p_{m},v=q_m)$
  such that $Q_1(u,v)=\{ p_1,\ldots,p_{m}\}\subseteq S(v_0,k+1)$ and
  $Q_2(u,v)=\{ q_1,\ldots, q_{m-1}\}\subseteq C\subseteq S(v_0,k)$,
  there exists a sequence of vertices
  $Q_3(u,v)=\{ r_1,\ldots,r_{m'}\}\subseteq S(v_0,k-1)$, such that
  $P_2(u,v)=(u,r_1,q_{i_1},r_2,q_{i_2},\ldots,q_{i_{m'-1}},r_{m'},v)$
  is a $(u,v)$-path of $G(\cE)$ and $r_1$ is adjacent to
  $q_0=u,q_1,\ldots,q_{i_1}$, $r_2$ is adjacent to
  $q_{i_1},q_{i_1+1},\ldots,q_{i_2}$, etc, and $r_{m'}$ is adjacent to
  $q_{i_{m'-1}},\ldots, q_{m-1},u=q_m$.
\end{claim}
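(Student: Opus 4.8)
The plan is to produce the inner vertices $r_1,\dots,r_{m'}$ by pushing each outer corner of $P_1(u,v)$ one sphere inward, using the median structure of $G(\cE)$. Recall that by Theorem~\ref{median_domain} the graph $G(\cE)$ is median, so it satisfies the quadrangle condition. Fix the outer zigzag $P_1(u,v)=(q_0,p_1,q_1,\dots,p_m,q_m)$ with all $q_i\in S(v_0,k)$ and all $p_i\in S(v_0,k+1)$. The first step is, for every $i\in\{1,\dots,m\}$, to associate an \emph{inner} common neighbor $x_i$ of $q_{i-1}$ and $q_i$ lying on $S(v_0,k-1)$.

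To obtain $x_i$, observe that $q_{i-1}$ and $q_i$ are distinct vertices of $S(v_0,k)$ sharing the neighbor $p_i\in S(v_0,k+1)$; since $G(\cE)$ is bipartite this forces $d(q_{i-1},q_i)=2$. Applying the quadrangle condition with apex $v_0$ to the configuration $v_0,q_{i-1},q_i,p_i$ (where $d(v_0,q_{i-1})=d(v_0,q_i)=k=d(v_0,p_i)-1$) yields a common neighbor of $q_{i-1}$ and $q_i$ one step closer to $v_0$; this vertex is exactly the median $x_i:=\mathrm{med}(v_0,q_{i-1},q_i)$, hence it is unique, and $x_i\in I(v_0,q_{i-1})$ together with $d(x_i,q_{i-1})=1$ gives $d(v_0,x_i)=k-1$, i.e. $x_i\in S(v_0,k-1)$.

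The second step is pure bookkeeping on the sequence $x_1,\dots,x_m$. Consecutive medians $x_i$ and $x_{i+1}$ are both adjacent to the shared corner $q_i$, so the sequence can change value only across a single $q$. I would group $\{1,\dots,m\}$ into the maximal intervals $[a_j,b_j]$ on which $x_i$ is constant, call this constant value $r_j$, and set $i_0:=0$ and $i_j:=b_j$, so that $i_{m'}=m$. By construction $r_j$ is adjacent to each of $q_{i_{j-1}},q_{i_{j-1}+1},\dots,q_{i_j}$ (in particular to both boundary corners $q_{i_{j-1}}$ and $q_{i_j}$), consecutive $r_j$ are distinct, and $Q_3(u,v)=\{r_1,\dots,r_{m'}\}\subseteq S(v_0,k-1)$. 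Hence $P_2(u,v)=(u,r_1,q_{i_1},r_2,q_{i_2},\dots,q_{i_{m'-1}},r_{m'},v)$ is a well-defined $(u,v)$-path of $G(\cE)$ with exactly the claimed adjacency pattern.

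The only genuinely delicate point is the first step: one must check that the quadrangle condition applies (which needs $d(q_{i-1},q_i)=2$, supplied by the common outer neighbor $p_i$) and that the resulting vertex sits at distance \emph{precisely} $k-1$ rather than merely $\le k-1$; both follow from identifying it with the median $\mathrm{med}(v_0,q_{i-1},q_i)$ and using $q_{i-1},q_i\in S(v_0,k)$. The run-grouping of the second step is routine, and I do not expect the question of whether $P_2$ is a simple path to matter for the intended application (building a half-grid minor in the proof of (4)$\Rightarrow$(5)), so I would record $P_2$ as a walk with the described structure without further simplification.
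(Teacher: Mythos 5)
Your proof is correct and takes essentially the same route as the paper's: both obtain the inner vertices by applying the quadrangle condition to $v_0$ and consecutive corners $q_{i-1},p_i,q_i$, and both then group consecutive corners sharing the same inner neighbor into maximal runs to extract $r_1,\ldots,r_{m'}$ and the indices $i_1,\ldots,i_{m'-1}$. Your identification of each inner vertex as the unique median $\mathrm{med}(v_0,q_{i-1},q_i)$ simply makes explicit the uniqueness that the paper's greedy ``last adjacent $q_{i_j}$'' choice uses implicitly, so the two constructions coincide.
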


\begin{proof}
  Since $d(v_0,q_0)=d(v_0,q_1)=k$ and $d(v_0,p_1)=k+1$, by quadrangle
  condition there exists a vertex $r_1$ adjacent to $q_0$ and $q_1$
  such that $d(v_0,r_1) = k-1$. Let $q_{i_1}$ be the last vertex of
  $Q_2(u,v)$ such that $r_1$ is adjacent to
  $q_0,q_1,\ldots,q_{i_1}$. Again, since
  $d(v_0,q_{i_1})=d(v_0,q_{{i_1}+1})=k$ and $d(v_0,p_{{i_1}+1})=k+1$,
  by quadrangle condition there exists a vertex $r_2$ adjacent to
  $q_{i_1}$ and $q_{{i_1}+1}$ such that $d(v_0,r_2) = k-1$. Since
  $r_1$ is not adjacent to $q_{{i_1}+1}$, we have $r_1\ne r_2$. Let
  $q_{i_2}$ be the last vertex of $Q_2(u,v)$ such that $r_2$ is
  adjacent to $q_{i_1},q_{{i_1}+1},\ldots, q_{i_2}$. Continuing this
  way, we define all vertices of $Q_3(u,v)=\{
  r_1,\ldots,r_{m'}\}$. Then
  $P_2(u,v)=(u,r_1,q_{i_1},r_2,q_{i_2},\ldots,q_{i_{m'-1}},r_{m'},v)$
  is a $(u,v)$-path.
\end{proof}

We call the union of paths $P_1(u,v)$ and $P_2(u,v)$ a \emph{fence}
and denote it by $F(u,v)$. Call $P_1(u,v)$ the \emph{upper path} and
$P_2(u,v)$ the \emph{lower path} of $F(u,v)$.  Notice that
$P_1(u,v)\cap P_2(u,v)\subseteq C$.  From the definition of clusters,
all vertices of $P_2(u,v)\setminus P_1(u,v)$ also belong to a common
cluster $C'$.  If $d(u,v)=n'$, then both paths $P_1(u,v)$ and
$P_2(u,v)$ have length at least $n'$. Thus, setting
$u':=r_1, v':=r_{m'}$ and denoting by $P_1(u',v')$ the subpath of
$P_2(u,v)$ between $u'$ and $v'$, we conclude that its length is at
least $n'-2$. On the other hand, the length of $P_1(u',v')$ is at most
$n''-2$, where $n''$ is the length of $P_1(u,v)$.  Applying
Claim~\ref{zigzag2} to $P_1(u',v')$ we define the path $P_2(u',v')$
and the fence $F(u',v')$. Note that $P_1(u,v')=F(u,v)\cap
F(u',v')$. Continuing this way, after $\frac{n'}{2}\le n\le n''$
steps, we find two sequences of vertices
$Q_u=(u=u_{n},u_{n-1}=u',u_{n-2},\ldots,u_{1},u_0=w)$ and
$Q_v=(v=v_n,v_{n-1}=v',v_{n-2},\ldots,v_{1},v_0=w)$ (constituting
shortest $(u,w)$- and $(v,w)$-paths) and for each pair $u_i,v_i$ a
fence $F(u_i,v_i)=P_1(u_i,v_i)\cup P_2(u_i,v_i)$ such that any two
consecutive fences $F(u_{i+1},v_{i+1})$ and $F(u_i,v_i)$ intersect in
the path $P_1(u_{i},v_{i})$. We denote the union of all fences
$F(u_i,v_i)$, $i=n,\ldots,0$, by $F^*$. We also denote by $C_i$ the
cluster containing $u_i$ and $v_i$ (in particular, $C_n=C$).

We assert that $F^*$ contains the half-grid $\frac{1}{2}\Gamma_n$ as a
minor. Since $\frac{n'}{2}\le n\le n''$ and $n''\ge n'$, we will be
done.  For this, for each vertex $z_{i,j}$ of $\frac{1}{2}\Gamma_n$ we
define a connected subgraph $Z_{i,j}$ of $F^*$ satisfying the
following properties:
\begin{enumerate}
\item $Z_{0,i}=\{ u_i\}$ and $Z_{i,0}=\{ v_i\}$ for each
  $i=0,\ldots,n$;
\item for each $k=0,\ldots, 2n$, if $i+j=k$, then $Z_{i,j}$ is a
  subpath of the lower path $P_2(u_k,v_k)\subseteq C_k\cup C_{k-1}$ of
  the fence $F(u_k,v_k)$ and $Z_{i,j}$ starts and ends at cluster
  $C_i$;
\item for each $k=0,\ldots, 2n$ the paths $Z_{i,j}$ with $i+j=k$ are
  pairwise disjoint and are lexicographically ordered along
  $P_2(u_k,v_k)$ from $u_k$ to $v_k$ (i.e., for two pairs $(i,j)$ and
  $(i',j')$ with $i+j=i'+j'=k$, the path $Z_{i,j}$ appears before the
  path $Z_{i',j'}$ in $P_2(u_k,v_k)$ iff $i<i'$);
\item for each pair $(i,j)$, the first vertex of the path $Z_{i,j}$ is
  adjacent to the last vertex of the path $Z_{i-1,j}$ and the last
  vertex of $Z_{i,j}$ is adjacent to the first vertex of the path
  $Z_{i,j-1}$.
\end{enumerate}

From conditions (3) and (4) we deduce that the paths $Z_{i,j}$ are
pairwise disjoint. Therefore, contracting them, we get
$\frac{1}{2}\Gamma_n$ as a minor.

We construct the paths $Z_{i,j}$ recursively. Suppose that the paths
$Z_{i,j}$ satisfying the previous conditions have been defined for all
pairs $(i,j)$ such that $i+j\le k$ and we have to define the paths
$Z_{i,j}$ with $i+j=k+1$. We proceed lexicographically on all such
pairs. Consider a current pair $(i,j)$ with $i+j=k+1$. By induction
assumption, the paths $Z_{i-1,j}$ and $Z_{i,j-1}$ have been
defined. Let $x$ denote the last vertex of the path $Z_{i-1,j}$ and
$y$ denote the first vertex of the path $Z_{i,j-1}$.  By definition
and induction hypothesis, the paths $Z_{i-1,j}$ and $Z_{i,j-1}$ are
contained in the clusters $C_{k-1}\cup C_{k-2}$, are disjoint, and
start and end at vertices of $C_{k-1}$. Consequently, $x$ and $y$ are
vertices of $C_{k-1}$ and $x$ appears before $y$ in the path
$P_2(u_{k-1},v_{k-1})$. In particular, $x$ and $y$ belong to the path
$P_2(u_k,v_k)$. Traverse $P_2(u_k,v_k)$ from $u_k$ to $v_k$. Denote by
$x'$ the vertex appearing after $x$ in $P_2(u_k,v_k)$ and by $y'$ the
vertex of $P_2(u_k,v_k)$ appearing before $y$. Denote by $Z_{i,j}$ the
subpath of $P_2(u_k,v_k)$ comprised between $x'$ and $y'$. Then $x'$
and $y'$ are respectively the first and last vertices of $Z_{i,j}$.

We show that $Z_{i,j}$ satisfies the conditions (2)-(4). Condition (4)
follows from the definition of the vertices $x,y,x',y'$ and of the
path $Z_{i,j}$.  Since $x,y\in C_{k-1}$, from the definition of
$P_2(u_k,v_k)$ it follows that $x',y'\in C_k$. Hence $Z_{i,j}$
satisfies (2). Since the paths $Z_{i',j'}$ with $i'+j'=k-1$ are
lexicographically ordered, from the definition of the paths $Z_{i,j}$
with $i+j=k$, it follows that such paths are also lexicographically
ordered and pairwise disjoint. This concludes the proof of the
implication.

\paragraph{(5)$\Rightarrow$(6)}
The implication follows from~\cite[Proposition 4.4]{BaDaRa} and the
fact that trace event structures are recognizable by trace automata.
Here we present a different (and hopefully simpler) proof. Let
${\Phi}(G(\cE))$ be the set of ends of $G(\cE)$.  We have to prove
that ${\Phi}(G(\cE))$ has only finitely many isomorphism classes under
end-isomorphisms. Recall that there exists a bijection between the
ends of ${\Phi}(G(\cE))$ and the clusters of $\ccC(G(\cE))$ and that
for a cluster $C$ we denote by $\Upsilon(C)$ the end containing $C$.
Let $M$ be the size of the alphabet $\Sigma$. Since $\cE$ is a
trace-regular event structure, the degrees of vertices of $G(\cE)$ are
uniformly bounded, say by some constant $\Delta$. Suppose that the
diameters of clusters are uniformly bounded by $D$.  We say that the
sets of a set family $\cS$ have \emph{constant size} if the sizes of
all sets of $\cS$ depend only of $M,\Delta$, and $D$. First notice
that all clusters $C$ have constant size.  Indeed, pick a vertex $v$
of $C$. Then $C$ is included in the ball $B(v,D)$ of radius $D$
centered at $v$. This ball has at most
$K:=\sum_{i=0}^D \Delta^i=O(\Delta^D)$ vertices.

Pick any cluster $C$. Since $(\cD({\cE}),\subseteq)$ is a median meet
semilattice, there exists the smallest median meet sub-semilattice
$M(C)$ containing the set $C$, the \emph{median closure} of $C$. Let
$m_C\in M(C)$ denote the meet of $C$ in
$(\cD({\cE}),\subseteq)$. Denote by $H(C)$ the subgraph of $G(\cE)$
induced by the set
$\{ v\in M(C): m_C\le v \text{ and } \exists x\in C, v\le x\}$.

\begin{claim}\label{immediate-past}
  For any $C\in \ccC(G(\cE))$, the graph $H(C)$ has constant size.
\end{claim}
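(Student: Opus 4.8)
The plan is to bound $H(C)$ by covering it with short intervals and then controlling both the number of these intervals and the size of each. First I would observe that every vertex $v$ of $H(C)$ satisfies $m_C\le v$ and $v\le x$ for some $x\in C$, so $v$ lies in the interval $I(m_C,x)$; hence $H(C)\subseteq\bigcup_{x\in C}I(m_C,x)$. The membership constraint $v\in M(C)$ only shrinks this set and may therefore be ignored for an upper bound. Since $C$ is a cluster of $G(\cE)$, it already has constant size (at most $K=\sum_{i=0}^{D}\Delta^i$ vertices), so it remains to show that each interval $I(m_C,x)$ has constant size. For this it suffices to bound the length $\ell:=d(m_C,x)$: as $G(\cE)$ is median, $I(m_C,x)$ embeds isometrically into the hypercube $\{0,1\}^{\ell}$, giving $|I(m_C,x)|\le 2^{\ell}$.

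The key step is the distance bound. Here I would use that $\cD(\cE)$ is the median meet-semilattice of configurations, so I identify each configuration $v$ with its set $\varphi(v)$ of events (the hyperplanes separating $v$ from $v_0=\varnothing$); then the meet is set intersection and $d(u,v)=|\varphi(u)\,\triangle\,\varphi(v)|$. Writing $C=\{x_1,\dots,x_n\}$ with $n\le K$, all $x_i$ lie on a common sphere $S(v_0,k)$, so $|\varphi(x_i)|=k$ for every $i$, while $m_C=\bigcap_i x_i$ gives $\varphi(m_C)=\bigcap_i\varphi(x_i)$. Setting $U_i:=\varphi(x_i)\setminus\varphi(m_C)$, each $U_i$ has exactly $\ell=k-|\varphi(m_C)|$ elements (so every $x_i$ is at distance $\ell$ from $m_C$), one has $\bigcap_iU_i=\varnothing$ by the definition of the meet, and $|U_i\,\triangle\,U_j|=d(x_i,x_j)\le D$ because $C$ has diameter at most $D$. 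From $\bigcap_iU_i=\varnothing$ I get $U_1=\bigcup_{i=2}^{n}(U_1\setminus U_i)$, whence
\[
  \ell=|U_1|\le\sum_{i=2}^{n}|U_1\setminus U_i|\le (n-1)D\le (K-1)D .
\]

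Combining the two estimates gives $|H(C)|\le |C|\cdot 2^{\ell}\le K\cdot 2^{(K-1)D}$, a quantity depending only on $M$, $\Delta$, and $D$; this is exactly the desired constant-size conclusion. The main obstacle is precisely the distance bound $\ell\le (K-1)D$: a priori the meet $m_C$ of a cluster of bounded diameter could sit arbitrarily far below it, and what rescues the argument is the emptiness of $\bigcap_iU_i$, which forces every event of $U_1$ to be witnessed by some small difference $U_1\setminus U_i\subseteq U_1\,\triangle\,U_i$ and thereby keeps $m_C$ within controlled distance of $C$.
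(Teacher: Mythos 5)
Your proof is correct, but it follows a genuinely different route from the paper's. The paper bounds the larger set $\conv(C)$: since $H(C)\subseteq M(C)\subseteq \conv(C)$ and convex subgraphs of median graphs are intersections of halfspaces, any hyperplane separating two vertices of $\conv(C)$ must already separate two vertices of $C$; as there are at most $DK^2$ such hyperplanes, $\conv(C)$ has diameter at most $DK^2$, and bounded degree then converts this diameter bound into a size bound. You instead exploit the explicit description of $H(C)$: the order constraints $m_C\le v\le x$ put every vertex of $H(C)$ into $\bigcup_{x\in C} I(m_C,x)$, you bound each interval by $2^{d(m_C,x)}$ via the hypercube embedding, and the heart of your argument is the estimate $d(m_C,x)\le (K-1)D$, obtained by viewing configurations as event sets, using that the meet is the intersection, and noting that $\bigcap_i U_i=\varnothing$ forces every event of $U_1$ to lie in some difference $U_1\setminus U_i$ of size at most $D$. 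Each approach buys something: the paper's argument is pure median-graph geometry (it never uses the identification of vertices with configurations), it bounds the whole convex hull — a fact of independent use — and it needs no hypercube counting; your argument yields the sharper and potentially useful intermediate fact that the meet of a cluster lies within distance $(K-1)D$ of it (versus $DK^2$ implicit in the paper), avoids invoking the characterization of convex sets as intersections of halfspaces, and uses bounded degree only once, to bound $|C|\le K$. Both arguments are complete; yours trades a little generality for a tighter quantitative control of the depth of $m_C$ below $C$.
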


\begin{proof}
  Any $H(C)$ is included in the median closure $M(C)$ and any $M(C)$
  is included in the convex hull $\conv(C)$ of $C$. Therefore it
  suffices to prove that $\conv(C)$ has constant size, namely that if
  has constant diameter. Pick $x,y\in \conv(C)$. The distance $d(x,y)$
  in $G(\cE)$ is the number of hyperplanes separating $x$ and $y$.
  Since $\conv(C)$ is the intersection of all halfspaces of $G(\cE)$
  including $C$, the hyperplanes defining such halfspaces do not
  separate $x$ and $y$.  Therefore $x$ and $y$ can be separated only
  by hyperplanes separating vertices of $C$. There are at most $D$
  hyperplanes separating two given vertices of $C$, thus there are at
  most $DK^2$ hyperplanes separating vertices of $C$.  Consequently,
  $d(x,y)\le DK^2$, establishing that the diameter of $\conv(C)$ is
  constant.
\end{proof}

Suppose that the edges of each $H(C)$ are directed and labeled by
$\lambda$ as in $G(\cE)$. For each vertex $v\in H(C)$, let $r(v)=i$ if
the principal filter $\cF(v)$ of $v$ belongs to the isomorphism class
$i$. Call the edge- and vertex-labeled graph $(H(C),\lambda,r)$ the
\emph{recent past} of the cluster $C$.  Since by
Claim~\ref{immediate-past} all $H(C)$ have constant size, $\lambda$ is
finite, and there exists only a finite number of types of principal
filters, we conclude that there exists only a finite number of types
of recent pasts $\cP_1,\ldots,\cP_n$.

Pick any isomorphism class $\cP_i$ and pick any two graphs $H(C)$ and
$H(C')$ belonging to $\cP_i$. Notice that since any isomorphism $g$
between $H(C)$ and $H(C')$, preserves the orientation of edges, $g$
maps the unique source $m_C$ of $H(C)$ to the unique source $m_{C'}$
of $H(C')$. The set $C$ of sinks of $H(C)$ is mapped to the set $C'$
of sinks of $H(C')$.  Since $r(m_C)=r(m_{C'})$, there exists an
isomorphism $f$ between the labeled principal filters $\cF(m_C)$ and
$\cF(m_{C'})$.

\begin{claim}\label{recent-past-isomorphism}
  Any isomorphism $g$ between $(H(C),\lambda,r)$ and
  $(H(C'),\lambda,r)$ coincides with $f$, i.e., for any vertex $v$ of
  $H(C)$, $f(v)=g(v)$.
\end{claim}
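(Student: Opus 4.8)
\emph{The plan} is to exploit that a trace labeling $\lambda$ is in particular a nice labeling, so the directed labeled graph $\oG(\cE)$ is \emph{deterministic}: the outgoing edges at any vertex carry pairwise distinct labels (the Determinism condition). Consequently any map preserving edge-orientations and edge-labels is rigid along directed paths: if two such maps agree at a vertex $w$ and $w\lessdot w'$ along an edge labeled $a$, then both must send $w'$ to the unique $a$-labeled outgoing neighbour of the common image of $w$. Thus an orientation- and label-preserving isomorphism is completely determined by the image of a single vertex, on the set of vertices reachable from that vertex by directed paths. The whole argument is then a propagation of the equality $f=g$ outward from a common starting vertex.

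First I would check that $f$ and $g$ agree at the source. The principal filter $\cF(m_C)$ has $m_C$ as its unique minimal element, so the filter isomorphism $f$ onto $\cF(m_{C'})$ sends $m_C$ to $m_{C'}$; and by the hypothesis recalled before the claim, the recent-past isomorphism $g$ sends the unique source $m_C$ of $H(C)$ to the unique source $m_{C'}$ of $H(C')$. Hence $f(m_C)=g(m_C)=m_{C'}$, giving the base case of the propagation.

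Next, for an arbitrary vertex $v\in H(C)$ I would produce a directed path $m_C=w_0\lessdot w_1\lessdot\cdots\lessdot w_k=v$ lying entirely inside $H(C)$. Since $m_C\le v\le x$ for some sink $x\in C$, the vertex $v$ lies in the interval $I(m_C,x)$; descending from $v$ by repeatedly choosing a lower cover of the current vertex that still lies above $m_C$ yields a directed path from $m_C$ to $v$ whose intermediate vertices $w_i$ again satisfy $m_C\le w_i\le x$ and lie in the median closure $M(C)$, so the path stays in $H(C)$. Equivalently, $H(C)$ is the union of the directed intervals $[m_C,x]$, $x\in C$, inside $M(C)$, and is therefore directed-path-connected from its source. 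Walking this path and applying determinism inductively, both $f$ and $g$ turn it into the unique directed path out of $m_{C'}$ spelling the same label word, so $f(w_i)=g(w_i)$ for every $i$; in particular $f(v)=g(v)$. As $v$ was arbitrary, $g=f|_{H(C)}$.

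\emph{The main obstacle} is precisely the middle step: one must know that every vertex of $H(C)$ is reachable from the source $m_C$ by a directed path that does not leave $H(C)$. This is where the hypothesis that $C$ is a \emph{cluster} (an antichain of frontier points) is essential. For an arbitrary finite set the median closure $M(C)$ may contain a vertex $v$ none of whose lower covers lie in $M(C)$, which would break internal connectivity and leave $g$ underdetermined; for a genuine cluster, however, $M(C)$ is ``full'' enough that $H(C)$ coincides with $\bigcup_{x\in C}[m_C,x]$ and the intermediate lower covers are again in $M(C)$. Verifying this reachability — rather than the essentially formal determinism propagation — is the real content of the claim, and I would devote the bulk of the argument to it.
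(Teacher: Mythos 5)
Your argument is essentially the paper's own proof. The paper proceeds by induction on $k=d(v,m_C)$: any orientation-preserving isomorphism sends the unique source $m_C$ to $m_{C'}$, and if $f$ and $g$ already agree at a vertex $v$ and $vw$ is an edge of $H(C)$ directed from $v$, then $g(v)g(w)$ and $f(v)f(w)$ are outgoing edges at the same vertex carrying the same label, so determinism of the nice labeling forces $g(w)=f(w)$ (with determinism applied in the ambient graph $G(\cE)$, since $f(w)$ need not lie in $H(C')$ a priori -- the paper makes the same remark parenthetically). Your ``propagation along directed paths from $m_C$'' is exactly this induction, and your base case ($f(m_C)=g(m_C)=m_{C'}$) is the paper's.

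Concerning the step you isolate as the main obstacle -- that every vertex of $H(C)$ is reachable from $m_C$ by a directed path staying inside $H(C)$: you are right that this is the only non-formal ingredient, but note that the paper's proof does not address it either; its induction simply invokes ``a neighbor $v$ of $w$ at distance $k$ from $m_C$'' without arguing that such a neighbor belongs to $H(C)$ (equivalently, that every $w\in H(C)$ with $w\neq m_C$ has a lower cover in $H(C)$). So on this point your attempt is exactly as complete as the paper's proof. Be aware, however, that the way you propose to fill it overshoots: the identity $H(C)=\bigcup_{x\in C}I(m_C,x)$ would require the median closure $M(C)$ to contain every vertex of every interval $I(m_C,x)$, $x\in C$, which is stronger than the needed reachability, is not a consequence of $M(C)$ being a median meet sub-semilattice (as your own remark about non-clusters shows, lower covers can escape $M(C)$ in general), and is nowhere established in your sketch -- you explicitly defer it. In short: same route as the paper, correct modulo the same directed-connectivity property of recent pasts that the paper's proof also takes for granted; what you add is a candidate justification of that property, stated in a stronger form than necessary and left unproven.
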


\begin{proof}
  Let $m_C\in P(C)$ and $m_{C'}\in P(C')$ be the meets of $C$ and
  $C'$. Let $g$ be an isomorphism between $(H(C),\lambda,r)$ and
  $(H(C'),\lambda,r)$.  By induction on $k:=d(v,m_C)$ we prove that
  $g(v)=f(v)$. Since any isomorphism maps $m_C$ to $m_{C'}$, this is
  true for $k=0$. Since the labeling $\lambda$ is deterministic, any
  isomorphism map the edge $m_{C}v$ to the unique edge $m_{C'}v'$
  labeled $\lambda(m_{C}v)$. Therefore $g(v)=v'=f(v)$ for any neighbor
  $v$ of $m_C$ in $H(C)$. Therefore, our assertion is also true for
  $k=1$. Suppose that it is true for all vertices $v$ of $H(C)$ such
  that $d(v,m_C)\le k$ and pick a vertex $w$ with $d(w,m_C)=k+1$. Let
  $v$ be a neighbor of $w$ at distance $k$ from $m_C$. By induction
  assumption, $g(v)=f(v)$, denote this vertex by $v'$. Set $w':=g(w)$
  and $w'':=f(w)$. Since $vw$ is an edge outgoing from $v$, $v'w'$ and
  $v'w''$ are edges outgoing from $v'$, both labeled by $\lambda(vw)$
  ($v'w'$ is an edge of $H(C')$ but the vertex $w''$ and the edge
  $v'w''$ are not necessarily in $H(C')$).  Since the labeling
  $\lambda$ is deterministic, this is possible only if $w'=w''$,
  whence $g(w)=f(w)$.
\end{proof}

The following claim concludes the proof of the implication
(5)$\Rightarrow$(6):

\begin{claim}\label{f-end-isomorphism}
  $f$ is an end-isomorphism between $\Upsilon(C)$ and $\Upsilon(C')$.
\end{claim}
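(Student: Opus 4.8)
The plan is to prove that the isomorphism $f$ between the labeled principal filters $\cF(m_C)$ and $\cF(m_{C'})$ already restricts to the required end-isomorphism between $\Upsilon(C)$ and $\Upsilon(C')$. Recall that an end-isomorphism must be a label-preserving graph isomorphism sending the frontier $C(\Upsilon(C))=C$ to $C(\Upsilon(C'))=C'$. Since $f$ is by construction a label- and direction-preserving isomorphism of the directed Hasse diagrams, hence of the underlying posets, the only points left to check are that $f$ carries $\Upsilon(C)$ bijectively onto $\Upsilon(C')$ and that it maps $C$ to $C'$.

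First I would record the containment $\Upsilon(C)\subseteq \cF(m_C)$, which guarantees that $f$ is defined on all of $\Upsilon(C)$. Indeed, $\Upsilon(C)=\bigcup_{v\in C}\cF(v)$ is the union of the futures of the vertices of $C$, and since $m_C$ is the meet of $C$ in the median meet-semilattice $(\cD(\cE),\subseteq)$ we have $m_C\le v$ for every $v\in C$; hence $\cF(v)\subseteq \cF(m_C)$ and therefore $\Upsilon(C)\subseteq \cF(m_C)$. The symmetric statement $\Upsilon(C')\subseteq \cF(m_{C'})$ holds as well.

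Next I would invoke Claim~\ref{recent-past-isomorphism} to control $f$ on the cluster $C$. Every $v\in C$ is a sink of the recent past $H(C)$, and by that claim $f$ agrees on $H(C)$ with any isomorphism $g$ of $(H(C),\lambda,r)$ onto $(H(C'),\lambda,r)$; as $g$ takes the set of sinks $C$ of $H(C)$ to the set of sinks $C'$ of $H(C')$, we obtain $f(C)=C'$. Because $f$ is an order isomorphism of $\cF(m_C)$ onto $\cF(m_{C'})$ and, for $v\ge m_C$, the future $\cF(v)=\{w:v\le w\}$ is already contained in $\cF(m_C)$, the map $f$ sends each such $\cF(v)$ onto $\cF(f(v))$. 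Applying this along $C$ gives
\[
  f(\Upsilon(C))=\bigcup_{v\in C} f(\cF(v))=\bigcup_{v\in C}\cF(f(v))=\bigcup_{v'\in C'}\cF(v')=\Upsilon(C').
\]
Thus $f$ restricts to a label-preserving graph isomorphism of $\Upsilon(C)$ onto $\Upsilon(C')$ taking the frontier $C$ to the frontier $C'$, which is precisely an end-isomorphism.

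The main obstacle is reconciling the two a priori unrelated isomorphisms at play: the map $g$ defined only on the finite recent past $H(C)$, and the map $f$ defined on the whole infinite filter $\cF(m_C)$. The decisive ingredient is Claim~\ref{recent-past-isomorphism}, which forces $f=g$ on $H(C)\supseteq C$ and thereby pins down $f(C)=C'$; once the frontier is matched, the order-preservation of $f$ propagates the match upward through the futures, delivering the end-isomorphism with essentially no further work.
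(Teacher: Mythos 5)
Your proof is correct and follows essentially the same route as the paper's: both establish $\Upsilon(C)\subseteq \cF(m_C)$ so that $f$ is defined there, invoke Claim~\ref{recent-past-isomorphism} to pin down $f(C)=C'$, and then use the fact that $f$ carries principal filters $\cF(v)$, $v\in C$, onto principal filters $\cF(f(v))$, $f(v)\in C'$, to conclude $f(\Upsilon(C))=\Upsilon(C')$ with labels preserved. Your packaging of bijectivity as the single union identity $f(\Upsilon(C))=\bigcup_{v\in C}\cF(f(v))=\Upsilon(C')$ merely condenses the paper's separate injectivity/surjectivity steps; the substance is identical.
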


\begin{proof}
  Since $\Upsilon(C)$ is the union of all principal filters
  $\cF(v), v\in C$, we have $\Upsilon(C)\subseteq \cF(m_C)$, thus $f$
  is well-defined on $\Upsilon(C)$. Since $f$ is a bijective map
  between $\cF(m_C)$ and $\cF(m_{C'})$, $f$ is an injective map from
  $\Upsilon(C)$ to $\cF(m_{C'})$.  By
  Claim~\ref{recent-past-isomorphism}, $f$ maps $C$ to $C'$, thus the
  $f$-image of any principal filter $\cF(v)$ with $v\in C$ is a
  principal filter $\cF(f(v))$ with $f(v)\in C'$, thus
  $f(\Upsilon(C))\subseteq \Upsilon(C')$. Since any vertex of
  $\Upsilon(C')$ belongs to at least one principal filter $\cF(v')$
  with $v'\in C'$ and $f$ bijectively maps $C$ to $C'$, $f$ is a
  surjective map from $\Upsilon(C)$ to $\Upsilon(C')$. Since $f$ is
  also injective on $\Upsilon(C)$, $f$ is a bijection between
  $\Upsilon(C)$ and $\Upsilon(C')$. Since any edge $xy$ of
  $\Upsilon(C)$ belongs to at least one principal filter $\cF(v)$ with
  $v\in C$, $f$ maps $xy$ to an edge $x'y'$ of $\Upsilon(C')$ and
  $\lambda(x'y')=\lambda(xy)$ holds.  Since the same property holds
  for edges of $\Upsilon(C')$, this establishes that $f$ is an
  end-isomorphism between $\Upsilon(C)$ and $\Upsilon(C')$.
\end{proof}

\subsection{Proof of Proposition~\ref{mso-graph-complex}}

For a covering map $\varphi:Y \rightarrow X$, an automorphism
$\alpha: Y \to Y$ is a \emph{deck transformation} of $\varphi$ if
$\varphi \circ \alpha = \varphi$.  The set of deck transformations of
$\varphi$ forms a group under composition.  A covering map
$\varphi:Y \rightarrow X$ is called \emph{normal} (or \emph{regular})
if for each pair of lifts $y,y' \in Y$ of $x\in X$ there is a deck
transformation mapping $y$ to $y'$. If there is a normal covering map
$\varphi:Y \rightarrow X$, then $Y$ is called a \emph{normal} cover of
$X$.  Every universal cover is normal and its deck transformation
group is isomorphic to the fundamental group $\pi_1(X)$ of $X$
(see~\cite[Proposition 1.39]{Hat}). Since the special cube complex
$X_N$ of a net system $N$ is finite and its universal cover $\tX_N$ is
normal, this implies that the automorphism group of $\tX_N$ has a
finite number of orbits. Therefore, the automorphism group of the
directed labeled graph $\oG(\tbX_N)$ also has a finite number of
orbits.  Consequently, to $\oG(\tbX_N)$ we can apply
Theorem~\ref{th-kuskelohrey} of Kuske and Lohrey~\cite{KuLo} and
deduce Proposition~\ref{mso-graph-complex}.

\subsection{The MSO Theory of Hairings of Event Structures}

The goal of this subsection is to prove
Proposition~\ref{barycentric-subdivision},
Proposition~\ref{prop-hairing-N}, and
Theorem~\ref{barycentric-subdivision-MSO}.

\subsubsection{Proof of Propositions~\ref{barycentric-subdivision}}

Assume that $\lambda$ is a trace-regular labeling of $\cE$ over $M =
(\Sigma,I)$ and let $\dotlambda$ be the labeling of $\dotcE$ defined as
above. We first show that $\dotlambda$ is a trace labeling of $\dotcE$,
i.e., that $\dotlambda$ satisfies (LES1), (LES2), and (LES3).
For any two events $e, e' \in E$, the properties (LES1)-(LES3) are
satisfied because $\lambda$ is a trace-regular labeling of
$\cE$. Suppose now that $e'$ is a hair event. Hence $\dotlambda(e') = h$
and for any $a \in \Sigma \cup \{h\}$,
$(\dotlambda(e'),a) \notin \dotI$. Consequently (LES2) trivially
holds. Since a hair event is not concurrent with any other event,
(LES3) also trivially holds. If $e$ is in minimal conflict with $e'$,
then $e$ cannot be a hair event and thus
$\dotlambda(e) \neq \dotlambda(e')$, establishing (LES1).

We now show that $\dotlambda$ is a regular labeling of $\dotcE$. Consider
a configuration $\dotc$ of $\dotcE$ and observe that if $\dotc$ contains
a hair event $e_{c'}$ associated with a configuration $c'$ of $\cE$,
then $\dotc = c' \cup \{e_{c'}\}$. Consequently, for any such
configuration $\dotc$, $\dotcE^{\dotlambda}\setminus \dotc$ is
empty. Therefore, all such configurations are equivalent for
$R_{\dotcE^{\dotlambda}}$. Observe that any configuration $c_0$ of $\dotcE$
that does not contain a hair event is also a configuration of
$\cE$. Consider two configurations $c_0, c_0' \in \dotcE$ such that
$c_0 R_{\cE^{\lambda}} c_0'$ and let $f$ be an isomorphism from
$\cE \setminus c_0$ to $\cE \setminus c_0'$. We define an isomorphism
$\dotf$ from $\dotcE \setminus c_0$ to $\dotcE \setminus c_0'$ as follows.
For any event $e \in \cE \setminus c_0$, let $\dotf(e) = f(e)$. For
any configuration $c$ of $\cE\setminus c_0$,
$f(c) = \{f(e) : e \in c\}$ is a configuration of $\cE\setminus c_0'$
and we let $\dotf(e_c) = e_{f(c)}$. Observe that in any case,
$\dotlambda(\dotf(e))=\dotlambda(e)$. Consider any two events
$e_1, e_2 \in \dotcE\setminus \{c_0\}$. If
$e_1, e_2 \in \cE\setminus c_0$, then $\dotf(e_1) \dotleq \dotf(e_2)$
iff $e_1 \dotleq e_2$ and $\dotf(e_1) \dotdiese \dotf(e_2)$ iff
$e_1 \dotdiese e_2$ since $f$ is an isomorphism from
$\cE\setminus c_0$ to $\cE\setminus c_0'$. Suppose now that $e_2$ is a
hair event $e_c$ associated to a configuration $c$ of
$\cE\setminus c_0$. Then $e_1 \dotleq e_c$ if $e_1 \in c$ and
$e_1 \dotdiese e_c$ otherwise. In the first case, $f(e_1) \in f(c)$
and consequently $\dotf(e_1) = f(e_1) \dotleq e_{f(c)} =
\dotf(e_c)$. In the second case, $f(e_1) \notin f(c)$ and thus
$\dotf(e_1) = f(e_1) \# e_{f(c)} = \dotf(e_c)$. Since $f$ is
bijective, $\dotf$ is also bijective and thus $\dotf$ is an
isomorphism from $\dotcE\setminus c_0$ to $\dotcE\setminus
c_0'$. Consequently, since $R_{\cE^\lambda}$ has finite index, so does
$R_{\dotcE^{\dotlambda}}$, showing that $\dotlambda$ is a trace-regular
labeling of $\dotcE$. This ends the proof of
Proposition~\ref{barycentric-subdivision}.

\subsubsection{Proof of Proposition~\ref{prop-hairing-N}}

We first show that $X_{\dotN}$ and $\dotX_N$ are isomorphic.  For any
marking $m$ of $N$, consider the marking
$\dotm = m \cup \{p_a:a \in \Sigma\}$ of $\dotN$.  First note that for
all markings $m,m'$ of $N$ and $b \in \Sigma$, we have
$\dotm \xlongrightarrow{b} \dotm'$ in $\dotN$ iff
$m \xlongrightarrow{b} m'$ in $N$, i.e., the graph induced by
$\{\dotm : m \text{ is a vertex of } X_N\}$ in $X_{\dotN}$ is
isomorphic to $\oG_{N}$.  Moreover, for each vertex $m$ of $X_N$, we
have $\dotm = m \cup \{p_a:a \in \Sigma\} \xlongrightarrow{h} m$ in
$\dotN$, i.e., $(\dotm,m)$ is an arc of $\oG_{\dotN}$. Furthermore,
the vertex $m$ has out-degree $0$ since no transition can be fired
from $m$ in $\dotN$ as $m \cap \{p_a:a \in \Sigma\} = \varnothing$,
i.e., $(\dotm,m)$ is a pendant arc of $\oG_{\dotN}$.  Consequently,
the special cube complexes $X_{\dotN}$ and $\dotX_N$ coincide and thus
$\cF_{\tildo}(\tm_{0},\tdotX{_N^{(1)}})=
\cF_{\tildo}(\tdotm_{0},\tX{_{\dotN}^{(1)}})$.
To show that the event structures $\dotcE_N$ and $\cE_{\dotN}$ are
isomorphic, we show that their domains are isomorphic. By
Theorem~\ref{Petri-to-special-isomorphism},
$\cD(\dotcE_N) = \cD(\dotcE_{X_N})$ and
$\cD(\cE_{X_{\dotN}}) = \cD(\cE_{\dotN})$. Using
Equation~(\ref{eq-DcE}), we obtain
\[
  \cD(\dotcE_N) = \cD(\dotcE_{X_N}) =
  \cF_{\tildo}(\tm_{0},\tdotX{_N^{(1)}})=
  \cF_{\tildo}(\tdotm_{0},\tX{_{\dotN}^{(1)}}) = \cD(\cE_{X_{\dotN}})
  = \cD(\cE_{\dotN}).
\]

\subsubsection{Proof of Theorem~\ref{barycentric-subdivision-MSO}}

The proof of Theorem~\ref{barycentric-subdivision-MSO} is based on
Theorem~\ref{mso-graph} and Propositions~\ref{prop-MSOb-MSOG}
and~\ref{prop-MSOG-MSO}.


\begin{proposition}\label{prop-MSOb-MSOG}
  For a trace-regular event structure ${\cE}=(E,\le, \#,\lambda)$, if
  $\MSO(\dotcE)$ is decidable, then $\MSO(\oG(\cE))$ is decidable.
\end{proposition}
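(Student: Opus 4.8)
The plan is to exhibit $\oG(\cE)$ as a (one‑dimensional) monadic second‑order interpretation inside the hairing $\dotcE$, and then to invoke the standard fact (see~\cite{CouEn}) that decidability of the MSO theory transfers along MSO‑interpretations: if $\MSO(\dotcE)$ is decidable and there is a computable translation $\phi\mapsto\phi^*$ of sentences with $\oG(\cE)\models\phi$ iff $\dotcE\models\phi^*$, then $\MSO(\oG(\cE))$ is decidable. The key observation, which is exactly what the hair events are meant to provide, is that the vertices of $\oG(\cE)$ (the configurations of $\cE$) are in a \emph{definable} bijection with the hair events of $\dotcE$. Indeed, by the definition of $\dotcE$ the hair events are precisely the events carrying the label $h$, so the vertex set of $\oG(\cE)$ is defined in $\dotcE$ by $R_h(x)$; and for a hair event $e_c$ the non‑hair events below it are exactly the events of $c$, since $e\dotleq e_c$ iff $e\in c$ for every $e\in E$. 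In particular distinct configurations yield hair events with distinct sets of non‑hair predecessors, so $c\mapsto e_c$ is a bijection between configurations and hair events.

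Using this correspondence, I would first define the $a$‑labeled edge relation of $\oG(\cE)$ by a first‑order formula over the signature $(\dotleq,(R_b)_{b\in\dotSigma})$ of $\dotcE$. For hair events $y=e_c$ and $y'=e_{c'}$, an $a$‑edge runs from $c$ to $c'$ exactly when $c'=c\cup\{e\}$ for some event $e$ with $\lambda(e)=a$ and $e\notin c$; translated through the bijection this becomes
\[
\psi_a(y,y') := R_h(y)\wedge R_h(y')\wedge \exists e\Big(R_a(e)\wedge (e\dotleq y')\wedge \neg(e\dotleq y)\wedge \alpha(y,y')\wedge \beta_e(y,y')\Big),
\]
where $\alpha(y,y'):=\forall x\big(\neg R_h(x)\wedge (x\dotleq y)\to (x\dotleq y')\big)$ expresses $c\subseteq c'$ and $\beta_e(y,y'):=\forall x\big(\neg R_h(x)\wedge(x\dotleq y')\wedge\neg(x=e)\to(x\dotleq y)\big)$ expresses $c'\setminus\{e\}\subseteq c$. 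Together with $e\dotleq y'$ and $\neg(e\dotleq y)$ this says precisely $c'=c\cup\{e\}$ with $e\in c'\setminus c$ and $\lambda(e)=a$; note that $R_a(e)$ already forces $e$ to be a non‑hair event since $a\neq h$. Each $\psi_a$ is first‑order with no free set variables.

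The translation $\phi\mapsto\phi^*$ is then the usual relativization: each atom $E_a(x,y)$ is replaced by $\psi_a(x,y)$, the atoms $x=y$ and $x\in X$ are kept unchanged, first‑order quantifiers are relativized to hair events ($\exists x\,\theta\mapsto\exists x(R_h(x)\wedge\theta^*)$ and $\forall x\,\theta\mapsto\forall x(R_h(x)\to\theta^*)$), and second‑order quantifiers are relativized to sets of hair events ($\exists X\,\theta\mapsto\exists X(\forall z(z\in X\to R_h(z))\wedge\theta^*)$, and dually for $\forall X$). Since $R_h$ is definable and $c\mapsto e_c$ is a bijection onto the hair events under which the relations $E_a$ correspond to $\psi_a$, a straightforward induction on the structure of $\phi$ yields $\oG(\cE)\models\phi$ iff $\dotcE\models\phi^*$ for every sentence $\phi$. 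As $\phi\mapsto\phi^*$ is computable and $\MSO(\dotcE)$ is decidable, $\MSO(\oG(\cE))$ is decidable, proving the proposition.

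The only genuinely delicate point I expect is the first‑order definition of the covering relation of the Hasse diagram purely from the order and labels of $\dotcE$: one must verify that $\psi_a$ really captures ``$c'$ covers $c$ by adding an event labelled $a$'', which relies both on the fact that the non‑hair predecessors of $e_c$ recover $c$ exactly and on the uniqueness of labels. Everything else — the definability of the vertex set $R_h$, the bijectivity of $c\mapsto e_c$, and the relativization of quantifiers (the relativization of \emph{set} quantifiers being legitimate precisely because the vertex set is definable) — is routine.
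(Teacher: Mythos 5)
Your proof is correct, and its overall architecture coincides with the paper's: you interpret $\oG(\cE)$ inside $\dotcE$ by representing each vertex (configuration) $c$ by its hair event $e_c$, you observe that the hair events are definable by $R_h$, you relativize first-order quantifiers to hair events and set quantifiers to sets of hair events, and you conclude by induction on the structure of formulas together with computability of the translation. The genuine difference lies in the one step you yourself flagged as delicate, namely the translation of the atom $E_a(u,v)$, and there your version is the stronger one. The paper translates $(u,v)\in E_a$ by $\exists e\,\big((\dotlambda(e)=a)\wedge(e \dotdiese_\mu e_u)\wedge(e \dot\lessdot e_v)\big)$; unwinding the definition of $\dotcE$, this says that some $a$-labeled event $e$ is \emph{enabled} at $c_u$ and is \emph{maximal} in $c_v$, which is necessary but not sufficient for $c_v=c_u\cup\{e\}$. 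For instance, if $\cE$ consists of two concurrent events $x,y$ with $\lambda(x)=a$, then $e=x$ witnesses the paper's formula for $e_u=e_{\varnothing}$ and $e_v=e_{\{x,y\}}$, although $(\varnothing,\{x,y\})$ is not an edge of $\oG(\cE)$; so, read literally, the paper's atomic clause needs repair. Your formula $\psi_a$ supplies exactly the missing constraints: $\alpha$ gives $c\subseteq c'$ and $\beta_e$ gives $c'\setminus\{e\}\subseteq c$, so together with $e\dotleq y'$ and $\neg(e\dotleq y)$ it asserts precisely that $c'=c\cup\{e\}$ with $e\notin c$ and $\lambda(e)=a$. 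What the paper's shorter clause buys is brevity --- $\dotdiese_\mu$ and $\dot\lessdot$ are themselves MSO-definable from $\dotleq$ and the labels --- but to be sound it would have to be conjoined with (the analogue of) your $\alpha$ and $\beta_e$; your more pedestrian first-order description of the covering relation is what actually makes the induction on formulas go through.
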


\begin{proof}
  We transform any formula $\varphi_G(\bvv,\bX) \in \MSO(\oG(\cE))$,
  where $\bvv= \{v_1,\ldots,v_n\}$ and $\bX = \{X_1,\ldots,X_m\}$,
  into a formula $\varphi_{\dotES}(\be,\bX) \in \MSO(\dotcE)$ where
  $\be = \{e_{v_1},\ldots,e_{v_n}\}$. The variables representing the
  vertices of $\oG(\cE)$ will be replaced by variables representing
  the corresponding hair events of $\dotcE$.
  We proceed by induction on the structure of
  $\varphi_G(v_1,\ldots,v_n,X_1,\ldots,X_m)$. We need to explain the
  transformations for atomic formulas, Boolean combinations of
  formulas, and existential quantifications over vertices and sets of
  vertices.

  We first consider atomic formulas.

  If $\varphi_G(v,X) = (v \in X)$, then we set
  $\varphi_{\dotES}(e_v,X) := (e_v \in X)$.

  If $\varphi_G(u,v) = (u = v)$, then we set
  $\varphi_{\dotES}(e_{u},e_{v}) := (e_{u} = e_{v})$.

  If $\varphi_G(u,v) = ((u,v) \in E_a)$ for some letter
  $a \in \Sigma$, then we set
  \[
    \varphi_{\dotES}(e_{u},e_{v}) := \exists e \big((\dotlambda(e) = a)
    \wedge (e \dotdiese_\mu e_u) \wedge (e \dot\lessdot e_v)\big).
  \]

  If $\varphi_G(\bvv,\bX)$ is a boolean combination of formulas of
  $\MSO(\oG(\cE))$, then $\varphi_{\dotES}(\be,\bX)$ is the same boolean
  combination of the corresponding formulas in $\MSO(\dotcE)$.

  If $\varphi_G(\bvv,\bX) = \exists v \varphi_G'(\{v\}\cup\bvv,\bX)$
  and $\varphi_{\dotES}'(\{e_v\}\cup \be,\bX)$ is the formula obtained
  from $\varphi_G'(\{v\}\cup \bvv,\bX)$, then we set
  \[
    \varphi_{\dotES}(\be,\bX) := \exists e_v \Big((\dotlambda(e_v) = h)
    \wedge \varphi_{\dotES}'\big(\{e_v\} \cup \be,\bX\big)\Big).
  \]

  If $\varphi_G(\bvv, \bX) = \exists X \varphi_G'(\bvv,\{X\}\cup \bX)$
  and $\varphi_{\dotES}'(\be,\{X\}\cup \bX)$ is the formula obtained
  from $\varphi_G'(\bvv, \{X\}\cup \bX)$, then we set
  \[
    \varphi_{\dotES}(\be,\bX) := \exists X \Big(\big(\forall e_v \in X
    (\dotlambda(e_v) = h)\big) \wedge \varphi_{\dotES}'\big(\be,\{X\}\cup
    \bX\big)\Big).
  \]

  For every sentence $\varphi_G$ in $\MSO(\oG(\cE))$, the sentence
  $\varphi_{\dotES}$ obtained by our construction is a sentence in
  $\MSO(\dotcE)$. Moreover, by induction on the structure of the
  sentence, one can show that for any trace-regular event structure
  $\cE$, $\oG(\cE)$ satisfies $\varphi_G$ if and only if $\dotcE$
  satisfies $\varphi_{\dotES}$.
\end{proof}

\begin{proposition}\label{prop-MSOG-MSO}
  For a trace-regular event structure ${\cE}=(E,\le, \#,\lambda)$, if
  $\MSO(\oG(\cE))$ is decidable, then $\MSO(\cE)$ is decidable.
\end{proposition}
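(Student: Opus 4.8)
The plan is to interpret the event structure $\cE$ inside its directed labeled domain graph $\oG(\cE)$ by a monadic second-order interpretation, and then to pull MSO sentences back along it. Recall from Corollary~\ref{geodesicprimetrace} that the events of $\cE$ are in natural bijection with the join irreducible elements of the domain $(\cD(\cE),\subseteq)$, and that by Lemma~\ref{geoprimetrace1} a configuration $v$ is join irreducible exactly when the interval $I(v_0,v)$ is prime, i.e.\ when $v$ has a unique neighbour one step closer to the basepoint $v_0$. Since the in-neighbours of $v$ in $\oG(\cE)$ are precisely its neighbours lying in $I(v_0,v)$ (the graph being bipartite, every neighbour of $v$ is either one step closer to or one step farther from $v_0$), the events correspond exactly to the vertices of $\oG(\cE)$ having a single incoming arc. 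Writing $\operatorname{in}(u,x)$ for the formula $\bigvee_{a\in\Sigma} E_a(u,x)$ that expresses that $u\to x$ is an arc of $\oG(\cE)$, this set is first-order definable by
\[
  \delta(x) := \exists u\,\bigl(\operatorname{in}(u,x)\wedge \forall u'\,(\operatorname{in}(u',x)\Rightarrow u'=u)\bigr).
\]

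Next I would translate the two primitive relations of the signature $(E,(R_a)_{a\in\Sigma},\le)$ of $\cE$. For the labels, the event $x$ carries label $a$ iff its unique incoming arc is labeled $a$, so I set $\rho_a(x):=\exists u\,E_a(u,x)$. For the precedence relation, Lemma~\ref{geoprimetrace3} together with Lemma~\ref{trace3} gives, for two events $u,v$, that $u\le v$ in $\cE$ iff $u\in I(v_0,v)$, i.e.\ iff $u\le_{v_0}v$. In the basepoint-oriented median graph $\oG(\cE)$ this holds exactly when there is a directed path from $u$ to $v$, directed paths being shortest paths by Lemma~\ref{directedpath}. Reachability is MSO-definable via
\[
  \pi(x,y) := \forall X\,\Bigl(\bigl(x\in X\wedge\forall s\,\forall t\,((s\in X\wedge\operatorname{in}(s,t))\Rightarrow t\in X)\bigr)\Rightarrow y\in X\Bigr).
\]
The atoms $x=y$ and $x\in X$ need no change.

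I would then define, for every MSO sentence $\varphi$ over the signature of $\cE$, a sentence $\varphi^{\ast}$ over $\oG(\cE)$ obtained by relativizing every first-order quantifier to $\delta$ and every second-order quantifier to subsets of $\delta$, and by replacing each atom $R_a(x)$ with $\rho_a(x)$ and each atom $x\le y$ with $\pi(x,y)$. Using the bijection between events and join irreducibles and the three identifications above, a routine induction on the structure of $\varphi$ yields that $\cE\models\varphi$ if and only if $\oG(\cE)\models\varphi^{\ast}$. Since $\varphi\mapsto\varphi^{\ast}$ is computable and $\MSO(\oG(\cE))$ is decidable by hypothesis, this gives a decision procedure for $\MSO(\cE)$, proving Proposition~\ref{prop-MSOG-MSO}. (Note that the conflict relation $\#$ requires no separate treatment, as it is already definable from $(R_a)_{a\in\Sigma}$ and $\le$ inside $\MSO(\cE)$, and is therefore carried along automatically.)

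The argument is mostly bookkeeping, so the only points that demand care are the two identifications feeding the interpretation: that $\delta$ genuinely singles out the join irreducibles (which hinges on the in-neighbours of $v$ coinciding with its neighbours in $I(v_0,v)$), and that basepoint reachability in $\oG(\cE)$ agrees with the event precedence, which is exactly where Lemmas~\ref{trace3} and~\ref{geoprimetrace3} are used. No geometric input beyond these earlier results is required, and in particular the interpretation works without any assumption on the degree or treewidth of $\oG(\cE)$.
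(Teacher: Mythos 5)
Your proof is correct, but it follows a genuinely different route from the paper's. Both arguments build a computable translation of sentences over the signature $(E,(R_a)_{a\in\Sigma},\le)$ into $\MSO(\oG(\cE))$, but the encodings of events differ. The paper encodes an event by the \emph{set} of sources of all edges dual to the corresponding hyperplane, so each first-order event variable becomes a monadic second-order vertex-set variable; this forces a preliminary normalization (eliminating $\le$ in favour of $\lessdot$ via MSO-expressible transitive closure, and typing every variable by a letter of $\Sigma$) together with auxiliary formulas ($H_a$ for being sources of $a$-edges, $PC_a$ for closure under parallelism, $NSH$ plus a minimality clause) to guarantee that such a set is exactly one hyperplane. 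You instead use the other half of the bijection of Corollary~\ref{geodesicprimetrace}: events correspond to join irreducible configurations, i.e.\ (as you correctly argue via primeness of $I(v_0,v)$ and bipartiteness, cf.\ Lemma~\ref{geoprimetrace1}) to the vertices of in-degree one, a first-order definable set. This keeps the interpretation one-dimensional --- first-order variables stay first-order, sets of events become sets of in-degree-one vertices --- the predicate $R_a$ is read off the unique incoming arc, and $\le$ is translated directly as MSO-definable directed reachability, whose agreement with the event order is exactly what Lemmas~\ref{trace3} and~\ref{geoprimetrace3} (equivalently, the identification of the edge orientation with the basepoint order, with Lemma~\ref{directedpath} ensuring directed paths are geodesics) provide. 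Your route is shorter and avoids both the dimension shift and the hyperplane bookkeeping; what the paper's encoding offers in exchange is that it handles events through their full hyperplanes, the geometric objects used throughout the paper, rather than through their first occurrences. For the statement at hand both interpretations are sound, both translations are computable, and both therefore reduce $\MSO(\cE)$ to the assumed-decidable theory $\MSO(\oG(\cE))$.
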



\begin{proof}
  Consider any formula
  $\varphi(\be,\bX) = \varphi_{ES}(\be,\bX) \in \MSO(\cE)$, where
  $\be = \{e_1,\ldots,e_n\}$ and $\bX = \{X_1,\ldots,X_m\}$.  We first
  transform this formula into another formula of $\MSO(\cE)$ as
  follows. Since $\leq$ is the transitive closure of $\lessdot$ and
  since the transitive closure of any binary relation expressible in
  MSO can also be expressed in MSO (see~\cite[Section 5.2.2]{CouEn}),
  we can assume that the atomic formulas of $\varphi(\be,\bX)$ are of
  the type $e \in X$, $e_1 = e_2$, $R_a(e)$ for $a \in \Sigma$, and
  $e_1 \lessdot e_2$.

  We now transform the formula $\varphi(\be,\bX)$ in such a way that
  each event variable (respectively, each set variable) has a label
  $a \in \Sigma$, i.e., it can be interpreted only by an event labeled
  by $a$ (respectively, by a subset of events labeled by $a$). Assume
  in the following that $\Sigma = \{a_1,\ldots,a_n\}$.

  We transform $\varphi(\be,\bX)$ into $\varphi'(\be,\bX)$ in an
  inductive way as follows.

  If $\varphi(x) = R_a(x)$ for $a \in \Sigma$, then
  $\varphi'(x) := R_a(x)$.

  If $\varphi(x_1,x_2) = x_1 \lessdot x_2$, then
  $\varphi'(x_1,x_2) := x_1 \lessdot x_2$.

  If $\varphi(\be,\bX) = \neg \varphi_1(\be,\bX)$ and if
  $\varphi_1'(\be',\bX')$ is the formula obtained from
  $\varphi_1(\be,\bX)$, then
  \[
    \varphi'(\be',\bX') := \neg \varphi_1'(\be',\bX').
  \]

  If
  $\varphi(\be,\bX) = \varphi_1(\be_1,\bX_1) \vee
  \varphi_2(\be_2,\bX_2)$ with $\be = \be_1 \cup \be_2$ and
  $\bX = \bX_1 \cup \bX_2$ and if $\varphi_1'(\be'_1,\bX'_1)$ and
  $\varphi_2'(\be'_2,\bX'_2)$ are the formulas obtained respectively
  from $\varphi_1(\be_1,\bX_1)$ and $\varphi_2(\be_2,\bX_2)$, then
  \[
    \varphi'(\be' = \be'_1 \cup \be'_2, \bX' = \bX'_1 \cup \bX_2') :=
    \varphi_1'(\be_1',\bX_1') \vee \varphi_2'(\be_2',\bX_2').
  \]

  If $\varphi(e,X) = (e \in X)$, then let
  $\bX = \{X_a : {a \in \Sigma}\}$ and set
  \[
    \varphi'(e,\bX) := \bigwedge_{a\in \Sigma} \Big(R_a(e)
    \Leftrightarrow e \in X_a\Big).
  \]

  If $\varphi(\be,\bX) = \exists e \varphi_1(\{e\} \cup \be,\bX)$ and
  if $\varphi_1'(\{e\} \cup \be',\bX')$ is the formula obtained from
  $\varphi_1(\{e\} \cup \be,\bX)$, then
  \[
    \varphi'(\be',\bX') := \bigvee_{a\in \Sigma} \bigg(\exists e_a
    \Big(R_a(e_a) \wedge \varphi_1'\big(\{e_a\} \cup
    \be',\bX'\big)\Big)\bigg).
  \]

  If $\varphi(\be,\bX) = \exists Y \varphi_1(\be, \{Y\}\cup \bX)$ and
  if $\varphi_1'(\be', \bY \cup \bX')$ is the formula obtained from
  $\varphi_1(\be,\{Y\} \cup \bX')$ where
  $\bY = \{Y_{a_1}, \ldots, Y_{a_n}\}$, then
  \[
    \varphi'(\be',\bX') := \exists Y_{a_1} \ldots \exists Y_{a_n}
    \bigwedge_{a_i \in \Sigma} \bigg(\big(\forall e \in Y_{a_i}
    (R_{a_i}(e))\big)\bigg) \wedge \varphi_1'\big(\be', \bY \cup
    \bX'\big).
  \]

  Consequently, from now on, we consider only formulas
  $\varphi_{ES}(\be,\bX)$ in which every variable $e_a$ or $S_a$ is
  indexed by a letter $a \in \Sigma$, meaning that it can only be
  interpreted by an event (or a set of events) labeled by $a$.
  Given such a formula $\varphi_{ES}(\be,\bX) \in \MSO(\cE)$, we
  construct a formula $\varphi_{G}(\bS,\bX) \in \MSO(\oG(\cE))$ where
  each event variable $e$ is replaced by a second order variable
  representing a set of vertices $S$. The idea of the transformation
  is that an event variable $e$ can be interpreted in $\cE$ by an
  event $f$ if and only if the set $S$ can be interpreted in
  $\oG(\cE)$ by the set of sources of precisely those edges which are
  dual to the hyperplane $\cH_f$. Similarly, a set of events will be
  represented by the set of sources of the edges dual to the
  corresponding hyperplanes.  We proceed by induction on the structure
  of $\varphi_{ES}(\be,\bX) \in \MSO(\oG(\cE))$.

  We first consider atomic formulas.

  If $\varphi_{ES}(\{e_a\},\emptyset) = (R_b(e_a))$, then we set
  $\varphi_{G}(\{S_{e_a}\}) := \bot$ if $a \neq b$ and
  $\varphi_{G}(\{S_{e_a}\}) := \top $ otherwise.

  If $\varphi_{ES}(\{e_a\},\{X_b\}) = (e_a \in X_b)$, then we set
  $\varphi_{G}(\{S_{e_a}\},\{X_b\}) := \bot$ if $a \neq b$ and
  $\varphi_{G}(\{S_{e_a}\},\{X_b\}) := (S_{e_a} \subseteq X_b)$
  otherwise.

  If $\varphi_{ES}(\{e_a,e_b\},\emptyset) = (e_a = e_b)$, then we set
  $\varphi_{G}(\{S_{e_a},S_{e_b}\}) := \bot$ if $a \neq b$ and
  $\varphi_{G}(\{S_{e_a},S_{e_b}\}) := (S_{e_a} = S_{e_b})$ otherwise.

  If $\varphi_{ES}(\{e_a,e_b\},\emptyset) = (e_a \lessdot e_b)$, then
  we set
  \[
    \varphi_{G}(\{S_{e_a},S_{e_b}\}) := \Big(\exists s \in S_{e_a}
    \exists t \in S_{e_b} ((s,t) \in E_a)\Big) \wedge \neg \Big(\exists
    s \in S_{e_b} \exists t \in S_{e_a} ((s,t) \in E_b)\Big).
  \]

  If $\varphi_{ES}(\be,\bX)$ is a boolean combination of formulas of
  $\MSO(\cE)$, then $\varphi_{G}(\bS,\bX)$ is the same boolean
  combination of the corresponding formulas in $\MSO(\oG(\cE))$.

  In the following, we need to ensure that the set of vertices
  representing the edges dual to a hyperplane (or to a set of
  hyperplanes) labeled by $a$ are indeed the sources of edges labeled
  by $a$. Given a second order variable $Y_a$, this is ensured by the
  following formula $H_a(Y_a)$:
  \[
    H_a(Y_a) := \forall s \in Y_a \exists t ((s,t) \in E_a).
  \]

  We also need to ensure that the variables representing the edges
  dual to a hyperplane (or to a set of hyperplanes) labeled by $a$
  represent sets of edges that are closed in the directed median graph
  $\oG(\cE)$ under the parallelism relation. Given a second order
  variable $Y_a$, this is ensured by the following formula
  $PC_a(Y_a) \in \MSO(\oG(\cE))$:
  \begin{align*}
    PC_a(Y_a) := \bigwedge_{b \in \Sigma} \bigg(\forall s_1 \forall s_2
    \forall t_1 \forall t_2 \Big((s_1,t_1), (s_2,t_2) \in E_a\Big)
    & \wedge \Big((s_1,s_2), (t_1,t_2) \in E_b\Big)\\
    & \Rightarrow \Big((s_1 \in Y_a) \Leftrightarrow (s_2 \in Y_a)\Big)\bigg).
  \end{align*}

  We first consider second order existential quantification in
  $\MSO(\cE)$. We want to replace a variable representing a set of
  events by a variable representing the set of sources of the edges
  dual to these events.  If
  $\varphi_{ES}(\be,\bX) = \exists X_a \psi_{ES}(\be,\{X_a\}\cup \bX)$
  and $\psi_G(\bS,\{X_a\}\cup \bX)$ is the formula obtained from
  $\psi_{ES}(\be,\{X_a\}\cup \bX)$, then we set
  \[
    \varphi_{G}(\bS,\bX) := \exists X_a \Big(H_a(X_a) \wedge PC_a(X_a)
    \wedge \psi_G\big(\bS,\{X_a\}\cup \bX\big)\Big).
  \]

  We now consider first order existential quantification in
  $\MSO(\cE)$ and we use the previous transformation and the fact that
  an event is a minimal non-empty subset of events. The following
  formula ensures that $S_a$ is a non-empty set of sources of edges
  dual to a set of events labeled by $a$:
  \[
    NSH(S_a) := \Big( H_a(S_a) \wedge PC_a(S_a) \wedge (\exists s \in
    S_a)\Big).
  \]

  If
  $\varphi_{ES}(\be,\bX) = \exists e_a \psi_{ES}(\{e_a\}\cup \be,
  \bX)$ and $\psi_G(\{S_a\}\cup \bS,\bX)$ is the formula obtained from
  $\psi_{ES}(\{e_a\}\cup\be,\bX)$, then we set:
  \[
    \varphi_{G}(\bS,\bX) := \exists S_a \bigg(NSH(S_a) \wedge
    \Big(\forall S'_a \big(NSH(S'_a) \wedge (S'_a \subseteq S_a)\big)
    \Rightarrow (S_a = S'_a)\Big) \wedge \psi_G\big(\{S_a\}\cup\bS,
    \bX\big)\bigg).
  \]

  For every sentence $\varphi_{ES}$ in $\MSO(\cE)$, the sentence
  $\varphi_{G}$ obtained by our construction is a sentence in
  $\MSO(\oG(\cE))$. Moreover, by induction on the structure of the
  sentence, it can be shown that for any trace-regular event structure
  $\cE$, $\cE$ satisfies $\varphi_{ES}$ if and only if $\oG(\cE)$
  satisfies $\varphi_G$.
\end{proof}

The ``if'' implication of Theorem~\ref{barycentric-subdivision-MSO} is
the content of Proposition~\ref{prop-MSOb-MSOG}. To prove the converse
implication, consider a trace-regular event structure
${\cE}=(E,\le, \#,\lambda)$, such that $\MSO(\oG(\cE))$ is
decidable. By Theorem~\ref{mso-graph}, $G(\cE)$ has finite
treewidth. This implies that $G(\dotcE)$ has also finite treewidth.
By Theorem~\ref{mso-graph}, $\MSO(\oG(\dotcE))$ is decidable, and
thus, by Proposition~\ref{prop-MSOG-MSO}, $\MSO(\dotcE)$ is decidable.

\begin{remark}
  Notice that the converse of Proposition~\ref{prop-MSOG-MSO} is not
  true: the MSO theory of trace conflict-free event structures is
  decidable~\cite{Mad}, however the graphs of their domains may have
  infinite treewidth and thus an undecidable MSO theory. For example,
  the event structure $\cE=(E,\le, \#)$ consisting of two pairwise
  disjoint sets $X=\{x_0,x_1,x_2,\ldots\},Y=\{y_0,y_1,y_2,\ldots\}$ of
  events, such that $x_0<x_1<x_2<\cdots$ and $y_0<y_1<y_2<\cdots$, and
  all events of $X$ are concurrent with all events of $Y$, is
  conflict-free and trace-regular, but its domain $\cD(\cE)$ is the
  infinite square grid.
\end{remark}

\section{Counterexamples to Conjecture~\ref{MSO}}\label{sec-cex}

In this section, we use the general results obtained in
Section~\ref{MSO-domain} to construct a counterexample to
Thiagarajan's Conjecture~\ref{MSO}.  In view of
Theorem~\ref{barycentric-subdivision-MSO}, it suffices to find a
trace-regular event structure ${\cE}$ whose graph $G(\cE)$ has
unbounded treewidth (i.e., it contains arbitrarily large square grid
minors) and whose hairing $\dotcE$ is grid-free (as an event
structure). To build such an example, as in~\cite{CC-thiag}, we start
by constructing a finite NPC square complex. Namely, we consider an
NPC square complex $Z$ with one vertex, four edges, and three squares,
and we show that $Z$ is virtually special. This implies that the
principal filter of the universal cover $\tZ$ of $Z$ is the domain
$\cD(\cE_Z)$ of a trace-regular event structure (i.e., $\cE_Z$ is the
event structure unfolding of a net system $N_Z$). We prove that the
median graph $G(\cE_Z)$ of the domain has unbounded treewidth. On the
other hand, to prove that $\dotcE_Z$ is grid-free we show that the
graph $G(\cE_Z)$ has bounded hyperbolicity. In conclusion, we obtain
the following result:

\begin{theorem}\label{th-counterexample}
  There exists a virtually special NPC square complex $Z$ and a
  trace-regular event structure $\cE_Z$ such that the domain
  $\cD(\cE_Z)$ is a principal filter of $\tZ$, the hairing $\dotcE_Z$
  is grid-free, and the median graphs $G(\cE_Z)$ and $G(\dotcE_Z)$
  have unbounded treewidth. Consequently, $\MSO(\dotcE_Z)$ is
  undecidable and thus Thiagarajan's Conjecture~\ref{MSO} is false.
\end{theorem}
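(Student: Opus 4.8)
The plan is to build one concrete finite NPC square complex $Z$ that realizes two opposing properties at once, and then route it through the machinery of Sections~\ref{sec-dspec}--\ref{MSO-domain}. First I would fix $Z$ explicitly: a single vertex, four self-loop edges, and three squares, attached so that $Z$ carries a vertical--horizontal structure (hence is NPC by the remark that VH-complexes are NPC square complexes). I would then verify that $Z$ is virtually special, either by exhibiting an explicit finite special cover obtained by unwinding the osculations at the unique vertex, or---using the hyperbolicity of $\tZ$ established below---by invoking Agol's theorem that a finite NPC complex whose universal cover is hyperbolic is virtually special. Once $Z$ is virtually special, Proposition~\ref{Petri-to-special-isomoprhism2} together with Theorem~\ref{virtuallyspecial} shows that any principal filter $\cD=(\cF_{\tildo}(\tv,\tZ^{(1)}),\prec_{\tildo})$ is the domain of a trace-regular event structure $\cE_Z$, and Proposition~\ref{barycentric-subdivision} shows that the hairing $\dotcE_Z$ is trace-regular as well; by Theorem~\ref{th:regular_trace} and Proposition~\ref{prop-hairing-N}, $\dotcE_Z$ is the unfolding $\cE_{\dotN}$ of a net system, so it is a legitimate test case for Conjecture~\ref{MSO}.

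Next I would establish that $G(\cE_Z)$ has infinite treewidth by producing arbitrarily large square-grid minors. The natural route is to show that the three squares force the ends of $\tZ^{(1)}$ to have clusters of unbounded diameter, and then, exactly as in the fence construction of the proof of the implication (4)$\Rightarrow$(5) in Theorem~\ref{mso-graph}, turn a cluster of diameter at least $2n$ into a half-grid $\frac{1}{2}\Gamma_n$ minor, which contains $\Gamma_{\lfloor n/2\rfloor}$. These minors are obtained by hyperplane-contraction, a strong-minor and in particular a graph minor, so by the grid-minor Theorem~\ref{th-robseymour} the treewidth of $G(\cE_Z)$ is infinite. Since the hairing only attaches pendant edges, $G(\cE_Z)$ is a subgraph of $G(\dotcE_Z)$, and treewidth is monotone under subgraphs, so $G(\dotcE_Z)$ also has infinite treewidth, giving the treewidth claims for both graphs.

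For the opposite property I would prove that $\tZ^{(1)}$---equivalently, by cocompactness of the action of $\pi_1(Z)$, the filter $G(\cE_Z)$---is hyperbolic. The cleanest tool is Hagen's Theorem~\ref{hyp_median_hagen}: I would analyze the crossing graph $\Gamma(\tZ)$ and check that it has thin bicliques, i.e.\ that there is a uniform $n$ with no $K_{n,n}$ arising from two families of pairwise-crossing hyperplanes; this reduces to a finite local count, since the three squares of $Z$ bound how many mutually independent transitions can co-occur, ruling out large complete bipartite crossing patterns. Bounded degree plus thin bicliques then yields hyperbolicity. Grid-freeness of $\dotcE_Z$ follows at once: in any grid sub-structure $\cE_{TY}$ the sets $X$ and $Y$ are pairwise concurrent, while hair events are concurrent with nothing, so $X\cup Y$ would lie entirely in the non-hair part $\cE_Z$; by Lemma~\ref{lem-isometric-sq} an arbitrarily large concurrent bipartite family would force $G(\cE_Z)$ to be non-hyperbolic, contradicting the previous step (equivalently, attaching hairs preserves hyperbolicity, so Proposition~\ref{hyperbolic-gridfree} applies to $\dotcE_Z$ directly).

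Finally, since $G(\cE_Z)$ and hence $G(\dotcE_Z)$ have infinite treewidth, Theorem~\ref{barycentric-subdivision-MSO} gives that $\MSO(\dotcE_Z)$ is undecidable, while the previous paragraph shows $\dotcE_Z=\cE_{\dotN}$ is grid-free; this is a grid-free net system with undecidable MSO theory, refuting Conjecture~\ref{MSO}. The main obstacle is the simultaneous control of the two invariants on the single complex $\tZ$: the grid-minor structure must grow without bound (infinite treewidth, i.e.\ unbounded cluster diameter) while every isometrically embedded grid stays uniformly bounded (hyperbolicity, i.e.\ thin bicliques). These requirements pull in opposite directions, and the delicate design problem is to arrange the three squares so that grids appear only as minors after hyperplane-contraction and never as isometric subgrids; verifying virtual specialness, whether by an explicit finite cover or through the hyperbolicity input to Agol's theorem, is the other technically demanding ingredient.
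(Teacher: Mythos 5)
Your proposal breaks down at the hyperbolicity step, and the failure is not repairable within your framework. You propose to prove that the universal cover $\tZ^{(1)}$ itself is hyperbolic, via thin bicliques of the crossing graph $\Gamma(\tZ)$ (Theorem~\ref{hyp_median_hagen}), and to transfer this to $G(\cE_Z)$ ``by cocompactness of the action of $\pi_1(Z)$.'' Both halves are false for this complex. For the $Z$ built from these three squares, $\tZ$ is \emph{not} hyperbolic: one can tile an infinite grid inside $\tZ$ by repeating a pattern of the squares $Q_1,Q_2,Q_3$ (this is exactly Remark~\ref{rem-tore-bazar} and Fig.~\ref{fig-grille-pas-dirigee}), so $\Gamma(\tZ)$ contains $K_{n,n}$ for every $n$ and your ``finite local count'' establishing thin bicliques cannot succeed. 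The cocompactness transfer is also unavailable: $G(\cE_Z)$ is a principal filter of the \emph{directed} cover $(\tZ,\tildo)$, it is not invariant under deck transformations, and its hyperbolicity genuinely differs from that of $\tZ^{(1)}$ --- the example works precisely because the infinite grids of $\tZ$ carry edge orientations incompatible with the basepoint order of any principal filter. The correct argument (Lemma~\ref{GZ-hyperbolic}) is intrinsically directed: any isometric grid in this $2$-dimensional complex is flat, hence gated, hence by Lemma~\ref{directed-median} inherits the basepoint orientation from its gate; a case analysis of which of $Q_1,Q_2,Q_3$ can be glued onto which, respecting orientations, shows that no directed grid larger than $4\times 4$ occurs in the filter, and one concludes with Lemma~\ref{hyp_median}. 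For the same reason your alternative route to virtual specialness through Agol's theorem collapses ($\tZ$ is not hyperbolic); only your first option, the explicit finite special cover, works, and that is what the paper does (Lemma~\ref{virtually-special}, with the cover $Z'$ of Fig.~\ref{fig-scZ}).

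The treewidth half is also incomplete as stated: you assert that ``the three squares force clusters of unbounded diameter'' and then invoke the fence construction from the proof of (4)$\Rightarrow$(5) of Theorem~\ref{mso-graph}, but proving unbounded cluster diameter for this particular filter is where essentially all the work lies --- it is not a consequence of anything else in your outline. The paper instead builds the half-grid minor directly, level by level, using the local tiling analysis of Claim~\ref{claim-loc-pavage} and a careful verification that the constructed paths $Z_{i,j}$ are simple and pairwise disjoint (Lemma~\ref{GZ-treewidth}); that construction simultaneously certifies the cluster-diameter claim you take for granted. Finally, your closing diagnosis of the tension misplaces it: in the actual example, arbitrarily large grids \emph{do} embed isometrically in $\tZ$; what keeps them out of $G(\cE_Z)$ is not their absence but their orientation. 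Overlooking this directed/undirected distinction is the root cause of the gap.
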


Badouel et al.~\cite[pp. 144--146]{BaDaRa} described a trace-regular
event structure that has a domain that is not context-free. Using the
results of Section~\ref{MSO-domain}, we show that the hairing of this
event structure is also a counterexample to Conjecture~\ref{MSO}.

\subsection{Proof of Theorem~\ref{th-counterexample}}

The square complex $Z$ consists of three squares $Q_1,Q_2,Q_3$, one
vertex $v_0$, and four edges, colored and directed as in
Fig.~\ref{figZ}. The four edges of $Z$ are colored orange (color $a$),
black (color $b$), blue (color $x$), and red (color $y$) as indicated
in the figure.  Since $Z$ is a VH-complex, $Z$ is nonpositively
curved. Let $\tbZ=(\tZ,\tildo,\tc)$ denote the directed and colored
universal cover of $Z$. Pick any vertex $\tv_0$ of $\tbZ$ ($\tv_0$ is
a lift of $v_0$) and let $\cE_Z$ denote the event structure whose
domain is the principal filter
$\cD_Z=(\cF_{\tildo}(\tv_0,\tZ^{(1)}),\prec_{\tildo})$ of
$(\tZ,\tildo)$. Let also $\oG(\cE_Z)$ and $G(\cE_Z)$ denote the
directed and the undirected 1-skeletons of $\cD_Z$. Finally, denote by
$\dotcE_Z$ the hairing of $\cE_Z$.

\begin{figure}
  \includegraphics[scale=0.85]{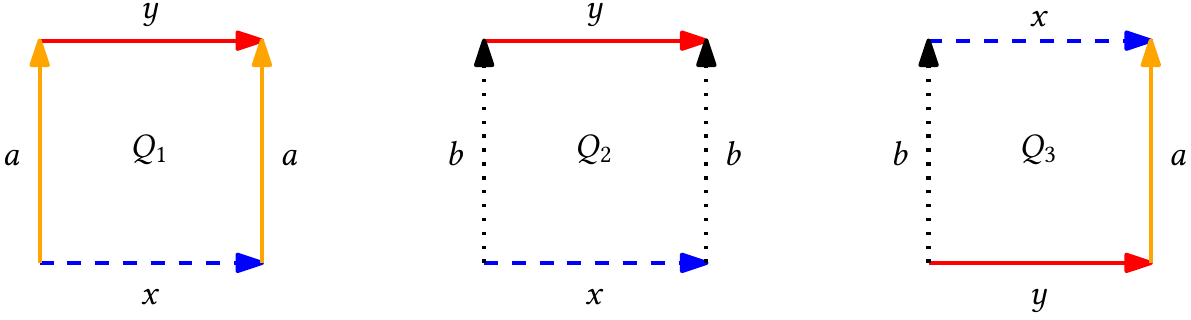}%
  \caption{The three squares defining the VH-complex $Z$}%
  \label{figZ}
\end{figure}

First we investigate the properties of the complexes $Z$ and $\tbZ$,
of the graphs $\oG_Z$ and $G_Z$, and of the event structure
$\cE_Z$. First, even if $Z$ is not special, we show that it is
virtually special:

\begin{lemma}\label{virtually-special}
  The NPC square complex $Z$ is virtually special.  Consequently, the
  event structures $\cE_Z$ and $\dotcE_Z$ are trace-regular.
\end{lemma}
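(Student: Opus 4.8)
The plan is to establish the virtual specialness of $Z$ by producing an explicit finite special cover $\hat Z\to Z$, and then to read off the trace-regularity of $\cE_Z$ and $\dotcE_Z$ from the results of Sections~\ref{sec-directed-npc}--\ref{sec-dspec} together with Proposition~\ref{barycentric-subdivision}. First I would prune the list of Haglund--Wise pathologies (see Fig.~\ref{fig-special}) that can actually occur here. Since $Z$ is a VH-complex, its hyperplanes are two-sided and $Z$ carries the admissible orientation $o$ of Fig.~\ref{figZ}; this two-sidedness, and the absence of self-intersecting hyperplanes, are automatic for any VH-complex, because two edges that are consecutive in a square are one vertical and one horizontal, hence of different colors, hence dual to different hyperplanes. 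Moreover a covering map is a bijection on stars, so every vertex of any cover of $Z$ has exactly one incoming and one outgoing edge of each color; consequently no vertex ever has two outgoing (or two incoming) edges dual to the same hyperplane, and \emph{no cover of $Z$ can contain a direct self-osculation}. Thus the only pathology that must be removed in a cover is \emph{inter-osculation}, which is what makes $Z$ itself non-special.

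Next I would analyse the hyperplanes of $Z$ directly. As each color class ($a$, $b$, $x$, $y$) is a single parallelism class (opposite edges of a square share a color, so no two colors get merged), $Z$ has exactly four hyperplanes $H_a,H_b,H_x,H_y$, with $\{H_a,H_b\}$ vertical and $\{H_x,H_y\}$ horizontal. Reading off the three squares $Q_1,Q_2,Q_3$ tells me precisely which vertical--horizontal pairs \emph{intersect} (a pair intersects iff some $Q_i$ is an $\{\alpha,\beta\}$-square). Because $Z$ has the single vertex $v_0$, every incidence of an $\alpha$-edge and a $\beta$-edge that is not a corner of one of the three squares is an osculation; with only three squares among the four colors, each intersecting pair $H_\alpha,H_\beta$ also osculates at $v_0$, so the pair inter-osculates and $Z$ is not special.

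The crux is the construction and verification of the finite cover. The key point is that a single hyperplane of $Z$ lifts to several hyperplanes of $\hat Z$, so inter-osculation can be destroyed even though covers preserve links: it suffices to choose $\hat Z$ so that, for every intersecting pair $H_\alpha,H_\beta$, the lifts realizing an intersection (those passing through a lifted square) are always \emph{distinct} from the lifts realizing an osculation (those meeting at a lifted vertex). Concretely I would take a characteristic finite cover corresponding to a finite quotient of $\pi_1(Z^{(1)})=F_4$ (for instance an abelian quotient $\ZZ/k\times\ZZ/l$ sending the vertical generators to one factor and the horizontal ones to the other), with $k,l$ chosen large enough that the holonomy around the osculating loops separates the osculation copy from the intersection copy of each hyperplane. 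I would then verify on $\hat Z$ the four conditions defining a special cube complex: two-sidedness and absence of self-intersection and of direct self-osculation are inherited automatically as above, and the remaining task is the finite, translation-invariant check that no lifted pair of hyperplanes simultaneously intersects and osculates. \textbf{The main obstacle is exactly this last verification}: tracking, for each vertex and each intersecting pair, how the global intersection and osculation incidences distribute among the finitely many lifts, and confirming that the chosen quotient keeps them on different lifts. Once this holds, $\hat Z$ is special, and (completing it to an NPC complex by Lemma~\ref{cube-completable}) $Z$ is virtually special.

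Finally I would deduce the consequences. Since $Z$ is a directed (VH) complex that we have shown to be virtually special, the domain $\cD_Z=(\cF_{\tildo}(\tv_0,\tZ^{(1)}),\prec_{\tildo})$ is a principal filter of the universal cover of a virtually special complex with an admissible orientation, i.e.\ $\cE_Z$ is cover-special. By Theorem~\ref{virtuallyspecial} (equivalently Corollary~\ref{cor-tracereg-implies-stronglyreg}), $\cE_Z$ admits a trace-regular labeling and is therefore trace-regular. The trace-regularity of the hairing $\dotcE_Z$ is then immediate from Proposition~\ref{barycentric-subdivision}, which states that the hairing of any trace-regular event structure is again trace-regular. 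This completes the proof of Lemma~\ref{virtually-special}.
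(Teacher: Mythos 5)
Your overall strategy (exhibit an explicit finite special cover of $Z$, then invoke the trace-regularity machinery) is the same as the paper's, and your closing paragraph is essentially sound: cover-specialness yields a trace-regular labeling of $\cE_Z$ by Theorem~\ref{virtuallyspecial}, and $\dotcE_Z$ stays trace-regular by Proposition~\ref{barycentric-subdivision} (though your parenthetical appeal to Corollary~\ref{cor-tracereg-implies-stronglyreg} cites the converse implication). However, the heart of your argument rests on a misreading of $Z$. You assume that opposite edges of every square of $Z$ share a color, so that the four colors give four hyperplanes $H_a,H_b,H_x,H_y$. This is false: the coloring of $Z$ is a coloring, \emph{not} a labeling (this is why the paper writes $\tbZ=(\tZ,\tildo,\tc)$ rather than using $\tlambda$), and the three squares of Fig.~\ref{figZ} mix colors on opposite sides --- for instance, Claim~\ref{claim-loc-pavage} exhibits a square of $\tZ$ whose two horizontal sides are colored $a$ and $b$, and one whose two vertical sides are colored $x$ and $y$. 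Consequently all horizontal edges of $Z$ lie in a single parallelism class and all vertical edges in another: $Z$ has exactly \emph{two} hyperplanes, and each of them \emph{directly self-osculates} at $v_0$ (the outgoing $a$- and $b$-ends are dual to the same hyperplane and are not adjacent in the link), in addition to the two hyperplanes inter-osculating. So the pathology you declare impossible (``no cover of $Z$ can contain a direct self-osculation'') is present in $Z$ itself; your impossibility argument silently uses the false premise that edges dual to a common hyperplane have a common color. Note also that under your reading $Z$ would already be special, making the lemma vacuous --- and its universal cover would contain directed convex flats spanned by commutator squares, so $G(\cE_Z)$ would not be hyperbolic and $\cE_Z$ would not be grid-free, contradicting Lemma~\ref{GZ-hyperbolic} and destroying the very counterexample $Z$ is built for.

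Even granting the corrected goal (killing the direct self-osculations as well as the inter-osculation), your proof is incomplete exactly where the content lies: you yourself label the verification that the proposed cover is special as ``the main obstacle'' and do not carry it out, yet that finite check \emph{is} the lemma. Nor is ``$k,l$ large enough'' the right condition. In any cover, the lifted squares create parallelism chains that alternate the two horizontal colors as one translates by the vertical holonomy (schematically $a_g \parallel b_{g\rho(x)} \parallel a_{g\rho(x)^2} \parallel \cdots$); if the image of the vertical generator has \emph{odd} order --- which happens for arbitrarily large abelian quotients --- the chain closes up onto $b_g$, the two outgoing horizontal edges at $g$ are dual to one hyperplane, and the direct self-osculation survives. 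Moreover, in a product quotient that separates horizontal from vertical generators as you prescribe, the square relations of $Z$ (which equate mixed-color horizontal--vertical paths) force the two horizontal generators to have equal images, so your cover is structurally quite different from a special one. The paper avoids all of this by exhibiting a concrete $8$-sheeted cover $Z'$ (Fig.~\ref{fig-scZ}, which coincides with $X_{N^*}$) and verifying specialness from the figure: in $Z'$ two edges are dual to the same hyperplane if and only if they carry the same label, and since the four outgoing (and the four incoming) edges at each vertex of $Z'$ have pairwise distinct labels, direct self-osculation disappears, while the remaining conditions are checked directly. To repair your proof you must produce an explicit cover and actually perform this verification; no softer argument in your proposal substitutes for it.
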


\begin{proof}
  Let $Z'$ be the square complex represented in Fig.~\ref{fig-scZ}.
  As in Fig.~\ref{fig-petrinet-XN}, one has to merge the left and
  right sides, as well as the lower and the upper sides. Consider the
  map $\varphi$ sending all vertices of $Z'$ to the unique vertex of
  $Z$, and each edge of $Z'$ to the unique edge of $Z$ with the same
  color.

  The complex $Z'$ has 8 vertices, 32 edges, and 24 squares. In $Z'$,
  a 4-cycle is the boundary of a square if opposite edges have the
  same label and if the colors of the boundary of this square
  correspond to the colors of the boundary of one of the three squares
  of $Z$. In Fig.~\ref{fig-scZ}, the number ($2$ or $4$) in the middle
  of each $4$-cycle represent the number of squares of $Z'$ on the
  vertices of this $4$-cycle.  This implies that $\varphi$ is a
  covering map from $Z'$ to $Z$.
  Observe that two edges are dual to the same hyperplane of $Z'$ iff
  they have the same label. Using this, it is easy to check that $Z'$
  is special.

  By Theorem~\ref{special-trace-bis} and
  Proposition~\ref{special-trace1}, we have that for any vertex
  $\tv \in \tZ'$, $\cF(\tv,\tZ')$ is the domain of a trace-regular
  event structure $\cE_{Z'}$. Since $\tZ$ and $\tZ'$ coincide and
  since all vertices of $\tZ = \tZ'$ are lifts of the unique vertex of
  $Z$, $\cF(\tv,\tZ')$ is independent of the choice of
  $\tv$. Consequently, $\cE_Z = \cE_{Z'}$ is a trace-regular event
  structure.  The fact that $\dotcE_Z$ is trace-regular follows from
  Proposition~\ref{barycentric-subdivision}.
\end{proof}

\begin{figure}
  \includegraphics[scale=0.85]{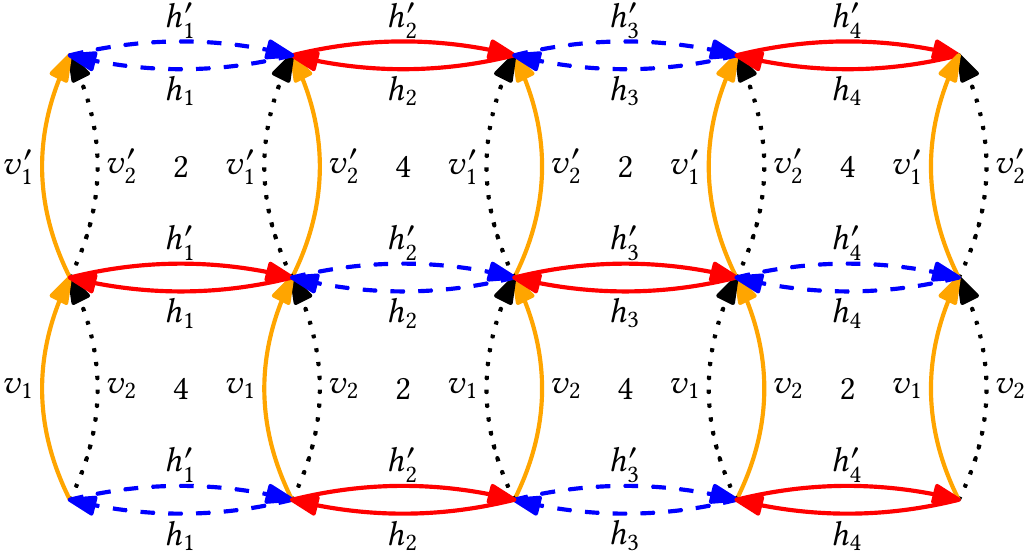}%
  \caption{A finite special cover $Z'$ of the complex $Z$.}%
  \label{fig-scZ}
\end{figure}

\begin{remark}
  Observe that $Z'$ coincides with the special cube complex $X_{N^*}$
  of the net system $N^*$ from Examples~\ref{ex-petrinet}
  and~\ref{ex-petrinet-XN}. Consequently, $\cE_Z$ coincides with the
  event structure unfolding $\cE_{N^*}$ of $N^*$.
  To obtain a net system $\dotN^*$ corresponding to $\dotcE_Z$, one
  can use Proposition~\ref{prop-hairing-N}. However in this case, it
  is enough to add a single transition $h$ to $N^*$ such that
  $\buh = \{C_1,C_2,C_3,C_4\}$ and $\hbu = \varnothing$. In the
  resulting $\dotN^*$, for any firing sequence $\sigma$ of $N^*$,
  $\sigma h$ is a firing sequence of $\dotN^*$ and no transition can
  be fired once $h$ has been fired. Using this property, with a proof
  similar to the proof of Proposition~\ref{prop-hairing-N}, one can
  show that $\dotcE_{Z}$ and $\cE_{\dotN^*}$ are isomorphic.
\end{remark}

\begin{lemma}\label{GZ-hyperbolic}
  The graph $G(\cE_Z)$ is hyperbolic and thus the event structures
  $\cE_Z$ and $\dotcE_Z$ are grid-free.
\end{lemma}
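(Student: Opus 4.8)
The plan is to reduce to the combinatorial criterion for hyperbolicity and to carry out the analysis on the CAT(0) complex $\tZ$ rather than on the filter directly. Since the domain $\cD_Z=\cF_{\tildo}(\tv_0,\tZ^{(1)})$ induces a convex subgraph of $\tZ^{(1)}$ (Lemma~\ref{directed-median}(1)), it is isometrically embedded, and an isometric subspace of a $\delta$-hyperbolic space is again $\delta$-hyperbolic; hence it suffices to prove that the whole $1$-skeleton $\tZ^{(1)}$ is hyperbolic. As $Z$ is a finite VH-complex, $\tZ$ is a $2$-dimensional CAT(0) cube complex of uniformly bounded degree, so I would invoke Hagen's Theorem~\ref{hyp_median_hagen}: it remains to show that the crossing graph $\Gamma(\tZ)$ has thin bicliques, i.e., that there is a constant $n$ (depending only on $Z$) so that every complete bipartite subgraph $K_{p,q}$ of $\Gamma(\tZ)$ has $p\le n$ or $q\le n$.

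First I would record the structural features of crossings. Because the trace labeling is constant on each parallelism class $\Theta_i$, every hyperplane of $\tZ$ is \emph{monochromatic}, carrying one of the four colors $a,b,x,y$. Since $\tZ$ is $2$-dimensional, three pairwise-crossing hyperplanes would force a $3$-cube (Proposition~\ref{Helly-CAT0}), which is impossible; thus $\Gamma(\tZ)$ is triangle-free, and using the vertical/horizontal partition of the edges of $Z$ one sees that a crossing can only pair a vertical-colored hyperplane with a horizontal-colored one, so $\Gamma(\tZ)$ is in fact bipartite. Consequently a biclique $K_{p,q}$ is exactly a family of $p$ pairwise-parallel hyperplanes of one side together with $q$ pairwise-parallel hyperplanes of the other side, all mutually crossing; by convexity of carriers and halfspaces this configuration spans an isometrically (indeed flatly) embedded $p\times q$ grid of $\tZ$. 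Thus, via Lemma~\ref{hyp_median}, bounding bicliques is equivalent to bounding flat grids.

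The core step is the colour analysis of such a flat grid. Each of its cells is a square of $\tZ$, hence projects under $\varphi$ to one of the three squares $Q_1,Q_2,Q_3$ of $Z$, so the pair (vertical-colour of its row hyperplane, horizontal-colour of its column hyperplane) must be one of the at most three independent pairs realized in $Z$. Reading off colours, the $p$ row hyperplanes and the $q$ column hyperplanes yield two colour sequences all of whose cross-pairs are admissible square colours; in particular the colour pairs \emph{absent} from $Z$ may never occur simultaneously in a row and a column. The plan is to show that this consistency requirement, together with the admissible orientation and the precise way the three squares glue around the unique vertex of $Z$, cannot be maintained beyond a bounded size: along a fixed hyperplane the sequence of colours of the hyperplanes crossing it is governed by the finite ``hyperplane automaton'' determined by the link of $v_0$ in $Z$, and the missing squares obstruct the monochromatic/periodic patterns that an arbitrarily large flat grid would demand, forcing $\min(p,q)$ below a constant. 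This finite but delicate case analysis of the three squares and the vertex link of $Z$ is the main obstacle; once it is done, $\Gamma(\tZ)$ has thin bicliques, Theorem~\ref{hyp_median_hagen} yields that $\tZ^{(1)}$ is hyperbolic, and the convexity reduction gives that $G(\cE_Z)$ is hyperbolic.

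Finally I would deduce grid-freeness. The event structure $\cE_Z$ has bounded degree (it is trace-regular by Lemma~\ref{virtually-special}), so Proposition~\ref{hyperbolic-gridfree} applied to $\cE_Z$ shows $\cE_Z$ is grid-free. For the hairing, observe that $G(\dotcE_Z)$ is obtained from $G(\cE_Z)$ by attaching a single pendant (leaf) edge at every vertex; since leaves lie at distance $1$ from the original graph they cannot enlarge geodesic quadrangles by more than an additive constant, so $G(\dotcE_Z)$ remains hyperbolic and of bounded degree. Applying Proposition~\ref{hyperbolic-gridfree} once more, now to $\dotcE_Z$, yields that $\dotcE_Z$ is grid-free as well, completing the statement.
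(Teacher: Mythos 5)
Your reduction step is where the argument breaks: you propose to prove that the whole $1$-skeleton $\tZ^{(1)}$ is hyperbolic and then restrict to the convex subgraph $\cD_Z$. But $\tZ^{(1)}$ is \emph{not} hyperbolic. The paper makes this explicit in Remark~\ref{rem-tore-bazar}: one can tile an infinite flat grid inside $\tZ$ by repeating a periodic pattern of the squares $Q_1,Q_2,Q_3$ (Fig.~\ref{fig-grille-pas-dirigee}), so by Lemma~\ref{hyp_median} (equivalently, via arbitrarily large bicliques in $\Gamma(\tZ)$ and Theorem~\ref{hyp_median_hagen}) the universal cover fails to be hyperbolic. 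Consequently the ``finite but delicate case analysis'' you defer to the end of your second paragraph cannot be completed: no constant bound on bicliques of $\Gamma(\tZ)$ exists, because the colour constraints alone do not obstruct large grids --- the three squares of $Z$ \emph{do} tile the quarter-plane if one is allowed to ignore edge orientations.

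What actually kills large grids is the orientation relative to the basepoint, and this is exactly the information your reduction throws away. Inside the principal filter $\cF_{\tildo}(\tv_0,\tZ^{(1)})$, the edge directions coincide with the basepoint order $\le_{\tv_0}$ (Lemma~\ref{directed-median}(2)), so any flat grid $\Lambda$ lying in the filter is gated, its gate $\tv$ is the unique \emph{source} of $\Lambda$, and hence a quarter of $\Lambda$ is a directed grid emanating from one corner. The paper's proof then runs the case analysis on which of $Q_1,Q_2,Q_3$ sits at that corner and shows that a \emph{directed} grid with a single source cannot be extended past size roughly $3\times 3$ (e.g.\ no square of $Z$ has outgoing source-edges coloured $a$ and $y$), bounding isometric grids in $G(\cE_Z)$ by $4\times 4$. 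The infinite grid of Remark~\ref{rem-tore-bazar} evades this analysis precisely because its edge orientations are incompatible with any basepoint order, so it lives in $\tZ$ but in no principal filter. Your final paragraph (hairing preserves hyperbolicity, then Proposition~\ref{hyperbolic-gridfree} gives grid-freeness of $\cE_Z$ and $\dotcE_Z$) agrees with the paper and is fine, but it rests on the hyperbolicity of $G(\cE_Z)$, which your approach does not establish.
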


\begin{proof}
  Since $Z$ is a VH-complex, its universal cover is a CAT(0) square
  complex. Thus any isometric grid of $G(\cE_Z)$ is flat.  Suppose by
  way of contradiction that $G(\cE_Z)$ contains a large $n\times n$
  flat grid $\Lambda$.  Since $\Lambda$ is flat, $\Lambda$ is a convex
  and thus a gated subgraph of $G(\cE)$. Let $\tv$ denotes the gate of
  $\tv_0$ in $\Lambda$. By Lemma~\ref{directed-median} the direction
  of the edges of the graph $\oG(\cE_Z)$ coincide with the basepoint
  order $\le_{\tv_0}$. This implies that the direction of the edges of
  the grid $\Lambda$ in $\oG(\cE_Z)$ coincides with the basepoint
  order of $\Lambda$ with $\tv$ as the basepoint. In particular, this
  implies that $\tv$ is the unique source of $\Lambda$. Consequently,
  there exists a corner $\tv'$ of $\Lambda$ such that the interval
  $I(\tv,\tv')$ spans an $n'\times n''$ subgrid $\Lambda'$ of $\Lambda$
  with $n'\ge \frac{n}{2}$ and $n''\ge \frac{n}{2}$.

  Let $\ovr{\tv\tu}$ and $\ovr{\tv\tu'}$ be the
  two outgoing from $\tv$ edges in $\Lambda'$.  Consider the square
  $Q$ of $\Lambda'$ containing those two edges. Suppose without loss
  of generality that $\ovr{\tv\tu}$ is upward vertical and
  $\ovr{\tv\tu'}$ is horizontal and to the right. The
  vertex $\tv$ is the unique source of $Q$. Denote by $\tw$ the vertex
  of $Q$ opposite to $\tv$. We will analyze in which way one can now
  extend the square $Q$ to the grid $\Lambda'$. Notice that the square
  $Q$ as well as any other square of $\Lambda$ is one of the three
  squares $Q_1,Q_2,Q_3$ of the complex $Z$.

  First suppose that $Q = Q_1$. Since $\Lambda'$ is directed according
  to $\le_{\tv}$, one can extend $Q$ horizontally only by adding a new
  square $Q_1$ to the right. Also we can extend $Q$ vertically only by
  adding the square $Q_3$ on the top of $Q$. But then we cannot extend
  the resulting union of three squares to a $2\times 2$ grid because
  there is no square in $\{Q_1,Q_2,Q_3\}$ where the outgoing edges of
  the source are orange (color $a$) and red (color $y$).
  Now suppose that $Q=Q_3$. Then we can extend $Q$ only by setting
  $Q_1$ to the right. From the case when $Q=Q_1$ we know that we
  cannot extend $Q_1$ to a $2\times 2$ grid. This show that we cannot
  extend $Q$ to the $2\times 3$ grid.
  Finally, suppose that $Q=Q_2$. The single possibility to extend $Q$
  vertically is to set a copy of $Q_3$ on top. From the case when
  $Q=Q_3$, we know that we cannot extend $Q_3$ to a $2\times 3$
  grid. This show that we cannot extend $Q$ to the $3\times 3$ grid.

  We deduce that in all cases we have $n'\le 2$ or $n''\le 2$,
  establishing that if $G(\cE_Z)$ contains an $n\times n$ isometric
  grid, then $n\le 4$.  This proves that $G(\cE_Z)$ is hyperbolic. The
  graph $G(\dotcE_Z)$ is also hyperbolic because any grid of $G(\dotcE_Z)$
  comes from a grid of $G(\cE_Z)$. By
  Proposition~\ref{hyperbolic-gridfree}, the event structures $\cE_Z$
  and $\dotcE_Z$ are thus grid-free, since $\cE_Z$ and $\dotcE_Z$ have
  respectively degrees $4$ and $5$.
\end{proof}

\begin{remark}
  Since $\dotcE_Z = \cE_{\dotN^*}$, we can also establish that
  $\dotcE_Z$ is grid-free by considering the net system $\dotN^*$ (as
  suggested by one of the referees of this paper). Using the
  symmetries of $\dotN^*$ and some case analysis, one can show that
  there exist no reachable marking $m$ of $\dotN^*$ and firing
  sequences $\sigma, \sigma'$ such that
  $m \xlongrightarrow{\sigma} m$, $m \xlongrightarrow{\sigma'} m$, and
  $(a,a') \in I$ for any transition $a$ and $a'$ appearing
  respectively in $\sigma$ and
  $\sigma'$. By~\cite[Corollary~5]{ThiaYa}, we conclude that the net
  system $\dotN^*$ is grid-free.
\end{remark}

\begin{lemma}\label{GZ-treewidth}
  The graph $G(\cE_Z)$ has infinite treewidth, i.e., the directed
  graph $\oG(\cE_Z)$ is not context-free. Consequently, the theories
  $\MSO(\oG(\cE_Z))$, $\MSO_2(G(\cE_Z))$, and $\MSO(\dotcE_Z)$ are
  undecidable.
\end{lemma}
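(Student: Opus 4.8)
The plan is to reduce the whole lemma to a single geometric fact — that the median graph $G(\cE_Z)$ has infinite treewidth — and then read off every assertion from equivalences already established. Indeed, once we know that the treewidth of $G(\cE_Z)$ is unbounded, the equivalence of conditions (1), (3), (4), (6) in Theorem~\ref{mso-graph} immediately gives that $\MSO(\oG(\cE_Z))$ and $\MSO_2(G(\cE_Z))$ are undecidable and that $\oG(\cE_Z)$ is not context-free, while Theorem~\ref{barycentric-subdivision-MSO} (which asserts that $\MSO(\dotcE)$ is decidable precisely when $G(\cE)$ has finite treewidth) yields that $\MSO(\dotcE_Z)$ is undecidable. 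So the entire statement rests on exhibiting arbitrarily large treewidth.

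By the grid-minor theorem (Theorem~\ref{th-robseymour}), infinite treewidth is equivalent to the presence of arbitrarily large square-grid minors, and by condition (5) of Theorem~\ref{mso-graph} it is also equivalent to the clusters of $G(\cE_Z)$ having unbounded diameter. I would work through the latter: it suffices to produce, for every $n$, a cluster of $G(\cE_Z)$ of diameter at least $n$. A key conceptual point is that we cannot hope for large \emph{isometric} or \emph{flat} grids, since Lemma~\ref{GZ-hyperbolic} shows $G(\cE_Z)$ is hyperbolic; the grids must appear only after contraction, i.e. as genuine minors. This matches exactly the fence/half-grid construction used in the proof of the implication (4)$\Rightarrow$(5) of Theorem~\ref{mso-graph}, where a cluster of diameter $d$ is shown to contain a half-grid $\frac{1}{2}\Gamma_{n}$ with $n\ge d/2$, hence a square grid minor of side $\ge d/4$.

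The core of the argument is then a combinatorial analysis of the three squares $Q_1,Q_2,Q_3$ inside the universal cover $\tZ$. The same case analysis recorded in Lemma~\ref{GZ-hyperbolic} describes how squares may be assembled: horizontal strips built from $Q_1$ (capped by $Q_3$, fed by $Q_2$) can be made arbitrarily long, whereas vertical stacking is severely constrained ($Q_3$ cannot be extended upward, and the forbidden source-colouring $(a,y)$ blocks the completion of $2\times 2$ blocks). I would use these adjacency rules to show that $\tZ$ is organized as a branching, tree-like arrangement of such strips, and then trace out inside the sphere $S(\tv_0,k)$ a connected frontier arc that belongs to a single end and whose length grows with $k$, producing clusters of unbounded diameter. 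Equivalently, and perhaps more transparently, one can describe directly, for each $n$, the $n$ nested strips together with the $Q_2/Q_3$-connectors whose branch sets contract to $\Gamma_n$, and check that these branch sets are connected, pairwise disjoint, and adjacent exactly as prescribed by the grid. Either route delivers grid minors (equivalently, clusters) of unbounded size.

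The main obstacle is precisely this last piece of bookkeeping: one must verify, against the rigid adjacency rules of $\{Q_1,Q_2,Q_3\}$, that the chosen frontier arcs genuinely lie within a single cluster (i.e. stay connected without dropping below radius $k+1$) and have diameters tending to infinity — in other words, one must reconcile the hyperbolicity of $G(\cE_Z)$ with the existence of the required \emph{folded} grids. This is the delicate step; by contrast, the reductions in the first two paragraphs and the final deduction of all the MSO-undecidability statements are routine invocations of Theorems~\ref{mso-graph}, \ref{th-robseymour}, and~\ref{barycentric-subdivision-MSO}.
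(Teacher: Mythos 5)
Your reduction scaffolding is exactly the paper's: once infinite treewidth of $G(\cE_Z)$ is known, Theorem~\ref{mso-graph} and Theorem~\ref{barycentric-subdivision-MSO} deliver all the undecidability statements and the non-context-freeness, and your observation that hyperbolicity (Lemma~\ref{GZ-hyperbolic}) forces the large grids to appear as minors rather than as isometric subgraphs is correct and consistent with the paper. The gap is that the one thing the lemma actually requires --- a proof that $G(\cE_Z)$ contains arbitrarily large grid minors, equivalently clusters of unbounded diameter --- is never carried out. Your third paragraph describes the intended construction only in outline (``a branching, tree-like arrangement of strips'', ``trace out inside the sphere $S(\tv_0,k)$ a connected frontier arc \dots whose length grows with $k$''), and your final paragraph concedes that verifying this is ``the delicate step''. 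That concession is precisely the problem: asserting that a long connected frontier arc lies in a single cluster \emph{is} the statement that clusters have unbounded diameter, so you are restating the goal, not proving it. Note also that nothing you cite forces this: the case analysis of Lemma~\ref{GZ-hyperbolic} only rules out $3\times 3$ flat grids, and a hyperbolic median graph of bounded degree can perfectly well have uniformly bounded clusters (bounded-degree trees, or any context-free graph), so hyperbolicity plus the square-assembly rules do not by themselves produce growing clusters; a construction specific to $Z$ is indispensable.

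What the paper does at exactly this point constitutes essentially its entire proof. It first establishes a local extension rule (Claim~\ref{claim-loc-pavage}): for each vertex $\tv \in S(\tv_0,k-1)$ and each pair of outgoing edges, there is a precise pattern of $p\in\{2,3,4\}$ vertices of $S(\tv_0,k)$ linked through $S(\tv_0,k+1)$, with colors pinned down by a four-case analysis. It then iterates this rule to build zigzag paths $P_k$ between consecutive spheres, anchored on the two convex monochromatic rays $P_a$ and $P_y$; it proves these paths are \emph{simple} (Claim~\ref{claim-simple-path}), which is where the specific geometry enters ($\tZ$ is $2$-dimensional, every vertex of $\oG(\cE_Z)$ has at most two incoming edges, the quadrangle condition, and the fact that no square of $Z$ has a sink incident to both a black and a blue edge); and finally it extracts from the $P_k$ pairwise disjoint subpaths $Z_{i,j}$ satisfying six interlocking conditions whose contraction yields the infinite half-grid $\frac{1}{2}\Gamma$ as a minor. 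These simplicity and disjointness verifications are exactly what your ``bookkeeping'' elides, and they do not follow from anything established earlier in the paper. Until that construction (or a genuine substitute, e.g.\ a worked-out version of your strips-and-connectors picture with connected, disjoint, correctly adjacent branch sets) is supplied, the lemma remains unproved.
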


\begin{proof}
  The proof of this assertion in some sense is similar to the proof of
  implication (4)$\Rightarrow$(5) of Theorem~\ref{mso-graph}. As in
  the proof of the implication we will show that the graph $G(\cE_Z)$
  has the infinite half-grid $\frac{1}{2}\Gamma$ as a minor. We will
  also denote by $z_{i,j}, i,j\ge 0$, the vertices of
  $\frac{1}{2}\Gamma_n$ and by $Z_{i,j},i,j\ge 0$, the connected
  subgraph of $G(\cE_Z)$ which will be mapped (contracted) to
  $z_{i,j}$. The subgraphs $Z_{i,j}$ are also paths laying in two
  consecutive spheres $S(\tv_0,k-1)\cup S(\tv_0,k)$. The difference is
  that in the proof of implication (4)$\Rightarrow$(5) of
  Theorem~\ref{mso-graph} we first constructed the union $F^*$ of all
  fences in a downward way and then constructed the paths
  $Z_{i,j}\subset F^*$ in a upward way. For the current claim, we will
  build the paths $Z_{i,j}$ level-by-level, in an upward manner.

  For this we use the fact that $\oG(\cE_Z)$ is the graph of the
  principal filter
  $\cD_Z=(\cF_{\tildo}(\tv_0,\tZ^{(1)}),\prec_{\tildo})$ of the
  universal cover $(\tZ,\tildo)$ of $Z$ (here $\tv_0$ is an arbitrary
  but fixed lift of $v_0$). Since $Z$ has one vertex $v_0$, all
  vertices $\tv$ of $\oG(\cE_Z)$ are lifts of $v_0$. Analogously to
  $v_0$, each such vertex $\tv$ is incident to four outgoing and to
  four incoming colored edges in $\tZ$. However, in the graph
  $\oG(\cE_Z)$ of the domain, each vertex $\tv$ has at most two
  incoming edges (otherwise, there exists a 3-cube in the interval
  $I(\tv_0,\tv)$, but this is impossible since $\tZ$ is
  $2$-dimensional).  The four outgoing edges define three squares
  $Q_1,Q_2,Q_3$ having $\tv$ as the source (for an illustration, see
  the last figure in Fig.~\ref{fig-claim-pavage}). Moreover,
  $(\tZ,\tildo)$ satisfies the following determinism property: if two
  edges $\ovr{e'},\ovr{e''}$ outgoing from a
  vertex $\tv$ of $\tZ$ have the same color as the edges outgoing from
  the source of a square $Q_i$ of $Z$, then $\ovr{e'}$ and
  $\ovr{e''}$ belong in $(\tZ,\tildo)$ to a
  $Q_i$-square. Using this fact, one can see that there exists an
  infinite directed path $P_a$ with $\tv_0$ as the origin and in which
  all edges have color orange (color $a$). Analogously, there exists
  an infinite directed path $P_y$ with $\tv_0$ as the origin and in
  which all edges have color red (color $y$). Since $Z$ is a
  VH-complex, the paths $P_a$ and $P_y$ are locally-convex paths of
  $\tZ$. Since $G(\cE)$ is median, by Lemma~\ref{convex}, $P_a$ and
  $P_y$ are convex paths, thus shortest paths, of $G(\cE)$.  Let
  $P_a=(\tu_0=\tv_0,\tu_1,\tu_2,\ldots)$ and
  $P_y=(\tv_0,\tv_1,\tv_2\ldots)$ (recall again that all vertices of
  these paths as well as all vertices of $G(\cE)$ are lifts of $v_0$).

  We continue with an auxiliary claim that we need in order to define
  the paths $Z_{i,j}$:
  \begin{claim}\label{claim-loc-pavage}
    For any vertex $\tv \in S(\tv_0,k-1)$, for any outgoing edges
    $\ovr{\tv\tu}, \ovr{\tv\tu'}$, there exist
    $0<p \leq 4$ distinct vertices
    $\tu_1=\tu, \tu_2, \ldots, \tu_p=\tu' \in S(\tv_0,k)$ and $p-1$
    distinct vertices $\tw_1, \ldots, \tw_{p-1} \in S(\tv_0,k+1)$ such
    that for every $i$, $\ovr{\tu_i\tw_{i}}$ and
    $\ovr{\tu_i\tw_{i-1}}$ are directed edges of
    $\oG(\cE_Z)$, and such that the following holds:
    \begin{itemize}
    \item if $\ovr{\tv\tu}$ and $\ovr{\tv\tu'}$ are colored with the
      same colors as the outgoing edges from the source of a square
      $Q \in \{Q_1,Q_2,Q_3\}$, then $p=2$ and $\ovr{\tu\tw_1}$ and
      $\ovr{\tu'\tw_1}$ are colored as the corresponding edges of $Q$.
    \item if $\ovr{\tv\tu}$ and $\ovr{\tv\tu'}$
      are colored respectively blue (color $x$) and red (color $y$),
      then $p =3$ and $\ovr{\tu\tw_1}$,
      $\ovr{\tu'\tw_2}$, $\ovr{\tv\tu_2}$ are
      colored respectively orange (color $a$), black (color $b$), and
      black (color $b$);
    \item if $\ovr{\tv\tu}$ and $\ovr{\tv\tu'}$
      are colored respectively black (color $b$) and orange (color
      $a$), then $p =3$ and $\ovr{\tu\tw_1}$,
      $\ovr{\tu'\tw_2}$, $\ovr{\tv\tu_2}$ are
      colored respectively red (color $y$), red (color $y$), and blue
      (color $x$);
    \item if $\ovr{\tv\tu}$ and $\ovr{\tv\tu'}$
      are colored respectively orange (color $a$) and red (color $y$),
      then $p =4$ and $\ovr{\tu\tw_1}$,
      $\ovr{\tu_2\tw_1}$, $\ovr{\tu_3\tw_3}$,
      $\ovr{\tu'\tw_3}$, $\ovr{\tv\tu_2}$,
      $\ovr{\tv\tu_3}$ are colored respectively red (color
      $y$), orange (color $a$), blue (color $x$), orange (color $a$),
      blue (color $x$), and black (color $b$).
    \end{itemize}
  \end{claim}

  \begin{proof}
    In the universal cover $(\tZ,\tildo)$ of $Z$, all vertices of
    $\oG(\cE_Z)$ are lifts of the unique vertex $v_0$ of
    $Z$. Consequently, each of them has four outgoing edges colored
    with the four different colors. Taking this into account, the
    proof of the claim follows from Fig.~\ref{fig-claim-pavage}.
  \end{proof}

  \begin{figure}
    \centering
    \includegraphics[scale=0.5]{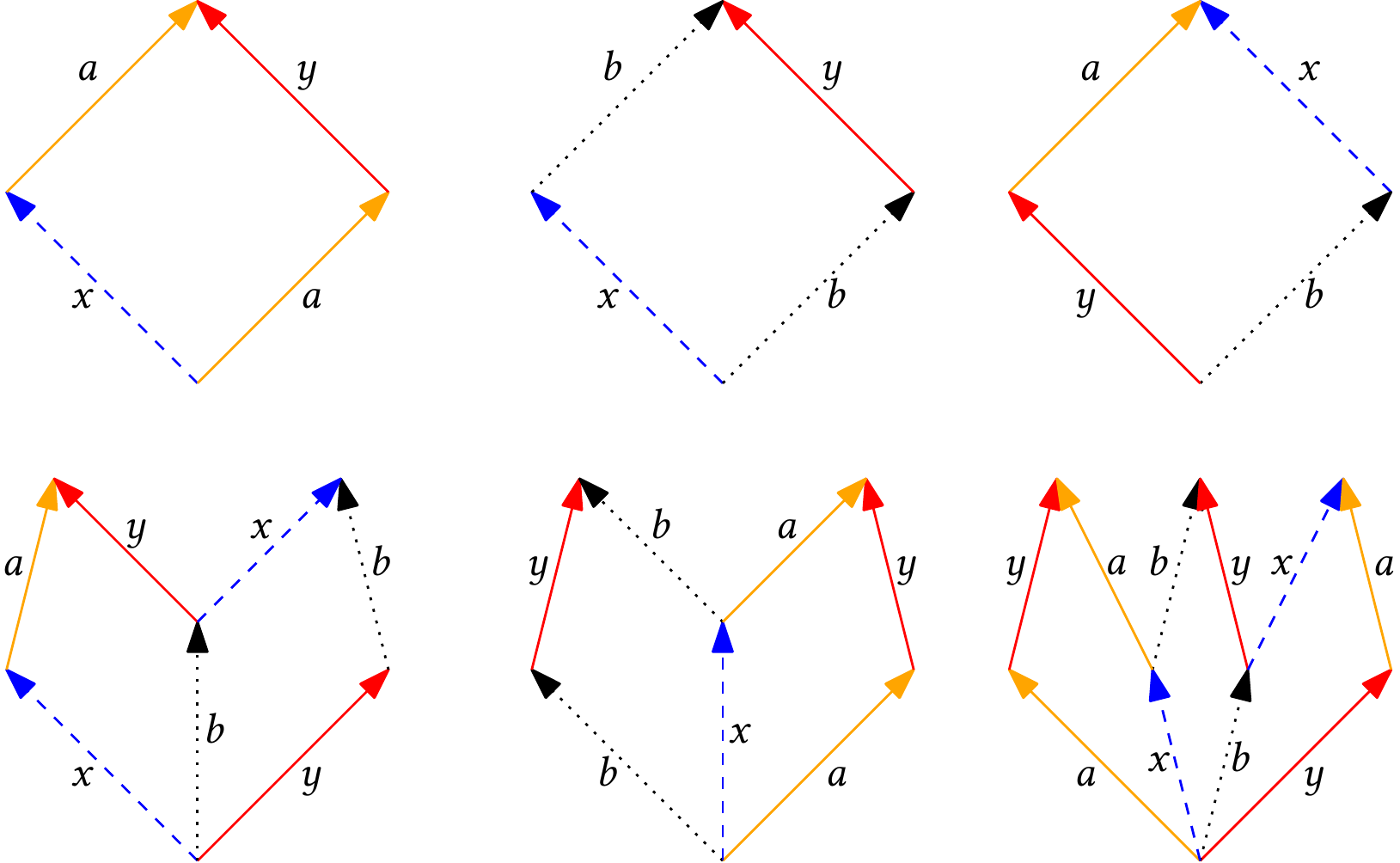}%
    \caption{To the proof of Claim~\ref{claim-loc-pavage}}%
    \label{fig-claim-pavage}
  \end{figure}

  For each $k$, we construct iteratively a simple path
  $P_k=P(\tu_k,\tv_k)=(\tu_k=\tp_{k,1}, \tq_{k,1}, \ldots,
  \tp_{k,\ell-1},$ $\tq_{k,\ell-1}, \tp_{k,\ell}=\tv_k)$ such that
  $\ovr{\tq_{k,1}\tp_{k,1}}$ is colored orange (color $a$),
  $\ovr{\tq_{k,\ell-1}\tp_{k,\ell}}$ is colored red (color $y$), and
  for each $i$, $\tp_{k,i} \in S(\tv_0,k)$ and
  $\tq_{k,i} \in S(\tv_0,k-1)$. This path plays a role similar to the
  one of the path $P_2(u_k,v_k)$ in the proof of
  Theorem~\ref{mso-graph}.
  Let $P_1 = (\tu_1,\tv_0,\tv_1)$ and suppose that the simple path
  $P_k = P(\tu_k,\tv_k)$ has been defined. We define the path
  $P_{k+1} = P(\tu_{k+1},\tv_{k+1})$ in two steps.  First, let
  $P'_{k+1}$ be the path obtained by concatenating the paths obtained
  by applying Claim~\ref{claim-loc-pavage} to each vertex $q_{k,i}$ of
  $P_k\cap S(\tv_0,k)$ and its two outgoing edges in $P_k$. Note that
  the first edges of $P_k$ and $P'_{k+1}$ are consecutive edges in a
  square $Q$ of $\oG(\cE_Z)$. Since the first edge of $P_k$ is orange
  (color $a$), necessarily $Q = Q_1$ and the first edge of $P'_{k+1}$
  is red (color $y$). Analogously, the last edges of $P_k$ and
  $P'_{k+1}$ are consecutive edges in a square $Q'$ of
  $\oG(\cE_Z)$. Since the last edge of $P_k$ is red (color $y$), then
  necessarily $Q' = Q_3$ and the last edge of $P'_{k+1}$ is orange
  (color $a$).

  \begin{claim}\label{claim-simple-path}
    $P_{k+1}'$ is a simple path.
  \end{claim}

  \begin{proof}
    Let
    $P'_{k+1} = (\tu_{k+1} = \tq_{k+1,1}, \tp_{k+1,1}, \ldots,
    \tq_{k+1,\ell'-1}, \tp_{k+1,\ell'-1}, \tq_{k+1,\ell'} =
    \tv_{k+1})$. Suppose first that there exists $i < j$ such that
    $\tq_{k+1,i}=\tq_{k+1,j}$. By convexity of $P_a$ and $P_y$, we
    have $2 \leq i < j \leq \ell'-1$. Since the path $P_k$ is simple,
    $\tq_{k+1,i}$ and $\tq_{k+1,j}$ cannot both belong to $P_k$. First
    suppose that one of them belongs to $P_k$, say $\tq_{k+1,i}$. By
    construction, $\tq_{k+1,j}$ has a neighbor $\tq_{k,j'} \in P_k$
    and $\tq_{k+1,i}$ has two distinct neighbors
    $\tq_{k,i'}, \tq_{k,i'+1} \in P_k$.  Since
    $\tq_{k+1,j} = \tq_{k+1,i}$ has at most two incoming edges in
    $\oG(\cE_Z)$, we get that $\tq_{k,j'} = \tq_{k,i'}$ or
    $\tq_{k,j'} = \tq_{k,i'+1}$. Since the path $P_k$ is simple, it
    means that $i' \leq j' \leq i'+1$, but this is impossible by the
    construction of $P_{k+1}'$ and Claim~\ref{claim-loc-pavage}.  Now
    suppose that both vertices $\tq_{k+1,i}$ and $\tq_{k+1,j}$ do not
    belong to $P_k$. Then, by construction, $\tq_{k+1,i}$ has a
    neighbor $\tq_{k,i'} \in P_k$ and $\tq_{k+1,j}$ has a neighbor
    $\tq_{k,j'} \in P_k$. Moreover, by Claim~\ref{claim-loc-pavage},
    the arcs $\ovr{\tq_{k,i'}\tq_{k+1,i}}$ and
    $\ovr{\tq_{k,j'}\tq_{k+1,j}}$ are blue (color $x$) or
    black (color $b$). Since the path $P_k$ is simple, these edges are
    distinct and thus have distinct colors. By the quadrangle
    condition, these two edges are incident to the sink
    $\tq_{k+1,i} = \tq_{k+1,j}$ of a square $Q$. However, there is no
    square in $Z$ (or in $\tZ$) where the sink is incident to a black
    and a blue edge (See Fig.~\ref{figZ}).
    Assume now that there exist $i < j$ such that
    $\tp_{k+1,i}=\tp_{k+1,j}$. By the construction, $\tp_{k+1,i}$ is
    adjacent to $\tq_{k+1,i},\tq_{k+1,i+1}$ and $\tp_{k+1,j}$ is
    adjacent to $\tq_{k+1,j},\tq_{k+1,j+1}$. By the previous case,
    these four vertices are distinct. Consequently,
    $\tp_{k+1,i} = \tp_{k+1,j}$ has four incoming edges, which is
    impossible since $\tZ$ is 2-dimensional.
  \end{proof}

  The path $P_{k+1} = P(\tu_{k+1},\tv_{k+1})$ is obtained from
  $P'_{k+1}$ by concatenating the orange (color $a$) edge
  $\ovr{\tu_k\tu_{k+1}}$ at the beginning of $P_{k+1}'$ and
  the red (color $y$) edge $\ovr{\tv_k\tv_{k+1}}$ at the
  end of $P_{k+1}'$. Since $\tu_{k+1} \in P_a$ and $P_a$ is a convex
  path, $\tu_{k+1}$ cannot coincide with any vertex of $P_{k+1}'$. For
  the same reason, $\tv_{k+1}$ is different from any vertex of
  $P_{k+1}'$ and different from $\tu_{k+1}$. Consequently, the path
  $P_{k+1}$ is  simple.

  \begin{figure}
    \centering
    \includegraphics[scale=0.65]{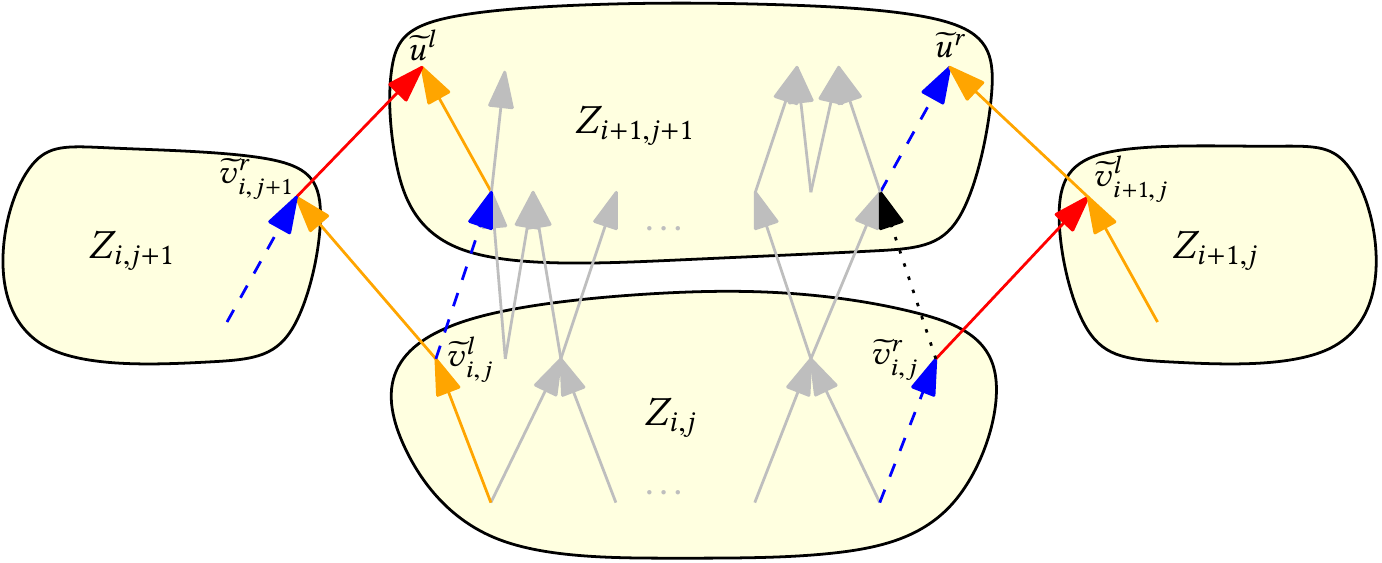}%
    \caption{Construction of the path $Z_{i+1,j+1}$.}%
    \label{figtZ}
  \end{figure}

  Now, for each $k$, we construct iteratively the paths $Z_{i,j}$ with
  $i+j = k$ by selecting subpaths of $P(\tu_k,\tv_k)$. We require that
  the paths $Z_{i,j}$ satisfy the following properties (See
  Fig.~\ref{figtZ} and~\ref{figtZ-minor}):
  \begin{enumerate}
  \item $Z_{0,j}=\{ \tu_j\}$ and $Z_{i,0}=\{ \tv_i\}$ for each
    $i, j \in \{0,\ldots,n\}$;
  \item for each $k$, if $i+j=k$, then $Z_{i,j}$ is a subpath of
    $P_k$;
  \item for each $i, j$ with $i+j= k-1$, the last vertex
    $\tv_{i,j+1}^r$ of the path $Z_{i,j+1}$ appears in $P_k$ before
    the first vertex $\tv_{i+1,j}^l$ of $Z_{i+1,j}$;
  \item each $Z_{i,j}$ with $i, j\geq 1$ has its two end-vertices in
    $S(\tv_0,k)$ and its first edge is orange (color $a$) and its last
    edge is blue (color $x$);
  \item for each pair $(i,j)$ with $i+j=k$, the leftmost vertex
    $\tv^l_{i,j}$ of the path $Z_{i,j}$ is adjacent to the rightmost
    vertex $\tv^r_{i,j+1}$ of the path $Z_{i,j+1}$ by an orange (color
    $a$) edge belonging to $P_{k+1}$ and the rightmost vertex
    $\tv^r_{i,j}$ of $Z_{i,j}$ is adjacent to the leftmost vertex
    $\tv^l_{i+1,j}$ of the path $Z_{i+1,j}$ by a red (color $y$) edge
    belonging to $P_{k+1}$;
  \item any two distinct paths $Z_{i,j}$ and $Z_{i',j'}$ are disjoint.
  \end{enumerate}

  Recall that the path $P'_2$ is obtained by applying
  Claim~\ref{claim-loc-pavage} to $P_1 = (\tu_1,\tv_0,\tv_1)$ and that
  $P_2'$ is a $(\tu_1,\tv_1)$-path starting with a red edge (color
  $y$) and ending by an orange edge (color $a$). Let $Z_{1,1}$ be the
  path obtained from $P'_2$ by removing these two edges. It is easy to
  see that Conditions (1)-(3) and (5)-(6) hold, and Condition (4)
  holds by the construction of $P'_2$ and
  Claim~\ref{claim-loc-pavage}.

  Suppose now that the paths $Z_{i,j}$ satisfying the previous
  conditions have been defined for all pairs $(i,j)$ such that
  $i+j\le k+1$ and we have to define the paths $Z_{i,j}$ with
  $i+j=k+2$ (See Fig.~\ref{figtZ} for an illustration of the
  construction described below).
  By induction hypothesis, the edge $\tv^l_{i,j}\tv^r_{i,j+1}$ is
  orange (color $a$) and the edge $\tv^r_{i,j}\tv^l_{i+1,j}$ is red
  (color $y$). These two edges belong to $P_{k+1}$. There exists
  $\tu^l$ such that $\tv^l_{i,j}\tv^r_{i,j+1}$ and
  $\tv^r_{i,j+1}\tu^l$ are consecutive in a square $Q^l$ of
  $\tZ$. Consequently, $\tv^r_{i,j+1}\tu^l$ is red (color $y$) and the
  opposite edge of $\tv^l_{i,j}\tv^r_{i,j+1}$ in $Q^l$ is orange
  (color $a$). Note that by construction, this edge also belongs to
  $P_{k+1}$.  Analogously, there exists $\tu^r$ such that
  $\tv^r_{i,j}\tv^l_{i+1,j}$ and $\tv^l_{i+1,j}\tu^r$ are consecutive
  in a square $Q^r$ of $\tZ$. Consequently, $\tv^l_{i+1,j}\tu^r$ is
  orange (color $a$) and the opposite edge of
  $\tv^r_{i,j}\tv^l_{i+1,j}$ in $Q^r$ is blue (color $x$). Note that
  by construction, this edge also belongs to $P_{k+1}$.  We let
  $Z_{i+1,j+1}$ be the subpath of $P_{k+2}$ comprised between $\tu^l$
  and $\tu^r$. By the construction of $P_{k+2}$ and the properties of
  $Q^l$ and $Q^r$, the first edge of $Z_{i+1,j+1}$ is orange and the
  last one is blue, i.e., $Z_{i+1,j+1}$ satisfies Conditions (2) and
  (4). Observe that $Z_{i,j+1}$ and $Z_{i+1,j}$ also satisfy Condition
  (5).

  We continue with Condition (3). Consider any path $Z_{i',j'}$ with
  $i'+j' = i+j+2 = k+2$ and assume that $i+1 < i'$. From the
  construction, $Z_{i+1,j+1}$ is included in the
  $(\tu_{k+2},\tv^l_{i+1,j})$-subpath of $P_{k+2}$ while $Z_{i',j'}$
  is included in the $(\tv^r_{i+1,j},\tv_{k+2})$-subpath of
  $P_{k+2}$. Since these two subpaths are disjoint by
  Claim~\ref{claim-simple-path}, $Z_{i+1,j+1}$ is disjoint from
  $Z_{i',j'}$, establishing Condition (3).

  It remains to show that Condition (6) holds, i.e., that the path
  $Z_{i+1,j+1}$ is disjoint from any other path $Z_{i',j'}$ with
  $i'+j' \leq i+j+2 = k+2$. If $i'+j'=i+j= k+2$, this follows from
  Condition (3).  If $i'+j' \leq k$, then this is trivially true since
  $Z_{i',j'} \subseteq P_k \subseteq S(\tv_0,k-1)\cup S(\tv_0,k)$ and
  $Z_{i+1,j+1} \subseteq P_{k+2} \subseteq S(\tv_0,k+1)\cup
  S(\tv_0,k+2)$. Assume finally that $i'+j'= k+1$. In this case if
  $Z_{i',j'}$ and $Z_{i+1,j+1}$ have a common vertex $\tw$, then
  $\tw \in S(\tv_0,k+1)$.  If $i' \leq i$, by induction hypothesis,
  $Z_{i',j'}$ is included in the $(\tu_{k+1},\tv^r_{i,j+1})$-subpath
  of $P_{k+1}$. Note that by construction all the vertices of
  $P_{k+1} \cap S(\tv_0,k+1)$ that appear before $\tv^r_{i,j+1}$ in
  $P_{k+1}$ also appear before $\tv^r_{i,j+1}$ in $P_{k+2}$. Since all
  vertices of $Z_{i+1,j+1}$ appear after $\tv^r_{i,j+1}$ in $P_{k+2}$
  and since $P_{k+2}$ is simple, necessarily $Z_{i+1,j+1}$ is disjoint
  from $Z_{i',j'}$. If $i' \geq i+2$, we obtain the same result by a
  symmetric argument.
\end{proof}

\begin{figure}
  \centering
  \includegraphics[scale=0.55]{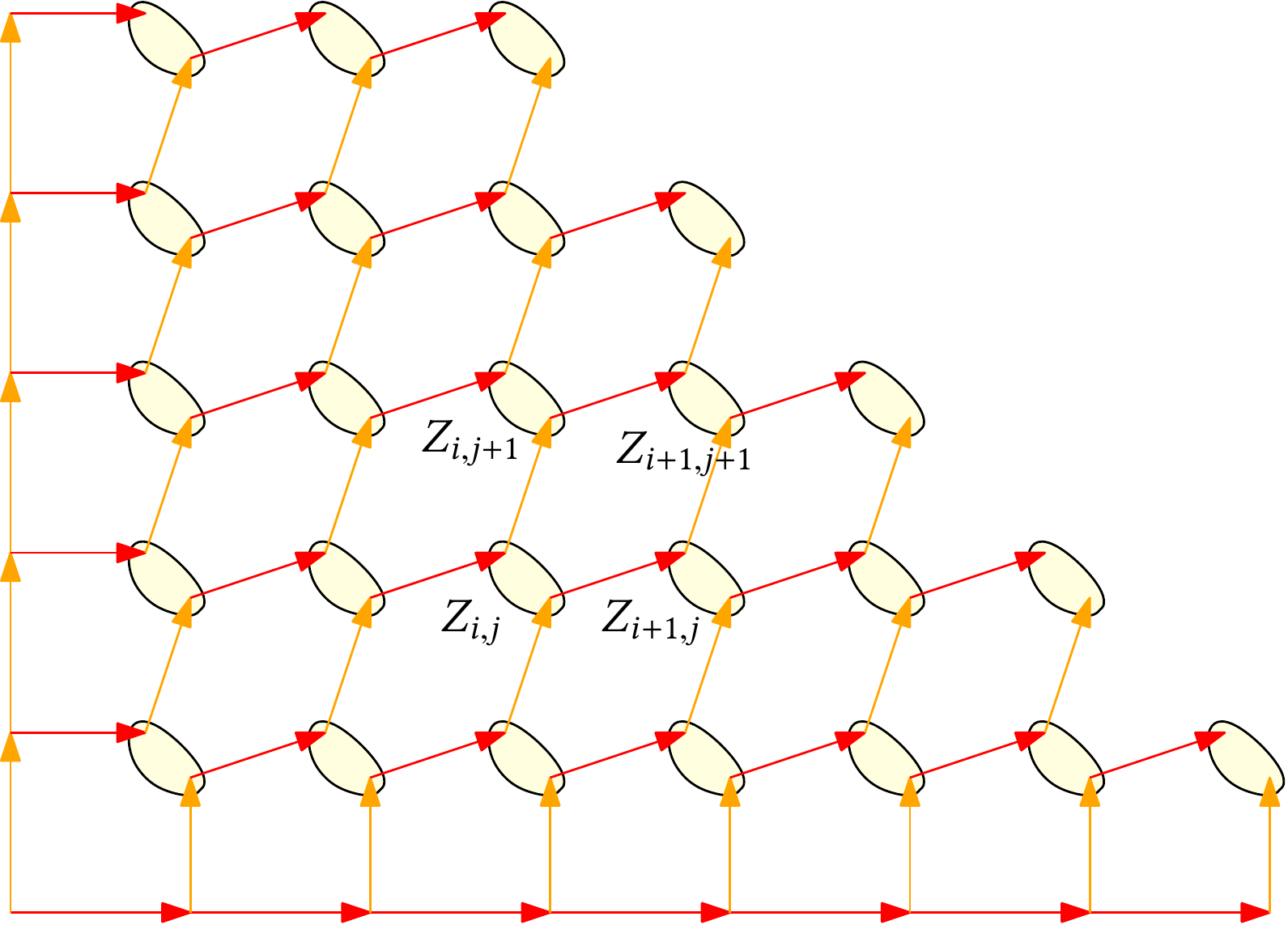}%
  \caption{Part of the half-grid resulting from the contraction of the paths
    $Z_{i,j}$.}%
  \label{figtZ-minor}
\end{figure}

Consequently, by Lemma~\ref{GZ-hyperbolic}, the event structure
$\dotcE_Z$ is grid-free and by Lemma~\ref{GZ-treewidth}, $\MSO(\dotcE_Z)$
is undecidable. This concludes the proof of
Theorem~\ref{th-counterexample}.

\begin{remark}\label{rem-tore-bazar}
  By construction, the event structure $\dotcE_Z$ is strongly regular,
  and $\dotcE_Z$ is hyperbolic by Lemma~\ref{GZ-hyperbolic}. However,
  $\dotcE_Z$ is not strongly hyperbolic-regular because $\tZ$ (and
  thus $\dot\tZ$) is not hyperbolic. Indeed, in $\tZ$, it is possible
  to build an infinite grid by repeating the pattern described in
  Fig.~\ref{fig-grille-pas-dirigee}. Due to the orientation of the
  edges of this grid, this grid cannot appear in any principal filter
  of $(\tZ,\tildo)$. Consequently, $\tZ$ is not hyperbolic, but any
  principal filter of $\tZ$ is hyperbolic.

  This leads to the following open question: \emph{Can one construct a
    finite directed special complex $X$ such that $\tX$ is hyperbolic
    and some principal filter of $\tX$ is not context-free?}
\end{remark}

\begin{figure}
  \centering
  \includegraphics[scale=0.7]{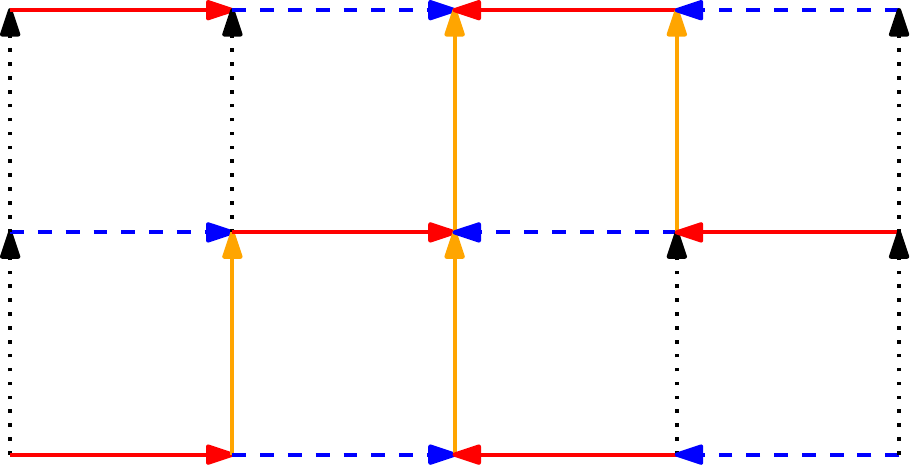}%
  \caption{Part of a infinite grid in $\tZ$}%
  \label{fig-grille-pas-dirigee}
\end{figure}

\subsection{Another Counterexample to Conjecture~\ref{MSO}}

Another counterexample to Conjecture~\ref{MSO} can be derived from the
hairing $\dotcE_{BDR}$ of the trace-regular event structure
$\cE_{BDR}$ described by Badouel et al.~\cite[pp. 144--146 and
Fig. 5--9]{BaDaRa}. The domain $\cD(\cE_{BDR})$ of $\cE_{BDR}$ is a
plane graph defined recursively as a tiling of the quarterplane with
origin $v_0$ by tiles consisting of two squares sharing an edge (see
Fig.~\ref{fig-tuilesBDR}, left). Namely, we start with two infinite
directed paths with common origin $v_0$, and at each step, we insert
the tile in each free angle (see Fig.~\ref{fig-tuilesBDR}, right). As
observed in~\cite{BaDaRa}, the hyperplanes of $G(\cE_{BDR})$ can be
represented by an arrangement of axis-parallel pseudolines in the
plane (see Fig.~\ref{fig-hyperplane-BDR}).

\begin{figure}
  \centering
  \includegraphics[scale=0.7]{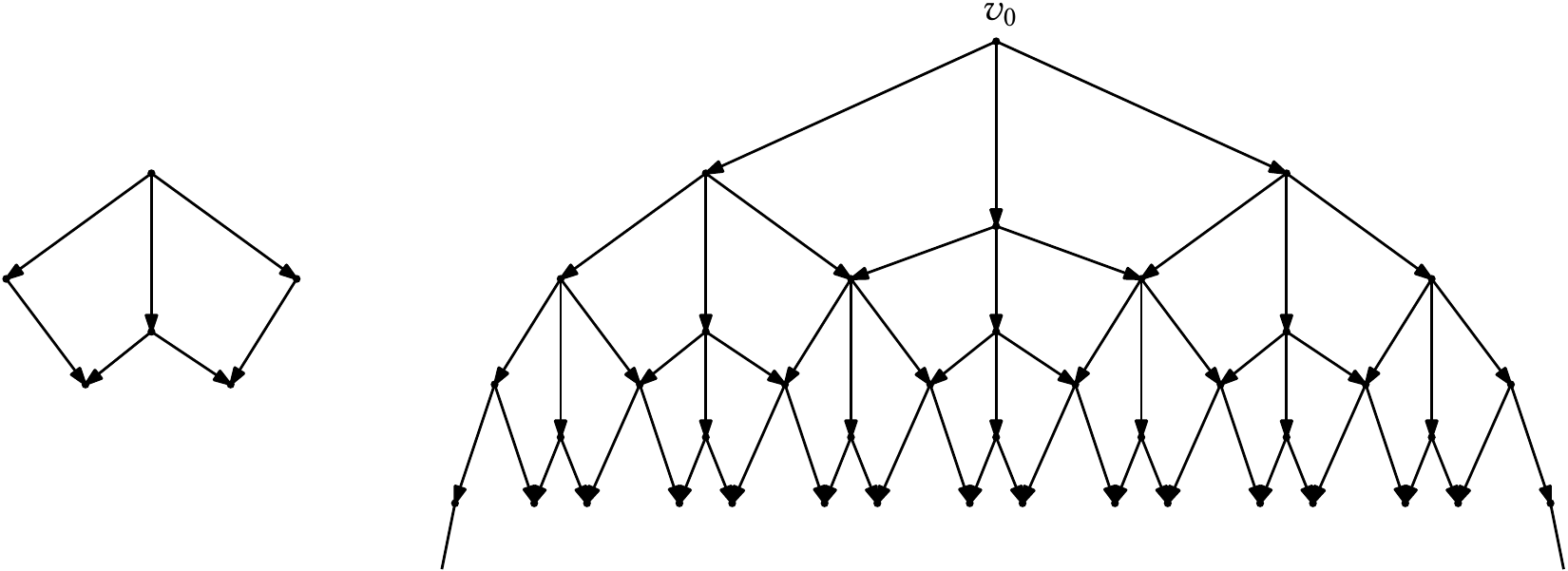}%
  \caption{The tile that is recursively inserted to build
    $\cD(\cE_{BDR})$ and the first four steps of the construction.}%
  \label{fig-tuilesBDR}
\end{figure}

\begin{figure}
  \centering
  \includegraphics[scale=0.7]{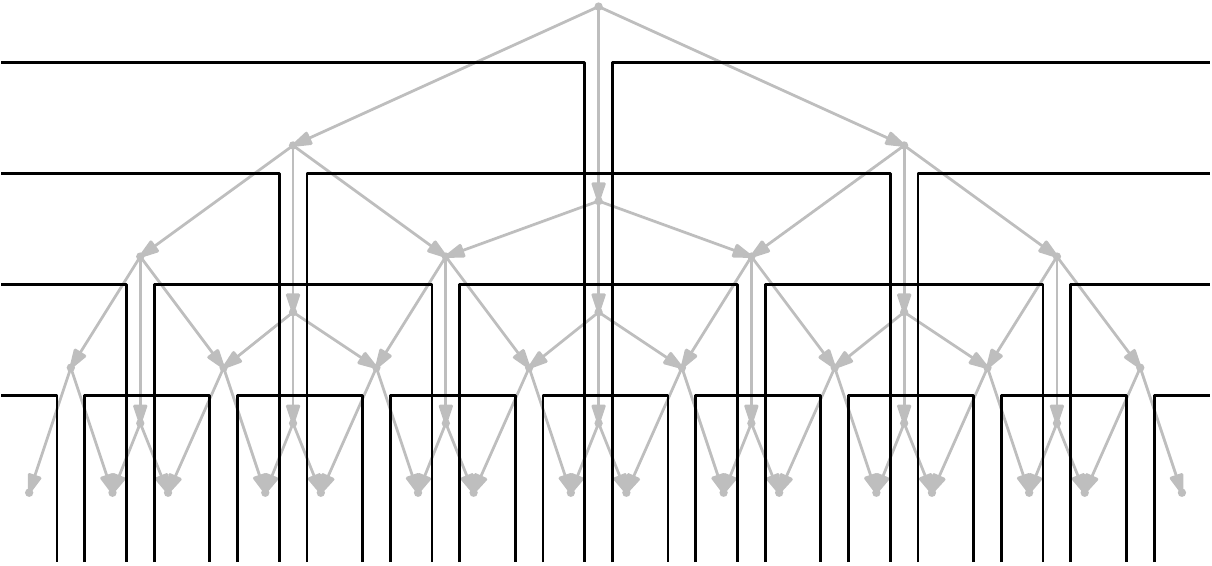}%
  \caption{The hyperplanes obtained during the first four steps of the
    construction of the domain of $\cE_{BDR}$.}%
  \label{fig-hyperplane-BDR}
\end{figure}

Badouel et al.~\cite{BaDaRa} showed that the directed graph
$\oG(\cE_{BDR})$ is not context-free. Indeed, for each $k$, there is a
unique level $k$ cluster coinciding with the sphere $S(v_0,k)$ of
radius $k$ and the diameters of spheres increase together with their
radius. By Theorem~\ref{mso-graph}, this shows that the graph
$G(\cE_{BDR})$ has infinite treewidth.  On the other hand one can
easily show that the planar graph $G(\cE_{BDR})$ has bounded
hyperbolicity. Indeed, suppose by way of contradiction that
$G(\cE_{BDR})$ contains a $3\times 3$ isometric square grid
$\Gamma$. Since the cube complex of $G(\cE_{BDR})$ is 2-dimensional,
$\Gamma$ is a convex and thus gated subgraph of $G(\cE_{BDR})$. Let
$v$ be the gate of $v_0$ in $\Gamma$. Then $\Gamma$ contains a
$2 \times 2$ directed grid $\Gamma'$ having $v$ as a source. Let $v'$
be the center of $\Gamma'$ and observe that $v'$ has two incoming and
two outgoing arcs. Since $G(\cE_{BDR})$ is planar, the four squares of
$\Gamma'$ around $v'$ are the unique faces of the planar graph
$G(\cE_{BDR})$ incident to $v'$. Consequently, $v'$ is the source of
only one square in $\Gamma'$ and thus in $G(\cE_{BDR})$. But in
$G(\cE_{BDR})$, each inner vertex is the source of two distinct
squares (defined by the three outgoing edges at $v'$), a
contradiction. By Proposition~\ref{hyperbolic-gridfree}, the event
structure $\cE_{BDR}$ is grid-free since $\cE_{BDR}$ has degree $3$.

\begin{figure}
  \centering
  \includegraphics[scale=0.7]{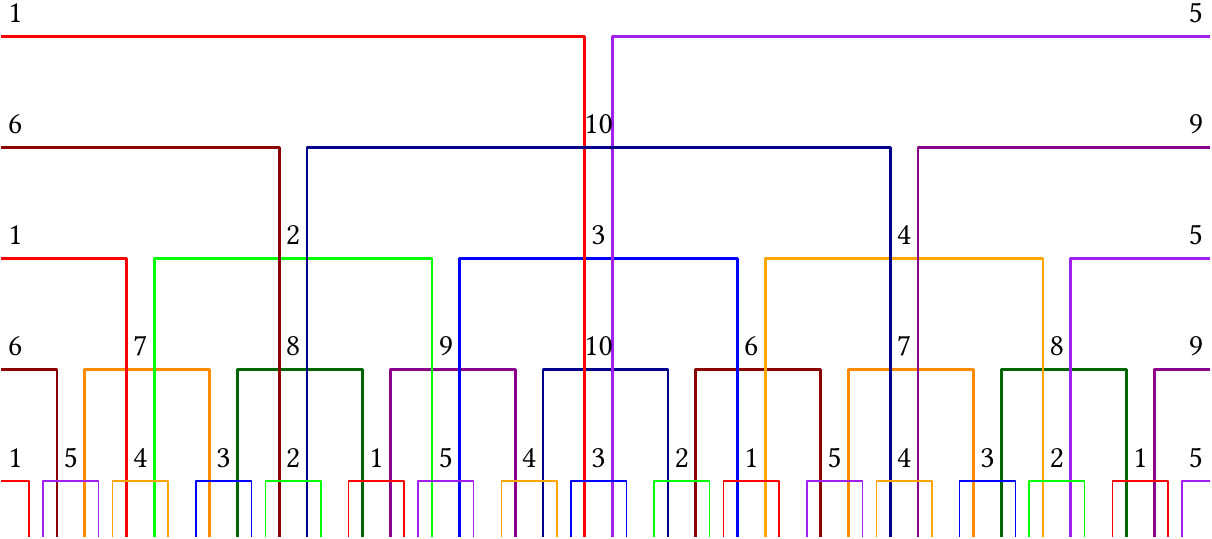}%
  \caption{The trace labeling of the events (hyperplanes) of
    $\cE_{BDR}$ obtained after five steps of the construction.}%
  \label{fig-coloredhyperplanes--BDR}
\end{figure}

Finally, the fact that $\cE_{BDR}$ admits a trace-regular labeling was
established in~\cite{BaDaRa}. Badouel et al.\ showed that the domain
$\cD(\cE_{BDR})$ of $\cE_{BDR}$ is the domain of a finite trace
automaton (see~\cite{BaDaRa} or~\cite{Schmitt} for the definition) and
thus $\cD(\cE_{BDR})$ admits a regular nice labeling. Using the result
of Schmitt~\cite{Schmitt} (or the more general one of
Morin~\cite{Morin}), this implies that $\cE_{BDR}$ is a trace-regular
event structure.
The labeling of the events (hyperplanes) of $\cE_{BDR}$ is given in
Fig.~\ref{fig-coloredhyperplanes--BDR} for the events obtained after
five steps of the construction.  The idea is that the events
constructed at step $4i+1$ are labeled consecutively from left to
right $1, 5, 4, 3, 2, 1, \ldots, 1, 5$, those constructed at step
$4i+2$ are labeled $6, 10, 9, 8, 7, 6, \ldots, 10, 9$, those
constructed at step $4i+3$ are labeled $1, 2, 3, 4, 5, \ldots, 4, 5$,
and those constructed at step $4i+4$ are labeled
$6, 7, 8, 9 , 10, \ldots, 8, 9$. A tedious check of the construction
shows that this labeling gives 40 types of labeled principal
filters\footnote{In~\cite{BaDaRa}, only 20 types of labeled principal
  filters are mentioned}.

Consequently, $\cE_{BDR}$ is a grid-free trace-regular event structure
whose graph $G(\cE_{BDR})$ has infinite treewidth. By
Theorem~\ref{barycentric-subdivision-MSO}, the MSO theory
$\MSO(\dotcE_{BDR})$ of the hairing of $\cE_{BDR}$ is undecidable.

\begin{remark}
  By Corollary~\ref{cor-tracereg-implies-stronglyreg}, the domain of
  $\cE_{BDR}$ is the principal filter of the universal cover of some
  finite (virtually) special cube complex. However, we do not even
  have an explicit construction of a small NPC square complex
  $X_{BDR}$ such that the domain of $\cE_{BDR}$ is a principal filter
  of the universal cover of $X_{BDR}$. To produce such a NPC square
  complex $X_{BDR}$, one can use the result of Schmitt~\cite{Schmitt}
  (or Morin~\cite{Morin}) to find a trace-regular labeling of
  $\cE_{BDR}$, then the result of
  Thiagarajan~\cite{Thi_regular,Thi_conjecture} to construct a net
  system $N_{BDR}$ such that its event structure unfolding
  $\cE_{N_{BDR}}$ is $\cE_{BDR}$, and finally
  Theorem~\ref{Petri-to-special} to construct a finite special cube
  complex $X_{BDR}$ from $N_{BDR}$. The first two steps of this
  approach significantly increase the number of labels used to label
  the events of $\cE_{BDR}$ and it is not clear how to avoid this
  combinatorial explosion.
\end{remark}

\begin{remark}
  In view of Remark~\ref{rem-tore-bazar}, one can ask whether there
  exists a NPC square complex $X_{BDR}$ with a hyperbolic universal
  cover $\tX_{BDR}$ such that $\cD(\cE_{BDR})$ is a principal filter
  of $\tX_{BDR}$.
\end{remark}

We do not know if the hairing operation is necessary in order to
obtain grid-free trace-regular event structures with undecidable MSO
theories. In particular, we wonder whether $\MSO(\cE_Z)$ and
$\MSO(\cE_{BDR})$ are decidable. If this is not the case, this would
provide counterexamples to Conjecture~\ref{MSO} that are not based on
encoding MSO formulas over the domain by MSO formulas over the hair
events.

\section{Conclusion}

The three Thiagarajan's conjectures were a driving force in authors
research for a long time.  Our motivation to work on those conjectures
was their intrinsic beauty and fundamental nature (finite versus
infinite and decidability versus undecidability, both expressible in
combinatorial way) and also our expertise in median graphs and CAT(0)
cube complexes. This expertise allowed us to work with the domain of
the event structure instead of the event structure itself and perform
geometric operations on the domain which preserve the property to be
median or CAT(0). This also allowed us to use the rich and deep theory
of median graphs, CAT(0) cube complexes, and, more importantly, of
special cube complexes. We strongly believe that those three
ingredients are essential in the understanding of Thiagarajan's
conjectures.

Even if we found counterexamples to the three Thiagarajan's
conjectures, the work on them raised many interesting open questions
and lead to a better understanding of trace-regularity and to a
surprising link between 1-safe Petri nets and finite special cube
complexes. We think that the characterization of trace-regular event
structures provided by this bijection can be viewed as a positive
answer to Thiagarajan's Conjecture~\ref{conjecture_regular}. The open
questions related to the first two conjectures are presented in the
papers~\cite{CC-thiag} and~\cite{ChHa}. The Questions 3 and 4
from~\cite{ChHa} are related to Conjecture~\ref{conj-nice} and to the
embedding question (which may be easier than the nice labeling
question). The questions and conjectures from~\cite{CC-thiag} describe
several conjectured classes of event structures for which
Conjecture~\ref{conjecture_regular} is true (hyperbolic and
confusion-free event structures) and relate the decidability of
existence of finite regular nice labelings with the decidability of
the question of whether a finite cube complex is virtually special, an
open question formulated in~\cite{Agol_ICM} and~\cite{BrWi}.

We conclude this paper with a speculation about
Question~\ref{question3} and Conjecture~\ref{MSO}. Our counterexample
to Conjecture~\ref{MSO} shows that grid-freeness of a trace-regular
event structure $\cE_N$ does not implies the decidability of
$\MSO(\cE_N)$. On the other hand, we proved that decidability of
$\MSO(\oG(\cE_N))$ (or of $\MSO(G(\cE_N))$) is equivalent to finite
treewidth and implies the decidability of $\MSO(\cE_N)$. We also
showed that the decidability of $\MSO(\oG(\cE_N))$ is equivalent with
the decidability of $\MSO(\dotcE_N)$, where $\dotcE_N$ is the hairing
of $\cE_N$. On the other hand, we know that conflict-free event
structures (whose domains may have infinite treewidth) have decidable
MSO theory.  Therefore, in the attempt to correct the formulation of
Conjecture~\ref{MSO}, we think that it is necessary to define the
``haired'' version of the treewidth of the domain of the event
structure $\cE_N$. We know that bounded treewidth is characterized by
bounded square grid minors. In a similar way, a \emph{haired grid
  minor} of $G(\cE_N)$ is a minor of $G(\cE_N)$ which is a haired
square grid. A \emph{haired square grid} is a square grid in which to
each vertex is added a pendant edge (hair). Notice that the hairs are
edges of $G(\cE_N)$ and thus correspond to events of $\cE_N$. We
require additionally that in a haired square grid minor, the hairs
correspond to pairwise conflicting events.  The \emph{haired
  treewidth} of $G(\cE_N)$ is the supremum of the sizes of haired
square grid minors of $G(\cE_N)$. We wonder whether the MSO theory of
a trace-regular event structures $\cE_N$ is decidable whenever the
graph $G(\cE_N)$ has finite haired treewidth.

\section*{Note}
After the completion of this paper, Didier Caucal informed us that he
also has constructed a counterexample to Conjecture~\ref{MSO}.

\begin{acks}
  The authors would also like to thank the anonymous referees for
  their valuable comments and helpful suggestions for improving
  readability.
  
  The work on this paper was supported by
  \grantsponsor{ANR}{ANR}{https://anr.fr/} project
  DISTANCIA~(\grantnum{ANR}{ANR-17-CE40-0015}).
\end{acks}

\bibliographystyle{ACM-Reference-Format}
\bibliography{refs}

\end{document}